\newtheorem{theorem}{Theorem}[section]
\newtheorem{proposition}[theorem]{Proposition}
\newtheorem{lemma}[theorem]{Lemma}
\newtheorem{claim}[theorem]{Claim}
\newtheorem{corollary}[theorem]{Corollary}
\newtheorem{definition}[theorem]{Definition}
\newtheorem*{standingassumptionI*}{Standing Assumption I}
\newtheorem*{standingassumptionII*}{Standing Assumption II}
\newtheorem{assumption}[theorem]{Assumption}
\theoremstyle{remark}
\newtheorem{example}[theorem]{Example}
\newtheorem{remark}[theorem]{Remark}
\newcommand{\probp}{{\mathbb P}}
\newcommand{\probq}{{\mathbb Q}}
\DeclareMathOperator*{\esssup}{ess\,sup}
\DeclareMathOperator*{\essinf}{ess\,inf}
\newcommand{\cG}{\mathcal{G}}
\begin{document}

\title{Conditional Systemic Risk Measures}
\date{\today}
\author{Alessandro Doldi\thanks{%
Dipartimento di Matematica, Universit\`{a} degli Studi di Milano, Via
Saldini 50, 20133 Milano, Italy, $\,\,$\emph{alessandro.doldi@unimi.it}. }
\and Marco Frittelli\thanks{%
Dipartimento di Matematica, Universit\`a degli Studi di Milano, Via Saldini
50, 20133 Milano, Italy, \emph{marco.frittelli@unimi.it}. }}
\maketitle

\begin{abstract}
\noindent We investigate to which extent the relevant features of (static)
Systemic Risk Measures can be extended to a conditional setting. After
providing a general dual representation result, we analyze in greater detail
Conditional Shortfall Systemic Risk Measures. In the particular case of
exponential preferences, we provide explicit formulas that also allow us to
show a time consistency property. Finally, we provide an interpretation of
the allocations associated to Conditional Shortfall Systemic Risk Measures
as suitably defined equilibria. Conceptually, the generalization from static
to conditional Systemic Risk Measures can be achieved in a natural way, even
though the proofs become more technical than in the unconditional framework.
\end{abstract}

\parindent0em \noindent 

\noindent \textbf{Keywords}: Conditional Risk; Systemic Risk; Conditional
Equilibrium; Dynamic Risk Measures.\newline
\noindent \textbf{Mathematics Subject Classification (2020):} 91G05; 91B05;
91G15; 60G99.\newline
\bigskip \textbf{Acknowledgments:} the authors thank two anonymous referees
for precious advice and comments. %

\section{Introduction}


We provide a natural extension of static Systemic Risk Measures to a
dynamic, conditional setting, and we study related concepts of time
consistency and equilibrium.

To put the principal findings of this paper into prospective, we briefly
review the literature pertaining to Systemic Risk Measures. We let $%
X=[X^{1},\ldots ,X^{N}]\in (L^{0}(\Omega ,\mathcal{F},\mathbb{P}))^{N}$ be a
vector of $N$ $\mathbb{P}$-a.s. finite random variables on the probability
space $(\Omega ,\mathcal{F},\mathbb{P}),$ representing a configuration of
risky (financial) factors at a future time $T$ associated to a system of $N$
financial institutions/banks.

A traditional approach to evaluate the risk of each institution $j\in
\{1,\dots,N\}$ is to apply a \textit{univariate monetary Risk Measure} $\eta
^{j}$ to the single financial position $X^{j}$, yielding $\eta ^{j}(X^{j})$.
Let $L$ be a subspace of $L^{0}(\Omega ,\mathcal{F},\mathbb{P})$. A monetary
Risk Measure (see \cite{FollmerSchied2}) is a map $\eta :$ $L\rightarrow 
\mathbb{R}$ that can be interpreted as the minimal capital needed to secure
a financial position with payoff $Z\in L$, i.e., the minimal amount $m\in 
\mathbb{R}$ that must be added to $Z$ in order to make the resulting
(discounted) payoff at time $T$ acceptable 
\begin{equation}
\eta (Z):=\inf \{m\in \mathbb{R}\mid Z+m\in \mathbb{A}\},  \label{DynRM1}
\end{equation}%
where the acceptance set $\mathbb{A}\subseteq L^{0}(\Omega ,\mathcal{F},%
\mathbb{P})$ is assumed to be monotone, i.e., $Z\geq Y\in \mathbb{A}$
implies $Z\in \mathbb{A}$. Then $\eta $ is monotone decreasing and satisfies
the cash additivity property 
\begin{equation}
\eta (Z+m)=\eta (Z)-m\text{, for all }m\in \mathbb{R}\text{ and }Z\in L.
\label{DynRMCAdd}
\end{equation}%
Under the assumption that the set $\mathbb{A}$ is convex (resp. is a convex
cone) the maps in \eqref{DynRM1} are convex (resp. convex and positively
homogeneous) and are called \textit{convex} (resp. coherent) \textit{Risk
Measures}, see Artzner et al. (1999) \cite{ADEH99}, F\"{o}llmer and Schied
(2002) \cite{FollmerSchied}, Frittelli and Rosazza Gianin (2002) \cite%
{FrittelliRosazza}. Once the risk $\eta ^{j}(X^{j})$ of each institution $%
j\in \{1,...,N\}$ has been determined, the quantity 
\begin{equation*}
\rho (X):=\sum_{j=1}^{N}\eta ^{j}(X^{j})
\end{equation*}%
could be used as a very preliminary and naive assessment of the risk of the
entire system.

\subsection{Static Systemic Risk Measures}

The approach sketched above does not clearly capture systemic risk of an
interconnected system, and the design of more adequate Risk Measures for
financial systems is the topic of a vast literature on systemic risk. Let $%
L_{\mathcal{F}}$ be a vector subspace of $(L^{0}(\Omega ,\mathcal{F},{%
\mathbb{P}}))^{N}.$ For example, we may take as $L_{\mathcal{F}}$\ the space 
$(L^{p}(\Omega ,\mathcal{F},{\mathbb{P}}))^{N}$, $p\in \lbrack 1,\infty ],$
of (equivalence classes of) $p$-integrable (or essentially bounded if $%
p=\infty $) $N$-dimensional vectors of random variables on $(\Omega ,%
\mathcal{F},{\mathbb{P}})$. A Systemic Risk Measure is a map $\rho :L_{%
\mathcal{F}}\rightarrow \mathbb{R}$ that evaluates the risk $\rho (X)$ of
the complete system $X\in L_{\mathcal{F}}$ and satisfies additionally
financially reasonable properties. The subspace $L_{\mathcal{F}}$ of $%
(L^{0}(\Omega ,\mathcal{F},\mathbb{P}))^{N}$ may represent possible
additional integrability or boundedness requirements.

\paragraph{\textbf{First aggregate then allocate.}}

In Chen et al. (2013) \cite{ChenIyengarMoallemi} the authors investigated
under which conditions a Systemic Risk Measure could be written in the form 
\begin{equation}
\rho (X)=\eta (U(X))=\inf \{m\in \mathbb{R}\mid U(X)+m\in \mathbb{A}\},
\label{DynRM2}
\end{equation}%
for some univariate monetary Risk Measure $\eta $ and some aggregation rule%
\begin{equation*}
U:\mathbb{R}^{N}\rightarrow \mathbb{R}
\end{equation*}
that aggregates the $N$-dimensional risk factors into a univariate risk
factor. We also refer to Kromer et al. \cite{Kromer} (2013) for extension to
general probability space.

Such systemic risk might again be interpreted as the minimal cash amount
that secures the system when it is added to the total aggregated system loss 
$U(X)$, given that $U(X)$ allows for a monetary loss interpretation. Note,
however, that in (\ref{DynRM2}) systemic risk is the minimal capital added
to secure the system \textit{after aggregating individual risks.}

\paragraph{\textbf{First allocate then aggregate.}}

A second approach consisted in measuring systemic risk as the minimal cash
that secures the aggregated system by adding the capital into the single
institutions \textit{before aggregating their individual risks}. This way of
measuring systemic risk can be expressed by 
\begin{equation}
\rho (X):=\inf \left\{ \sum_{j=1}^{N}m^{j}\mid m=[m^{1},\dots ,m^{N}]\in 
\mathbb{R}^{N},\,U(X+m)\in \mathbb{A}\right\} \,.  \label{DynRM3}
\end{equation}%
Here, the amount $m^{j}$ is added to the financial position $X^{j}$ of
institution $j\in \{1,\dots ,N\}$ before the corresponding total loss $%
U(X+m) $ is computed. Such Systemic Risk Measures were introduced and
analyzed by Biagini et al. (2019) \cite{BFFMB}. Feinstein et al. (2017) \cite%
{FeinsteinRudloffWeber} introduced a similar approach for set-valued Risk
Measures. We refer to Armenti et al. (2018) \cite{Drapeau} and Biagini et
al. (2020) \cite{bffm} for a detailed study of Shortfall Systemic Risk
Measures - a relevant subclass of Risk Measures in the form \eqref{DynRM3} -
and their dual representations. More recently, dual representations of
Systemic Risk Measures based on acceptance sets have been studied in Arduca
et al. (2019) \cite{ArducaKochMedinaMunari19} for the real-valued case, in
Ararat and Rudloff (2020) \cite{AraratRudloff20} in the set-valued case.

\paragraph{\textbf{Scenario dependent allocations.}}

The \textquotedblleft \textit{first allocate and then aggregate}%
\textquotedblright\ approach was then extended in Biagini et al. (2019) \cite%
{BFFMB} and (2020) \cite{bffm} by adding to $X$ not merely a vector $%
m=[m_{1},\dots ,m_{N}]\in \mathbb{\ R}^{N}$ of deterministic amounts but,
more generally, a random vector $Y\in \mathcal{C}$, for some given class $%
\mathcal{C}$. In particular, one main example considered in \cite{bffm} is
given by the class $\mathcal{C}$ such that 
\begin{equation}
\mathcal{C}\subseteq \mathcal{C}_{\mathbb{R}}\cap L_{\mathcal{F}}\text{,
where \ }\mathcal{C}_{\mathbb{R}}:=\left\{ Y\in (L^{0}(\Omega ,\mathcal{F},%
\mathbb{P}))^{N}\mid \sum_{j=1}^{N}Y^{j}\in \mathbb{R}\right\} .
\label{DynRMdefC}
\end{equation}
Here, the notation $\sum_{j=1}^{N}Y^{j}\in \mathbb{R}$ means that $%
\sum_{j=1}^{N}Y^{j}$ \textit{is equal to some deterministic constant in }$%
\mathbb{R}$, \textit{even though each single }$Y^{j}$, $j=1,\dots ,N$, 
\textit{is a random variable}. It is possible to model additional
constraints on the allocation $Y\mathbf{\in }\mathcal{C}_{\mathbb{R}}$ by
requiring $Y\in \mathcal{C}\subseteq \mathcal{C}_{\mathbb{R}}$. The set $%
\mathcal{C}$ represents the class of feasible allocations and it is assumed
that ${\mathbb{R}}^{N}\subseteq \mathcal{C}$.

Under these premises the Systemic Risk Measure considered in \cite{bffm}
takes the form%
\begin{equation}
\rho (X):=\inf \left\{ \sum_{j=1}^{N}Y^{j}\mid Y\in \mathcal{C},\text{ }%
U(X+Y)\in \mathbb{A}\right\}  \label{DynRMSRM}
\end{equation}%
and can still be interpreted, since $\mathcal{C}\subseteq \mathcal{C}_{%
\mathbb{R}}$, as the minimal total cash amount $\sum_{j=1}^{N}Y^{j}\in 
\mathbb{R}$ needed today to secure the system by distributing the cash at
the future time $T$ among the components of the risk vector $X$. However,
while the total capital requirement $\sum_{j=1}^{N}Y^{j}$ is determined
today, contrary to (\ref{DynRM3}) the individual allocation $Y^{j}(\omega )$
to institution $j$ does not need to be decided today but in general depends
on the scenario $\omega $ realized at time $T$. As explained in details in 
\cite{bffm}, this total cash amount $\rho (X)$ can be composed today through
the formula 
\begin{equation}
\sum_{j=1}^{N}a^{j}(X)=\rho (X),  \label{DynRMaa}
\end{equation}%
where each cash amount $a^{j}(X)\in \mathbb{R}$ can be interpreted as a%
\textit{\ risk allocation} of bank $j$. The exact formula for the risk
allocation $a^{j}(X)$ will be introduced later in (\ref{DynRMa}).

We remark that by selecting $\mathcal{C}=\mathbb{\ R}^{N}$ in (\ref{DynRMSRM}%
), one recovers the deterministic case (\ref{DynRM3}); while when $\mathcal{C%
}=\mathcal{C}_{\mathbb{R}}$ no further requirements are imposed on the set
of feasible allocations.

Under minimal simple properties on the sets $\mathcal{C}\subseteq \mathcal{C}%
_{\mathbb{R}},$\ $\mathbb{A\subseteq }L^{0}(\Omega ,\mathcal{F},\mathbb{P})$
and on the aggregator $U:\mathbb{R}^{N}\rightarrow \mathbb{R}$ the Systemic
Risk Measures in (\ref{DynRMSRM}) satisfy the key properties of: (i)
decreasing monotonicity, (ii) convexity, (iii) systemic cash additivity:%
\begin{equation*}
\rho (X\mathbf{+}c)=\rho (X)-\sum_{j=1}^{N}c^{j}\text{\qquad for all }c\in {%
\mathbb{R}}^{N}\text{ and }X\in L_{\mathcal{F}}.
\end{equation*}

\paragraph{Shortfall Systemic Risk Measures.}

A special, relevant case of Systemic Risk Measures of the form (\ref%
{DynRMSRM}) \textquotedblleft \textit{first allocate and then aggregate,
with scenario dependent allocation}\textquotedblright\ is given by the class
of Shortfall Systemic Risk Measures, where the acceptance set has the form $%
\mathbb{A}=\{Z\in L^{1}(\Omega ,\mathcal{F},\mathbb{P})\mid \,\mathbb{E}_{%
\mathbb{P}}[Z]\geq B\}$ for a given constant $B\in \mathbb{R}$, namely:%
\begin{equation}
\rho (X):=\inf \left\{ \sum_{j=1}^{N}Y^{j}\mid Y\in \mathcal{C}\text{, }%
\mathbb{E}_{\mathbb{P}}\left[ U(X+Y)\right] \geq B\right\} .  \label{DynRMRM}
\end{equation}%
For the financial motivation behind these choices and for a detailed study
of this class of measures, we refer to \cite{BFFMB} and \cite{bffm} when $%
\mathcal{C}\subseteq \mathcal{C}_{\mathbb{R}}$, and to Armenti et al. (2018) 
\cite{Drapeau} for the analysis of such Risk Measures in the special case $%
\mathcal{C}=\mathbb{\ R}^{N},$ i.e. when only deterministic allocations are
allowed.

The choice of the aggregation functions $U:\mathbb{R}^{N}\rightarrow \mathbb{%
R}$ is also a key ingredient in the construction of $\rho $ and we refer to
Acharia et al. (2017) \cite{Acharya}, Adrian and Brunnermeier (2016) \cite%
{AdrianBrunnermeier16}, Huang and Zhou (2009) \cite{HuangZhou09}, Lehar
(2005) \cite{Lehar}, Brunnermeier and Cheridito (2019) \cite%
{BrunnermeierCheridito}, Biagini et al. (2019) \cite{BFFMB}, and (2020) \cite%
{bffm} for the many examples of aggregators adopted in literature. In order
to obtain more specific and significant properties of $\rho ,$\ \cite{bffm}
selected the aggregator 
\begin{equation}
U(x)=\sum_{j=1}^{N}u_{j}(x^{j}),\quad x\in \mathbb{R}^{N},  \label{DynRMUsum}
\end{equation}%
for strictly concave increasing utility functions $u_{j}:\mathbb{R}%
\rightarrow \mathbb{R},$ for each $j=1,\dots ,N$.

Systemic Risk Measures can be applied not only to determine the overall risk 
$\rho (X)$ of the system, but also to establish the riskiness of each
individual financial institution. As explained in \cite{bffm} it is possible
to determine the risk allocations $a^{j}(X)\in \mathbb{R}$ of each bank $\ j$
that satisfy (\ref{DynRMaa}) and additional meaningful properties. It was
there shown that, with the choice (\ref{DynRMUsum}), a \textit{fair risk
allocation} of bank $j$ is given by: 
\begin{equation}
a^{j}(X):=\mathbb{E}_{\mathbb{Q}^{j}(X)}\left[ Y^{j}(X)\right] \text{,\quad }%
j=1,\dots ,N,  \label{DynRMa}
\end{equation}%
where the vector $Y(X)$ is the optimizer in (\ref{DynRMRM}) and $\mathbb{Q}%
(X)=[\mathbb{Q}^{1}(X),...,\mathbb{Q}^{N}(X)]$ is the vector of probability
measures that optimizes the dual problem associated to $\rho (X)$.

In this paper we will adopt the generalization of the aggregation function (%
\ref{DynRMUsum}) defined by 
\begin{equation}
U(x)=\sum_{j=1}^{N}u_{j}(x_{j})+\Lambda (x),\quad x\in \mathbb{R}^{N},
\label{DynRMU}
\end{equation}%
where the multivariate aggregator $\Lambda :\mathbb{R}^{N}\rightarrow 
\mathbb{R}$ is concave and increasing (not necessarily in a strict sense).
Thus the selection $\Lambda =0$ is possible and hence, in this case, (\ref%
{DynRMU}) reduces to (\ref{DynRMUsum}). The term $\Lambda $ allows
additionally for modeling interdependence among agents also from the point
of view of the preferences.

Just to mention a few examples (see also the examples in \cite{DF19}), any
of the following multivariate utility functions satisfy our assumptions: 
\begin{equation}
U(x):=\sum_{j=1}^{N}u_{j}(x^{j})+u\left( \sum_{j=1}^{N}\beta
_{j}x^{j}\right) ,\text{\quad with }\text{ }\beta _{j}\geq 0\text{, for all }%
j,  \notag
\end{equation}%
where $u:\mathbb{R\rightarrow R}$, for some $p>1,$ is any one of the
following functions:%
\begin{equation*}
u(x):=1-\exp \left( -px\right) ;\text{\quad }u(x):=\,%
\begin{cases}
\,p\,\frac{x}{x+1} & \,\,\,x\geq 0 \\ 
\,1-\left\vert x-1\right\vert ^{p} & \,\,\,x<0%
\end{cases}%
;\text{\quad }u(x):=\,%
\begin{cases}
\,p\,\arctan (x) & \,\,\,x\geq 0 \\ 
\,1-\left\vert x-1\right\vert ^{p} & \,\,\,x<0%
\end{cases}%
\end{equation*}%
and $u_{1},\dots ,u_{N}$ are exponential utility functions ($%
u_{j}(x^{j})=1-\exp {(-\alpha _{j}x^{j})},\,\alpha _{j}>0$) for any choice
of $u$ as above. As shown in this paper, a fairness property for the risk
allocation of each bank can be established also in a conditional setting and
for the aggregator expressed by (\ref{DynRMU}).


\subsection{Conditional Systemic Risk Measures}

\label{DynRMsectcondsystintro} The temporal setting in the approaches
described above is static, meaning that the Risk Measures do not allow for
possible dynamic elements, such as additional information, or the
possibility of risk monitoring in continuous time, or the possibility of
intermediate payoffs and valuation. In order to model \ the conditional
setting we then assume that $\mathcal{G}\subseteq \mathcal{F} $ is a sub- $%
\sigma $-algebra of $\mathcal{F}$ and we consider Risk Measures $\rho _{%
\mathcal{G}}$ with range in $L^{0}(\Omega ,\mathcal{G},\mathbb{P})$ and
interpret $\rho _{\mathcal{G}}\left( X\right) $ as the risk of the whole
system $X$ given the information $\mathcal{G}$.

Conditional Risk Measures have mostly been studied in the framework of 
\textit{univariate }dynamic Risk Measures, where one adjusts the risk
measurement in response to the flow of information that is revealed when
time elapses. The conditional coherent case was treated in Riedel (2004) 
\cite{Riedel04}. Detlefsen and Scandolo (2006) \cite{DetlefsenScandolo05}
was one of the first contributions in the study of conditional convex Risk
Measures and since then a vast literature appeared. Among the early works on
the topic we refer to Barrieu and El Karoui (2005) \cite{BarrieuElKaroui05},
Tutsch (2008) \cite{Tutsch08}, Weber (2006) \cite{Weber06}. Several results
have been obtained for the case of quasi-convex conditional maps and Risk
Measures, see Frittelli and Maggis (2011) \cite{FrittelliMaggis11}, and
Frittelli and Maggis (2014) \cite{FrittelliMaggis}, \cite{FrittelliMaggis14a}%
. Conditional counterparts to classical static results (e.g. dual
representation and separation properties) have been obtained exploiting the
theory of $L^{0}$-modules. Among the many contributions in this stream of
research we mention Filipovi\'{c} at al. (2009) \cite%
{FilipovicKupperVogelpoth09} and (2012) \cite{FilipovicKupperVogelpoth12},
Drapeau et al. (2016) \cite{DrapeauJamneshanKarliczekKupper16}, Drapeau et
al. (2019) \cite{DrapeauJamneshanKupper19}, Guo (2010) \cite{Guo10} and the
references therein. Overall, the fact that the natural conditional
counterparts hold for static results is not so surprising. The two are
intrinsically related by a Boolean Logic principle. As seen in Carl and
Jamneshan (2018) \cite{CarlJamneshan18}, traditional theorems carry over to
the conditional setup assuming that suitable concatenation properties hold.
Time consistency properties have been considered, in the univariate case, in
Roorda and Schumacher (2007) \cite{RoordaSchumacher07}, (2013) \cite%
{RoordaSchumacher13} and (2016) \cite{RoordaSchumacher16}.

We refer the reader to \cite{FollmerSchied2} Chapter 11 for a good overview
on univariate dynamic Risk Measures. We observe that such a conditional and
dynamic framework generated a florilegium of interesting ramification in
different fields, including the relationships with BSDEs (Barrieu and El
Karoui (2005) \cite{BarrieuElKaroui05}, Rosazza Gianin (2006) \cite%
{Rosazza06}, Bion-Nadal (2008) \cite{BionNadal08}, Delbaen et al. (2011) 
\cite{DelbaenPengRosazza10}) and Non Linear Expectations (Peng (2004) \cite%
{Peng04}).

\bigskip

A conditional Systemic Risk Measure is a map $\rho _{\mathcal{G}}:L_{%
\mathcal{F}}\rightarrow L^{0}(\Omega ,\mathcal{G},\mathbb{P})$ that
associates to a $N$-dimensional risk factor $X\mathbf{\in }L_{\mathcal{F}%
}\subseteq (L^{0}(\Omega ,\mathcal{F},\mathbb{P}))^N$ a $\mathcal{G}$%
-measurable random variable. A conditional Systemic Risk Measure thus models
the risk of a system as new information arises in the course of time. The
study of \ conditional Systemic (multivariate) Risk Measures was initiated
by Hoffmann et al. (2016) \cite{HoffmannMeyerbrandisSvindland16} and (2018) 
\cite{HoffmannMeyerbrandisSvindland18}. 
However, as pointed out in \cite{HoffmannMeyerbrandisSvindland18}, in the
context of multivariate Risk Measures, a second interesting and important
dimension of conditioning arises, besides dynamic conditioning: a risk
measurement, of the $N$-dimensional vector $X$, conditional on some specific
substructure of the system, for example induced by mergers and acquisitions.
In this paper we will not elaborate on this topic and refer the reader to
Follmer (2014) \cite{Follmer14} or Follmer and Kluppelberg (2014) \cite%
{FollmerKluppelberg14} for some details. However, in order to allow both
interpretations, a general $\sigma $-algebra $\mathcal{G}\subseteq \mathcal{F%
}$ will be considered in the sequel.

The papers \cite{HoffmannMeyerbrandisSvindland16}, \cite%
{HoffmannMeyerbrandisSvindland18}, as well as Kromer et al. (2019) \cite%
{KromerOverbeckZilch19}, consider only the conditional extension of (static)
Systemic Risk Measures of the \textquotedblleft first aggregate, then
allocate\textquotedblright\ form expressed by \eqref{DynRM2} and study
related consistency issues. Multivariate/Systemic and set-valued conditional
Risk Measures, and related time consistency aspects, have also been analyzed
in Feinstein and Rudloff (2013) \cite{FeinsteinRudloff13}, (2015) \cite%
{FeinsteinRudloff15a}, (2017) \cite{FeinsteinRudloff17} and (2021) \cite%
{FeinsteinRudloff21}, Tahar and Lepinette (2014) \cite{TaharLepinette14},
Chen and Hu (2018) \cite{ChenHu18}. Although apparently similar, our
approach in the present work is significantly different. Once we clarify our
setup, we will elaborate more on this after Theorem \ref%
{DynRMmainthmrhoinfty} and in Remark \ref{DynRMremarksetvalued2}.

\paragraph{Contribution and outline of the paper.}

Our aim in the present paper is the study of general conditional convex
Systemic Risk Measures and the detailed analysis of Conditional Shortfall
Systemic Risk Measures. Our findings show that most properties of Shortfall
Systemic Risk Measures carry over to the conditional setting, even if the
proofs become more technical, and that a new vector type consistency, with
respect to sub $\sigma $-algebras $\mathcal{H\subseteq G}\subseteq \mathcal{F%
}$, replaces the scalar recursiveness property of univariate Risk Measures.

\bigskip

More precisely, we define axiomatically a\textbf{\ }\textit{Conditional
Systemic Risk Measure} (CSRM) on $L_{\mathcal{F}}$ as a map $\rho _{\mathcal{%
G}}:L_{\mathcal{F}}\rightarrow L^{0}(\Omega ,\mathcal{G},{\mathbb{P}})$
satisfying monotonicity, conditional convexity and the conditional monetary
property (see Definition \ref{DynRMdefdynsistrm}). Our first result (Theorem %
\ref{DynRMcorcashadd}) shows, under fairly general assumptions, that: (i) $%
\rho _{\mathcal{G}}$ admits the conditional dual representation 
\begin{equation}
\rho _{\mathcal{G}}\left( X\right) =\esssup_{{\mathbb{Q}}\in \mathscr{Q}_{%
\mathcal{G}}}\left( \sum_{j=1}^{N}\mathbb{E}_{\mathbb{Q}^{j}}\left[ -X^{j}%
\middle|\mathcal{G}\right] -\alpha ({\mathbb{Q}})\right) ,\text{ }{\mathbb{P}%
}\text{-a.s., for }X\in L_{\mathcal{F}},  \label{DynRMsup}
\end{equation}%
where $\mathscr{Q}_{\mathcal{G}}$, defined in Equation %
\eqref{DynRMequatintrointro1}, is a set of vectors of probability measures
and the penalty $\alpha ({\mathbb{Q}})\in L^{0}(\Omega ,\mathcal{G},{\mathbb{%
P}})$ is defined in Equation \eqref{DynRMequatintro2}; (ii) the supremum in (%
\ref{DynRMsup}) is attained.

We then specialize our analysis by considering the \textit{Conditional
Shortfall Systemic Risk Measure, associated to multivariate utility
functions }$U$\textbf{\ }of the form (\ref{DynRMU}), defined by%
\begin{equation}
\rho _{\mathcal{G}}\left( X\right) :=\essinf\left\{ \sum_{j=1}^{N}Y^{j}\mid
Y\in \mathscr{C}_{\mathcal{G}},\text{ }\mathbb{E}_{\mathbb{P}}\left[ U\left(
X+Y\right) \middle|\mathcal{G}\right] \geq B\right\} ,
\label{DynRMdefrcondintro}
\end{equation}%
where $B$ is now a random variable in $L^{\infty }(\Omega ,\mathcal{G},{%
\mathbb{P}})$ and the set of \ $\mathcal{G}$-admissible allocations is 
\begin{equation}
\mathscr{C}_{\mathcal{G}}\subseteq \left\{ Y\in (L^{1}(\Omega ,\mathcal{F},{%
\mathbb{P}}))^{N}\text{ such that }\sum_{j=1}^{N}Y^{j}\in L^{\infty }(\Omega
,\mathcal{G},{\mathbb{P}})\right\} \,.  \notag
\end{equation}%
Thus, with these definitions that mimic those in \eqref{DynRMdefC} and in %
\eqref{DynRMRM}, the same motivations, mutatis mutandis, explained in the
unconditional setting remain true in the conditional one.

Observe that even for the trivial selection $\mathcal{G}=\left\{ \varnothing
,\Omega \right\} $, for which conditional Risk Measures reduce to static
ones, this paper extends the results in \cite{bffm} to the more general
aggregator $U$ in the form (\ref{DynRMU}).

\bigskip

In Theorem \ref{DynRMmainthmrhoinfty} we prove the main properties of the
Conditional Shortfall Systemic Risk Measure $\rho _{\mathcal{G}}$ and, in
particular, we show that (i) $\rho _{\mathcal{G}}$ is continuous from above
and from below; (ii) the essential infimum in \eqref{DynRMdefrcondintro} is
attained by a vector $Y(\mathcal{G},X)=[Y^{1}(\mathcal{G},X),...,Y^{N}(%
\mathcal{G},X)]\in \mathscr{C}_{\mathcal{G}}$; (iii) $\rho _{\mathcal{G}}$
admits the dual representation described in \eqref{DynRMeqdualreprshorfall};
(iv) the supremum in the dual formulation \eqref{DynRMeqdualreprshorfall} of 
$\rho _{\mathcal{G}}$ is attained by a vector ${\mathbb{Q}}(\mathcal{G},X)=[{%
\mathbb{Q}}^{1}(\mathcal{G},X),...,\allowbreak {\mathbb{Q}}^{N}(\mathcal{G}%
,X)]$ of probability measures satisfying:%
\begin{equation*}
\sum_{j=1}^{N}\mathbb{E}_{\mathbb{Q}^{j}(\mathcal{G},X)}\left[ Y^{j}(%
\mathcal{G},X)\middle| \mathcal{G}\right] =\sum_{j=1}^{N}Y^{j}(\mathcal{G}%
,X)=\rho _{\mathcal{G}}\left( X\right) \,\,\,\,{\mathbb{P}}-\text{a.s.}\,
\end{equation*}
In the same spirit of \cite{bffm}, we will then interpret the quantity 
\begin{equation*}
a^{j}(\mathcal{G},X):=\mathbb{E}_{\mathbb{Q}^{j}(\mathcal{G},X)}\left[ Y^{j}(%
\mathcal{G},X)\middle| \mathcal{G}\right]
\end{equation*}
as a \textit{fair risk allocation of institution} $j$, \textit{given} $%
\mathcal{G}.$

\bigskip

Section 6 is then devoted to the particular case of exponential utility
functions $u_{j}(x^{j}):=-e^{-\alpha _{j}x^{j}},$ $\alpha _{j}>0,$ $%
j=1,...,N,$ and with $\Lambda =0$. As in the static case (see\ \cite{bffm}),
also in the conditional case it is possible to find the explicit formulas
for: (i) the value of the Conditional Shortfall Systemic Risk Measure\textit{%
\ }$\rho _{\mathcal{G}}\left( X\right) ;$ (ii) the optimizer $Y(\mathcal{G}%
,X)$ in \eqref{DynRMdefrcondintro} of $\rho _{\mathcal{G}}\left( X\right) $;
(iii) the vector $\mathbb{Q}(\mathcal{G},X)$ of probability measures that
attains the supremum in the dual formulation. Such formulas provide a
conditional counterpart to the results in \cite{bffm}.

Finally, for sub $\sigma $-algebras $\mathcal{H\subseteq G}\subseteq 
\mathcal{F}$ we prove a particular consistency property, which does not have
a counterpart in the univariate case. Indeed, a recursive property of the
type $\rho _{\mathcal{H}}(-\rho _{\mathcal{G}}\left( X\right) )=\rho _{%
\mathcal{H}}\left( X\right) $ is not even well defined in the systemic
setting, as $\rho _{\mathcal{G}}\left( X\right) $ is a random variable but
the argument of $\rho _{\mathcal{H}}$ is a vector of random variables.
However, we explain that consistency properties may be well defined for: (i)
the vector optimizers $Y(\mathcal{G},X)$ of $\rho _{\mathcal{G}}\left(
X\right) $ and $Y(\mathcal{H},-Y(\mathcal{G},X))$ of $\rho _{\mathcal{H}%
}\left( -Y(\mathcal{G},X)\right) $; (ii) the fair risk allocations vectors $%
[a(\mathcal{G},X)]_{j}:=[\mathbb{E}_{{\mathbb{Q}}^{j}(\mathcal{G},X)}[Y^{j}(%
\mathcal{G},X)|\mathcal{G}]]_{j}$ of $\rho _{\mathcal{G}}\left( X\right) $
and $a(\mathcal{H},-a(\mathcal{G},X))$ of $\rho _{\mathcal{H}}\left( -a(%
\mathcal{G},X)\right) $. The consistency properties are shown in %
\eqref{DynRMconsisty} and \eqref{DynRMconsista} and proven in Theorem \ref%
{DynRMthmconsistency} for the entropic Conditional Systemic Risk Measure. In
Remark \ref{DynRMremarksetvalued2} we compare this consistency properties
with the ones for the set-valued case from \cite{FeinsteinRudloff15a} and 
\cite{ChenHu18}.

\bigskip

In a final Section we elaborate on the concept of a Systemic Optimal Risk
Transfer Equilibrium, a notion introduced in \cite{BDFFM}. We defer the
interested reader to \cite{BDFFM} for the economic motivation and for the
applications of this equilibrium. Based on the results on the Conditional
Shortfall Systemic Risk Measure developed in Section 5, we are able to
provide in Theorem \ref{DynRMthmallocissorte} a direct extension of this
equilibrium in the conditional setting. At the same time, we show that the
optimal allocations for Shortfall Systemic Risk Measures, in both the static
and dynamic cases, admit an interpretation in the sense of a suitably
defined equilibrium. By the choice of the trivial $\mathcal{G}%
=\{\emptyset,\Omega\}$ our findings in this last section cover the static
setup, and provide an explicit link between \cite{BDFFM} and \cite{bffm}.

\section{Static Systemic Risk Measures\label{Sec2}}

We fix a probability space $(\Omega ,\mathcal{F},{\mathbb{P}})$. We take two
vector subspaces $L_{\mathcal{F}},\,L^{\ast }$ of $(L^{1}(\Omega ,\mathcal{F}%
,{\mathbb{P}}))^{N}$, for $N\geq 1$.

\begin{definition}
A functional $\rho _{0}:L_{\mathcal{F}}\rightarrow {\mathbb{R}}$ will be
called \ a \textbf{(Static) Convex Systemic Risk Measure} if it satisfies: 
\textbf{Monotonicity}, that is $X\leq Y\text{ componentwise }\,\,\Rightarrow
\,\,\rho _{0}(X)\geq \rho _{0}(Y)$ , \textbf{Convexity}, that is $0\leq
\lambda \leq 1\,\,\Rightarrow \rho _{0}(\lambda X+(1-\lambda )Y)\leq \lambda
\rho _{0}(X)+(1-\lambda )\rho _{0}(Y)$ and the \textbf{Monetary property}
(or Cash Additivity), that is $X\in L_{\mathcal{F}},$ $c\in {\mathbb{R}}%
^{N}\,\,\,\Rightarrow \,\,\,\rho _{0}(X+c)=\rho _{0}({X)}%
-\sum_{j=1}^{N}c^{j}.$
\end{definition}

For $L_{\mathcal{F}}=(L^{\infty }(\Omega ,\mathcal{F},{\mathbb{P}}))^{N}$,
we also say that $\rho _{0}:(L^{\infty }(\Omega ,\mathcal{F},{\mathbb{P}}%
))^{N}\rightarrow {\mathbb{R}}$ is continuous from below (resp. from above)
if for any sequence $(X_{n})_{n}$ such that $X_{n}\in (L^{\infty }(\Omega ,%
\mathcal{F},{\mathbb{P}}))^{N}$ and $X_{n}\uparrow _{n}X\in (L^{\infty
}(\Omega ,\mathcal{F},{\mathbb{P}}))^{N}$ (resp. $X_{n}\downarrow _{n}X\in
(L^{\infty }(\Omega ,\mathcal{F},{\mathbb{P}}))^{N}$) we have $\rho
_{0}(X)=\lim_{n}\rho _{0}(X_{n})$. If ${\mathbb{Q}}=[{\mathbb{Q}}^{1},\dots ,%
{\mathbb{Q}}^{N}]$ is a vector of probability measures on $(\Omega ,\mathcal{%
F)}$, we write ${\mathbb{Q\ll P}}$ for ${\mathbb{Q}}^{j}\ll {\mathbb{P}}%
\,\,\forall \,j=1,\dots ,N$ and use the notation $\frac{\mathrm{d}{\mathbb{Q}%
}}{\mathrm{d}{\mathbb{P}}}:=\left[ \frac{\mathrm{d}{\mathbb{Q}}^{1}}{\mathrm{%
d}{\mathbb{P}}},\dots ,\frac{\mathrm{d}{\mathbb{Q}}^{N}}{\mathrm{d}{\mathbb{P%
}}}\right] .$ We set 
\begin{equation*}
\mathscr{Q}:=\left\{ {\mathbb{Q}}=[{\mathbb{Q}}^{1},\dots ,{\mathbb{Q}}^{N}]{%
\mathbb{\ll P}}\mid \frac{\mathrm{d}{\mathbb{Q}}}{\mathrm{d}{\mathbb{P}}}\in
L^{\ast }\right\} .
\end{equation*}

\begin{definition}
\label{DynRMdefnicelyrepr} We say that a (Static) Convex Systemic Risk
Measure $\rho _{0}:L_{\mathcal{F}}\rightarrow {\mathbb{R}}$ is \textbf{%
nicely representable (}with respect to the\textbf{\ }$\sigma (L_{\mathcal{F}%
},L^{\ast })$\textbf{\ }topology\textbf{) } if 
\begin{equation}
\rho _{0}(X)=\max_{{\mathbb{Q}}\in \mathscr{Q}}\left( \sum_{j=1}^{N}\mathbb{E%
}_{{\mathbb{Q}}_{j}}\left[ -X^{j}\right] -\alpha _{0}({\mathbb{Q}})\right)
,\quad X\in L_{\mathcal{F}}  \label{DynRMdualrep}
\end{equation}%
where%
\begin{equation}
\alpha _{0}({\mathbb{Q}}):=\rho _{0}^{\ast }\left( -\frac{\mathrm{d}{\mathbb{%
Q}}}{\mathrm{d}{\mathbb{P}}}\right) =\sup_{X\in L_{\mathcal{F}}}\left(
\sum_{j=1}^{N}\mathbb{E}_{{\mathbb{Q}}_{j}}\left[ -X^{j}\right] -\rho
_{0}(X)\right) ,\quad {\mathbb{Q}}\in \mathscr{Q}\text{.}  \label{rho**}
\end{equation}
and $\rho_0^*$ is the convex conjugate of $\rho_0$.
\end{definition}

\begin{remark}
\label{DynRMremsufficient}For univariate ($N=1$) Convex Risk Measures, there
are well known sufficient conditions for nice representability, which can be
split in two categories: either continuity conditions (order upper
semicontinuity or continuity from below); or structural properties of the
vector spaces. In particular:

\begin{enumerate}
\item If $L_{\mathcal{F}}=L^{\infty }(\Omega ,\mathcal{F},{\mathbb{P}}),$ $%
L^{\ast }=L^{1}(\Omega ,\mathcal{F},{\mathbb{P}})$ and if $\rho
_{0}:L^{\infty }(\Omega ,\mathcal{F},{\mathbb{P}})\rightarrow {\mathbb{R}}$
is continuous from below then it is nicely representable (see \cite{bfnam}
Lemma 7 or \cite{FollmerSchied2} Corollary 4.35). This in turns implies $%
\sigma (L^{\infty }(\Omega ,\mathcal{F},{\mathbb{P}}),L^{1}(\Omega ,\mathcal{%
F},{\mathbb{P}})$-lower semicontinuity and continuity from above.

\item If $L_{\mathcal{F}}=L^{p}(\Omega ,\mathcal{F},{\mathbb{P}})$, $L^{\ast
}=L^{q}(\Omega ,\mathcal{F},{\mathbb{P}})$ with $p\in \lbrack 1,+\infty )$
and $q$ the conjugate exponent, or if $L_{\mathcal{F}}=M^{\Phi }(\Omega ,%
\mathcal{F},{\mathbb{P}})\neq \emptyset $, $L^{\ast }=L^{\Phi ^{\ast
}}(\Omega ,\mathcal{F},{\mathbb{P}})$ (see Section \ref{DynRMsecintegr} and
Equation \eqref{DynRMMphi} for the definitions), then any univariate Convex
Systemic Risk Measure $\rho _{0}:L_{\mathcal{F}}\rightarrow {\mathbb{R}}$ is
nicely representable, due to the Extended Namioka-Klee Theorem in \cite%
{bfnam}.
\end{enumerate}
\end{remark}

We will now extend one dimensional classical results to our systemic setup.
Only slight modifications are needed in the proofs, but we add them in the
Appendix, Section \ref{DynRMsecproofstatic} for the sake of completeness.

\begin{theorem}
\label{DynRMthmdual}i) If $\rho _{0}:(L^{\infty }(\Omega ,\mathcal{F},{%
\mathbb{P}}))^{N}\rightarrow {\mathbb{R}}$ is a (Static) Convex Systemic
Risk Measure continuous from below then it is nicely representable with $%
L^{\ast }=(L^{1}(\Omega ,\mathcal{F},{\mathbb{P}}))^{N}$ and therefore it is 
$\sigma ((L^{\infty }(\mathcal{F}))^{N},(L^{1}(\mathcal{F}))^{N})$\textbf{\ }%
lower semicontinuous and continuous from above. ii) If $L_{\mathcal{F}%
}\subseteq (L^{1}(\Omega ,\mathcal{F},{\mathbb{P}}))^{N}$ is a Banach
lattice with order continuous topology and if $L^{\ast }\subseteq
(L^{1}(\Omega ,\mathcal{F},{\mathbb{P}}))^{N}$ is its topological dual space
then any (Static) Convex Systemic Risk Measure $\rho _{0}:L_{\mathcal{F}%
}\rightarrow {\mathbb{R}}$ is nicely representable.
\end{theorem}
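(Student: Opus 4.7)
My plan is to reduce both statements to their classical univariate counterparts cited in Remark \ref{DynRMremsufficient}. The vector space $(L^\infty(\mathcal{F}))^N$ equipped with componentwise order is itself a Banach lattice, and its dual pairing with $(L^1(\mathcal{F}))^N$ is just the product of $N$ copies of the $(L^\infty,L^1)$ pairing. Similarly, any Banach lattice $L_{\mathcal{F}}\subseteq(L^1(\mathcal{F}))^N$ with order continuous topology fits the abstract framework of the univariate Namioka--Klee theorem. The only notational departure from the scalar case is cash additivity with a vector $c\in\mathbb{R}^N$ and subtraction of $\sum_j c^j$; this produces the ``probability measure'' constraint on the dual variables exactly as in one dimension, just componentwise.

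\textbf{Part (i).} First I would derive the Fatou property from continuity from below: for any sequence $(X_n)\subset(L^\infty(\mathcal{F}))^N$ with $\sup_n\|X_n^j\|_\infty<\infty$ for each $j$ and converging $\mathbb{P}$-a.s. componentwise to $X$, one has $\rho_0(X)\leq\liminf_n\rho_0(X_n)$, using the truncation $\tilde X_n:=\inf_{k\geq n}X_k\uparrow X$ together with continuity from below. Convexity combined with the Fatou property then yields $\sigma((L^\infty)^N,(L^1)^N)$-lower semicontinuity via a Krein--Smulian-type argument on convex sublevel sets intersected with $L^\infty$-balls. Fenchel--Moreau then gives $\rho_0=\rho_0^{**}$. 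Next I would restrict the effective domain of $\rho_0^{*}$ to $\mathscr{Q}$: cash additivity forces $\mathbb{Q}^j(\Omega)=1$ for each $j$ (by letting $c^j\to\pm\infty$ for $c=c^j e_j$), while monotonicity excludes signed densities (on $A:=\{d\mathbb{Q}^j/d\mathbb{P}<0\}$, the choice $X=\lambda\mathbf{1}_A e_j$ with $\lambda\to+\infty$ sends $\alpha_0(\mathbb{Q})$ to $+\infty$). Attainment follows as in Delbaen's argument: continuity from below implies that $\{\alpha_0\leq r\}$ is $\sigma((L^1)^N,(L^\infty)^N)$-compact for each $r$, and the objective $\mathbb{Q}\mapsto\sum_j\mathbb{E}_{\mathbb{Q}^j}[-X^j]-\alpha_0(\mathbb{Q})$ is upper semicontinuous on this set.

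\textbf{Part (ii).} Here the proof is a direct application of the Extended Namioka--Klee Theorem of \cite{bfnam}, extending the reasoning of Remark \ref{DynRMremsufficient}(2) to the vector setting. Since $L_{\mathcal{F}}$ is a Banach lattice with order continuous topology and $\rho_0$ is convex and monotone with respect to its order, the theorem yields that $\rho_0$ is norm continuous, hence $\sigma(L_{\mathcal{F}},L^{*})$-lower semicontinuous, and that $\rho_0^{*}$ has $\sigma(L^{*},L_{\mathcal{F}})$-compact sublevel sets. These two facts together give $\rho_0=\rho_0^{**}$ and attainment of the supremum in \eqref{DynRMdualrep}. The restriction of the effective domain of $\rho_0^{*}$ to $\mathscr{Q}$ is performed exactly as in part (i), using cash additivity and monotonicity.

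\textbf{Main obstacle.} I expect the subtlest step to be the passage from the Fatou property to $\sigma$-lower semicontinuity in part (i), which in the univariate case relies on a Krein--Smulian argument tied to the specific structure of the $(L^\infty,L^1)$ duality. The multivariate version goes through componentwise, or equivalently by identifying $(L^\infty(\mathcal{F}))^N$ with $L^\infty$ of the disjoint union of $N$ copies of $(\Omega,\mathcal{F},\mathbb{P})$; once this is in place, the rest reduces to abstract convex analysis and invocations of the univariate results.
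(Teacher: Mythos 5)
Your part (ii) is fine and coincides with the paper's argument (a direct application of the extended Namioka--Klee theorem of \cite{bfnam} to the pair $(L_{\mathcal{F}},L^{\ast})$, followed by the standard use of the monetary property to restrict the dual variables to $\mathscr{Q}$). The problem is in part (i), at the very first step. Your claimed derivation of the Fatou property from continuity from below via the truncation $\tilde X_n:=\inf_{k\geq n}X_k\uparrow X$ gives the \emph{reverse} inequality: since $\tilde X_n\leq X_n$ and $\rho_0$ is antitone, $\rho_0(\tilde X_n)\geq\rho_0(X_n)$, and continuity from below gives $\rho_0(\tilde X_n)\downarrow\rho_0(X)$, so you obtain $\rho_0(X)\geq\limsup_n\rho_0(X_n)$, not $\rho_0(X)\leq\liminf_n\rho_0(X_n)$. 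The Fatou property (equivalently continuity from above, equivalently $\sigma((L^{\infty})^N,(L^{1})^N)$-lower semicontinuity) is indeed a consequence of continuity from below, but in the standard theory this implication is itself obtained \emph{through} the dual representation, not before it; so your logical order (Fatou $\Rightarrow$ Krein--Smulian $\Rightarrow$ Fenchel--Moreau $\Rightarrow$ representation) is unsupported as written. In addition, your attainment step rests on the assertion that continuity from below makes the sublevel sets $\{\alpha_0\leq r\}$ weakly compact in $(L^1)^N$; this is true, but it is essentially the Jouini--Schachermayer--Touzi/Delbaen theorem, a nontrivial result you invoke without proof, and it is a much heavier tool than the statement requires.

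For comparison, the paper avoids both issues by reversing the order: monotonicity and the monetary property make $\rho_0$ norm continuous, so the extended Namioka--Klee theorem of \cite{bfnam} immediately yields a representation in which the maximum is \emph{attained} over vectors of finitely additive normalized set functions $\mathrm{ba}_1$; continuity from below is then used only once, in an elementary computation with $X=\lambda 1_{A_n}e^k$, $A_n\uparrow\Omega$, to show that any optimizer $\widehat\mu$ satisfies $\widehat\mu^k(A_n)\to 1$ and is therefore $\sigma$-additive, so $\mathrm{ba}_1$ can be replaced by $\mathscr{Q}$. Lower semicontinuity and continuity from above then come out as corollaries of the representation (as in F\"ollmer--Schied, Theorem 4.22 and Lemma 4.23), rather than being inputs to it. Your closing remark about identifying $(L^{\infty}(\mathcal{F}))^N$ with $L^{\infty}$ of a disjoint union of $N$ copies of $(\Omega,\mathcal{F},\mathbb{P})$ (after rescaling to restore one-dimensional cash additivity) is a legitimate way to import the univariate theorems wholesale, and would repair your proof if you cited the univariate ``continuity from below $\Rightarrow$ representation with attainment'' result directly instead of trying to re-derive the Fatou property by truncation; as it stands, though, that step is a genuine gap.
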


%
Obviously item ii) in the theorem covers the multidimensional versions of
the cases described in item 2 in Remark \ref{DynRMremsufficient}.

\section{Conditional Systemic Risk Measures\label{Sec3}}

We now present the conditional framework which acts as a counterpart to the
static one presented before. We introduce the conditional versions of usual
properties of Systemic Risk Measures (convexity, additivity), provide the
general definition of Conditional Systemic Risk Measure and related
continuity concepts we will use in the following. We also present a general
duality result in Section \ref{DynRMsecdualrepgeneral}.

\subsection{Setup and notation\label{Sec31}}

We let $\mathcal{G}\subseteq \mathcal{F}$ be a sub $\sigma $-algebra and
recall that $L_{\mathcal{F}}\subseteq L^{1}(\Omega ,\mathcal{F},{\mathbb{P}}%
) $. Throughout all the paper we will often need to change underlying $%
\sigma $-algebras. In order to avoid unnecessarily heavy notation, we will
explicitly specify the one or the other only when some confusion might
arise. For example, $L^{\infty }(\mathcal{F})$, $L^{\infty }(\mathcal{G})$
stand for $L^{\infty }(\Omega ,\mathcal{F},{\mathbb{P}})$ and $L^{\infty
}(\Omega ,\mathcal{G},{\mathbb{P}})$ respectively. Unless differently
stated, all inequalities between random variables hold $\mathbb{P}$-a.s.. If 
$A\in \mathcal{F}$ we write $A^{c}$ for its complement.

\begin{remark}
In the following (MON) and (DOM) are references to Monotone and Dominated
convergence Theorem respectively. (cMON) and (cDOM) refer to their
conditional counterparts. We will use without explicit mention the
properties of essential suprema (and essential infima) collected in
Proposition \ref{DynRMrefesssup}.
\end{remark}

\begin{definition}
\label{DynRMdefdecomp}$L_{\mathcal{F}}$ is $\mathcal{G}$-decomposable if $%
(L^{\infty }(\mathcal{F}))^{N}\subseteq L_{\mathcal{F}}$ and if for any $%
Y\in (L^{\infty }(\mathcal{G}))^{N}$ and $X\in L_{\mathcal{F}}$ the random
vector $Z$ defined as $Z^{j}=X^{j}Y^{j},\,j=1,\dots ,N,$ belongs to $L_{%
\mathcal{F}}$.
\end{definition}

\begin{remark}
\label{DynRMremdecom} Observe that by decomposability whenever $A\in 
\mathcal{G}$ and $X,Y\in L_{\mathcal{F}}$ we also have $X1_{A}+Y1_{A^{c}}\in
L_{\mathcal{F}}$. We stress the fact that $\mathcal{G}$-decomposability is a
very mild requirement, which is clearly satisfied for example if $L_{%
\mathcal{F}}=L^{p}$ for some $p\in \lbrack 1,+\infty ]$ or $L_{\mathcal{F}}$
is an Orlicz Space (see Section \ref{DynRMsecintegr} and \cite{DF19} Section
2.1).
\end{remark}

\begin{definition}
A subset $\mathcal{C}\subseteq L_\mathcal{F}$ is:

\begin{itemize}
\item $\mathcal{G}-$\textbf{conditionally convex} if $\forall$ $\lambda \in
L^{0}(\mathcal{G}),\,0\leq \lambda \leq 1$ and $\forall$ $X,Y\in \mathcal{C}$%
, $\lambda X+(1-\lambda )Y\in \mathcal{C}$.

\item \textbf{a} $\mathcal{G}-$\textbf{conditional cone} if $\forall$ $0\leq
\lambda \in L^{\infty }(\mathcal{G})$ and $\forall$ $X\in \mathcal{C}$, $%
\lambda X\in \mathcal{C}$.

\item \textbf{closed under }$\mathcal{G}-$ \textbf{truncation} if $\forall$ $%
Y\in \mathcal{C}$ there exists $k_{Y}\in \mathbb{N}$ and a $Z_{Y}\in
(L^{\infty }(\mathcal{F}))^{N}$ such that $\sum_{j=1}^{N}Z_{Y}^{j}=%
\sum_{j=1}^{N}Y^{j}$ and $\forall$ $k\geq k_{Y},\,k\in \mathbb{N}$ 
\begin{equation}
Y_{(k)}:=Y1_{\bigcap_{j}\{\left\vert Y^{j}\right\vert <
k\}}+Z_{Y}1_{\bigcup_{j}\{\left\vert Y^{j}\right\vert \geq k\}}\in \mathcal{C%
}.  \label{DynRMdeftruncated}
\end{equation}
\end{itemize}

We will explicitly specify the $\sigma$-algebra ($\mathcal{G}$ in the
notation above) with respect to which the properties are required to hold
only when some confusion might arise.
\end{definition}

\begin{definition}
\label{DynRMdefdynsistrm}

A map $\rho _{\mathcal{G}}:L_{\mathcal{F}}\rightarrow L^{0}(\mathcal{G})$ is
a \textbf{Conditional Systemic Risk Measure} (CSRM) if it satisfies for all $%
X,Y\in L_{\mathcal{F}}$

\begin{enumerate}
\item \textbf{Monotonicity}, that is 
\begin{equation}
X\leq Y\text{ componentwise}\,\,\Rightarrow \,\,\rho _{\mathcal{G}}\left(
X\right) \geq \rho _{\mathcal{G}}\left( Y\right)  \label{DynRMcmon}
\end{equation}

\item \textbf{Conditional Convexity}, that is 
\begin{equation}
\rho _{\mathcal{G}}\left( \lambda X+(1-\lambda )Y\right) \leq \lambda \rho _{%
\mathcal{G}}\left( X\right) +(1-\lambda )\rho _{\mathcal{G}}\left( Y\right)
\quad \text{for all }\,0\leq \lambda \leq 1,\text{ }\lambda \in L^{\infty }(%
\mathcal{G})\,\,  \label{DynRMccnvx}
\end{equation}

\item \textbf{Conditional $\mathcal{G}$-Additivity} (or the conditional
monetary property), that is 
\begin{equation}
\rho _{\mathcal{G}}\left( X+Y\right) =\rho _{\mathcal{G}}\left( X\right)
-\sum_{j=1}^{N}Y^{j}\quad \text{if }Y\in \left( L^{\infty }(\mathcal{G}%
)\right) ^{N}\cap L_{\mathcal{F}}  \label{DynRMccmt}
\end{equation}
\end{enumerate}
\end{definition}

One may easily show, as in the one dimensional case, that a map $\rho _{%
\mathcal{G}}:\left( L^{\infty }(\mathcal{F})\right) ^{N}\rightarrow L^{0}(%
\mathcal{G})$ satisfying $\rho _{\mathcal{G}}(0)\in L^{\infty }(\mathcal{G})$%
, monotonicity and the conditional monetary property has range in $L^{\infty
}(\mathcal{G})$ and $|\rho _{\mathcal{G}}(X)-\rho _{\mathcal{G}}(0)|\leq
\sum_{j=1}^{N}\left\Vert X^{j}\right\Vert _{\infty }$ ${\mathbb{P}}$-${a.s..}%
\,$ For the Conditional Shortfall Systemic Risk Measure in Section \ref{Sec5}
we prove first that $\rho _{\mathcal{G}}(X)\in L^{\infty }(\mathcal{G})$ for
all $X\in \left( L^{\infty }(\mathcal{F})\right) ^{N}$ and then show all the
properties of the Risk Measure. When $L_{\mathcal{F}}\neq \left( L^{\infty }(%
\mathcal{F})\right) ^{N}$, in order to apply the scalarization procedure $%
\rho _{0}(\cdot )=\mathbb{E}_{\mathbb{P}}[\rho _{\mathcal{G}}(\cdot )]$ we
will need, in Theorem \ref{DynRMcorcashadd}, the assumption that the range
of $\rho _{\mathcal{G}}$ is contained in $L^{1}(\mathcal{G})$, and for this
we will require that $\rho _{\mathcal{G}}:L_{\mathcal{F}}\,\rightarrow L^{0}(%
\mathcal{G})\cap L_{\mathcal{F}}$.

\begin{definition}
\label{DynRMdefcont} For the particular choice $L_{\mathcal{F}}=(L^{\infty }(%
\mathcal{F}))^{N}$ we say that a CSRM $\rho _{\mathcal{G}}:(L^{\infty }(%
\mathcal{F}))^{N}\rightarrow L^{0}(\mathcal{G})$ is

\begin{itemize}
\item \textbf{continuous from above} if for any sequence $%
(X_{n})_{n}\subseteq (L^{\infty }(\mathcal{F}))^{N}$ and $X\in (L^{\infty }(%
\mathcal{F}))^{N}$ such that for each $j=1,\dots ,N$ $X_{n}^{j}\downarrow
_{n}X^{j}$ we have $\rho _{\mathcal{G}}\left( X_{n}\right) \uparrow _{n}\rho
_{\mathcal{G}}\left( X\right) \,\,{\mathbb{P}}-$a.s.

\item \textbf{continuous from below} if for any sequence $%
(X_{n})_{n}\subseteq (L^{\infty }(\mathcal{F}))^{N}$ and $X\in (L^{\infty }(%
\mathcal{F}))^{N}$ such that for each $j=1,\dots ,N$ $X_{n}^{j}\uparrow
_{n}X^{j}$ we have $\rho _{\mathcal{G}}\left( X_{n}\right) \downarrow
_{n}\rho _{\mathcal{G}}\left( X\right) \,\,{\mathbb{P}}-$a.s.

\item \textbf{Lebesgue continuous} (or that $\rho _{\mathcal{G}}\left( \cdot
\right) $ has the Lebesgue property) if for any sequence $%
(X_{n})_{n}\subseteq (L^{\infty }(\mathcal{F}))^{N}$ and $X\in (L^{\infty }(%
\mathcal{F}))^{N}$ such that for each $j=1,\dots ,N$ $\sup_{n}\left\Vert
X_{n}^{j}\right\Vert _{\infty }<+\infty $ and $X_{n}^{j}\rightarrow
_{n}X^{j}\,\,{\mathbb{P}}-$a.s. we have $\rho _{\mathcal{G}}\left(
X_{n}\right) \rightarrow _{n}\rho _{\mathcal{G}}\left( X\right) \,\,{\mathbb{%
P}}-$a.s.
\end{itemize}
\end{definition}

\begin{remark}
\label{DynRMLebesgue} Observe that continuity from above and continuity from
below of a CSRM $\rho _{\mathcal{G}}$ yield the Lebesgue property, by simple
computations similar to the univariate case. 
\end{remark}

\subsection{Dual representation of Conditional Systemic Risk Measures}

\label{DynRMsecdualrepgeneral} This section follows the lines of the
scalarization procedure in \cite{DetlefsenScandolo05} and \cite%
{OrihuelaZapata} and we defer the proofs to the Appendix, Section \ref%
{DynRMsecauxildualreprcond}. 

Define the following subset of $\mathscr{Q}$: 
\begin{equation}
\mathscr{Q}_{\mathcal{G}}:=\left\{ {\mathbb{Q}}=[{\mathbb{Q}}^{1},\dots ,{%
\mathbb{Q}}^{N}]\ll {\mathbb{P}}\mid \,\frac{\mathrm{d}{\mathbb{Q}}^{j}}{%
\mathrm{d}{\mathbb{P}}}\in L^{\ast },\,\mathbb{E}_{\mathbb{P}}\left[ \frac{%
\mathrm{d}{\mathbb{Q}}^{j}}{\mathrm{d}{\mathbb{P}}}\middle|\mathcal{G}\right]
=1\quad \forall \,j=1,\dots ,N\right\} ,  \label{DynRMequatintrointro1}
\end{equation}%
and set 
\begin{equation}
\alpha ({\mathbb{Q}}):=\esssup_{X\in L_{\mathcal{F}},\rho _{\mathcal{G}%
}\left( X\right) \leq 0}\sum_{j=1}^{N}\mathbb{E}_{\mathbb{Q}^{j}}\left[ -X^j%
\middle|\mathcal{G}\right] ,\quad {\mathbb{Q\in }}\mathscr{Q}_{\mathcal{G}}.
\label{DynRMequatintro2}
\end{equation}

\begin{remark}
\label{DynRMWellDefined} Observe that each component ${\mathbb{Q}}^{j}$ of
elements in $\mathscr{Q}_{\mathcal{G}}$ satisfies $\mathbb{E}_{\mathbb{P}}%
\left[ \frac{\mathrm{d}{\mathbb{Q}}^{j}}{\mathrm{d}{\mathbb{P}}}\middle|%
\mathcal{G}\right] =1$. In this case, ${\mathbb{Q}}^{j}=\mathbb{P}$ on $%
\mathcal{G}$ and $\mathbb{E}_{\mathbb{Q}^{j}}\left[ X^{j}\middle|\mathcal{G}%
\right] =\mathbb{E}_{\mathbb{P}}\left[ \frac{\mathrm{d}{\mathbb{Q}}^{j}}{%
\mathrm{d}{\mathbb{P}}}X^{j}\middle|\mathcal{G}\right] $ is defined not only 
${\mathbb{Q}}^{j}$-a.s. but also ${\mathbb{P}}$-a.s.. Hence (\ref%
{DynRMequatintro2}) and (\ref{DynRMdualreprgeneralcor})\ are well defined ${%
\mathbb{P}}$-a.s..
\end{remark}

\begin{theorem}
\label{DynRMcorcashadd} Suppose that $L_{\mathcal{F}}$ is $\mathcal{G}$%
-decomposable and that for any $X\in L_{\mathcal{F}},\,Z\in L^{\ast }$ we
have $\sum_{j=1}^{N}X^{j}Z^{j}\in L^{1}(\mathcal{F})$. Let $\rho _{\mathcal{G%
}}:L_{\mathcal{F}}\,\rightarrow L^{0}(\mathcal{G})\cap L_{\mathcal{F}}$
satisfy monotonicity, conditional convexity and conditional additivity (that
is, let $\rho _{\mathcal{G}}$ be a CSRM) and let $\rho _{0}(\cdot ):=\mathbb{%
E}_{\mathbb{P}}\left[ \rho _{\mathcal{G}}\left( \cdot \right) \right] :L_{%
\mathcal{F}}\rightarrow {\mathbb{R}}$ be nicely representable \textbf{(}with
respect to the\textbf{\ }$\sigma (L_{\mathcal{F}},L^{\ast })$\textbf{\ }%
topology\textbf{)}. Then $\rho _{\mathcal{G}}$ admits the following dual
representation:%
\begin{equation}
\rho _{\mathcal{G}}\left( X\right) =\esssup_{{\mathbb{Q}}\in \mathscr{Q}_{%
\mathcal{G}}}\left( \sum_{j=1}^{N}\mathbb{E}_{\mathbb{Q}^{j}}\left[ -X^{j}%
\middle|\mathcal{G}\right] -\alpha ({\mathbb{Q}})\right), \quad X\in L_{%
\mathcal{F}}.  \label{DynRMdualreprgeneralcor}
\end{equation}%
Furthermore, there exists $\widehat{{\mathbb{Q}}}\in \mathscr{Q}_{\mathcal{G}%
}$ such that $\rho _{\mathcal{G}}\left( X\right) =\sum_{j=1}^{N}\mathbb{E}_{%
\widehat{{\mathbb{Q}}}^{j}}\left[ -X^{j}\middle|\mathcal{G}\right] -\alpha (%
\widehat{{\mathbb{Q}}})$ and $\rho _{\mathcal{G}}$ is continuous from above.
\end{theorem}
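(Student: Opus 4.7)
The plan is a scalarization argument in the spirit of Detlefsen--Scandolo: use the hypothesis that $\rho_{0}:=\mathbb{E}_{\mathbb{P}}[\rho_{\mathcal{G}}(\cdot )]$ is a nicely representable (static) Systemic Risk Measure to read off the conditional representation of $\rho_{\mathcal{G}}$. First I would verify that $\rho_{0}:L_{\mathcal{F}}\to\mathbb{R}$ is well defined and is a static Convex Systemic Risk Measure: the range condition $\rho_{\mathcal{G}}(X)\in L^{0}(\mathcal{G})\cap L_{\mathcal{F}}\subseteq L^{1}$ makes the expectation finite, and monotonicity, convexity (with deterministic $\lambda$) and cash additivity (with $c\in\mathbb{R}^{N}\subseteq (L^{\infty}(\mathcal{G}))^{N}$) of $\rho_{0}$ are inherited directly from the corresponding conditional properties of $\rho_{\mathcal{G}}$. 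Applying Definition \ref{DynRMdefnicelyrepr} gives $\rho_{0}(X)=\max_{\mathbb{Q}\in\mathscr{Q}}(\sum_{j}\mathbb{E}_{\mathbb{Q}^{j}}[-X^{j}]-\alpha_{0}(\mathbb{Q}))$ with an attainer $\widehat{\mathbb{Q}}$.

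The second step is to show $\alpha_{0}(\mathbb{Q})=+\infty$ for every $\mathbb{Q}\in\mathscr{Q}\setminus\mathscr{Q}_{\mathcal{G}}$. If $\mathbb{E}_{\mathbb{P}}[\mathrm{d}\mathbb{Q}^{j}/\mathrm{d}\mathbb{P}\mid\mathcal{G}]\neq 1$ for some $j$, I pick $A\in\mathcal{G}$ with $\mathbb{Q}^{j}(A)\neq\mathbb{P}(A)$, set $X=c\,1_{A}e_{j}\in L_{\mathcal{F}}$ by $\mathcal{G}$-decomposability, compute $\rho_{0}(X)=\rho_{0}(0)-c\mathbb{P}(A)$ via conditional cash additivity, and obtain $\sum_{j}\mathbb{E}_{\mathbb{Q}^{j}}[-X^{j}]-\rho_{0}(X)=c(\mathbb{P}(A)-\mathbb{Q}^{j}(A))-\rho_{0}(0)$, which blows up as $|c|\to\infty$ with the appropriate sign. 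Consequently the max in the static representation is effectively taken over $\mathscr{Q}_{\mathcal{G}}$ and the attainer $\widehat{\mathbb{Q}}$ lies in $\mathscr{Q}_{\mathcal{G}}$.

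For $\mathbb{Q}\in\mathscr{Q}_{\mathcal{G}}$ I would next establish $\alpha_{0}(\mathbb{Q})=\mathbb{E}_{\mathbb{P}}[\alpha(\mathbb{Q})]$. Since $\mathbb{Q}^{j}=\mathbb{P}$ on $\mathcal{G}$, the inner penalty $\alpha(\mathbb{Q})$ satisfies $\mathbb{E}_{\mathbb{P}}[\sum_{j}\mathbb{E}_{\mathbb{Q}^{j}}[-X^{j}\mid\mathcal{G}]]=\sum_{j}\mathbb{E}_{\mathbb{Q}^{j}}[-X^{j}]$. The inequality $\alpha_{0}(\mathbb{Q})\geq\mathbb{E}_{\mathbb{P}}[\alpha(\mathbb{Q})]$ is obtained by restricting the static sup to $X$ with $\rho_{\mathcal{G}}(X)\leq 0$ and exchanging essential sup with expectation, justified because the family $\{\sum_{j}\mathbb{E}_{\mathbb{Q}^{j}}[-X^{j}\mid\mathcal{G}]:\rho_{\mathcal{G}}(X)\leq 0\}$ is upward directed via $\mathcal{G}$-decomposable pasting ($X\mathbf{1}_{B}+X'\mathbf{1}_{B^{c}}$ for $B\in\mathcal{G}$). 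For the reverse direction, given arbitrary $X\in L_{\mathcal{F}}$, I localize on $A_{n}=\{|\rho_{\mathcal{G}}(X)|\leq n\}\in\mathcal{G}$, add $Y_{n}=\mathbf{1}_{A_{n}}\rho_{\mathcal{G}}(X)e_{1}\in (L^{\infty}(\mathcal{G}))^{N}\cap L_{\mathcal{F}}$, and use conditional cash additivity together with $\mathbb{Q}^{j}=\mathbb{P}$ on $\mathcal{G}$ to transfer the pointwise bound on $A_{n}$ to an integrated bound, letting $n\to\infty$ via monotone convergence.

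Combining the three previous points: for every $X\in L_{\mathcal{F}}$ and every $\mathbb{Q}\in\mathscr{Q}_{\mathcal{G}}$ conditional cash additivity and the definition of $\alpha(\mathbb{Q})$ yield $\rho_{\mathcal{G}}(X)\geq\sum_{j}\mathbb{E}_{\mathbb{Q}^{j}}[-X^{j}\mid\mathcal{G}]-\alpha(\mathbb{Q})$ $\mathbb{P}$-a.s., so the essential sup over $\mathscr{Q}_{\mathcal{G}}$ is dominated by $\rho_{\mathcal{G}}(X)$. Integrating and using that $\widehat{\mathbb{Q}}\in\mathscr{Q}_{\mathcal{G}}$ attains $\rho_{0}(X)$ with finite penalty forces equality $\mathbb{P}$-a.s., giving simultaneously the representation \eqref{DynRMdualreprgeneralcor} and attainment by $\widehat{\mathbb{Q}}$. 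Finally, continuity from above follows from the representation: for $X_{n}\downarrow X$ componentwise, fixing the optimizer $\widehat{\mathbb{Q}}$ of the limit $X$ gives $\rho_{\mathcal{G}}(X_{n})\geq \sum_{j}\mathbb{E}_{\widehat{\mathbb{Q}}^{j}}[-X_{n}^{j}\mid\mathcal{G}]-\alpha(\widehat{\mathbb{Q}})$, whose right-hand side increases to $\rho_{\mathcal{G}}(X)$ by conditional dominated convergence (with $|X_{n}^{j}|\leq |X_{1}^{j}|+|X^{j}|$), and monotonicity of $\rho_{\mathcal{G}}$ pins $\rho_{\mathcal{G}}(X_{n})\uparrow\rho_{\mathcal{G}}(X)$. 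The main obstacle I expect is the localization/truncation argument in step three for proving $\alpha_{0}(\mathbb{Q})\leq\mathbb{E}_{\mathbb{P}}[\alpha(\mathbb{Q})]$, where unbounded $\rho_{\mathcal{G}}(X)$ must be compensated using only bounded $\mathcal{G}$-measurable shifts inside $L_{\mathcal{F}}$; $\mathcal{G}$-decomposability is exactly what makes this possible.
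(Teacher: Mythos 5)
Your route is essentially the paper's own scalarization argument: pass to $\rho_{0}=\mathbb{E}_{\mathbb{P}}[\rho_{\mathcal{G}}(\cdot)]$, exclude $\mathbb{Q}\notin\mathscr{Q}_{\mathcal{G}}$ via cash additivity, identify the static penalty with the expectation of the conditional one by exchanging $\mathbb{E}_{\mathbb{P}}$ with an essential supremum over an upward directed family, and then force a.s. equality and attainment from the static maximizer; your continuity-from-above argument (fix the optimizer of the limit, use cDOM and monotonicity) is a harmless variant of the paper's interchange of suprema via (cMON). The only organizational difference is that you work directly with the acceptance-set penalty $\alpha$ and aim at $\alpha_{0}(\mathbb{Q})=\mathbb{E}_{\mathbb{P}}[\alpha(\mathbb{Q})]$, whereas the paper first proves this identity for the conjugate penalty $\gamma(\mathbb{Q})=\rho_{\mathcal{G}}^{\ast}(-\mathrm{d}\mathbb{Q}/\mathrm{d}\mathbb{P})$ and only afterwards shows $\alpha=\gamma$ (STEP 5).

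The step you yourself flag as the main obstacle is where your sketch has a genuine gap. With $Y_{n}=1_{A_{n}}\rho_{\mathcal{G}}(X)e_{1}$ you get $\rho_{\mathcal{G}}(X+Y_{n})=1_{A_{n}^{c}}\rho_{\mathcal{G}}(X)$, which need not be $\leq 0$; hence $X+Y_{n}$ is not an admissible $Z$ in the definition \eqref{DynRMequatintro2} of $\alpha(\mathbb{Q})$, and by itself it yields no bound, not even on $A_{n}$. The missing ingredient is the local property $\rho_{\mathcal{G}}(X1_{A}+Z1_{A^{c}})=\rho_{\mathcal{G}}(X)1_{A}+\rho_{\mathcal{G}}(Z)1_{A^{c}}$ for $A\in\mathcal{G}$ (the paper's STEP 1, a consequence of conditional convexity), which you never establish even though your pasting/upward-directedness arguments also rely on it. With it you can either paste $X+Y_{n}$ on $A_{n}$ with an acceptable position on $A_{n}^{c}$ (whose existence itself must be argued), or, more directly, use the local property to extend conditional cash additivity from bounded to general $\mathcal{G}$-measurable shifts, so that, thanks to the range assumption $\rho_{\mathcal{G}}:L_{\mathcal{F}}\rightarrow L^{0}(\mathcal{G})\cap L_{\mathcal{F}}$, the position $Z=X+\tfrac{1}{N}\rho_{\mathcal{G}}(X)\mathbf{1}$ is admissible in \eqref{DynRMequatintro2}; this is in substance how the paper's STEP 5 obtains $\alpha=\gamma$. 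The same care is needed for your pointwise inequality $\rho_{\mathcal{G}}(X)\geq\sum_{j=1}^{N}\mathbb{E}_{\mathbb{Q}^{j}}\left[-X^{j}\middle|\mathcal{G}\right]-\alpha(\mathbb{Q})$, which rests on the identical shift. Once the local property and this extension are in place, the rest of your argument goes through.
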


\section{Multivariate utility functions}

\label{DynRMSecmultiut} We will now turn our attention to Conditional
Systemic Risk Measures of shortfall type, which consider as eligible for
securing the system those terminal time allocations which produce a utility
(for the system) above a given threshold. Before formulating the precise
definition of such a risk measurement regime, we need to specify a model for
preferences of the agents in the system. To this end, we exploit
multivariate utility functions. This allows for modeling the fact that a
single agent's preferences might depend on the actions of the others.

\begin{definition}
\label{DynRMdefmultiutil} We say that $U:{\mathbb{R}}^{N}\rightarrow {%
\mathbb{R}}$ is a \textbf{multivariate utility function} if it is strictly
concave and increasing with respect to the partial componentwise order. When 
$N=1$ we will use the term univariate utility function instead.
\end{definition}

The following assumption, as well as Standing Assumption II below, holds
true throughout the paper without further mention.

\begin{standingassumptionI*}
We will consider multivariate utility functions in the form 
\begin{equation}
U(x):=\sum_{j=1}^{N}u_{j}(x^{j})+\Lambda (x)  \label{DynRMdefU}
\end{equation}%
where $u_{1},\dots ,u_{j}:{\mathbb{R}}\rightarrow {\mathbb{R}}$ are
univariate utility functions and $\Lambda :{\mathbb{R}}^{N}\rightarrow {%
\mathbb{R}}$ is concave, increasing with respect to the partial
componentwise order and bounded from above. Inspired by Asymptotic
Satiability as defined in Definition 2.13 \cite{Campi} we will furthermore
assume that for every $\varepsilon >0$ there exist a point $z_\varepsilon\in{%
\mathbb{R}}^N$ and a selection $\nu_\varepsilon\in\partial
\Lambda(z_\varepsilon)$, such that $\sum_{j=1}^{N}\left\vert
\nu_\varepsilon\right\vert <\varepsilon\,.$

For each$\,j=1,\dots ,N\,,$ we also assume the Inada conditions 
\begin{equation*}
\lim_{x\rightarrow +\infty }\frac{u_{j}(x)}{x}=0\,\,\,\,\,\,\,\text{ and }%
\,\,\,\,\,\,\,\lim_{x\rightarrow -\infty }\frac{u_{j}(x)}{x}=+\infty
\end{equation*}%
and that, without loss of generality, $u_{j}(0)=0$. 
\end{standingassumptionI*}

Observe that such a multivariate utility function is split in two
components: the sum of single agent utility functions and a universal part $%
\Lambda $ that could be either selected upon agreement by all the agents or
could be imposed by a regulatory institution. \textit{As }$\Lambda $\textit{%
\ is not necessarily strictly convex nor strictly increasing}, we may choose 
$\Lambda =0$, which corresponds to the case analyzed in \cite{bffm} for the
non conditional case.

\begin{remark}
\label{DynRMnullinzeros} $U$ defined in (\ref{DynRMdefU}) is a multivariate
utility function since it inherits strict concavity and strict monotonicity
from $u_{1},\dots ,u_{N}$. We may assume without loss of generality that $%
u_{j}(0)=0\,\,\,\,\forall \,j=1,\dots ,N$, since we can always write 
\begin{equation*}
U(x)=\sum_{j=1}^{N}\left( u_{j}(x^{j})-u_{j}(0)\right) +\left( \Lambda
(x)+\sum_{j=1}^{N}u_{j}(0)\right)
\end{equation*}%
Thus, we can always redefine the univariate utilities and the multivariate
one, without affecting other assumptions, in such a way that univariate
utilities are null in $0$.

We will make use of the following properties, without explicit mention: for
every $f:{\mathbb{R}}\rightarrow {\mathbb{R}}$ nondecreasing and such that $%
f(0)=0$ it holds that 
\begin{equation}
f(x)=f(x^{+})+f(-x^{-}),\,\,\,\,\,(f(x))^{+}=f\left( x^{+}\right)\,.  \notag
\end{equation}

Moreover, if $u_1,\dots,u_N$ are all null in $0$ (w.l.o.g. by the argument
above), for any $x^1,\dots,x^N\geq 0$ 
\begin{equation}  \label{DynRMgradest}
\sum_{j=1}^Nu_j(x^j)\leq \max_{j=1,\dots,N}\left(\frac{\mathrm{d}u_j}{%
\mathrm{d}x^j}(0) \right)\sum_{j=1}^Nx^j\,
\end{equation}
where $\frac{\mathrm{d}u_j}{\mathrm{d}x^j}(0)$, by an abuse of notation,
stands for any choice in $\partial u_j (0)$, i.e. in the subdifferential of $%
u_j$ at the point $0$, for each $j=1,\dots,N$. Inequality %
\eqref{DynRMgradest} can be showed observing that $\partial u_j (0)\neq
\emptyset $ by concavity and that $\partial u_j (0)\subseteq [0,+\infty)$
since $u_j$ is nondecreasing.
\end{remark}

\section{Conditional Shortfall Systemic Risk Measures on $(L^{\infty }(%
\mathcal{F}))^{N}\label{Sec5}$}

Once we fixed our model for the preferences in the system, we discuss the
set of allocations we admit for the terminal time exchanges, of scenario
dependent nature. This will allow us to formalize the problem which will be
the protagonist of our analysis in the subsequent parts of the paper, and to
state some of its main features in Theorem \ref{DynRMmainthmrhoinfty}.
\bigskip

Given a sub $\sigma$-algebra $\mathcal{G}\subseteq \mathcal{F}$ we introduce
the set 
\begin{equation}
\mathscr{D}_{\mathcal{G}}:=\left\{ Y\in (L^{0}(\Omega ,\mathcal{F},{\mathbb{P%
}}))^{N}\mid \sum_{j=1}^{N}Y^{j}\in L^{0}(\Omega ,\mathcal{G},{\mathbb{P}}%
)\right\} \,.  \label{DynRMdefDG}
\end{equation}

We would like to consider as the set of admissible allocations a subset 
\begin{equation*}
\mathscr{B}_\mathcal{G}\subseteq\mathscr{D}_\mathcal{G}
\end{equation*}
satisfying appropriate conditions (see the Standing Assumption II).

At the same time, we observe that the constraints in (\ref{DynRMdefDG}) can
be interpreted saying that the risk can be shared by all the agents in the
single group $\mathbf{I}:=\{1,\dots ,N\}$. This can be generalized by
introducing the set of constraints corresponding to a cluster of agents
conditional on the information in $\mathcal{G}$, inspired by an example in 
\cite{bffm} for the static case.

\begin{definition}
\label{DynRMdefChcond}

For $h\in \left\{ 1,\dots ,N\right\} ,$ let $\mathbf{I}:=(I_{m})_{m=1,...,h} 
$ be some partition of $\{1,\dots ,N\}$. Then we set {\small 
\begin{equation}
\mathcal{B}^{(\mathbf{I})}_\mathcal{G}:=\left\{ \mathbf{Y}\in (L^{0}(%
\mathcal{F}))^N\mid \ \exists \ d=[d_{1},\dots ,d_{h}]\in (L^0(\mathcal{G}%
))^{h}\mid \sum_{i\in I_{m}}Y^{i}=d_{m}\text{ for\ }m=1,\dots ,h\right\} \,,
\label{DynRMC0g}
\end{equation}
\begin{equation}
\mathcal{B}^{(\mathbf{I}),\infty}_\mathcal{G}:=\left\{ \mathbf{Y}\in (L^{0}(%
\mathcal{F}))^N\mid \ \exists \ d=[d_{1},\dots ,d_{h}]\in (L^\infty(\mathcal{%
G}))^{h}\mid \sum_{i\in I_{m}}Y^{i}=d_{m}\text{ for\ }m=1,\dots ,h\right\}
\,.  \label{DynRMC0ginfty}
\end{equation}%
}
\end{definition}

We stress that the family $\mathcal{B}^{(\mathbf{I})}_\mathcal{G}$ admits
two extreme cases:

\begin{itemize}
\item[(i)] when we have only one group $h=1$ then $\mathcal{B}^{(\mathbf{I}%
)}_\mathcal{G}=\mathscr{D}_\mathcal{G}$ is the largest possible class,
corresponding to risk sharing among all agents in the system;

\item[(ii)] on the opposite side, the strongest restriction occurs when $%
h=N, $ i.e., we consider exactly $N$ groups, and in this case $\mathcal{B}^{(%
\mathbf{I})}_\mathcal{G}=(L^0(\mathcal{G}))^N$ corresponds to no risk
sharing. This generalizes to a dynamic setting the case of deterministic
allocations, when no further information is available (i.e. $\mathcal{G}$ is
trivial). This case has been treated in the literature, especially in the
set-valued approach we mentioned in the Introduction. See also the comments
after Theorem \ref{DynRMmainthmrhoinfty} below for further details.
\end{itemize}

Suppose now a partition $\mathbf{I}$ has been fixed. We will consider a
subset 
\begin{equation*}
\mathscr{B}_{\mathcal{G}}\subseteq \mathcal{B}_{\mathcal{G}}^{(\mathbf{I})}
\end{equation*}%
and note that each component of $Y\in \mathscr{B}_{\mathcal{G}}$ is required
to be $\mathcal{F}$-measurable, while the sums $\sum_{i\in I_{m}}Y^{i}$ are $%
\mathcal{G}$-measurable, and so is consequently $\sum_{j=1}^{N}Y^{j}$. Thus $%
\mathscr{B}_{\mathcal{G}}\subseteq \mathcal{B}_{\mathcal{G}}^{(\mathbf{I}%
)}\subseteq \mathscr{D}_{\mathcal{G}}$.

We define 
\begin{equation}  \label{DynRMdefCG}
\mathscr{C}_\mathcal{G}:=\mathscr{B}_\mathcal{G}\cap\mathcal{B}^{(\mathbf{I}%
,\infty)}_\mathcal{G}\cap (L^1(\Omega,\mathcal{F},{\mathbb{P}}))^N\,.
\end{equation}

\begin{standingassumptionII*}
$\mathscr{B}_\mathcal{G}$ is closed in probability, it is conditionally
convex and it is a conditional cone. Moreover $\mathscr{B}_\mathcal{G}+ (L^0(%
\mathcal{G}))^N=\mathscr{B}_\mathcal{G}$ and the set $\mathscr{C}_{\mathcal{G%
}}$ is closed under $\mathcal{G}$-truncation. We finally consider a $B\in
L^\infty(\mathcal{G})$ with $\esssup{(B)}<\sup_{z\in{\mathbb{R}}^N}U(z)\leq
+\infty$.
\end{standingassumptionII*}

\begin{example}
\label{DynRMexChcond} It is easily seen that taking $\mathscr{B}_{\mathcal{G}%
}=\mathcal{B}_{\mathcal{G}}^{(\mathbf{I})}$ and consequently $\mathscr{C}_{%
\mathcal{G}}=\mathcal{B}_{\mathcal{G}}^{(\mathbf{I},\infty )}\cap
(L^{1}(\Omega ,\mathcal{F},{\mathbb{P}}))^{N}$ Standing Assumption II is
satisfied. Closedness under truncation in particular is verified as follows:
for $Y\in \mathscr{C}_{\mathcal{G}}$, for $j\in I_{m}$ we can take $%
Z_{Y}^{j}=\frac{1}{\left\vert I_{m}\right\vert }\sum_{i\in I_{m}}Y^{i}$
where $\left\vert I_{m}\right\vert $ is the cardinality of $I_{m}$. Then it
is easily verified that $Y_{(k)}$ defined as in \eqref{DynRMdeftruncated}
satisfies for every $m=1,\dots ,h$ 
\begin{equation*}
\sum_{i\in I_{m}}Y_{(k)}^{i}=\left( \sum_{i\in I_{m}}Y^{i}\right)
1_{\bigcap_{j}\{\left\vert Y^{j}\right\vert < k\}}+\left( \sum_{i\in
I_{m}}\left( \frac{1}{\left\vert I_{m}\right\vert }\sum_{i\in
I_{m}}Y^{i}\right) \right) 1_{\bigcup_{j}\{\left\vert Y^{j}\right\vert \geq
k\}}=\sum_{i\in I_{m}}Y^{i}\in L_{\mathcal{G}}^{\infty }
\end{equation*}%
which proves that $Y_{(k)}\in \mathcal{B}_{\mathcal{G}}^{(\mathbf{I}),\infty
}\cap (L^{\infty }(\mathcal{F}))^{N}\subseteq \mathscr{C}_{\mathcal{G}}$ and
that also 
\begin{equation*}
\sum_{j=1}^{N}Y_{(k)}^{j}=\sum_{m=1}^{h}\sum_{i\in
I_{m}}Y_{(k)}^{i}=\sum_{m=1}^{h}\sum_{i\in I_{m}}Y^{i}=\sum_{j=1}^{N}Y^{j}\,.
\end{equation*}%
Finally, we point out that we can cover the setup of \cite{bffm} in our
framework (clearly, here we work with bounded positions and not in an Orlicz
setup). Indeed, we may take the trivial partition $\mathbf{I}=\{\{1,\dots
,N\}\}$ and, to cover the static case, we may choose $\mathcal{G}%
=\{\emptyset ,\Omega \}$. Then we select the set $\mathscr{B}_{\mathcal{G}}$
equal to the set $\mathcal{C}_{0}$, defined in \cite{bffm}, which is assumed
to be closed under truncation in the sense of \cite{bffm} Definition 4.18.
Then our assumptions here are satisfied as well.
\end{example}

\begin{definition}
\label{DynRMdefexplicitrcondinfty}For each $X\in (L^{\infty }(\Omega ,%
\mathcal{F},{\mathbb{P}}))^{N}$ we set%
\begin{align}
\rho _{\mathcal{G}}^{\infty }\left( X\right) & :=\essinf\left\{
\sum_{j=1}^{N}Y^{j}\mid Y\in \mathscr{C}_{\mathcal{G}}\cap (L^{\infty }(%
\mathcal{F}))^{N},\text{ }\mathbb{E}_{\mathbb{P}}\left[ U\left( X+Y\right) %
\middle|\mathcal{G}\right] \geq B\right\} \,,  \label{DynRMdefrcondinfty} \\
\rho _{\mathcal{G}}\left( X\right) & :=\essinf\left\{
\sum_{j=1}^{N}Y^{j}\mid Y\in \mathscr{C}_{\mathcal{G}},\text{ }\mathbb{E}_{%
\mathbb{P}}\left[ U\left( X+Y\right) \middle|\mathcal{G}\right] \geq
B\right\} \,.  \label{DynRMdefrcond}
\end{align}%
and we call $\rho _{\mathcal{G}}\left( X\right) $ the \textbf{Conditional
Shortfall Systemic Risk Measure} associated to the multivariate utility
function $U$ and the set of allocations $\mathscr{C}_{\mathcal{G}}$.
\end{definition}

The difference between the two definitions only resides on the additional
constraint $Y\in (L^{\infty }(\mathcal{F}))^{N}$ appearing in $\rho _{%
\mathcal{G}}^{\infty }\left( X\right) $. As stated in our next main result,
the two Risk Measures coincide under our Standing Assumptions I and II. The
proof, which is quite lengthy, is split in separate results in the following
Section \ref{DynRMsubsecproofthmmain}.

For every ${\mathbb{Q}}=[{\mathbb{Q}}^{1},\dots ,{\mathbb{Q}}^{N}]\in %
\mathscr{Q}_{\mathcal{G}}$ defined in \eqref{DynRMequatintrointro1}, we set 
\begin{equation}
\alpha ^{1}({\mathbb{Q}}):=\esssup\left\{ \sum_{j=1}^{N}\mathbb{E}_{\mathbb{Q%
}^{j}}\left[ -Z^{j}\middle|\mathcal{G}\right] \mid Z\in (L^{\infty }(%
\mathcal{F}))^{N}\text{ and }\mathbb{E}_{\mathbb{P}}\left[ U\left( Z\right) %
\middle|\mathcal{G}\right] \geq B\right\}  \label{DynRMdefalpha1}
\end{equation}%
and we introduce the set 
\begin{equation}
\mathscr{Q}_{\mathcal{G}}^{1}:=\left\{ {\mathbb{Q}}\in \mathscr{Q}_{\mathcal{%
G}}\,\,\middle|\,\,\begin{aligned}&\alpha^1({\mathbb Q})\in
L^1(\mathcal{G})\,\,\,\,\text{ and }\\
&\sum_{j=1}^N\mathbb{E}_{\mathbb{Q}^j}
\left[Y^j\middle|\mathcal{G}\right]\leq \sum_{j=1}^NY^j,\,\forall
Y\in\mathscr{C}_\cG\cap(L^\infty(\mathcal{F}))^N\end{aligned}\right\} .
\label{DynRMDefq1cg}
\end{equation}%
As the set $\mathscr{Q}_{\mathcal{G}}^{1}$ is included in $\mathscr{Q}_{%
\mathcal{G}}$, the observation made in Remark \ref{DynRMWellDefined} on the
conditional expectation applies also here.

\begin{theorem}
\label{DynRMmainthmrhoinfty} Consider the maps $\rho _{\mathcal{G}}^{\infty
} $ and $\rho _{\mathcal{G}}$ defined in (\ref{DynRMdefrcondinfty}) and (\ref%
{DynRMdefrcond}).

\begin{enumerate}
\item $\rho _{\mathcal{G}}^{\infty }\left( X\right) \in L^{\infty }(\mathcal{%
G})$ for all $X\in (L^{\infty }(\mathcal{F}))^{N}$ and $\rho _{\mathcal{G}%
}^{\infty }$ is a Conditional Systemic Risk Measure as $\rho _{\mathcal{G}%
}^{\infty }$ is monotone \eqref{DynRMcmon}, conditionally convex %
\eqref{DynRMccnvx} and conditionally monetary \eqref{DynRMccmt}. It is also
continuous from above and from below in the sense of Definition \ref%
{DynRMdefcont}.

\item For every $X\in (L^{\infty }(\mathcal{F}))^{N}$ 
\begin{equation*}
\rho _{\mathcal{G}}^{\infty }\left( X\right) =\rho _{\mathcal{G}}\left(
X\right)
\end{equation*}

and the essential infimum in (\ref{DynRMdefrcond}) is attained.

\item The CSRM $\rho _{\mathcal{G}}^{\infty }$ admits the following dual
representation:%
\begin{equation}
\rho _{\mathcal{G}}^{\infty }\left( X\right) =\esssup_{{\mathbb{Q}}\in %
\mathscr{Q}_{\mathcal{G}}^{1}}\left( \sum_{j=1}^{N}\mathbb{E}_{\mathbb{Q}%
^{j}}\left[ -X^{j}\middle|\mathcal{G}\right] -\alpha ^{1}({\mathbb{Q}}%
)\right) ,\,\,\,\,\,\forall \,X\in (L^{\infty }(\mathcal{F}))^{N}.
\label{DynRMeqdualreprshorfall}
\end{equation}%
Furthermore, for every $X\in (L^{\infty }(\mathcal{F}))^{N}$ there exists $%
\widehat{{\mathbb{Q}}}\in \mathscr{Q}_{\mathcal{G}}^{1}$ such that 
\begin{equation*}
\rho _{\mathcal{G}}^{\infty }\left( X\right) =\sum_{j=1}^{N}\mathbb{E}_{%
\widehat{{\mathbb{Q}}}^{j}}\left[ -X^{j}\middle|\mathcal{G}\right] -\alpha
^{1}(\widehat{{\mathbb{Q}}})\,.
\end{equation*}
\end{enumerate}
\end{theorem}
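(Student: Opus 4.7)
The plan is to handle the three items of the theorem in order, reducing Part (3) to the general duality result of Theorem \ref{DynRMcorcashadd} via the structural work done in Parts (1) and (2).

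For Part (1), I would first establish that $\rho^{\infty}_{\mathcal{G}}(X)$ lies in $L^{\infty}(\mathcal{G})$. The essential supremum bound requires constructing an explicit feasible allocation: since $\esssup(B) < \sup_{z\in\mathbb{R}^N}U(z)$, I can pick a deterministic $z_0\in\mathbb{R}^N$ with $U(z_0)\geq\esssup(B)$, and $Y^j := z_0^j - X^j$ doesn't work (not in $\mathscr{C}_{\mathcal{G}}$), so instead I would combine a $\mathcal{G}$-measurable shift with the aggregate sum constraint built into $\mathcal{B}^{(\mathbf{I}),\infty}_{\mathcal{G}}$. The essential infimum bound comes from the subgradient inequality \eqref{DynRMgradest}, Jensen's inequality applied to the concave part, and monotonicity of conditional expectations. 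Monotonicity, conditional convexity, and conditional $\mathcal{G}$-additivity are routine given that $\mathscr{C}_{\mathcal{G}}$ is a conditional cone closed under sums with $(L^0(\mathcal{G}))^N$. Continuity from above follows from monotonicity of $\rho^{\infty}_{\mathcal{G}}$ combined with conditional monotone convergence applied to a single ``worst case'' feasible allocation plus a small $\mathcal{G}$-measurable perturbation. Continuity from below is what I expect to be the main obstacle: the strategy is to extract, from a sequence of near-optimizers $Y_n$ for $\rho^{\infty}_{\mathcal{G}}(X_n)$ with $X_n\uparrow X$, a Komlós-type subsequence of convex combinations converging a.s.\ to a limit $\widehat Y\in\mathscr{C}_{\mathcal{G}}$, and then use Fatou to preserve the acceptance constraint in the limit.

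For Part (2), the inequality $\rho_{\mathcal{G}}\leq\rho^{\infty}_{\mathcal{G}}$ is trivial since $\mathscr{C}_{\mathcal{G}}\cap(L^{\infty}(\mathcal{F}))^N\subseteq\mathscr{C}_{\mathcal{G}}$. For the reverse inequality, I would take an arbitrary $Y\in\mathscr{C}_{\mathcal{G}}$ feasible for $\rho_{\mathcal{G}}(X)$ and pass to its truncations $Y_{(k)}$ as in \eqref{DynRMdeftruncated}. By closedness under $\mathcal{G}$-truncation, $Y_{(k)}\in\mathscr{C}_{\mathcal{G}}\cap(L^{\infty}(\mathcal{F}))^N$, and crucially $\sum_j Y_{(k)}^j=\sum_j Y^j$, so the candidate sum is unchanged. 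The issue is preserving the acceptance constraint: since $Y_{(k)}\to Y$ a.s.\ and $U(X+Y_{(k)})^+\leq U\big((|X|+|Y|+|Z_Y|)\vec 1\big)$ can be controlled by the gradient bound and the integrability of $Y\in(L^1)^N$, conditional dominated convergence gives $\mathbb{E}[U(X+Y_{(k)})\mid\mathcal{G}]\to\mathbb{E}[U(X+Y)\mid\mathcal{G}]\geq B$, and a small $\mathcal{G}$-measurable $\varepsilon_k\downarrow 0$ perturbation (absorbed into one coordinate using the Inada conditions) restores feasibility. For attainment of the essential infimum in (\ref{DynRMdefrcond}), I would take a minimizing sequence $Y_n\in\mathscr{C}_{\mathcal{G}}$, use a conditional Komlós argument together with the closedness in probability of $\mathscr{B}_{\mathcal{G}}$ to extract a.s.\ convergent convex combinations, verify the limit $\widehat Y$ lies in $\mathscr{C}_{\mathcal{G}}$, and apply conditional Fatou to keep the acceptance constraint.

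For Part (3), I invoke Theorem \ref{DynRMcorcashadd} applied to $\rho^{\infty}_{\mathcal{G}}$ on $L_{\mathcal{F}}=(L^{\infty}(\mathcal{F}))^N$ with $L^{\ast}=(L^1(\mathcal{F}))^N$. The $\mathcal{G}$-decomposability and integrability hypotheses are clear, and the nice representability of the scalarization $\rho_0(\cdot):=\mathbb{E}_{\mathbb{P}}[\rho^{\infty}_{\mathcal{G}}(\cdot)]$ follows from continuity from below (Part 1) and scalar Fatou, via Remark \ref{DynRMremsufficient}(1) and Theorem \ref{DynRMthmdual}(i). This yields \eqref{DynRMdualreprgeneralcor} with the penalty $\alpha(\mathbb{Q})$ from \eqref{DynRMequatintro2} and attainment by some $\widehat{\mathbb{Q}}\in\mathscr{Q}_{\mathcal{G}}$. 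It remains to show that $\alpha$ coincides with $\alpha^1$ on $\mathscr{Q}^1_{\mathcal{G}}$ and that only $\mathbb{Q}\in\mathscr{Q}^1_{\mathcal{G}}$ need be considered. Using conditional cash additivity $\rho^{\infty}_{\mathcal{G}}(Z)\leq 0$ iff $\mathbb{E}_{\mathbb{P}}[U(Z+c_Z)\mid\mathcal{G}]\geq B$ for some $c_Z\in(L^{\infty}(\mathcal{G}))^N$ with $\sum_j c_Z^j=\rho^{\infty}_{\mathcal{G}}(Z)$, a direct computation translates the definition of $\alpha$ into \eqref{DynRMdefalpha1}. The condition $\sum_j\mathbb{E}_{\mathbb{Q}^j}[Y^j\mid\mathcal{G}]\leq\sum_j Y^j$ for $Y\in\mathscr{C}_{\mathcal{G}}\cap(L^{\infty}(\mathcal{F}))^N$ is extracted by testing admissibility of $Y-\sum_j Y^j\cdot e$ in the definition of $\alpha^1$, which forces $\alpha^1(\mathbb{Q})=+\infty$ whenever this fails; hence the supremum is effectively over $\mathscr{Q}^1_{\mathcal{G}}$, and $\alpha^1(\widehat{\mathbb{Q}})\in L^1(\mathcal{G})$ because $\rho^{\infty}_{\mathcal{G}}(X)\in L^{\infty}(\mathcal{G})$.
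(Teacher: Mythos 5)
Your overall architecture matches the paper's: Komlós for attainment, truncation ($Y_{(k)}$) to prove $\rho^\infty_{\mathcal{G}}=\rho_{\mathcal{G}}$, and Theorem \ref{DynRMcorcashadd} to produce the conditional dual. Two of your deviations from the paper are fine: your conditional-Jensen argument for the lower bound on $\rho_{\mathcal{G}}(X)$ (via the linear majorant $U(x)\le a\sum_j x^j+b$ applied to $U(\mathbb{E}[X+Y\mid\mathcal{G}])$, using that $\sum_j Y^j$ is $\mathcal{G}$-measurable) works and is arguably cleaner than the paper's contradiction argument with sets $A_n=\{\rho_{\mathcal{G}}(X)\le -k_n\}$. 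Using Komlós for continuity from below is also possible in principle, but it produces a limit $\widehat Y$ only in $(L^1)^N$, so it establishes continuity of $\rho_{\mathcal{G}}$, not directly of $\rho^\infty_{\mathcal{G}}$ — you therefore need Part~(2) before that step. The paper sidesteps this by first proving everything for $\rho_{\mathcal{G}}$, then showing $\rho^\infty_{\mathcal{G}}=\rho_{\mathcal{G}}$, and only then proving continuity from below by a lighter Egorov-type argument that shifts a fixed feasible $Y$ for $\rho_{\mathcal{G}}(X)$ by a deterministic $\alpha_K$ on the exceptional set. Your proposed order (start with $\rho^\infty_{\mathcal{G}}$) makes the bootstrap more delicate than you acknowledge.

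The genuine gap is in Part~(3), in how the fairness condition and the restriction to $\mathscr{Q}^1_{\mathcal{G}}$ arise. You write that ``a direct computation translates the definition of $\alpha$ into \eqref{DynRMdefalpha1}'' and that the fairness condition ``forces $\alpha^1(\mathbb{Q})=+\infty$ whenever this fails.'' Neither is correct: $\alpha$ and $\alpha^1$ are not equal in general, and the fairness condition places no constraint on $\alpha^1$ at all (whose defining supremum only involves the acceptance constraint $\mathbb{E}_{\mathbb{P}}[U(Z)\mid\mathcal{G}]\ge B$, not the set $\mathscr{C}_{\mathcal{G}}$). The correct mechanism, which the paper uses in Claim \ref{DynRMthmdualreprshorfall}, is to plug the explicit $\essinf$ formula for $\rho^\infty_{\mathcal{G}}$ into $\rho^*_{\mathcal{G}}(-\tfrac{\mathrm{d}\mathbb{Q}}{\mathrm{d}\mathbb{P}})$ and substitute $Z=X+Y$, which splits $\alpha(\mathbb{Q})$ as the \emph{sum} of $\alpha^1(\mathbb{Q})$ and the term
\[
\esssup_{Y\in\mathscr{C}_{\mathcal{G}}\cap(L^\infty(\mathcal{F}))^N}\Bigl(\sum_{j=1}^N\mathbb{E}_{\mathbb{Q}^j}[Y^j\mid\mathcal{G}]-\sum_{j=1}^N Y^j\Bigr).
\]
Because $\mathscr{C}_{\mathcal{G}}\cap(L^\infty(\mathcal{F}))^N$ is a conditional cone, this second term is either $0$ (precisely when the fairness inequality holds for all such $Y$) or $+\infty$. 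This is what forces $\alpha(\mathbb{Q})=+\infty$ off $\mathscr{Q}^1_{\mathcal{G}}$ and identifies $\alpha$ with $\alpha^1$ on $\mathscr{Q}^1_{\mathcal{G}}$. Your ``testing admissibility of $Y-\sum_j Y^j\cdot e$'' step does not make sense as written, since such a vector need not satisfy the acceptance constraint appearing in $\alpha^1$; the cone dichotomy lives inside $\alpha$, not $\alpha^1$.
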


As anticipated in the Introduction, several works have focused on the
set-valued theory for Systemic Risk Measures, in both the static and dynamic
case. A key difference with our approach is marked by our use of \emph{%
random allocations}. In \cite{ChenHu18}, \cite{FeinsteinRudloff13}, \cite%
{FeinsteinRudloff15}, \cite{FeinsteinRudloff15a}, \cite{FeinsteinRudloff17}, 
\cite{FeinsteinRudloff21}, \cite{TaharLepinette14} one associates, to each
risky position $X $,\textquotedblleft the set $R_{t}(X)$ of eligible
portfolios at time $t$ that cover the risk of the portfolio $X$%
\textquotedblright , quoting from \cite{FeinsteinRudloff15a}. The risk of $X$
is quantified in these works using a set of vectors which are measurable
with respect $\mathcal{F}_{t}$ (the information known at time $t$). Here,
instead, we are primarily interested in random allocations which happen at
terminal time. Taking $\mathcal{G}=\mathcal{F}_{t}$ in Definition \ref%
{DynRMdefexplicitrcondinfty} and Theorem \ref{DynRMmainthmrhoinfty} to
uniform notation, we stress once again that the amount $\rho _{\mathcal{F}%
_{t}}(X)$ is known once the information of $\mathcal{F}_{t}$ is known, but
this is not the case for $Y\in \mathscr{C}_{\mathcal{F}_{t}}$ since the
latter vectors are $\mathcal{F} $-measurable, hence known only at terminal
time. The label $\mathcal{F}_{t}$ in $\mathscr{C}_{\mathcal{F}_{t}}$ only
points out that $\sum_{j=1}^{N}Y^{j}$ is $\mathcal{F}_{t}$-measurable. An
evident consequence of this can be found in the dual representation result %
\eqref{DynRMeqdualreprshorfall}: the dual variables are taken in the set $%
\mathscr{Q}_{\mathcal{G}}^{1}$ and satisfy the fairness condition $%
\sum_{j=1}^{N}\mathbb{E}_{\mathbb{Q}^{j}}\left[ Y^{j}\middle|\mathcal{F}_{t}%
\right] \leq \sum_{j=1}^{N}Y^{j},\,\forall Y\in \mathscr{C}_{\mathcal{F}%
_{t}}\cap (L^{\infty }(\mathcal{F}))^{N}$. This would be a triviality taking
vectors $Y$ which are componentwise $\mathcal{F}_{t}$-measurable, but
becomes an additional characteristic feature in our setup. Additionally,
recall the scalarization procedure in \cite{FeinsteinRudloff21} for weights $%
[w^{1},\dots ,w^{N}]$ which are $\mathcal{F}_{t}$-measurable, namely 
\begin{equation*}
\rho _{\mathcal{F}_{t}}^{w}(X):=\essinf\left\{ \sum_{j=1}^{N}w^{j}Y^{j}\mid
Y\in R_{t}(X)\right\} \,.
\end{equation*}%
This is meaningful whenever the eligible allocations $Y\in R_{t}(X)$ are $%
\mathcal{F}_{t}$-measurable. We point out that our eligible assets for $\rho
_{\mathcal{F}_{t}}$ satisfy $Y\in \mathscr{C}_{\mathcal{F}_{t}}$, a
condition purposely designed for the valuation $Y\rightarrow
\sum_{j=1}^{N}Y^{j}$. Using any other type of weights $w\in L^{\infty }(%
\mathcal{F}_{t}))^{N}$ would produce an amount $\sum_{j=1}^{N}w^{j}Y^{j}$
which would be in general only $\mathcal{F}$-measurable. This would violate
the basic idea that the (scalar) measurement of risk, given the information
in $\mathcal{F}_{t}$, should only depend on the information in $\mathcal{F}%
_{t}$. 

\subsection{Proof of Theorem \protect\ref{DynRMmainthmrhoinfty}}

%
%
%
\label{DynRMsubsecproofthmmain} In the notation \eqref{DynRMdefrcond}, the
expression $\mathbb{E}_{\mathbb{P}}\left[ U\left( X+Y\right) \middle|%
\mathcal{G}\right] \geq B$ stands for a shortened version of the following
set of conditions: $U(X+Y)\in L^{1}(\Omega ,\mathcal{F},{\mathbb{P}})$ and $%
\mathbb{E}_{\mathbb{P}}\left[ U\left( X+Y\right) \middle|\mathcal{G}\right] $%
, which is well defined, is not ${\mathbb{P}}-$a.s smaller than the random
variable $B$. Recall also that for any random variable $W$ taking values in $%
[0,+\infty ]$ $\mathbb{E}_{\mathbb{P}}\left[ W\middle|\mathcal{G}\right] $
is always well defined via the Radon-Nikodym Theorem (see \cite{Bauer},
Theorems 17.10-11), and in this case the notation $\mathbb{E}_{\mathbb{P}}%
\left[ W\middle|\mathcal{G}\right] $ will be used with this meaning.

For technical reasons we first study the functional $\rho _{\mathcal{G}}$
defined in \eqref{DynRMdefrcond}. We will first prove in Claim \ref%
{DynRMthmalloc} that the range of $\rho _{\mathcal{G}}$ is $L^{\infty }(%
\mathcal{G})$ and then we will show all the properties in Theorem \ref%
{DynRMmainthmrhoinfty} Item 1, made exception for continuity from above and
below, and existence of an allocation for $\rho _{\mathcal{G}}$ . We will
then prove that $\rho _{\mathcal{G}}\equiv \rho _{\mathcal{G}}^{\infty }$ on 
$(L^{\infty }(\mathcal{F}))^{N}$ (Claim \ref{DynRMpropinftycoincide}), which
yields Theorem \ref{DynRMmainthmrhoinfty} Item 2, and move on proving
continuity from below and from above (Claim \ref{DynRMpropconbelow}).
Finally, in Claim \ref{DynRMthmdualreprshorfall} we prove Theorem \ref%
{DynRMmainthmrhoinfty} Item 3. 

\begin{claim}
\label{DynRMthmalloc} The functional $\rho _{\mathcal{G}}$ on $(L^{\infty }(%
\mathcal{F}))^{N}$ takes values in $L^{\infty }(\mathcal{G})$, the infimum
is attained by a $\widehat{Y}\in \mathscr{C}_{\mathcal{G}}$, $\rho _{%
\mathcal{G}}$ is monotone \eqref{DynRMcmon} conditionally convex %
\eqref{DynRMccnvx} and conditionally monetary \eqref{DynRMccmt}.
\end{claim}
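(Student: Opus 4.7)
The plan is to address the claim in three stages: (i) show $\rho_{\mathcal{G}}(X) \in L^{\infty}(\mathcal{G})$; (ii) verify the three CSRM axioms; (iii) establish attainment of the essential infimum.

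For (i), the upper bound on $\rho_{\mathcal{G}}(X)$ follows by testing against a deterministic allocation. By Standing Assumption II, the essential supremum of $B$ lies strictly below $\sup_{z \in \mathbb{R}^{N}} U(z)$; combined with continuity and monotonicity of $U$ and essential boundedness of $X$, this yields a constant $c \in \mathbb{R}$ such that $U(X + (c/N)\mathbf{1}) \geq B$ pointwise, placing $(c/N)\mathbf{1} \in \mathscr{C}_{\mathcal{G}}$ (via the invariance $\mathscr{B}_{\mathcal{G}} + (L^{0}(\mathcal{G}))^{N} = \mathscr{B}_{\mathcal{G}}$) and giving $\rho_{\mathcal{G}}(X) \leq c$. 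For the lower bound, I would pick any subgradient $g \in \partial U(0) \subseteq \mathbb{R}^{N}$, which has $g_{j} > 0$ by strict monotonicity of each $u_{j}$. Concavity of $U$ gives $U(X+Y) \leq g \cdot (X+Y)$, so for any feasible $Y$, taking the $\mathcal{G}$-conditional expectation yields $B \leq \sum_{j} g_{j}\,\mathbb{E}_{\mathbb{P}}[X^{j} + Y^{j}\,|\,\mathcal{G}]$. Combining this with the Inada condition $\lim_{x \to -\infty} u_{j}(x)/x = +\infty$, which prevents $\sum_{j} Y^{j}$ from being arbitrarily negative on a $\mathcal{G}$-set of positive measure without driving $\mathbb{E}_{\mathbb{P}}[U(X+Y)\,|\,\mathcal{G}]$ strictly below $B$, produces an $L^{\infty}(\mathcal{G})$ lower bound on $\rho_{\mathcal{G}}(X)$.

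The three CSRM properties in (ii) then follow directly. Monotonicity is immediate since $U$ is increasing, so feasibility of $Y$ is preserved when $X$ grows componentwise. Conditional convexity uses conditional convexity of $\mathscr{B}_{\mathcal{G}}$ and concavity of $U$, together with the fact that $\lambda \in L^{\infty}(\mathcal{G})$ passes through the conditional expectation, so a convex combination of feasible allocations for $X_{1}, X_{2}$ is feasible for $\lambda X_{1} + (1-\lambda) X_{2}$ with the corresponding convex combination of the sums. The conditional monetary property follows from the invariance $\mathscr{B}_{\mathcal{G}} + (L^{0}(\mathcal{G}))^{N} = \mathscr{B}_{\mathcal{G}}$ via the bijective change of variables $Z \mapsto Z - Y$ (for $Y \in (L^{\infty}(\mathcal{G}))^{N}$) between the feasible sets for $X+Y$ and for $X$, which shifts the sums by $-\sum_{j} Y^{j}$.

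Stage (iii), attainment, is the main technical obstacle. A concatenation argument using $\lambda = 1_{A}$ for $A \in \mathcal{G}$ first shows that the set of feasible sums $\{\sum_{j} Y^{j} : Y \in \mathscr{C}_{\mathcal{G}}, \, \mathbb{E}_{\mathbb{P}}[U(X+Y)|\mathcal{G}] \geq B\}$ is downward filtered, producing a sequence $(Y_{n}) \subseteq \mathscr{C}_{\mathcal{G}}$ with $\sum_{j} Y_{n}^{j} \downarrow \rho_{\mathcal{G}}(X)$ $\mathbb{P}$-a.s. The plan is then to apply a Koml\'os-type theorem to extract convex combinations $\widetilde{Y}_{n} \in \mathrm{conv}(Y_{n}, Y_{n+1}, \ldots)$ converging $\mathbb{P}$-a.s.\ componentwise to some $\widehat{Y} \in (L^{1}(\mathcal{F}))^{N}$. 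Closedness of $\mathscr{B}_{\mathcal{G}}$ in probability and conditional convexity place $\widehat{Y} \in \mathscr{B}_{\mathcal{G}}$, while $\sum_{j} \widehat{Y}^{j} = \rho_{\mathcal{G}}(X) \in L^{\infty}(\mathcal{G})$ combined with closedness under $\mathcal{G}$-truncation yields $\widehat{Y} \in \mathscr{C}_{\mathcal{G}}$. Feasibility of $\widehat{Y}$ will follow from a conditional reverse Fatou argument: concavity provides a deterministic linear upper bound $U(X+\widetilde{Y}_{n}) \leq g \cdot (X+\widetilde{Y}_{n})$ that legitimizes the passage $\mathbb{E}_{\mathbb{P}}[U(X+\widehat{Y})\,|\,\mathcal{G}] \geq \limsup_{n} \mathbb{E}_{\mathbb{P}}[U(X+\widetilde{Y}_{n})\,|\,\mathcal{G}] \geq B$. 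The subtlest point is justifying this reverse Fatou passage despite the absence of uniform essential bounds on the components of $\widetilde{Y}_{n}$, where the closedness under $\mathcal{G}$-truncation and the Inada conditions are expected to work in concert to supply the required integrability control.
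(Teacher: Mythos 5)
Your architecture matches the paper's (deterministic vector for the upper bound, downward directedness, Koml\'os/Ces\`aro limit, closedness in probability, direct verification of the three axioms via $\mathscr{B}_{\mathcal{G}}+(L^{0}(\mathcal{G}))^{N}=\mathscr{B}_{\mathcal{G}}$), but the two analytic steps you leave to a subgradient bound contain a genuine gap. For the lower bound, $U(X+Y)\leq U(0)+\sum_{j}g_{j}(X^{j}+Y^{j})$ with $g\in\partial U(0)$ gives, after conditioning, $B\leq C+\sum_{j}g_{j}\,\mathbb{E}_{\mathbb{P}}[Y^{j}\,|\,\mathcal{G}]$; since only the cluster sums (in particular $\sum_{j}Y^{j}$) are $\mathcal{G}$-measurable while the individual $\mathbb{E}_{\mathbb{P}}[Y^{j}\,|\,\mathcal{G}]$ are unconstrained, a weighted sum with unequal $g_{j}$ gives no control on $\sum_{j}Y^{j}$ (e.g.\ $N=2$, $Y^{1}=M\xi$, $Y^{2}=-M\xi+c$ leaves $\sum_{j}Y^{j}=c$ while the weighted conditional sum is arbitrary). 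What is needed is a linear majorant with a \emph{single} slope, $U(x)\leq a\sum_{j}x^{j}+b$, i.e.\ Lemma \ref{DynRMlemmacontrolwithline}(ii), whose validity rests on the Inada condition at $+\infty$ (sublinear growth of each $u_{j}$), not on the condition at $-\infty$ you invoke; with it the paper runs a contradiction argument on the sets $A_{n}=\{\rho_{\mathcal{G}}(X)\leq -k_{n}\}$, integrating the budget constraint along the minimizing sequence and using (MON).

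The same missing control reappears at the decisive step of attainment, which you explicitly leave as ``expected to work'': $g\cdot(X+\widetilde{Y}_{n})$ varies with $n$ and is not a dominating function, so no reverse Fatou argument is available from it. The paper's Lemma \ref{DynRMproputilfat} closes this by applying conditional Fatou to the nonnegative functions $\Gamma_{\varepsilon}(x)=-U(x)+\varepsilon\sum_{j}(x^{j})^{+}+b_{\varepsilon}$ (Lemma \ref{DynRMlemmacontrolwithline}(iii)), after first proving $\sup_{n}\sum_{j}\mathbb{E}_{\mathbb{P}}[(Z_{n}^{j})^{+}\,|\,\mathcal{G}]<\infty$ a.s.\ from the budget constraint and the a.s.\ boundedness of $\sum_{j}\mathbb{E}_{\mathbb{P}}[Z_{n}^{j}\,|\,\mathcal{G}]$, and then letting $\varepsilon\downarrow 0$. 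A further, smaller, inaccuracy: closedness under $\mathcal{G}$-truncation is not what puts $\widehat{Y}$ into $\mathscr{C}_{\mathcal{G}}$; one needs the cluster sums $\sum_{i\in I_{m}}\widehat{Y}^{i}$ to lie in $L^{\infty}(\mathcal{G})$, which the paper deduces from Proposition \ref{DynRMsuminl0gimpliesitslinftyg} only \emph{after} establishing $\mathbb{E}_{\mathbb{P}}[U(X+\widehat{Y})\,|\,\mathcal{G}]\geq B$ (truncation is used later, in Claim \ref{DynRMpropinftycoincide}). The remaining parts of your proposal (nonemptiness and upper bound, downward directedness via indicators of $\mathcal{G}$-sets, $L^{1}$-boundedness and Koml\'os, identification of $\sum_{j}\widehat{Y}^{j}$ with $\rho_{\mathcal{G}}(X)$, and the three CSRM properties) are correct and essentially identical to the paper's proof.
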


\begin{proof}
$\,$

\textbf{STEP 1}: $\rho _{\mathcal{G}}$ takes values in $L^{\infty }(\mathcal{%
G})$.

First we see that the set over which we take the essential infimum defining $%
\rho _{\mathcal{G}}$ is nonempty. We have by monotonicity (for $m$ an $N$%
-dimensional deterministic vector) $\mathbb{E}_{\mathbb{P}}\left[ U\left(
X+m\right) \middle|\mathcal{G}\right] \geq U(-\left\Vert X\right\Vert
_{\infty }+m)$ where $\left\Vert X\right\Vert _{\infty }$ stands for the
vector $[\left\Vert X^{1}\right\Vert _{\infty },\dots ,\allowbreak
\left\Vert X^{N}\right\Vert _{\infty }]\in {\mathbb{R}}^{N}$. Since by
assumption 
\begin{equation*}
\sup_{m\in {\mathbb{R}}^{N}}U(-\left\Vert X\right\Vert _{\infty
}+m)=\sup_{z\in {\mathbb{R}}^{N}}U(z)> \esssup{(B)},
\end{equation*}%
we have consequently $\mathbb{E}_{\mathbb{P}}\left[ U\left( X+m\right) %
\middle|\mathcal{G}\right] \geq B$, for some $m\in {\mathbb{R}}^{N}$.

We claim that the set over which we take the essential infimum is downward
directed. To show this, suppose that $Z,Y\in (L^{1}(\mathcal{F}))^{N}$ are
such that $\sum_{j=1}^{N}Y^{j},\,\sum_{j=1}^{N}Z^{j}\in L^{\infty }(\mathcal{%
G})$ and 
\begin{equation*}
\mathbb{E}_{\mathbb{P}}\left[ U\left( X+Y\right) \middle|\mathcal{G}\right]
\geq B,\text{ }\mathbb{E}_{\mathbb{P}}\left[ U\left( X+Z\right) \middle|%
\mathcal{G}\right] \geq B
\end{equation*}%
Define the set $A:=\{\sum_{j=1}^{N}Y^{j}\leq \sum_{j=1}^{N}Z^{j}\}\in 
\mathcal{G}$ and the random variable $W:=1_{A}Y+1_{A^{c}}Z\in (L^{1}(%
\mathcal{F}))^{N}\cap \mathscr{B}_{\mathcal{G}}$ (observe that it belongs to 
$\mathscr{B}_{\mathcal{G}}$ since $\mathscr{B}_{\mathcal{G}}$ is
conditionally convex). It is easy to see that $\sum_{j=1}^{N}W^{j}=1_{A}%
\sum_{j=1}^{N}Y^{j}+1_{A^{c}}\sum_{j=1}^{N}Z^{j}=\min \left(
\sum_{j=1}^{N}Y^{j},\sum_{j=1}^{N}Z^{j}\right) \in L^{\infty }(\mathcal{G})$%
, so that the set is downward directed. Furthermore 
\begin{equation*}
\mathbb{E}_{\mathbb{P}}\left[ U\left( X+W\right) \middle|\mathcal{G}\right] =%
\mathbb{E}_{\mathbb{P}}\left[ U\left( X+W\right) \middle|\mathcal{G}\right]
1_{A}+\mathbb{E}_{\mathbb{P}}\left[ U\left( X+W\right) \middle|\mathcal{G}%
\right] 1_{A^{c}}=
\end{equation*}%
\begin{equation*}
=\mathbb{E}_{\mathbb{P}}\left[ U\left( X+Y\right) \middle|\mathcal{G}\right]
1_{A}+\mathbb{E}_{\mathbb{P}}\left[ U\left( X+Z\right) \middle|\mathcal{G}%
\right] 1_{A^{c}}\geq B1_{A}+B1_{A^{c}}=B
\end{equation*}%
which concludes the proof of our claim.

Since the set is downward directed, there exists a minimizing sequence $%
(Y_{n})_{n}\subseteq \mathscr{C}_{\mathcal{G}}$ such that $%
\sum_{j=1}^{N}Y_{n}^{j}\downarrow _{n}\rho _{\mathcal{G}}\left( X\right) $
and, having $\rho _{\mathcal{G}}\left( X\right) \leq
\sum_{j=1}^{N}Y_{1}^{j}\in L^{\infty }$, we conclude that $\left\Vert (\rho
_{\mathcal{G}}\left( X\right) )^{+}\right\Vert _{\infty }<+\infty $. Suppose
now by contradiction that for a sequence $k_{n}\uparrow +\infty $ we had ${%
\mathbb{P}}\left( A_{n}\right) >0$ for all $n$, where $A_{n}:=\{\rho _{%
\mathcal{G}}\left( X\right) \leq -k_{n}\}\in \mathcal{G}$. Since for all $%
M\in \mathbb{N}$ we have $-\left\Vert B\right\Vert _{\infty }\leq B\leq 
\mathbb{E}_{\mathbb{P}}\left[ U(X+Y_{M})\middle|\mathcal{G}\right] $ we
deduce:%
\begin{align*}
-\left\Vert B\right\Vert _{\infty }{\mathbb{P}}\left( A_{n}\right) &\leq 
\mathbb{E}_{\mathbb{P}}\left[ B1_{A_{n}}\right] \leq \mathbb{E}_{\mathbb{P}}%
\left[ \mathbb{E}_{\mathbb{P}}\left[ U(X+Y_{M})\middle|\mathcal{G}\right]
1_{A_{n}}\right] =\mathbb{E}_{\mathbb{P}}\left[ U\left( X+Y_{M}\right)
1_{A_{n}}\right] \\
&\overset{\text{Lemma }\ref{DynRMlemmacontrolwithline}.(ii)}{\leq }%
\sum_{j=1}^{N}\mathbb{E}_{\mathbb{P}}\left[ \left( a\left(
X^{j}+Y_{M}^{j}\right) +b\right) 1_{A_{n}}\right] \\
&\leq \left( a\sum_{j=1}^{N}\left\Vert X^{j}\right\Vert _{\infty }+b\right) {%
\mathbb{P}}\left( A_{n}\right) +a\mathbb{E}_{\mathbb{P}}\left[
\sum_{j=1}^{N}Y_{M}^{j}1_{A_{n}}\right] \text{, with }a>0.
\end{align*}%
Consequently 
\begin{align*}
-\left\Vert B\right\Vert _{\infty }{\mathbb{P}}\left( A_{n}\right) &\leq
\left( a\sum_{j=1}^{N}\left\Vert X^{j}\right\Vert _{\infty }+b\right) {%
\mathbb{P}}\left( A_{n}\right) +a\lim_{M}\mathbb{E}_{\mathbb{P}}\left[
\sum_{j=1}^{N}Y_{M}^{j}1_{A_{n}}\right] \\
&\overset{\text{(MON)}}{=}\left( a\sum_{j=1}^{N}\left\Vert X^{j}\right\Vert
_{\infty }+b\right) {\mathbb{P}}\left( A_{n}\right) +a\mathbb{E}_{\mathbb{P}}%
\left[ \rho _{\mathcal{G}}\left( X\right) 1_{A_{n}}\right] \\
&\leq \left( a\sum_{j=1}^{N}\left\Vert X^{j}\right\Vert _{\infty }+b\right) {%
\mathbb{P}}\left( A_{n}\right) -k_{n}a{\mathbb{P}}\left( A_{n}\right) \,.
\end{align*}%
Dividing by ${\mathbb{P}}\left( A_{n}\right) $ and sending $n$ to infinity
we would get a contradiction. This proves that $\left\Vert (\rho _{\mathcal{G%
}}\left( X\right) )^{-}\right\Vert _{\infty }<+\infty $. Recalling that we
already proved $\left\Vert (\rho _{\mathcal{G}}\left( X\right)
)^{+}\right\Vert _{\infty }<+\infty $, we obtain $\rho _{\mathcal{G}}\left(
X\right) \in L^{\infty }(\mathcal{G})$.

\medskip

\textbf{STEP 2}: the infimum in the definition of $\rho _{\mathcal{G}}$ is
attained.

For the minimizing sequence $(Y_{n})_{n}$, from the budget constraint $%
\mathbb{E}_{\mathbb{P}}\left[ U\left( X+Y\right) \middle|\mathcal{G}\right]
\geq B$ and the fact that $\sum_{j=1}^{N}\mathbb{E}_{\mathbb{P}}\left[
X^{j}+Y_{n}^{j}\right] $ is bounded in $n$ because of what we just proved ($%
L^{\infty }\ni \rho _{\mathcal{G}}\left( X\right) \leq
\sum_{j=1}^{N}Y_{n}^{j}\leq \sum_{j=1}^{N}Y_{1}^{j}\in L^{\infty }$) , we
obtain that the sequence $(Y_{n})_{n}$ is bounded in $(L^{1}(\mathcal{F}%
))^{N}$ using Lemma \ref{DynRMlemmakomlos}.

Applying Corollary \ref{DynRMcorkomplosmultidim} we can find a subsequence
and a $\widehat{Y}\in (L^{1}(\mathcal{F}))^{N}$ such that 
\begin{equation*}
W_{K}:=\frac{1}{K}\sum_{k=1}^{K}Y_{n_{k}}\xrightarrow[H\rightarrow
\infty]{\probp-\text{a.s.}}\widehat{Y}.
\end{equation*}%
Furthermore $\sum_{j=1}^{N}Y^{j}\in L^{1}(\mathcal{G})$, $W_{K}\in %
\mathscr{B}_{\mathcal{G}}$ by convexity of the set and $\widehat{Y}\in %
\mathscr{B}_{\mathcal{G}}$ since this set is closed in probability.
Additionally we have that%
\begin{equation}
\sum_{j=1}^{N}\widehat{Y}^{j}=\lim_{K}\frac{1}{K}\sum_{k=1}^{K}%
\sum_{j=1}^{N}Y_{n_{k}}^{j}\overset{\text{Rem.}\ref{DynRMremcesaro}}{=}%
\lim_{k}\sum_{j=1}^{N}Y_{n_{k}}^{j}=\rho _{\mathcal{G}}\left( X\right) \in
L^{\infty }(\mathcal{G})  \label{DynRMsumyisrho}
\end{equation}%
which yields that also $\sum_{j=1}^{N}\widehat{Y}^{j}\in L^{\infty }(%
\mathcal{G})$. To prove that $\widehat{Y}\in \mathscr{C}_{\mathcal{G}}$ we
need to show that $\sum_{i\in I_{m}}\widehat{Y}^{i}\in L^{\infty }(\mathcal{G%
})$ for every $m=1,\dots ,h$. This will be a consequence of Proposition \ref%
{DynRMsuminl0gimpliesitslinftyg}, once we show that $\mathbb{E}_{\mathbb{P}}%
\left[ U\left( X+\widehat{Y}\right) \middle|\mathcal{G}\right] \geq B$.
Hence we now focus on the latter inequality. We observe now that setting $%
Z_{K}:=X+\frac{1}{K}\sum_{k=1}^{K}Y_{n_{k}}$ and $Z=X+\widehat{Y}$ Items \ref%
{DynRMbudgetc} and \ref{DynRMconverge} in Lemma \ref{DynRMproputilfat} are
satisfied. Moreover if we take 
\begin{equation*}
\sum_{j=1}^{N}\mathbb{E}_{\mathbb{P}}\left[ Z_{K}^{j}\middle|\mathcal{G}%
\right] =\sum_{j=1}^{N}\mathbb{E}_{\mathbb{P}}\left[ X^{j}\middle|\mathcal{G}%
\right] +\frac{1}{K}\sum_{k=1}^{K}\sum_{j=1}^{N}Y_{n_{k}}^{j}
\end{equation*}%
we see that the first term in the sum in RHS does not depend on $K$, while
the C\'{esaro} means almost surely converge. Hence also Item \ref%
{DynRMpointwisebdd} in Lemma \ref{DynRMproputilfat} is satisfied, and we get
that $\mathbb{E}_{\mathbb{P}}\left[ U\left( Z\right) \middle|\mathcal{G}%
\right] =\mathbb{E}_{\mathbb{P}}\left[ U\left( X+\widehat{Y}\right) \middle|%
\mathcal{G}\right] \geq B$. As mentioned above, we now get also $\widehat{Y}%
\in \mathscr{C}_{\mathcal{G}}$. We finally recall from \eqref{DynRMsumyisrho}
that $\sum_{j=1}^{N}\widehat{Y}^{j}=\rho _{\mathcal{G}}\left( X\right) $ so
that the infimum is in fact attained at $\widehat{Y}$, which satisfies the
constraints for $\rho _{\mathcal{G}}\left( X\right) $.

\medskip \textbf{STEP 3}: $\rho_{\mathcal{G}}$ satisfies equations %
\eqref{DynRMcmon}, \eqref{DynRMccnvx}, \eqref{DynRMccmt}.

These have to be checked directly using definition of $\rho _{\mathcal{G}%
}\left( \cdot \right) $. We start with \eqref{DynRMcmon}: if $X\leq Z$
componentwise a.s. , for all $Y\in (L^{1}({\mathbb{P}}))^{N}$ such that $%
\mathbb{E}_{\mathbb{P}}\left[ U\left( X+Y\right) \middle|\mathcal{G}\right]
\geq B$ we have automatically (by monotonicity of $U$) that $\mathbb{E}_{%
\mathbb{P}}\left[ U\left( Z+Y\right) \middle|\mathcal{G}\right] \geq \mathbb{%
E}_{\mathbb{P}}\left[ U\left( X+Y\right) \middle|\mathcal{G}\right] \geq B$
so that%
\begin{equation*}
\left\{ \sum_{j=1}^{N}Y^{j}\mid Y\in \mathscr{C}_{\mathcal{G}},\mathbb{E}_{%
\mathbb{P}}\left[ U\left( X+Y\right) \middle|\mathcal{G}\right] \geq
B\right\} \subseteq \left\{ \sum_{j=1}^{N}Y^{j}\mid Y\in \mathscr{C}_{%
\mathcal{G}},\mathbb{E}_{\mathbb{P}}\left[ U\left( Z+Y\right) \middle|%
\mathcal{G}\right] \geq B\right\}
\end{equation*}%
and taking essential infima equation \eqref{DynRMcmon} follows.

As to \eqref{DynRMccnvx}, fix $0\leq \lambda \leq 1,\,\lambda \in L^{\infty
}(\mathcal{G})$ and $X,Z\in (L^{\infty }(\mathcal{F}))^{N}$. For $Y,W\in %
\mathscr{C}_{\mathcal{G}}$ such that $\mathbb{E}_{\mathbb{P}}\left[ U\left(
X+Y\right) \middle|\mathcal{G}\right] \geq B$, $\mathbb{E}_{\mathbb{P}}\left[
U\left( Z+W\right) \middle|\mathcal{G}\right] \geq B$ we then have by
concavity of utilities and $\mathcal{G}$-measurability of $\lambda $ 
\begin{equation*}
\mathbb{E}_{\mathbb{P}}\left[ U\left( \lambda X+(1-\lambda )Z+\lambda
Y+(1-\lambda )W\right) \middle|\mathcal{G}\right] =\mathbb{E}_{\mathbb{P}}%
\left[ U\left( \lambda (X+Y)+(1-\lambda )(Z+W)\right) \middle|\mathcal{G}%
\right]
\end{equation*}%
\begin{equation*}
\geq \lambda \mathbb{E}_{\mathbb{P}}\left[ U\left( X+Y\right) \middle|%
\mathcal{G}\right] +(1-\lambda )\mathbb{E}_{\mathbb{P}}\left[ U\left(
Z+W\right) \middle|\mathcal{G}\right] \geq \lambda B+(1-\lambda )B=B\,.
\end{equation*}%
Moreover obviously $\lambda Y+(1-\lambda )W\in \mathscr{C}_{\mathcal{G}}$,
so that by definition%
\begin{equation*}
\rho _{\mathcal{G}}\left( \lambda X+(1-\lambda )Z\right) \leq \lambda
\sum_{j=1}^{N}Y^{j}+(1-\lambda )\sum_{j=1}^{N}W^{j}\,.
\end{equation*}%
Taking essential infima in RHS over $Y$ and $W$ yields equation %
\eqref{DynRMccnvx}.

Finally we come to \eqref{DynRMccmt}. 
For $Y\in (L^{\infty }(\mathcal{G}))^{N}$ the assumption $\mathscr{B}_{%
\mathcal{G}}+(L^{0}(\mathcal{G}))^{N}=\mathscr{B}_{\mathcal{G}}$ implies
that $W:=Z+Y\in \mathscr{C}_{\mathcal{G}}$ for all $Z\in \mathscr{C}_{%
\mathcal{G}}$. Hence%
\begin{align*}
\rho _{\mathcal{G}}\left( X+Y\right) & =\essinf\left\{
\sum_{j=1}^{N}Z^{j}\mid Z\in \mathscr{C}_{\mathcal{G}},\,\mathbb{E}_{\mathbb{%
P}}\left[ U(X+Y+Z)\middle|\mathcal{G}\right] \geq B\right\} \\
& =\essinf\left\{ \sum_{j=1}^{N}(W^{j}-Y^{j})\mid W\in \mathscr{C}_{\mathcal{%
G}},\,\mathbb{E}_{\mathbb{P}}\left[ U(X+W)\middle|\mathcal{G}\right] \geq
B\right\} =\rho _{\mathcal{G}}\left( X\right) -\sum_{j=1}^{N}Y^{j}.
\end{align*}%
\end{proof}

\begin{claim}
\label{DynRMpropinftycoincide} We have that $\rho^\infty_{\mathcal{G}%
}\left(X\right)=\rho_{\mathcal{G}}\left(X\right)$ for every $X\in (L^\infty(%
\mathcal{F}))^N$.
\end{claim}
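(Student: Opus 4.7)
The inequality $\rho_{\mathcal{G}}^{\infty}(X) \geq \rho_{\mathcal{G}}(X)$ is immediate, since the feasible set defining $\rho_{\mathcal{G}}^{\infty}$ is contained in the one defining $\rho_{\mathcal{G}}$. For the reverse direction, the plan is to start from the optimizer $\widehat{Y}\in\mathscr{C}_{\mathcal{G}}$ produced by Claim \ref{DynRMthmalloc}, shift it by a small deterministic perturbation so that the budget constraint becomes strict, then replace the shifted vector by its truncation (to land in $(L^{\infty}(\mathcal{F}))^{N}$), and finally patch the truncation on the vanishing bad set using a safe backup allocation.

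Concretely, fix $\epsilon>0$ and set $\widetilde{Y}:=\widehat{Y}+(\epsilon/N)\mathbf{1}$. Using the identity $\mathscr{B}_{\mathcal{G}}+(L^{0}(\mathcal{G}))^{N}=\mathscr{B}_{\mathcal{G}}$ from Standing Assumption II, one has $\widetilde{Y}\in\mathscr{C}_{\mathcal{G}}$, $\sum_{j}\widetilde{Y}^{j}=\rho_{\mathcal{G}}(X)+\epsilon\in L^{\infty}(\mathcal{G})$, and strict monotonicity of $U$ (coming from the strictly increasing $u_{j}$) yields $\mathbb{E}_{\mathbb{P}}[U(X+\widetilde{Y})\mid\mathcal{G}]>\mathbb{E}_{\mathbb{P}}[U(X+\widehat{Y})\mid\mathcal{G}]\geq B$ ${\mathbb{P}}$-a.s. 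Next, invoke closedness of $\mathscr{C}_{\mathcal{G}}$ under $\mathcal{G}$-truncation to get $\widetilde{Y}_{(k)}\in\mathscr{C}_{\mathcal{G}}\cap(L^{\infty}(\mathcal{F}))^{N}$ for $k\geq k_{\widetilde{Y}}$, with the invariance $\sum_{j}\widetilde{Y}_{(k)}^{j}=\rho_{\mathcal{G}}(X)+\epsilon$. A conditional dominated convergence argument — justified by $U(X+\widetilde{Y})\in L^{1}$ (feasibility plus the linear control of Lemma \ref{DynRMlemmacontrolwithline}) and boundedness of $U(X+Z_{\widetilde{Y}})$ (since $X,Z_{\widetilde{Y}}\in L^{\infty}$) — then gives $\mathbb{E}_{\mathbb{P}}[U(X+\widetilde{Y}_{(k)})\mid\mathcal{G}]\to\mathbb{E}_{\mathbb{P}}[U(X+\widetilde{Y})\mid\mathcal{G}]>B$ ${\mathbb{P}}$-a.s., so that $F_{k}:=\{\mathbb{E}_{\mathbb{P}}[U(X+\widetilde{Y}_{(k)})\mid\mathcal{G}]\geq B\}\in\mathcal{G}$ satisfies $1_{F_{k}}\to 1$ ${\mathbb{P}}$-a.s.

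The main obstacle is that this convergence is only pointwise in $\omega$, hence typically $F_{k}\neq\Omega$ for every finite $k$ and $\widetilde{Y}_{(k)}$ on its own is not feasible for $\rho_{\mathcal{G}}^{\infty}$. I would circumvent this by gluing with a safe backup: pick any $Y^{*}\in\mathscr{C}_{\mathcal{G}}\cap(L^{\infty}(\mathcal{F}))^{N}$ with $\mathbb{E}_{\mathbb{P}}[U(X+Y^{*})\mid\mathcal{G}]\geq B$ (for instance a sufficiently large deterministic $c\mathbf{1}$, which is feasible by the same reasoning used in Step 1 of Claim \ref{DynRMthmalloc}, exploiting $\esssup(B)<\sup U$), and define
\begin{equation*}
Y^{(k)}:=1_{F_{k}}\widetilde{Y}_{(k)}+1_{F_{k}^{c}}Y^{*}.
\end{equation*}
Conditional convexity of $\mathscr{B}_{\mathcal{G}}$ with the $\mathcal{G}$-measurable weight $1_{F_{k}}$, together with the fact that both summands lie in $(L^{\infty}(\mathcal{F}))^{N}$ and respect the $\mathcal{B}^{(\mathbf{I}),\infty}_{\mathcal{G}}$ cluster constraints, yields $Y^{(k)}\in\mathscr{C}_{\mathcal{G}}\cap(L^{\infty}(\mathcal{F}))^{N}$, and
\begin{equation*}
\mathbb{E}_{\mathbb{P}}[U(X+Y^{(k)})\mid\mathcal{G}]=1_{F_{k}}\mathbb{E}_{\mathbb{P}}[U(X+\widetilde{Y}_{(k)})\mid\mathcal{G}]+1_{F_{k}^{c}}\mathbb{E}_{\mathbb{P}}[U(X+Y^{*})\mid\mathcal{G}]\geq B,
\end{equation*}
so $Y^{(k)}$ is feasible for $\rho_{\mathcal{G}}^{\infty}$. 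Consequently $\rho_{\mathcal{G}}^{\infty}(X)\leq\sum_{j}Y^{(k),j}=1_{F_{k}}(\rho_{\mathcal{G}}(X)+\epsilon)+1_{F_{k}^{c}}\sum_{j}Y^{*,j}$ ${\mathbb{P}}$-a.s.\ for every $k\geq k_{\widetilde{Y}}$. Letting $k\to\infty$ (using $1_{F_{k}}\to 1$) and then $\epsilon\downarrow 0$ delivers $\rho_{\mathcal{G}}^{\infty}(X)\leq\rho_{\mathcal{G}}(X)$ ${\mathbb{P}}$-a.s., completing the proof.
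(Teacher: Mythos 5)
Your proof is correct and takes essentially the same route as the paper: start from the attained optimizer of $\rho_{\mathcal{G}}$ (Claim \ref{DynRMthmalloc}), perturb by a small $\varepsilon$ to make the budget constraint strict, truncate to land in $(L^{\infty}(\mathcal{F}))^{N}$, use (cDOM) to see that the $\mathcal{G}$-measurable set where the truncated allocation violates the constraint vanishes a.s., repair feasibility on that set, and then pass to the limit in $k$ and $\varepsilon$. The only (cosmetic) difference is the repair step: you glue in a fixed bounded feasible allocation via conditional convexity of $\mathscr{B}_{\mathcal{G}}$, whereas the paper adds a large deterministic vector $\alpha_{K}1_{\Gamma_{K}^{c}}$ using $\mathscr{B}_{\mathcal{G}}+(L^{0}(\mathcal{G}))^{N}=\mathscr{B}_{\mathcal{G}}$ — both rest on Standing Assumption II and yield the same bound.
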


\begin{proof}
It is clear that 
\begin{equation*}
\rho_{\mathcal{G}}\left(X\right)\leq \essinf\left\{\sum_{j=1}^N Y^j\mid Y\in%
\mathscr{C}_\mathcal{G}\cap \left(L^\infty(\mathcal{F})\right)^N, \mathbb{E}_%
\mathbb{P} \left[U\left(X+Y\right)\middle| \mathcal{G}\right]\geq B \right\}
\end{equation*}
since the infimum on RHS is taken over a smaller set.

We prove now the reverse inequality: by Claim \ref{DynRMthmalloc} an
allocation exists, call it $Y\in \mathscr{C}_{\mathcal{G}}$. Use closedness
under truncation to see that for $k\geq k_{Y}$ $Y_{(k)}\in \mathscr{C}_{%
\mathcal{G}}$ where $Y_{(k)}$, defined as in \eqref{DynRMdeftruncated},
satisfies $Y_{(k)}\rightarrow _{k}Y\text{ a.s.}$. We want to show that the
convergence $U\left( X+Y_{(k)}+\varepsilon \mathbf{1}\right) \rightarrow
_{k}U\left( X+Y+\varepsilon \mathbf{1}\right) $ is dominated, where $\mathbf{%
1}$ is the $N-$components vector with all components equal to $1$. To see
this observe that $\left\vert U\left( X+Y+\varepsilon \mathbf{1}\right)
\right\vert $ and $\left\vert U\left( X+Z_{Y}+\varepsilon \mathbf{1}\right)
\right\vert $ are integrable: 
\begin{equation*}
L^{1}(\mathcal{F})\ni a\left( \sum_{j=1}^{N}(X^{j}+Y^{j})\right)
+aN\varepsilon +b\overset{\text{Lemma.}\ref{DynRMlemmacontrolwithline}.(ii)}{%
\geq }U\left( X+Y+\varepsilon \mathbf{1}\right) \geq U(X+Y)\in L^{1}(%
\mathcal{F})
\end{equation*}%
while integrability of $\left\vert U\left( X+Z_{Y}+\varepsilon \mathbf{1}%
\right) \right\vert $ is trivial by boundedness of the vectors $X,Z_{Y}$ and
continuity of $U$. Moreover 
\begin{align*}
\left\vert U\left( X+Y_{(k)}+\varepsilon \mathbf{1}\right) \right\vert &
=\left\vert U\left( X+Y+\varepsilon \mathbf{1}\right)
1_{\bigcap_{j}\{\left\vert Y^{j}\right\vert < k\}}+U\left(
X+Z_{Y}+\varepsilon \mathbf{1}\right) 1_{\bigcup_{j}\{\left\vert
Y^{j}\right\vert \geq k\}}\right\vert \\
& \leq \max \left( \left\vert U\left( X+Y+\varepsilon \mathbf{1}\right)
\right\vert ,\,\left\vert U\left( X+Z_{Y}+\varepsilon \mathbf{1}\right)
\right\vert \right) \\
& \leq \left\vert U\left( X+Y+\varepsilon \mathbf{1}\right) \right\vert
+\left\vert U\left( X+Z_{Y}+\varepsilon \mathbf{1}\right) \right\vert \,.
\end{align*}%
Applying (cDOM) we then get that for all $\varepsilon >0$ 
\begin{equation*}
\mathbb{E}_{\mathbb{P}}\left[ U\left( X+Y_{(k)}+\varepsilon \mathbf{1}%
\right) \middle|\mathcal{G}\right] \rightarrow _{k}\mathbb{E}_{\mathbb{P}}%
\left[ U\left( X+Y+\varepsilon \mathbf{1}\right) \middle|\mathcal{G}\right]
>B\,.
\end{equation*}%
From the last expression we infer that%
\begin{equation}
{\mathbb{P}}\left( \Gamma _{K}:=\bigcap_{k\geq K}\left\{ \mathbb{E}_{\mathbb{%
P}}\left[ U\left( X+Y_{(k)}+\varepsilon \mathbf{1}\right) \middle|\mathcal{G}%
\right] \geq B\right\} \right) \uparrow _{K}1\,.
\label{DynRMdefinitelygreat}
\end{equation}%
Fix $K$ and take $\alpha _{K}\in {\mathbb{R}}^{N}$ with 
\begin{equation*}
U\left( -\left\Vert X\right\Vert _{\infty }-\left\Vert Y_{(K)}\right\Vert
_{\infty }+\varepsilon \mathbf{1}+\alpha _{K}\right) \geq \esssup{(B)}
\end{equation*}%
where again $\left\Vert X\right\Vert _{\infty }$ denotes the vector $%
[\left\Vert X^{1}\right\Vert _{\infty },\dots ,\left\Vert X^{N}\right\Vert
_{\infty }]$ and similar notation is used for $\left\Vert Y_{(K)}\right\Vert
_{\infty }$. Notice that such an $\alpha _{K}$ exists since $\sup_{z\in {%
\mathbb{R}}^{N}}U(z)>\esssup{(B)}$. Define $Z_{K}$ by $%
Z_{K}^{j}:=Y_{(K)}^{j}+\varepsilon +\alpha^j _{K}1_{\Gamma
_{K}^{c}},\,j=1,\dots ,N$ and observe that since $\Gamma _{K}\in \mathcal{G}$%
, $Z_{K}\in \mathscr{C}_{\mathcal{G}}\cap \left( L^{\infty }(\mathcal{F}%
)\right) ^{N}$. Furthermore 
\begin{equation*}
\mathbb{E}_{\mathbb{P}}\left[ U\left( X+Z_{K}\right) \middle|\mathcal{G}%
\right] =\mathbb{E}_{\mathbb{P}}\left[ U\left( X+Z_{K}\right) \middle|%
\mathcal{G}\right] 1_{\Gamma _{K}}+\mathbb{E}_{\mathbb{P}}\left[ U\left(
X+Z_{K}\right) \middle|\mathcal{G}\right] 1_{\Gamma _{K}^{c}}
\end{equation*}%
and 
\begin{equation*}
\mathbb{E}_{\mathbb{P}}\left[ U\left( X+Z_{K}\right) \middle|\mathcal{G}%
\right] 1_{\Gamma _{K}}=\mathbb{E}_{\mathbb{P}}\left[ U\left(
X+Y_{(K)}+\varepsilon \mathbf{1}\right) \middle|\mathcal{G}\right] 1_{\Gamma
_{K}}\geq B1_{\Gamma _{K}}
\end{equation*}%
by definition of $\Gamma _{K}$ and the fact that $1_{\Gamma _{K}}$ can be
moved inside conditional expectation.

Moreover by definition of $\alpha _{K}$ 
\begin{equation*}
\mathbb{E}_{\mathbb{P}}\left[ U\left( X+Z_{K}\right) \middle|\mathcal{G}%
\right] 1_{\Gamma _{K}^{c}}=\mathbb{E}_{\mathbb{P}}\left[ U\left(
X+Y_{(K)}+\varepsilon \mathbf{1}+\alpha _{K}\right) \middle|\mathcal{G}%
\right] 1_{\Gamma _{K}^{c}}\geq B1_{\Gamma _{K}^{c}}\,.
\end{equation*}%
Hence we have that $Z_{K}\in \mathscr{C}_{\mathcal{G}}\cap \left( L^{\infty
}(\mathcal{F})\right) ^{N}$, $\mathbb{E}_{\mathbb{P}}\left[ U\left(
X+Z_{K}\right) \middle|\mathcal{G}\right] \geq B$, and we conclude that 
\begin{equation}
\rho _{\mathcal{G}}^{\infty }\left( X\right) \leq \sum_{j=1}^{N}Z_{K}^{j}\,.
\label{DynRMineqext}
\end{equation}%
Now, by \eqref{DynRMdefinitelygreat}, for almost all $\omega \in \Omega $
there exists a $K(\omega )\in \mathbb{N}$ such that $\omega \in \Gamma _{K}$
for all $K\geq K(\omega )$, which implies for all $j=1,\dots ,N$ $%
Z_{K}^{j}(\omega )=Y_{(K)}^{j}(\omega )+\varepsilon \,\,\,\forall \,K\geq
K(\omega )$.

By definition $Y_{(K)}\rightarrow _{K}Y\text{ a.s.}$, so that by %
\eqref{DynRMineqext} we can write for almost all $\omega \in \Omega $: 
\begin{equation*}
\begin{split}
\rho _{\mathcal{G}}^{\infty }\left( X\right) & \leq \liminf_{K\rightarrow
+\infty }\sum_{j=1}^{N}Z_{K}^{j}=\liminf_{K\rightarrow +\infty }\left(
\sum_{j=1}^{N}\left( Y_{(K)}^{j}+\varepsilon \right) \right) \\
& =\lim_{K\rightarrow +\infty }\sum_{j=1}^{N}Y_{(K)}^{j}+N\varepsilon =\rho
_{\mathcal{G}}\left( X\right) +N\varepsilon \,.
\end{split}%
\end{equation*}%
Hence $\rho _{\mathcal{G}}^{\infty }\left( X\right) \leq \rho _{\mathcal{G}%
}\left( X\right) \text{ }\mathbb{P}\text{-a.s.}$, which implies $\rho _{%
\mathcal{G}}^{\infty }\left( X\right) =\rho _{\mathcal{G}}\left( X\right) 
\text{ }\mathbb{P}\text{-a.s.}$.
\end{proof}

\begin{claim}
\label{DynRMpropconbelow} The CSRM $\rho _{\mathcal{G}}$ on $(L^{\infty }(%
\mathcal{F}))^{N}$ is continuous from below and from above, in the sense of
Definition \ref{DynRMdefcont}.
\end{claim}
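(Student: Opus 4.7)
The plan is to handle continuity from below and from above separately; in both cases monotonicity \eqref{DynRMcmon} gives the easy half immediately (for $X_n \uparrow X$ the sequence $\rho_{\mathcal{G}}(X_n)$ is nonincreasing and $\geq \rho_{\mathcal{G}}(X)$, for $X_n \downarrow X$ it is nondecreasing and $\leq \rho_{\mathcal{G}}(X)$), so the real task is the two reverse inequalities $\lim_n \rho_{\mathcal{G}}(X_n) \leq \rho_{\mathcal{G}}(X)$ and $\lim_n \rho_{\mathcal{G}}(X_n) \geq \rho_{\mathcal{G}}(X)$.

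For continuity from below I would pick an optimizer $\widehat{Y} \in \mathscr{C}_{\mathcal{G}}$ of $\rho_{\mathcal{G}}(X)$ provided by Claim \ref{DynRMthmalloc} and, for $\varepsilon > 0$, perturb it by $\varepsilon \mathbf{1}$. Since $U$ is strictly increasing and $U(X+\widehat{Y}+\varepsilon\mathbf{1}) - U(X+\widehat{Y})$ is a strictly positive element of $L^1(\mathcal{F})$ (integrability following from $U(X+\widehat{Y}) \in L^1$ and the affine upper bound of Lemma \ref{DynRMlemmacontrolwithline}), one obtains $\mathbb{E}_{\mathbb{P}}[U(X+\widehat{Y}+\varepsilon\mathbf{1})|\mathcal{G}] > B$ ${\mathbb{P}}$-a.s. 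Conditional monotone convergence applied to $U(X_n + \widehat{Y} + \varepsilon \mathbf{1}) \uparrow U(X + \widehat{Y} + \varepsilon\mathbf{1})$ then yields $\mathbb{P}(\Gamma_K^\varepsilon) \uparrow 1$, where $\Gamma_K^\varepsilon := \bigcap_{n \geq K}\{\mathbb{E}_{\mathbb{P}}[U(X_n + \widehat{Y} + \varepsilon\mathbf{1})|\mathcal{G}] \geq B\} \in \mathcal{G}$. I would then copy verbatim the truncation-plus-indicator-correction scheme from the proof of Claim \ref{DynRMpropinftycoincide}: replace $\widehat{Y}$ by a $\mathcal{G}$-truncation $\widehat{Y}_{(k)}$ (with the same sum), then add a deterministic correction $\alpha \mathbf{1}_{(\Gamma_K^\varepsilon)^c}$ large enough to restore the budget for $X_n$. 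Exploiting the identity $\rho_{\mathcal{G}} = \rho_{\mathcal{G}}^\infty$ from Claim \ref{DynRMpropinftycoincide}, this gives admissible bounded allocations showing $\lim_n \rho_{\mathcal{G}}(X_n) \leq \rho_{\mathcal{G}}(X) + N\varepsilon$, and $\varepsilon \downarrow 0$ concludes.

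For continuity from above I would instead let $\widehat{Y}_n \in \mathscr{C}_{\mathcal{G}}$ be an optimizer of $\rho_{\mathcal{G}}(X_n)$. The sums $\sum_j \widehat{Y}_n^j = \rho_{\mathcal{G}}(X_n)$ are sandwiched in $L^\infty(\mathcal{G})$ between $\rho_{\mathcal{G}}(X_1)$ and $\rho_{\mathcal{G}}(X)$, and $\mathbb{E}_{\mathbb{P}}[U(X_n+\widehat{Y}_n)] \geq \mathbb{E}_{\mathbb{P}}[B]$, so Lemma \ref{DynRMlemmakomlos} yields $L^1$-boundedness of $(\widehat{Y}_n)_n$ and Corollary \ref{DynRMcorkomplosmultidim} produces a subsequence whose Cesaro averages $W_K := \frac{1}{K}\sum_{k=1}^K \widehat{Y}_{n_k}$ converge ${\mathbb{P}}$-a.s. to some $\widehat{Y} \in (L^1(\mathcal{F}))^N \cap \mathscr{B}_{\mathcal{G}}$ (by conditional convexity and closedness in probability), with $\sum_j \widehat{Y}^j = \lim_n \rho_{\mathcal{G}}(X_n) \in L^\infty(\mathcal{G})$ by Cesaro. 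To pass the budget constraint through the limit, I would use concavity of $U$: setting $\widetilde{X}_K := \frac{1}{K}\sum_k X_{n_k}$ (so that $\widetilde{X}_K + W_K \to X + \widehat{Y}$ a.s.),
\[
U(\widetilde{X}_K + W_K) \geq \frac{1}{K}\sum_{k=1}^K U(X_{n_k} + \widehat{Y}_{n_k}),
\]
so $\mathbb{E}_{\mathbb{P}}[U(\widetilde{X}_K + W_K)|\mathcal{G}] \geq B$; the affine upper bound of Lemma \ref{DynRMlemmacontrolwithline} dominates the positive parts and allows a reverse Fatou (combined with continuity of $U$) to conclude $\mathbb{E}_{\mathbb{P}}[U(X+\widehat{Y})|\mathcal{G}] \geq B$. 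Proposition \ref{DynRMsuminl0gimpliesitslinftyg} then upgrades $\widehat{Y}$ to $\mathscr{C}_{\mathcal{G}}$, so $\widehat{Y}$ is admissible for $\rho_{\mathcal{G}}(X)$ and yields $\rho_{\mathcal{G}}(X) \leq \sum_j \widehat{Y}^j = \lim_n \rho_{\mathcal{G}}(X_n)$.

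The main obstacle in both parts is the passage to the limit in the budget constraint on $(L^1(\mathcal{F}))^N$-only allocations: one-sided integrability of $U$ along the relevant sequences has to be controlled via the affine bound of Lemma \ref{DynRMlemmacontrolwithline}. In the continuity-from-below half, the fact that the chosen optimizer $\widehat{Y}$ need not be bounded is dealt with by the truncation-plus-correction scheme and the identity $\rho_{\mathcal{G}} = \rho_{\mathcal{G}}^\infty$ from Claim \ref{DynRMpropinftycoincide}; in the continuity-from-above half, Komlos compactness together with concavity of $U$ and reverse Fatou is what pushes the constraint through the limit.
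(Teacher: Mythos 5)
Your continuity-from-above argument is correct but follows a genuinely different route from the paper's. You rerun the Koml\'os scheme from STEP 2 of Claim \ref{DynRMthmalloc} along the sequence of optimizers $\widehat{Y}_n$ of $\rho_{\mathcal{G}}(X_n)$: a uniform $L^1$-bound from Lemma \ref{DynRMlemmakomlos}, Cesaro convergence from Corollary \ref{DynRMcorkomplosmultidim}, passing the budget through the limit via concavity and the conditional Fatou Lemma \ref{DynRMproputilfat}, and finally Proposition \ref{DynRMsuminl0gimpliesitslinftyg} to upgrade the limit to $\mathscr{C}_{\mathcal{G}}$. This is self-contained and does not depend on continuity from below. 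The paper instead argues abstractly: continuity from below of $\rho_{\mathcal{G}}$ gives continuity from below of the scalarization $\rho_0=\mathbb{E}_{\mathbb{P}}[\rho_{\mathcal{G}}(\cdot)]$, hence nice representability by Theorem \ref{DynRMthmdual}(i), after which continuity from above falls out of the dual representation Theorem \ref{DynRMcorcashadd}. Your route essentially repeats an existence proof; the paper's is shorter given the duality machinery already in place.

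Your continuity-from-below argument, however, has a gap. You start from the (generally unbounded) optimizer $\widehat{Y}\in\mathscr{C}_{\mathcal{G}}$ and apply conditional monotone convergence to $U(X_n+\widehat{Y}+\varepsilon\mathbf{1})\uparrow U(X+\widehat{Y}+\varepsilon\mathbf{1})$. This requires $\mathbb{E}_{\mathbb{P}}\left[\bigl(U(X_1+\widehat{Y}+\varepsilon\mathbf{1})\bigr)^-\,\middle|\,\mathcal{G}\right]<\infty$ a.s., which is not available: $X_1\leq X$ and shifting the argument of $U$ downward by a bounded deterministic amount can push $U^-$ out of $L^1$ when $\widehat{Y}$ is only in $(L^1(\mathcal{F}))^N$. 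This is exactly the phenomenon that Assumption \ref{DynRMreqint} is designed to rule out, and that assumption is not in force for this Claim. There is also a second mismatch: the set $\Gamma_K^\varepsilon$ you build using the untruncated $\widehat{Y}$ does not serve the truncated allocation $\widehat{Y}_{(k)}$, since truncation need not preserve the budget constraint. The paper sidesteps both problems by fixing, from the outset, an arbitrary \emph{bounded} $Y\in\mathscr{C}_{\mathcal{G}}\cap(L^\infty(\mathcal{F}))^N$ satisfying the budget for $X$, so that $U(X_n+Y+\varepsilon\mathbf{1})$ is bounded and the conditional limit is trivial; the essential infimum over such $Y$, taken only at the end, equals $\rho_{\mathcal{G}}^\infty(X)=\rho_{\mathcal{G}}(X)$ by Claim \ref{DynRMpropinftycoincide}. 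If you move the appeal to Claim \ref{DynRMpropinftycoincide} to the front, so that the whole analysis runs over bounded allocations, the gap closes and the truncation step becomes unnecessary.
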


\begin{proof}
Consider a sequence $X_n\uparrow_n X$ and take any $Y\in\mathscr{C}_{%
\mathcal{G}}\cap (L^\infty)^N$ such that $\mathbb{E}_\mathbb{P} \left[%
U\left(X+Y\right)\middle| \mathcal{G}\right]\geq B$. Then for any $%
\varepsilon>0$

\begin{equation*}
B<\mathbb{E}_{\mathbb{P}}\left[ U\left( X+Y+\varepsilon \mathbf{1}\right) %
\middle|\mathcal{G}\right] \overset{\text{(cMON)}}{=}\lim_{n}\mathbb{E}_{%
\mathbb{P}}\left[ U\left( X_{n}+Y+\varepsilon \mathbf{1}\right) \middle|%
\mathcal{G}\right]
\end{equation*}%
Hence the sequence $(A_{K})_{K}$, where 
\begin{equation*}
A_{K}:=\left\{ \mathbb{E}_{\mathbb{P}}\left[ U\left( X_{n}+Y+\varepsilon 
\mathbf{1}\right) \middle|\mathcal{G}\right] \geq B,\,\,\forall \,n\geq
K\right\}
\end{equation*}%
satisfies ${\mathbb{P}}(A_{K})\uparrow _{K}1$. Take $\alpha _{K}\in {\mathbb{%
R}}^{N}$ such that 
\begin{equation*}
U(-\left\Vert X_{n}\right\Vert _{\infty }-\left\Vert Y\right\Vert _{\infty
}+\varepsilon \mathbf{1}+\alpha _{K})\geq \esssup{(B)}\,\,\forall \,n\geq K
\end{equation*}%
where the notation for $\left\Vert X_{n}\right\Vert _{\infty }$ and $%
\left\Vert Y\right\Vert _{\infty }$ is the same as in the proof of Claim \ref%
{DynRMpropinftycoincide}. Define $Z_{K}\in (L^{\infty }(\mathcal{F}))^{N}$
by $Z_{K}^{j}:=Y^{j}+\varepsilon \mathbf{1}+\alpha _{K}1_{A_{K}^{c}}$ for $%
j=1,\dots ,N$. Since $A_{K}\in \mathcal{G}$ we have $Z\in \mathscr{C}_{%
\mathcal{G}}$. Furthermore for all $n\geq K$ 
\begin{align*}
\mathbb{E}_{\mathbb{P}}\left[ U\left( X_{n}+Z_{K}\right) \middle|\mathcal{G}%
\right] &=\mathbb{E}_{\mathbb{P}}\left[ U\left( X_{n}+Z_{K}\right) \middle|%
\mathcal{G}\right] 1_{A_{K}}+\mathbb{E}_{\mathbb{P}}\left[ U\left(
X_{n}+Z_{K}\right) \middle|\mathcal{G}\right] 1_{A_{K}^{c}} \\
&\geq B1_{A_{K}}+\esssup{(B)}1_{A_{K}^{c}}\geq B\,.
\end{align*}%
Hence by definition of $\rho _{\mathcal{G}}\left( X_{n}\right) $ 
\begin{align*}
\rho _{\mathcal{G}}\left( X_{n}\right) & \leq
\sum_{j=1}^{N}Z_{K}^{j}=\sum_{j=1}^{N}Y^{j}+N\varepsilon
+\sum_{j=1}^{N}\alpha _{K}^{j}1_{A_{K}^{c}}, \\
\lim_{n}\rho _{\mathcal{G}}\left( X_{n}\right) & \leq \liminf_{K}\left(
\sum_{j=1}^{N}Y^{j}+N\varepsilon +\sum_{j=1}^{N}\alpha
_{K}^{j}1_{A_{K}^{c}}\right) .
\end{align*}%
Recall now that ${\mathbb{P}}(A_{K})\rightarrow _{K}1$ and $A_{K}\subseteq
A_{K+1}$. Hence almost all $\omega \in \Omega $ are such that $%
1_{A_{K}^{c}}(\omega )=0$ definitely in $K$. As a consequence 
\begin{equation*}
\liminf_{K}\left( \sum_{j=1}^{N}Y^{j}+N\varepsilon +\sum_{j=1}^{N}\alpha
_{K}^{j}1_{A_{K}^{c}}\right) =\liminf_{K}\left(
\sum_{j=1}^{N}Y^{j}+N\varepsilon \right) =\sum_{j=1}^{N}Y^{j}+N\varepsilon\,.
\end{equation*}%
It follows that 
\begin{equation*}
\lim_{n}\rho _{\mathcal{G}}\left( X_{n}\right) \leq \sum_{j=1}^{N}Y^{j}\,\,{%
\mathbb{P}}-\text{a.s.}
\end{equation*}%
and this holds for all $Y\in \mathscr{C}_{\mathcal{G}}$ such that $\mathbb{E}%
_{\mathbb{P}}\left[ U\left( X+Y\right) \middle|\mathcal{G}\right] \geq B$.
Taking essential infimum on RHS for $Y\in \mathscr{C}_{\mathcal{G}}\cap
(L^{\infty }(\mathcal{F}))^{N}$, $\mathbb{E}_{\mathbb{P}}\left[ U\left(
X+Y\right) \middle|\mathcal{G}\right] \geq B$ by Claim \ref%
{DynRMpropinftycoincide} we obtain 
\begin{equation*}
\lim_{n}\rho _{\mathcal{G}}\left( X_{n}\right) \leq \rho^\infty _{\mathcal{G}%
}\left( X\right)=\rho _{\mathcal{G}}\left( X\right) \overset{%
\eqref{DynRMcmon}}{\leq }\lim_{n}\rho _{\mathcal{G}}\left( X_{n}\right) ,
\end{equation*}%
which shows continuity from below. By monotone convergence, the continuity
from below of $\rho _{\mathcal{G}}$ yields the continuity from below of $%
\rho _{0}(\cdot ):=\mathbb{E}_{\mathbb{P}}\left[ \rho _{\mathcal{G}}\left(
\cdot \right) \right] :(L^{\infty }(\mathcal{F}))^{N}\rightarrow {\mathbb{R}}
$ so that Theorem \ref{DynRMthmdual} item i) shows that $\rho _{0}$ is
nicely representable. The continuity from above then follows from Theorem %
\ref{DynRMcorcashadd}.
\end{proof}

\label{DynRMsecdualrepshortfall}We now study the dual representation of the
CSRM $\rho _{\mathcal{G}}^{\infty }$. Notice that we just showed that
Theorem \ref{DynRMcorcashadd} applies and so it yields the dual
representation (\ref{DynRMdualreprgeneralcor}) for $\rho _{\mathcal{G}%
}^{\infty },$ using $L_{\mathcal{F}}:=(L^{\infty }(\mathcal{F}))^{N}$ and $%
L^{\ast }:=(L^{1}(\mathcal{F}))^{N}$. However, in view of Claim \ref%
{DynRMpropinftycoincide}, we can apply an argument inspired by \cite%
{Scandolo} Proposition 3.6 to get a more specific dual representation.
Observe that the set $\mathscr{Q}_{\mathcal{G}}$ defined in (\ref%
{DynRMequatintrointro1}) takes the form%
\begin{equation}
\mathscr{Q}_{\mathcal{G}}:=\left\{ {\mathbb{Q}}\ll {\mathbb{P}}:\,\frac{%
\mathrm{d}{\mathbb{Q}}}{\mathrm{d}{\mathbb{P}}}\in (L^{1}(\mathcal{F}%
))^{N},\,\mathbb{E}_{\mathbb{P}}\left[ \frac{\mathrm{d}{\mathbb{Q}}^{j}}{%
\mathrm{d}{\mathbb{P}}}\middle|\mathcal{G}\right] =1\,\,\forall \,j=1,\dots
,N\right\}  \label{DynRMdefQgforshortfall}
\end{equation}%
and let 
\begin{equation}
\rho _{\mathcal{G}}^{\ast }(Y):=\esssup_{X\in L_{\mathcal{F}}}\left\{
\sum_{j=1}^{N}\mathbb{E}_{{\mathbb{P}}}\left[ X^{j}Y^{j}\middle|\mathcal{G}%
\right] -\rho _{\mathcal{G}}\left( X\right) \right\} \text{,\quad }{Y\in
L^{\ast }}.  \label{DynRmdefdualgeneral}
\end{equation}

\begin{claim}
\label{DynRMthmdualreprshorfall} Let $\rho _{\mathcal{G}}:(L^{\infty }(%
\mathcal{F}))^{N}\rightarrow L^{\infty }(\mathcal{G})$ be defined by %
\eqref{DynRMdefrcondinfty} and take $\alpha ^{1}(\cdot )$ as in %
\eqref{DynRMdefalpha1}. Then the following are equivalent for fixed $p\in
\{0,1\}$ and ${\mathbb{Q}}\in \mathscr{Q}_{\mathcal{G}}$:

\begin{enumerate}
\item $\rho^*_\mathcal{G}\left(-\frac{\mathrm{d}{\mathbb{Q}}}{\mathrm{d}{%
\mathbb{P}}}\right)\in L^p(\mathcal{G})$.

\item $\alpha({\mathbb{Q}})\in L^p(\mathcal{G})$, where $\alpha$ is defined
in \eqref{DynRMequatintro2} for $L_\mathcal{F}=(L^\infty(\mathcal{F}))^N$
and $L^*=(L^1(\mathcal{F}))^N$.

\item $\alpha^1({\mathbb{Q}})\in L^p(\mathcal{G})$ and $\sum_{j=1}^N\mathbb{E%
}_{\mathbb{Q}^j} \left[Y^j|\mathcal{G}\right]\leq \sum_{j=1}^N Y^j$ for all $%
Y\in\mathscr{C}_\mathcal{G}\cap (L^\infty(\mathcal{F}))^N$.
\end{enumerate}

Moreover $\rho _{\mathcal{G}}$ admits the dual representation in %
\eqref{DynRMeqdualreprshorfall} 
and, for every $X\in (L^{\infty }(\mathcal{F}))^{N}$, there exists $\widehat{%
{\mathbb{Q}}}\in \mathscr{Q}_{\mathcal{G}}^{1}$ such that $\rho _{\mathcal{G}%
}\left( X\right) =\sum_{j=1}^{N}\mathbb{E}_{\widehat{{\mathbb{Q}}}^{j}}\left[
-X^{j}\middle|\mathcal{G}\right] -\alpha ^{1}(\widehat{{\mathbb{Q}}})$.
\end{claim}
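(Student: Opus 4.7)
My plan is to split the proof into three movements: the cash-additivity identification $(1) \iff (2)$; the decoupling argument for the harder equivalence $(2) \iff (3)$; and the extraction of the dual representation from Theorem~\ref{DynRMcorcashadd}.

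For $(1) \iff (2)$ I would prove the stronger pointwise identity $\rho_\cG^*\!\left(-\rn{\probq}{\probp}\right) = \alpha(\probq)$ in $L^0(\cG)$. Because $\Ep{\rn{\probq^j}{\probp}\mid\cG} = 1$ and by the monetary property \eqref{DynRMccmt}, the integrand $\sum_{j=1}^N \mathbb{E}_{\probq^j}\!\left[X^j\middle|\cG\right] - \rho_\cG(X)$ appearing in \eqref{DynRmdefdualgeneral} with $Y = -\rn{\probq}{\probp}$ is invariant under translation $X \mapsto X+c$ for $c \in (L^\infty(\cG))^N$. Since Claim~\ref{DynRMthmalloc} ensures $\rho_\cG(X) \in L^\infty(\cG)$ for $X \in (L^\infty(\cF))^N$, choosing $c$ with $\sum_j c^j = \rho_\cG(X)$ normalizes $\rho_\cG(X+c) = 0$, so the $\esssup$ in \eqref{DynRmdefdualgeneral} collapses exactly to that defining $\alpha(\probq)$ in \eqref{DynRMequatintro2}.

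For $(2) \iff (3)$, which is the crux, I would replace $\rho_\cG$ by $\rho_\cG^\infty$ via Claim~\ref{DynRMpropinftycoincide} and express $-\rho_\cG(X)$ as an essential supremum over $Y \in \mathscr{C}_\cG \cap (L^\infty(\cF))^N$ with $\Ep{U(X+Y)\mid\cG}\ge B$. Substituting $Z := X + Y$ (both bounded, so $X = Z-Y$ automatically lies in $(L^\infty(\cF))^N$), the $\esssup$ in $\alpha(\probq)$ factorizes as the $Z$- and $Y$-constraints decouple, yielding
\begin{equation*}
\alpha(\probq) = \alpha^1(\probq) + \gamma(\probq), \quad \gamma(\probq) := \esssup_{Y\in\mathscr{C}_\cG\cap(L^\infty(\cF))^N}\sum_{j=1}^N\Bigl(\mathbb{E}_{\probq^j}\!\left[Y^j\middle|\cG\right] - Y^j\Bigr).
\end{equation*}
The dichotomy for $\gamma$ uses that $\mathscr{C}_\cG$ is a $\cG$-conditional cone: if some $Y$ yielded $A := \bigl\{\sum_j\mathbb{E}_{\probq^j}\!\left[Y^j\mid\cG\right] > \sum_j Y^j\bigr\} \in \cG$ with $\probp(A)>0$, scaling by $\lambda_n = n\mathbf{1}_A \in L^\infty(\cG)_+$ would force $\gamma(\probq) = +\infty$ on $A$. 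Conversely, the fairness inequality in (3) gives $\gamma(\probq) \leq 0$ directly, while $Y \equiv 0$ gives $\gamma(\probq) \geq 0$; so fairness is equivalent to $\gamma(\probq) \equiv 0$, in which case $\alpha(\probq) = \alpha^1(\probq)$ and otherwise $\alpha(\probq) = +\infty$ on a $\cG$-set of positive measure. This yields the equivalence of membership in $L^p(\cG)$.

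For the dual representation I would invoke Theorem~\ref{DynRMcorcashadd} with $L_\cF = (L^\infty(\cF))^N$ and $L^* = (L^1(\cF))^N$: $\cG$-decomposability is trivial, the range condition comes from Claim~\ref{DynRMthmalloc}, and continuity from below of $\rho_\cG$ (Claim~\ref{DynRMpropconbelow}) transfers to $\rho_0 := \Ep{\rho_\cG(\cdot)}$ by monotone convergence, so Theorem~\ref{DynRMthmdual}(i) provides nice representability of $\rho_0$. This produces \eqref{DynRMdualreprgeneralcor} with penalty $\alpha$ and attained $\esssup$. Any maximizer $\widehat{\probq}$ has $\alpha(\widehat{\probq})\in L^1(\cG)$ because $\rho_\cG(X) \in L^\infty(\cG)$ and $\sum_j\mathbb{E}_{\widehat{\probq}^j}\!\left[-X^j\middle|\cG\right] \in L^1(\cG)$; hence $\widehat{\probq}\in\mathscr{Q}_\cG^1$ by the equivalence $(2)\iff(3)$, and $\probq\notin\mathscr{Q}_\cG^1$ are irrelevant in the $\esssup$ (they contribute $-\infty$), so the representation restricts to $\mathscr{Q}_\cG^1$ with penalty $\alpha^1 = \alpha$, giving \eqref{DynRMeqdualreprshorfall} with attainment. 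I expect the principal obstacle to be the decoupling and cone-scaling step in $(2) \iff (3)$: rigorously splitting the essential supremum of the sum and verifying that a single fairness-violating $Y$, amplified by a $\cG$-measurable indicator, really drives $\gamma$ to $+\infty$.
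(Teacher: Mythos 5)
Your proposal is correct and follows essentially the same route as the paper: the identification $\rho^{*}_{\cG}\left(-\rn{\probq}{\probp}\right)=\alpha(\probq)$ via the conditional monetary property (the paper takes this from STEP 5 of the proof of Theorem \ref{DynRMcorcashadd}), the change of variables $Z=X+Y$ with the decoupled essential suprema giving $\alpha(\probq)=\alpha^{1}(\probq)+\gamma(\probq)$, the conditional-cone scaling argument for the fairness dichotomy, and Theorem \ref{DynRMcorcashadd} together with $\rho_{\cG}(X)\in L^{\infty}(\cG)$ to place the dual maximizer in $\mathscr{Q}_{\cG}^{1}$. Only your parenthetical that measures outside $\mathscr{Q}_{\cG}^{1}$ ``contribute $-\infty$'' is imprecise (they need not, e.g.\ when fairness fails only on a $\cG$-set of intermediate probability), but this is immaterial since, as you also observe, the restricted representation \eqref{DynRMeqdualreprshorfall} already follows from attainment at $\widehat{\probq}\in\mathscr{Q}_{\cG}^{1}$ and the coincidence $\alpha=\alpha^{1}$ on $\mathscr{Q}_{\cG}^{1}$.
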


\begin{proof}
Recall that in this specific setup we have by Claim \ref{DynRMthmalloc} that 
$\rho _{\mathcal{G}}\in L^{\infty }(\mathcal{G})$ and that $\rho _{\mathcal{G%
}}=\rho _{\mathcal{G}}^{\infty }$ on $(L^{\infty }(\mathcal{F}))^{N}$ by
Claim \ref{DynRMpropinftycoincide}. We argued before that Theorem \ref%
{DynRMcorcashadd} applies here. From its proof, more precisely from STEP 5,
selecting the setup $L_{\mathcal{F}}=(L^{\infty }(\mathcal{F}))^{N}$ and $%
L^{\ast }=(L^{1}(\mathcal{F}))^{N}$ we have: 
\begin{equation}
\alpha ({\mathbb{Q}})=\rho _{\mathcal{G}}^{\ast }\left( -\frac{\mathrm{d}{%
\mathbb{Q}}}{\mathrm{d}{\mathbb{P}}}\right)  \label{DynRMlinkalpharhostar}
\end{equation}%
for all ${\mathbb{Q}}\in \mathscr{Q}_{\mathcal{G}}$. Moreover we have: 
\begin{align*}
\rho _{\mathcal{G}}^{\ast }\left( -\frac{\mathrm{d}{\mathbb{Q}}}{\mathrm{d}{%
\mathbb{P}}}\right) & =\esssup_{X\in (L^{\infty }(\mathcal{F}))^{N}}\left(
\sum_{j=1}^{N}\mathbb{E}_{\mathbb{Q}^{j}}\left[ -X^{j}\middle|\mathcal{G}%
\right] -\rho _{\mathcal{G}}\left( X\right) \right) \\
& \overset{\eqref{DynRMdefrcondinfty}}{=}\esssup_{X\in (L^{\infty }(\mathcal{%
F}))^{N}}\left( \sum_{j=1}^{N}\mathbb{E}_{\mathbb{Q}^{j}}\left[ -X^{j}\middle%
|\mathcal{G}\right] -\essinf_{\substack{ Y\in \mathscr{C}_{\mathcal{G}}\cap
(L^{\infty }(\mathcal{F}))^{N}  \\ \mathbb{E}_{\mathbb{P}}\left[ U\left(
X+Y\right) \middle|\mathcal{G}\right] \geq B}}\left(
\sum_{j=1}^{N}Y^{j}\right) \right) \\
& =\esssup_{\substack{ X,Y\in (L^{\infty }(\mathcal{F}))^{N}  \\ Y\in %
\mathscr{C}_{\mathcal{G}},\mathbb{E}_{\mathbb{P}}\left[ U\left( X+Y\right) %
\middle|\mathcal{G}\right] \geq B}}\left( \sum_{j=1}^{N}\mathbb{E}_{\mathbb{Q%
}^{j}}\left[ -X^{j}\middle|\mathcal{G}\right] -\sum_{j=1}^{N}Y^{j}\right) \\
& =\esssup_{\substack{ Z,Y\in (L^{\infty }(\mathcal{F}))^{N},  \\ Y\in %
\mathscr{C}_{\mathcal{G}},\mathbb{E}_{\mathbb{P}}\left[ U\left( Z\right) %
\middle|\mathcal{G}\right] \geq B}}\left( \sum_{j=1}^{N}\mathbb{E}_{\mathbb{Q%
}^{j}}\left[ -(Z^{j}-Y^{j})\middle|\mathcal{G}\right] -\sum_{j=1}^{N}Y^{j}%
\right) .
\end{align*}%
We conclude that 
\begin{equation}
\alpha ({\mathbb{Q}})=\esssup_{\substack{ Z\in (L^{\infty }(\mathcal{F}%
))^{N},  \\ \mathbb{E}_{\mathbb{P}}\left[ U\left( Z\right) \middle|\mathcal{G%
}\right] \geq B}}\left( \sum_{j=1}^{N}\mathbb{E}_{\mathbb{Q}^{j}}\left[
-Z^{j}\middle|\mathcal{G}\right] \right) +\esssup_{Y\in \mathscr{C}_{%
\mathcal{G}}\cap (L^{\infty }(\mathcal{F}))^{N}}\left( \sum_{j=1}^{N}\mathbb{%
E}_{\mathbb{Q}^{j}}\left[ Y^{j}\middle|\mathcal{G}\right] -%
\sum_{j=1}^{N}Y^{j}\right) \,.  \label{DynRMdualrepalphaspecific}
\end{equation}

The equivalence among Items 1-2-3 is now clear, once we observe that for
every ${\mathbb{Q}}\in\mathscr{Q}_\mathcal{G}$ such that $\alpha({\mathbb{Q}}%
)\in L^0(\mathcal{G})$ we must have $\sum_{j=1}^N\mathbb{E}_{\mathbb{Q}^j} %
\left[Y^j\middle|\mathcal{G}\right]-\sum_{j=1}^NY^j\leq 0\,\,{\mathbb{P}}-$%
a.s. since $\mathscr{C}_\mathcal{G}\cap (L^\infty(\mathcal{F}))^N$ is a
conditional cone. 

All the claims then follow from Theorem \ref{DynRMcorcashadd}, observing
that for the optimum $\widehat{{\mathbb{Q}}}$ provided there we must have $%
\alpha({\mathbb{Q}})\in L^1(\mathcal{G})$ (since $\rho_{\mathcal{G}%
}\left(X\right)\in L^\infty(\mathcal{G})$).
\end{proof}

\begin{remark}
We stress the fact that by Claim \ref{DynRMthmdualreprshorfall} we have for
every $Y\in\mathscr{C}_\mathcal{G}\cap(L^\infty(\mathcal{F}))^N$ 
\begin{equation}  \label{DynRMfairsums}
\sum_{j=1}^N\mathbb{E}_{\mathbb{Q}^j} \left[Y^j\middle|\mathcal{G}\right]%
\leq \sum_{j=1}^NY^j\,\,\,\,{\mathbb{P}}-\text{a.s.}\,\,\,\,\,\text{ for all 
}{\mathbb{Q}}\in\mathscr{Q}_\mathcal{G}\text{ such that }\alpha({\mathbb{Q}}%
)\in L^0(\mathcal{G})\,.
\end{equation}
\end{remark}


\subsection{Uniqueness and integrability of optima of $\protect\rho _{%
\mathcal{G}}$}

In this Section, under suitable additional assumptions on $U$ we prove
uniqueness for primal optimal allocations of $\rho_\mathcal{G}$. We also
provide a fairness condition in the form \eqref{DynRMfairsums} for such
optima $\widehat{Y}$ and measures in $\mathscr{Q}_\mathcal{G}^1$ (notice
that, a priori, at the moment we do not even know if $\widehat{Y}$ is
integrable under the various measures ${\mathbb{Q}}\in\mathscr{Q}^1_\mathcal{%
G}$).

\subsubsection{Uniqueness}

\begin{assumption}
\label{DynRMreqint} The function $U:{\mathbb{R}}^{N}\rightarrow {\mathbb{R}}$
satisfies:{\small 
\begin{equation}
X\in (L^{1}(\Omega ,\mathcal{F},{\mathbb{P}}))^{N},(U(X))^{-}\in
L^{1}(\Omega ,\mathcal{F},{\mathbb{P}})\Rightarrow \exists \,\delta >0\text{
s.t. }(U(X-\varepsilon \mathbf{1}))^{-}\in L^{1}(\Omega ,\mathcal{F},{%
\mathbb{P}})\,\forall \,0\leq \varepsilon <\delta \,.  \label{DynRMeqreqint}
\end{equation}%
}
\end{assumption}

Observe that for example taking $\alpha_1,\dots,\alpha_N,\beta_1,\dots,%
\beta_N>0$ the function 
\begin{equation*}
U(x):=\sum_{j=1}^N\left(1-\exp\left(-\alpha_j
x_j\right)\right)+\left(1-\exp\left(-\sum_{j=1}^N\beta_j x_j\right)\right)
\end{equation*}
satisfies Assumption \ref{DynRMreqint}.

\begin{proposition}
Under Assumption \ref{DynRMreqint} $\rho _{\mathcal{G}}\left( X\right) $
defined in (\ref{DynRMdefrcond}) admits a unique optimum in $\mathscr{C}_{%
\mathcal{G}}$ for every $X\in (L^{\infty }(\mathcal{F}))^{N}$.
\end{proposition}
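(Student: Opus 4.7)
The plan is, noting that existence of an optimizer $\widehat{Y}\in\mathscr{C}_{\mathcal{G}}$ is already granted by Claim \ref{DynRMthmalloc}, to focus entirely on uniqueness via a strict-concavity argument. I would assume for contradiction the existence of two distinct optimizers $\widehat{Y},\widetilde{Y}\in\mathscr{C}_{\mathcal{G}}$ and derive a strictly cheaper admissible allocation, using Assumption \ref{DynRMreqint} to preserve the conditional budget constraint under a small downward perturbation localized on a $\mathcal{G}$-measurable set.

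Set $Z:=\tfrac{1}{2}(\widehat{Y}+\widetilde{Y})$ and $A:=\{\widehat{Y}\neq\widetilde{Y}\}$. By the conditional convexity of $\mathscr{B}_{\mathcal{G}}$ and by $\mathcal{G}$-measurability and boundedness of the partition sums of $\widehat{Y},\widetilde{Y}$, one has $Z\in\mathscr{C}_{\mathcal{G}}$ with $\sum_{j}Z^{j}=\rho_{\mathcal{G}}(X)$. Under Standing Assumption I, $U$ is strictly concave (strict concavity of each $u_{j}$ plus concavity of $\Lambda$), so the random variable $W:=U(X+Z)-\tfrac{1}{2}(U(X+\widehat{Y})+U(X+\widetilde{Y}))$ is nonnegative everywhere and strictly positive on $A$, which has positive probability. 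Define $G:=\{\mathbb{P}(A\mid\mathcal{G})>0\}\in\mathcal{G}$; then $\mathbb{P}(G)>0$, and a short comparison argument (if $\mathbb{E}_{\mathbb{P}}[W\,1_{A}|\mathcal{G}]=0$ on a $\mathcal{G}$-measurable $H\subseteq G$ of positive probability, then $\mathbb{P}(A|\mathcal{G})=0$ on $H$, contradicting $H\subseteq G$) yields $\mathbb{E}_{\mathbb{P}}[W|\mathcal{G}]>0$ on $G$, and hence $\mathbb{E}_{\mathbb{P}}[U(X+Z)|\mathcal{G}]>B$ on $G$.

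To exploit this strict slack, I would apply Assumption \ref{DynRMreqint} to $X+Z\in(L^{1}(\mathcal{F}))^{N}$: since $(U(X+Z))^{-}\leq\tfrac{1}{2}((U(X+\widehat{Y}))^{-}+(U(X+\widetilde{Y}))^{-})\in L^{1}(\mathcal{F})$, there exists $\delta>0$ with $(U(X+Z-\varepsilon\mathbf{1}))^{-}\in L^{1}$ for every $0\leq\varepsilon<\delta$. The monotone convergence $U(X+Z-\varepsilon\mathbf{1})\uparrow U(X+Z)$ as $\varepsilon\downarrow 0$, together with (cMON), then lets me pick $k\in\mathbb{N}$ with $1/k<\delta$ so that the $\mathcal{G}$-measurable set $G':=G\cap\{\mathbb{E}_{\mathbb{P}}[U(X+Z-\tfrac{1}{k}\mathbf{1})|\mathcal{G}]\geq B\}$ has positive probability. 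Setting $Y_{*}^{j}:=Z^{j}-(1/k)\,1_{G'}$ for each $j=1,\dots,N$, the properties $\mathscr{B}_{\mathcal{G}}+(L^{0}(\mathcal{G}))^{N}=\mathscr{B}_{\mathcal{G}}$ and $\mathcal{G}$-measurability of partition sums yield $Y_{*}\in\mathscr{C}_{\mathcal{G}}$; splitting conditional expectations over $G'$ and $(G')^{c}$ gives $\mathbb{E}_{\mathbb{P}}[U(X+Y_{*})|\mathcal{G}]\geq B$; yet $\sum_{j}Y_{*}^{j}=\rho_{\mathcal{G}}(X)-(N/k)\,1_{G'}<\rho_{\mathcal{G}}(X)$ on $G'$, contradicting the essential-infimum definition of $\rho_{\mathcal{G}}(X)$.

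The main obstacle will be converting the merely pointwise strict improvement $W>0$ on $A$ into a uniform-in-$\omega$ slack that a fixed scalar perturbation can tolerate: since the slack $\mathbb{E}_{\mathbb{P}}[U(X+Z)|\mathcal{G}]-B$ may be arbitrarily small on subsets of $G$, no single deterministic $\varepsilon$ can make the perturbed budget constraint hold throughout $G$. The remedy is the surgical localization onto a $\mathcal{G}$-measurable subset $G'$ calibrated jointly to $\varepsilon=1/k$ and to the slack, and this localization is exactly where Standing Assumption II's decomposability and Assumption \ref{DynRMreqint}'s preservation of integrability under perturbation are essential.
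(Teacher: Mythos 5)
Your proof is correct and follows essentially the same route as the paper: take a convex combination of two putative optima (which stays in $\mathscr{C}_{\mathcal{G}}$ with the same total), use strict concavity of $U$ to obtain strict conditional slack in the budget constraint on a $\mathcal{G}$-measurable set of positive probability, and then, thanks to Assumption \ref{DynRMreqint}, subtract a small constant on a suitable $\mathcal{G}$-measurable subset to build a feasible allocation whose componentwise sum is strictly smaller there, contradicting the essential-infimum definition of $\rho_{\mathcal{G}}(X)$. The only deviations are technical and harmless: you replace the paper's Egorov step by an increasing-union/continuity-of-measure choice of the set $G'$, and you localize the slack to $G=\{\mathbb{P}(\widehat{Y}\neq\widetilde{Y}\mid\mathcal{G})>0\}$, which is in fact slightly more careful than the paper's assertion that the strict conditional inequality holds almost surely (only positivity of that set is ever used in either argument).
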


\begin{proof}
Suppose $\widehat{Y}_{1}\neq \widehat{Y}_{2}$ were two optima. Then clearly
so is $\widehat{Y}_{\lambda }=\lambda \widehat{Y}_{1}+(1-\lambda )\widehat{Y}%
_{2}$ for $\lambda \in {\mathbb{R}},0<\lambda <1$ by concavity of $U$. At
the same time, we have that $\Gamma :=\{\mathbb{E}_{\mathbb{P}}\left[ U(X+%
\widehat{Y}_{\lambda })\middle|\mathcal{G}\right] >\lambda \mathbb{E}_{%
\mathbb{P}}\left[ U(X+\widehat{Y}_{1})\middle|\mathcal{G}\right] +(1-\lambda
)\mathbb{E}_{\mathbb{P}}\left[ U(X+\widehat{Y}_{2})\middle|\mathcal{G}\right]
\}\in \mathcal{G}$ satisfies ${\mathbb{P}}(\Gamma )=1$ by strict concavity:
if this were not the case, from concavity and 
\begin{equation*}
\mathbb{E}_{\mathbb{P}}\left[ U(X+\widehat{Y}_{\lambda })1_{\Gamma ^{c}}%
\right] =\lambda \mathbb{E}_{\mathbb{P}}\left[ U(X+\widehat{Y}_{1})1_{\Gamma
^{c}}\right] +(1-\lambda )\mathbb{E}_{\mathbb{P}}\left[ U(X+\widehat{Y}%
_{2})1_{\Gamma ^{c}}\right]
\end{equation*}%
we would get that on $\Gamma ^{c}$, which has positive probability, $U(X+%
\widehat{Y}_{\lambda })=\lambda U(X+\widehat{Y}_{1})+(1-\lambda )U(X+%
\widehat{Y}_{2})$ which contradicts strict concavity of $U$.

Fix now for some $\varepsilon >0$. Recall that we showed 
\begin{equation*}
\mathbb{E}_{\mathbb{P}}\left[ U(X+\widehat{Y}_{\lambda })\middle|\mathcal{G}%
\right] >\lambda \mathbb{E}_{\mathbb{P}}\left[ U(X+\widehat{Y}_{1})\middle|%
\mathcal{G}\right] +(1-\lambda )\mathbb{E}_{\mathbb{P}}\left[ U(X+\widehat{Y}%
_{2})\middle|\mathcal{G}\right] \geq B\,\,\,{\mathbb{P}}-\text{a.s.}.
\end{equation*}%
By monotonicity of $U$ and Assumption \ref{DynRMreqint} we have the
convergence 
\begin{equation*}
\mathbb{E}_{\mathbb{P}}\left[ U\left( X+\widehat{Y}_{\lambda }-\frac{1}{H}%
\mathbf{1}\right) \middle|\mathcal{G}\right] \uparrow _{H}\mathbb{E}_{%
\mathbb{P}}\left[ U(X+\widehat{Y}_{\lambda })\middle|\mathcal{G}\right]
>B\,\,\,{\mathbb{P}}-\text{a.s.}
\end{equation*}%
where $\mathbf{1}:=[1,\dots ,1]\in {\mathbb{R}}^{N}$. By Egorov Theorem \ref%
{DynRMegorovthm}, there exists a $\Xi \in \mathcal{G}$, with ${\mathbb{P}}%
(\Xi )>0$, such that on $\Xi $ both the following conditions hold: $\mathbb{E%
}_{\mathbb{P}}\left[ U\left( X+\widehat{Y}_{\lambda }\right) \middle|%
\mathcal{G}\right] \geq B+\varepsilon $ and the convergence 
\begin{equation*}
\mathbb{E}_{\mathbb{P}}\left[ U\left( X+\widehat{Y}_{\lambda }-\frac{1}{H}%
\mathbf{1}\right) \middle|\mathcal{G}\right] \uparrow _{H}\mathbb{E}_{%
\mathbb{P}}\left[ U(X+\widehat{Y}_{\lambda })\middle|\mathcal{G}\right]
\end{equation*}%
is uniform. Hence, definitely in $H\in \mathbb{N}$, 
\begin{equation*}
\mathbb{E}_{\mathbb{P}}\left[ U\left( X+\widehat{Y}_{\lambda }-\frac{1}{H}%
1_{\Xi }\mathbf{1}\right) \middle|\mathcal{G}\right] \geq B\,.
\end{equation*}%
At the same time, $\widehat{Y}_{\lambda }-\frac{1}{H}1_{\Xi }\mathbf{1}\in %
\mathscr{C}_{\mathcal{G}}$ and by definition $\rho _{\mathcal{G}}\left(
X\right) 1_{\Xi }\leq \left( \sum_{j=1}^{N}\widehat{Y}_{\lambda }^{j}-N\frac{%
1}{H}\right) 1_{\Xi }<\left( \sum_{j=1}^{N}\widehat{Y}_{\lambda }^{j}\right)
1_{\Xi }$ which contradicts the optimality of $\widehat{Y}$, as ${\mathbb{P}}%
(\Xi )>0$.
\end{proof}

\subsubsection{Integrability}

\label{DynRMsecintegr}

We now wish to establish a conditional fairness property for any optimum $%
\widehat{Y}$ from Theorem \ref{DynRMmainthmrhoinfty}, namely we aim to prove
(see Proposition \ref{DynRMpropoptisint}): $\sum_{j=1}^{N}\mathbb{E}_{%
\mathbb{Q}^{j}}\left[ \widehat{Y}^{j}\middle|\mathcal{G}\right] \leq
\sum_{j=1}^{N}\widehat{Y}^{j}$ for any ${\mathbb{Q}}\in \mathscr{Q}_{%
\mathcal{G}}^{1}$ (defined in \eqref{DynRMDefq1cg}). Notice that this is not
automatic from the fairness condition \eqref{DynRMfairsums} coming from the
definition of ${\mathbb{Q}}\in \mathscr{Q}_{\mathcal{G}}^{1}$, since we do
not know in general if $\widehat{Y}\in \mathscr{C}_{\mathcal{G}}\cap
(L^{\infty }(\mathcal{F}))^{N}$ (we only know, from Theorem \ref%
{DynRMmainthmrhoinfty}, that $\widehat{Y}\in \mathscr{C}_{\mathcal{G}}$). We
point out that Proposition \ref{DynRMpropoptisint} will be also needed in
the proof of Theorem \ref{DynRMthmallocissorte}. In order to show such a
fairness property, we need to establish an integrability result for such a $%
\widehat{Y}$, and the theory of multivariate Orlicz spaces will come in
handy for this purpose. To each univariate Young function $\phi :{\mathbb{R}}%
_{+}\rightarrow {\mathbb{R}}$ we can associate its conjugate $\phi ^{\ast
}(y):=\sup_{x\in {\mathbb{R}}}\left( xy-\phi (x)\right) $. As in \cite%
{RaoRen}, we can associate to both $\phi $ and $\phi ^{\ast }$ the Orlicz
spaces and Hearts $L^{\phi },M^{\phi },L^{\phi ^{\ast }},M^{\phi ^{\ast }}$.
Univariate Young functions naturally arise from univariate utility functions 
$u$, setting $\phi (x):=u(0)-u(-x),\,x\geq 0$. We now recall from \cite{DF19}
how to produce multivariate Orlicz functions and spaces from multivariate
utility functions, inspired by \cite{Drapeau}, Appendix B. Indeed, for a
multivariate utility function $U$ specified \ in \eqref{DynRMdefU}, we
define the function $\Phi $ on $({\mathbb{R}}_{+})^{N}$ by 
\begin{equation}
\Phi (y):=U(0)-U(-y)\,.  \label{DynRMassocorlicz}
\end{equation}%
and 
\begin{equation}
\phi _{j}(z):=u_{j}(0)-u_{j}(-z),\,\,z\in {\mathbb{R}}_{+}
\label{DynRMassocorliczj}
\end{equation}%
as the (univariate) functions associated to the univariate utilities $%
u_{1},\dots ,u_{N}$. Note that $\Phi $ defined in \eqref{DynRMassocorlicz}
is a multivariate Orlicz function (\cite{DF19} Lemma 3.5.(i) observing that $%
U$ satisfying Standing Assumption I in this paper is well controlled
according to \cite{DF19} Definition 3.4, as shown in \cite{DF19} Proposition
7.1). As such, $\Phi $ in \eqref{DynRMassocorlicz} generates a multivariate
Orlicz space and a multivariate Orlicz Heart:%
\begin{align}
L^{\Phi }& :=\left\{ X\in L^{0}\left( (\Omega ,\mathcal{F},{\mathbb{P}}%
);[-\infty ,+\infty ]^{N}\right) \mid \,\exists \,\lambda \in (0,+\infty )%
\text{ s.t. }\mathbb{E}_{\mathbb{P}}\left[ \Phi (\lambda \left\vert
X\right\vert )\right] <+\infty \right\}\,,  \notag \\
M^{\Phi }& :=\left\{ X\in L^{0}\left( (\Omega ,\mathcal{F},{\mathbb{P}}%
);[-\infty ,+\infty ]^{N}\right) \mid \,\forall \,\lambda \in (0,+\infty )%
\text{ }\mathbb{E}_{\mathbb{P}}\left[ \Phi (\lambda \left\vert X\right\vert )%
\right] <+\infty \right\} \,.  \label{DynRMMphi}
\end{align}%
Moreover, $\phi _{1},\dots ,\phi _{N}$ are univariate Orlicz functions.

The K\"{o}the dual $K_{\Phi }$ of the space $L^{\Phi }$ is defined as 
\begin{equation}
K_{\Phi }:=\left\{ Z\in L^{0}\left( (\Omega ,\mathcal{F},{\mathbb{P}}%
);[-\infty ,+\infty ]^{N}\right) \mid \sum_{j=1}^{N}X^{j}Z^{j}\in
L^{1}(\Omega ,\mathcal{F},{\mathbb{P}})\quad \forall \,X\in L^{\Phi
}\right\} \,.  \label{DynRMKphi}
\end{equation}

Section 2.1 in \cite{DF19} collects some useful properties on multivariate
Orlicz spaces, Orlicz Hearts and K\"{o}the duals.

\begin{assumption}
\label{DynRMA2} $L^{\Phi }=L^{\phi _{1}}\times \dots \times L^{\phi _{N}}$.
\end{assumption}

Assumption \ref{DynRMA2} is a request on the utility functions we allow for.
It can be rephrased as: if for $X\in (L^{0}\left( (\Omega ,\mathcal{F},{%
\mathbb{P}});[-\infty ,+\infty ]\right) )^{N}$ there exist $\lambda
_{1},\dots ,\lambda _{N}>0$ such that $\mathbb{E}_{\mathbb{P}}\left[
u_{j}(-\lambda _{j}\left\vert X^{j}\right\vert )\right] >-\infty $, then
there exists $\alpha >0$ such that $\mathbb{E}_{\mathbb{P}}\left[ \Lambda
(-\alpha \left\vert X\right\vert )\right] >-\infty $. This request is rather
weak and there are many examples of choices of $U$ and $\Lambda $ that
guarantee this condition is met, see \cite{DF19} Proposition 7.3. Note
however that this is not a request on the topological spaces, but just an
integrability requirement, and it is automatically satisfied if $\Lambda =0.$
%
%

\begin{proposition}
\label{DynRMpropoptisint} Suppose Assumption \ref{DynRMA2} is fulfilled.
Then for any ${\mathbb{Q}}\in \mathscr{Q}_{\mathcal{G}}^{1}$ (defined in %
\eqref{DynRMDefq1cg}), any optimum $\widehat{Y}$ from Theorem \ref%
{DynRMmainthmrhoinfty} satisfies $\widehat{Y}\in L^{1}((\Omega ,\mathcal{F},{%
\mathbb{Q}}^{1})\times \dots \times L^{1}((\Omega ,\mathcal{F},{\mathbb{Q}}%
^{N})$ and 
\begin{equation*}
\sum_{j=1}^{N}\mathbb{E}_{\mathbb{Q}^{j}}\left[ \widehat{Y}^{j}\middle|%
\mathcal{G}\right] \leq \sum_{j=1}^{N}\widehat{Y}^{j}\,.
\end{equation*}
\end{proposition}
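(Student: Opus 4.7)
The goal is twofold: establish $\widehat{Y}^j\in L^1(\Omega,\mathcal{F},\mathbb{Q}^j)$ for each $j$, and then deduce the fairness inequality. The plan is to embed the optimizer and the dual density in the multivariate Orlicz framework associated to $U$: identify $\widehat{Y}$ as an element of $L^\Phi$ and $\mathrm{d}\mathbb{Q}/\mathrm{d}\mathbb{P}$ as an element of the K\"othe dual $K_\Phi$, so that a Young--Fenchel pairing yields componentwise integrability; then transfer the fairness inequality from the bounded truncations of $\widehat{Y}$ (for which it holds by the very definition of $\mathscr{Q}_\mathcal{G}^1$) to $\widehat{Y}$ itself via conditional dominated convergence.

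For $\widehat{Y}\in L^\Phi$, I would first extract the negative tail. The Inada slope at zero gives $u_j(z)\leq u_j'(0)z^+$, hence $U(X+\widehat{Y})^+\leq C\sum_j(X^j+\widehat{Y}^j)^+ + M\in L^1(\mathbb{P})$ because $X\in(L^\infty)^N$, $\widehat{Y}\in(L^1)^N$ and $\Lambda\leq M$. The admissibility of $\widehat{Y}$ with $B\in L^\infty(\mathcal{G})$ then forces $U(X+\widehat{Y})^-\in L^1(\mathbb{P})$, and since each $u_j^+$ is in $L^1$ by the same Inada bound, $u_j(X^j+\widehat{Y}^j)^-=\phi_j((X^j+\widehat{Y}^j)^-)\in L^1(\mathbb{P})$ componentwise. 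Using $X\in(L^\infty)^N\subseteq L^\Phi$ one upgrades this to $\widehat{Y}^{j,-}\in L^{\phi_j}$. The positive tail is then dominated through the cluster balance $\sum_{i\in I_m}\widehat{Y}^i=d_m\in L^\infty(\mathcal{G})$: for $j\in I_m$ one has the pointwise estimate $\widehat{Y}^{j,+}\leq \|d_m\|_\infty+\sum_{i\in I_m,\,i\neq j}\widehat{Y}^{i,-}$, and Assumption \ref{DynRMA2}, by identifying $L^\Phi=L^{\phi_1}\times\dots\times L^{\phi_N}$, is precisely what allows this inter-component estimate to close up into the membership $\widehat{Y}\in L^\Phi$.

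Next, I would argue $\mathrm{d}\mathbb{Q}/\mathrm{d}\mathbb{P}\in K_\Phi$. For $\mathbb{Q}\in\mathscr{Q}_\mathcal{G}^1$ we have $\alpha^1(\mathbb{Q})\in L^1(\mathcal{G})$. Testing the essential supremum in \eqref{DynRMdefalpha1} against bounded truncations of the Fenchel maximizer associated with the Young inequality for the pair $(u_j,\phi_j^*)$ at $y=\lambda\,\mathrm{d}\mathbb{Q}^j/\mathrm{d}\mathbb{P}$, and exploiting both $\Lambda\leq M$ and $\esssup B<\sup U$, one derives the conditional lower bound
$$\alpha^1(\mathbb{Q})\;\geq\;\lambda^{-1}\Big(\sum_{j=1}^N\mathbb{E}_\mathbb{P}[\phi_j^*(\lambda\,\mathrm{d}\mathbb{Q}^j/\mathrm{d}\mathbb{P})\mid\mathcal{G}]-C\Big)$$
for some $C\in L^\infty(\mathcal{G})$. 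Integrability of $\alpha^1(\mathbb{Q})$ then gives $\phi_j^*(\lambda\,\mathrm{d}\mathbb{Q}^j/\mathrm{d}\mathbb{P})\in L^1(\mathbb{P})$, that is $\mathrm{d}\mathbb{Q}^j/\mathrm{d}\mathbb{P}\in L^{\phi_j^*}$ for each $j$, and Assumption \ref{DynRMA2} lifts these componentwise memberships to $\mathrm{d}\mathbb{Q}/\mathrm{d}\mathbb{P}\in K_\Phi$. Applying the pairing $L^\Phi\times K_\Phi\to L^1$ componentwise (which is permitted under \ref{DynRMA2} by testing with single-nonzero-coordinate elements) yields $\widehat{Y}^j\cdot\mathrm{d}\mathbb{Q}^j/\mathrm{d}\mathbb{P}\in L^1(\mathbb{P})$, i.e.\ $\widehat{Y}^j\in L^1(\mathbb{Q}^j)$ for each $j$.

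The fairness inequality finally follows by approximation. The $\mathcal{G}$-truncations $\widehat{Y}_{(k)}$ from \eqref{DynRMdeftruncated} lie in $\mathscr{C}_\mathcal{G}\cap(L^\infty(\mathcal{F}))^N$, satisfy $\widehat{Y}_{(k)}^j\to\widehat{Y}^j$ $\mathbb{P}$-a.s. with $|\widehat{Y}_{(k)}^j|\leq|\widehat{Y}^j|+\|Z_{\widehat{Y}}^j\|_\infty$, which by the previous step is $\mathbb{Q}^j$-integrable, and they preserve the total sum $\sum_j\widehat{Y}_{(k)}^j=\sum_j\widehat{Y}^j$ by construction. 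The definition of $\mathscr{Q}_\mathcal{G}^1$ in \eqref{DynRMDefq1cg} applied to each $\widehat{Y}_{(k)}$ gives $\sum_j\mathbb{E}_{\mathbb{Q}^j}[\widehat{Y}_{(k)}^j\mid\mathcal{G}]\leq\sum_j\widehat{Y}^j$, and conditional dominated convergence sends $k\to\infty$ to yield the conclusion. The main obstacle is the positive-part half of the $L^\Phi$ argument, since admissibility of $\widehat{Y}$ controls only the negative tails of its components; routing the integrability of each $\widehat{Y}^{j,+}$ through the cluster balance is the sole step of the argument where Assumption \ref{DynRMA2} is used essentially.
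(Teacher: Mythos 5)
Your overall plan (Orlicz embedding, K\"othe duality, truncation plus conditional dominated convergence) is in the same family as the paper's argument, and the conclusion that $\mathrm{d}\mathbb{Q}/\mathrm{d}\mathbb{P}\in K_\Phi$ is indeed a key intermediate step in both. But the positive-tail step, which you yourself flag as the crux and as the place where Assumption \ref{DynRMA2} is used essentially, has a genuine gap. The cluster balance gives, for $j\in I_m$, the pointwise bound $(\widehat{Y}^j)^+\leq\|d_m\|_\infty+\sum_{i\in I_m,\,i\neq j}(\widehat{Y}^i)^-$. You have established $(\widehat{Y}^i)^-\in L^{\phi_i}$ for each $i$, so the right-hand side lies in $\sum_{i\neq j}L^{\phi_i}$ plus a bounded term, but what you need is $(\widehat{Y}^j)^+\in L^{\phi_j}$ (equivalently $(\widehat{Y}^j)^+\in L^1(\mathbb{Q}^j)$ after pairing with $\mathrm{d}\mathbb{Q}^j/\mathrm{d}\mathbb{P}\in L^{\phi_j^*}$). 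Assumption \ref{DynRMA2} only says $L^\Phi=L^{\phi_1}\times\dots\times L^{\phi_N}$; it asserts nothing about comparability of the individual Young functions $\phi_i$ and $\phi_j$, and in general $L^{\phi_i}\not\subseteq L^{\phi_j}$. So the inter-component estimate does not close, and the membership $\widehat{Y}\in L^\Phi$ is not established. The same difficulty resurfaces if you try instead to bound $(\widehat{Y}^j)^+\,\mathrm{d}\mathbb{Q}^j/\mathrm{d}\mathbb{P}$ directly: the pairing $(\widehat{Y}^i)^-\cdot\mathrm{d}\mathbb{Q}^j/\mathrm{d}\mathbb{P}$ mixes $L^{\phi_i}$ with $L^{\phi_j^*}$ and gives no $L^1$ control.

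The paper sidesteps this entirely: it never claims $\widehat{Y}\in L^\Phi$. It only uses the Orlicz machinery (via Lemmas \ref{DynRMpropint} and \ref{DynRMlemmaintnegaprts}) to establish $[(\widehat{Y}^1)^-,\dots,(\widehat{Y}^N)^-]\in L^1(\mathbb{Q}^1)\times\dots\times L^1(\mathbb{Q}^N)$, which is a one-sided statement requiring only the admissibility constraint $\mathbb{E}_\mathbb{P}[U(X+\widehat{Y})\mid\mathcal{G}]\geq B$. Integrability of the positive parts is then obtained \emph{as a byproduct} of the fairness inequality, not as a prerequisite: applying Fatou to $(\widehat{Y}_{(k)}^j)^+$ and splitting $(\widehat{Y}_{(k)}^j)^+=\widehat{Y}_{(k)}^j+(\widehat{Y}_{(k)}^j)^-$, one uses the defining fairness property of $\mathbb{Q}\in\mathscr{Q}_\mathcal{G}^1$ on the bounded truncations together with cDOM on the negative parts to get
\begin{equation*}
\sum_{j=1}^{N}\mathbb{E}_{\mathbb{Q}^{j}}\bigl[(\widehat{Y}^{j})^{+}\bigm|\mathcal{G}\bigr]\;\leq\;\sum_{j=1}^{N}\widehat{Y}^{j}+\sum_{j=1}^{N}\mathbb{E}_{\mathbb{Q}^{j}}\bigl[(\widehat{Y}^{j})^{-}\bigm|\mathcal{G}\bigr]\,,
\end{equation*}
with the right-hand side in $L^1(\mathbb{P})$, hence $(\widehat{Y}^j)^+\in L^1(\mathbb{Q}^j)$ and, rearranging, the asserted fairness inequality. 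Your closing cDOM step is essentially this, but it presupposes $\widehat{Y}^j\in L^1(\mathbb{Q}^j)$, which is the thing the Fatou argument is designed to produce; as written, the logic is circular at exactly the point where the cluster-balance argument fails to fill the gap. A secondary, less serious point: your route to $\mathrm{d}\mathbb{Q}/\mathrm{d}\mathbb{P}\in K_\Phi$ via a conditional Fenchel bound on $\alpha^1(\mathbb{Q})$ is different from the paper's, which goes through finiteness of $\rho_0^*$ and boundedness of the extension $\rho_0^\Phi$ on a norm ball (Lemma \ref{DynRMpropint} / Proposition \ref{DynRMpropextension}); your sketch would need care because the Fenchel maximizer need not be admissible for the constraint $\mathbb{E}_\mathbb{P}[U(\cdot)\mid\mathcal{G}]\geq B$, but this is a fixable presentation issue, not a conceptual gap like the positive-tail one.
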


\begin{proof}
Postponed to Section \ref{DynRmsecproofs11}.
\end{proof}


\subsection{Optimization with a fixed measure ${\mathbb{Q}}\in\mathscr{Q}_%
\mathcal{G}^1$}

\label{DynRMsecoptwithameasure} We will now study a counterpart of $\rho _{%
\mathcal{G}}$ which can be obtained by fixing a vector of pricing measures ${%
\mathbb{Q}}=[{\mathbb{Q}}^{1},\dots ,{\mathbb{Q}}^{N}]$. For ${\mathbb{Q}}%
\in \mathscr{Q}_{\mathcal{G}}^{1}$ we define the following optimization
problem:

\begin{equation}
\rho _{\mathcal{G}}^{\mathbb{Q}}(X):=\essinf\left\{ \sum_{j=1}^{N}\mathbb{E}%
_{\mathbb{Q}^{j}}\left[ Y^{j}\middle|\mathcal{G}\right] \mid Y\in (L^{\infty
}(\Omega ,\mathcal{F},{\mathbb{P}}))^{N}\text{ and }\mathbb{E}_{\mathbb{P}}%
\left[ U\left( X+Y\right) \middle|\mathcal{G}\right] \geq B\right\} .
\label{DynRMdefrhoq}
\end{equation}

Notice that in the problem \eqref{DynRMdefrhoq} the constraint $%
\sum_{j=1}^{N}Y^{j}\in L^{\infty }(\mathcal{G})$ does not appear anymore.
The problem still makes sense, however, since a valuation of $Y$ is now
assigned by the pricing vector ${\mathbb{Q}}$. By the fairness condition %
\eqref{DynRMfairsums}, it is easy to verify that $\rho _{\mathcal{G}}(X)\leq
\rho _{\mathcal{G}}^{\mathbb{Q}}(X)$ for all ${\mathbb{Q}}\in \mathscr{Q}_{%
\mathcal{G}}^{1}$ and $X\in (L^{\infty }(\mathcal{F}))^{N}$. Such a bound is
actually tight, as the following Theorem \ref{DynRMoptforrhoq} shows. The
result we are to present will also be useful when studying equilibrium
properties for primal and dual optima of $\rho _{\mathcal{G}}$ in Section %
\ref{DynRMsecmSORTE}. In order to state the result more easily, and since we
will need to change underlying probability measures, some additional
notation is in place: from now on, given a vector of probability measures ${%
\mathbb{Q}}=[{\mathbb{Q}}^{1},\dots ,{\mathbb{Q}}^{N}]$ and a number $p\in
\{0,1\}$, we set 
\begin{equation*}
L^{p}(\mathcal{F},{\mathbb{Q}}):=L^{p}(\Omega ,\mathcal{F},{\mathbb{Q}}%
^{1})\times \dots \times L^{p}(\Omega ,\mathcal{F},{\mathbb{Q}}^{N})\,.
\end{equation*}%
Similarly, when some confusion might arise, we will write explicitly also
the measure ${\mathbb{P}}$, that is we will use $L^{p}(\mathcal{F},{\mathbb{P%
}}):=L^{p}(\Omega ,\mathcal{F},{\mathbb{P}})$ in place of the shortened $%
L^{p}(\mathcal{F})$.

\begin{theorem}
\label{DynRMoptforrhoq} Suppose Assumption \ref{DynRMA2} is fulfilled and
let $X\in (L^{\infty }(\mathcal{F},{\mathbb{P}}))^{N}$. Then for any optimum 
$\widehat{{\mathbb{Q}}}\in\mathscr{Q}_\mathcal{G}^1$ of %
\eqref{DynRMeqdualreprshorfall}, the optimum $\widehat{Y}\in \mathscr{C}_{%
\mathcal{G}}$ of Theorem \ref{DynRMmainthmrhoinfty} is an optimum for $\rho
_{\mathcal{G}}^{\widehat{{\mathbb{Q}}}}(X)$ in the following extended sense: 
$\widehat{Y}\in (L^{1}(\mathcal{F},{\mathbb{P}}))^{N}\cap \bigcap_{{\mathbb{Q%
}}\in \mathscr{Q}_{\mathcal{G}}^{1}}L^{1}(\mathcal{F},{\mathbb{Q}}),$ $%
\mathbb{E}_{\mathbb{P}}\left[ U\left( X+\widehat{Y}\right) \middle|\mathcal{G%
}\right] \geq B$ and 
\begin{align}
\rho _{\mathcal{G}}^{\widehat{{\mathbb{Q}}}}(X) &=\essinf\left\{
\sum_{j=1}^{N}\mathbb{E}_{\widehat{{\mathbb{Q}}}^{j}}\left[ Y^{j}\middle|%
\mathcal{G}\right] \mid Y\in (L^{1}(\mathcal{F},{\mathbb{P}}))^{N}\cap
\bigcap_{{\mathbb{Q}}\in \mathscr{Q}_{\mathcal{G}}^{1}}L^{1}(\mathcal{F},{%
\mathbb{Q}}),\,\mathbb{E}_{\mathbb{P}}\left[ U\left( X+Y\right) \middle|%
\mathcal{G}\right] \geq B\right\}  \label{new} \\
&=\sum_{j=1}^{N}\mathbb{E}_{\widehat{{\mathbb{Q}}}^{j}}\left[ \widehat{Y}^{j}%
\middle|\mathcal{G}\right] =\rho _{\mathcal{G}}\left( X\right) .  \notag
\end{align}%
%
%
%
%
%
%
\end{theorem}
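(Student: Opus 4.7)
The plan is to combine Proposition \ref{DynRMpropoptisint} (which supplies both integrability of $\widehat{Y}$ under every ${\mathbb{Q}}\in\mathscr{Q}_\mathcal{G}^{1}$ and the fairness inequality $\sum_{j}\mathbb{E}_{{\mathbb{Q}}^{j}}[\widehat{Y}^{j}|\mathcal{G}]\leq\sum_{j}\widehat{Y}^{j}$) with the dual representation from Theorem \ref{DynRMmainthmrhoinfty}. The integrability statement $\widehat{Y}\in(L^{1}(\mathcal{F},{\mathbb{P}}))^{N}\cap\bigcap_{{\mathbb{Q}}\in\mathscr{Q}_{\mathcal{G}}^{1}}L^{1}(\mathcal{F},{\mathbb{Q}})$ follows immediately from $\widehat{Y}\in\mathscr{C}_\mathcal{G}\subseteq(L^{1}(\mathcal{F},{\mathbb{P}}))^{N}$ (by \eqref{DynRMdefCG}) together with Proposition \ref{DynRMpropoptisint}, and $\mathbb{E}_{\mathbb{P}}[U(X+\widehat{Y})|\mathcal{G}]\geq B$ is already part of Theorem \ref{DynRMmainthmrhoinfty}; thus $\widehat{Y}$ is a feasible candidate for the essential infimum in \eqref{new}.

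The core of the proof is the duality inequality
\begin{equation*}
\sum_{j=1}^{N}\mathbb{E}_{\widehat{\mathbb{Q}}^{j}}\left[Y^{j}\middle|\mathcal{G}\right]\geq\rho_{\mathcal{G}}(X)
\end{equation*}
for every bounded admissible $Y\in(L^{\infty}(\mathcal{F}))^{N}$ with $\mathbb{E}_{\mathbb{P}}[U(X+Y)|\mathcal{G}]\geq B$: since $X+Y$ belongs to the feasible set in \eqref{DynRMdefalpha1}, one has $\sum_{j}\mathbb{E}_{\widehat{\mathbb{Q}}^{j}}[-(X^{j}+Y^{j})|\mathcal{G}]\leq\alpha^{1}(\widehat{\mathbb{Q}})$, and combining with the dual identity $\rho_{\mathcal{G}}(X)=\sum_{j}\mathbb{E}_{\widehat{\mathbb{Q}}^{j}}[-X^{j}|\mathcal{G}]-\alpha^{1}(\widehat{\mathbb{Q}})$ from Theorem \ref{DynRMmainthmrhoinfty} produces the bound.

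This bound is then extended to $Y$ in the enlarged class of \eqref{new} via a truncation-and-perturbation argument. Given such $Y$, set $Y_{K}^{j}:=(Y^{j}\wedge K)\vee(-K)$ (so $Y_{K}\to Y$ a.s.\ with $|Y_{K}|\leq|Y|$) and fix $\varepsilon>0$. Since $\mathbb{E}_{\mathbb{P}}[U(X+Y+\varepsilon\mathbf{1})|\mathcal{G}]>B$ strictly (by strict monotonicity of the $u_{j}$), the $\mathcal{G}$-measurable sets $\Gamma_{K}:=\bigcap_{n\geq K}\{\mathbb{E}_{\mathbb{P}}[U(X+Y_{n}+\varepsilon\mathbf{1})|\mathcal{G}]\geq B\}$ satisfy $\mathbb{P}(\Gamma_{K})\uparrow 1$. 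Following Claim \ref{DynRMpropinftycoincide}, define $W_{K}:=Y_{K}+\varepsilon\mathbf{1}+\alpha_{K}\mathbf{1}_{\Gamma_{K}^{c}}\in(L^{\infty}(\mathcal{F}))^{N}$ with a deterministic $\alpha_{K}\in{\mathbb{R}}^{N}$ chosen so that $\mathbb{E}_{\mathbb{P}}[U(X+W_{K})|\mathcal{G}]\geq B$; the bounded-case inequality then yields $\sum_{j}\mathbb{E}_{\widehat{\mathbb{Q}}^{j}}[W_{K}^{j}|\mathcal{G}]\geq\rho_{\mathcal{G}}(X)$. Conditional dominated convergence (with the integrable bound $|Y^{j}|$) gives $\mathbb{E}_{\widehat{\mathbb{Q}}^{j}}[Y_{K}^{j}|\mathcal{G}]\to\mathbb{E}_{\widehat{\mathbb{Q}}^{j}}[Y^{j}|\mathcal{G}]$, and a careful choice of $\Gamma_{K}^{c}$ (shrinking faster than $\alpha_{K}$ grows, via Egorov applied to the $\mathcal{G}$-measurable convergence combined with absolute continuity $\widehat{\mathbb{Q}}^{j}\ll{\mathbb{P}}$) forces $\alpha_{K}\widehat{\mathbb{Q}}^{j}(\Gamma_{K}^{c}|\mathcal{G})\to 0$, producing $\sum_{j}\mathbb{E}_{\widehat{\mathbb{Q}}^{j}}[Y^{j}|\mathcal{G}]+N\varepsilon\geq\rho_{\mathcal{G}}(X)$; letting $\varepsilon\downarrow 0$ completes the extension. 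The same scheme applied to $\widehat{Y}$ (using closedness of $\mathscr{C}_\mathcal{G}$ under $\mathcal{G}$-truncation for the truncation step) yields bounded admissible approximations whose $\widehat{\mathbb{Q}}$-conditional cost converges to $\sum_{j}\mathbb{E}_{\widehat{\mathbb{Q}}^{j}}[\widehat{Y}^{j}|\mathcal{G}]$, delivering the reverse bound $\rho_{\mathcal{G}}^{\widehat{\mathbb{Q}}}(X)\leq\rho_{\mathcal{G}}(X)$.

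Combining: applying the extended inequality to $Y=\widehat{Y}$ gives $\sum_{j}\mathbb{E}_{\widehat{\mathbb{Q}}^{j}}[\widehat{Y}^{j}|\mathcal{G}]\geq\rho_{\mathcal{G}}(X)$; together with the fairness inequality $\sum_{j}\mathbb{E}_{\widehat{\mathbb{Q}}^{j}}[\widehat{Y}^{j}|\mathcal{G}]\leq\sum_{j}\widehat{Y}^{j}=\rho_{\mathcal{G}}(X)$, equality is forced. Since every feasible $Y$ for \eqref{new} obeys $\sum_{j}\mathbb{E}_{\widehat{\mathbb{Q}}^{j}}[Y^{j}|\mathcal{G}]\geq\rho_{\mathcal{G}}(X)$ and $\widehat{Y}$ attains this value, the essential infimum in \eqref{new} equals $\rho_{\mathcal{G}}(X)$ and is attained at $\widehat{Y}$. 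The chief technical obstacle is the calibration in the extension step: the deterministic compensation $\alpha_{K}$ may grow without bound as $K\to\infty$ (to restore admissibility after the crude componentwise truncation, reflecting the Inada behaviour $u_{j}(x)/x\to+\infty$ as $x\to-\infty$), and one must shrink $\mathbb{P}(\Gamma_{K}^{c})$ at a rate dominating that growth so that the compensation term vanishes in every $\widehat{\mathbb{Q}}^{j}$-conditional expectation.
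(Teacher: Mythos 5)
Your overall strategy is the paper's own: the bounded-case inequality you derive from dual attainment plus the definition of $\alpha^{1}$ is exactly Proposition \ref{Prop517} (via the change of variables $W=X+Z$), the extension to the enlarged class is Proposition \ref{Prop518}, and the final chaining with Proposition \ref{DynRMpropoptisint} and the optimality of $\widehat{Y}$ is the paper's concluding argument. However, the way you describe the ``chief technical obstacle'' is misconceived: the sets $\Gamma_{K}$ are dictated by $Y$, $\varepsilon$ and $B$, not chosen, so you cannot tune the rate at which ${\mathbb{P}}(\Gamma_{K}^{c})$ shrinks; moreover, since $\Gamma_{K}^{c}\in\mathcal{G}$, the compensation enters the conditional expectation simply as $\alpha_{K}^{j}1_{\Gamma_{K}^{c}}$ (indeed $\widehat{{\mathbb{Q}}}^{j}(\Gamma_{K}^{c}|\mathcal{G})=1_{\Gamma_{K}^{c}}$), so no averaging against a small probability ever takes place. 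The correct and much simpler argument, used in Claim \ref{DynRMpropinftycoincide} and in the proof of Proposition \ref{Prop518}, is pointwise: $\Gamma_{K}\uparrow$ with ${\mathbb{P}}(\bigcup_{K}\Gamma_{K})=1$, hence for a.e.\ $\omega$ one has $1_{\Gamma_{K}^{c}}(\omega)=0$ for all large $K$, and the compensation term vanishes in the a.s.\ liminf no matter how fast $\alpha_{K}$ grows; no Egorov or rate calibration is needed or possible.

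The genuine gap is in your choice of truncation. With the componentwise clipping $Y_{K}^{j}:=(Y^{j}\wedge K)\vee(-K)$ you still need $\mathbb{E}_{\mathbb{P}}\left[U(X+Y_{K}+\varepsilon\mathbf{1})\middle|\mathcal{G}\right]\to\mathbb{E}_{\mathbb{P}}\left[U(X+Y+\varepsilon\mathbf{1})\middle|\mathcal{G}\right]$ a.s.\ in order to get ${\mathbb{P}}(\Gamma_{K})\uparrow1$, and this requires an integrable dominating (or at least lower-bounding) function uniform in $K$. For the separable part $\sum_{j}u_{j}$ one can assemble such a bound componentwise (using $u_{j}(-(X^{j}+Y^{j})^{-})\in L^{1}$, which follows from the budget constraint, and boundedness of $X$), but the coupling term $\Lambda$ breaks this: the uniform lower bound is only $\Lambda(X-Y^{-}+\varepsilon\mathbf{1})$ (with $Y^{-}$ the vector of negative parts), whose negative part need not be integrable under the hypotheses of this theorem --- Assumption \ref{DynRMA2} only yields Orlicz-space membership, i.e.\ integrability of $\Lambda$ at \emph{some} scale $\lambda>0$, not at scale $1$, and Assumption \ref{DynRMreqint} is not assumed here. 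The paper avoids this entirely by using the indicator-type truncation of \eqref{DynRMdeftruncated}, which replaces $Y$ by a fixed bounded vector on the bad set, so that $\left\vert U(X+Y_{(k)}+\varepsilon\mathbf{1})\right\vert\leq\left\vert U(X+Y+\varepsilon\mathbf{1})\right\vert+\left\vert U(X+Z_{Y}+\varepsilon\mathbf{1})\right\vert$ is an immediate integrable dominator. Replacing your clipping by that truncation (and the pointwise argument above for the compensation term) makes your proof coincide with the paper's.
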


\begin{proof}
See Section \ref{DynRmsecproofs11}.
\end{proof}

\section{The exponential case}

\label{DynRMsectionexp}

A specific, rather canonical choice for the multivariate utility $U$ is the
aggregation of univariate exponential utility functions for single agents.
It allows for obtaining explicit formulas for $\rho_\mathcal{G}$, as well
for the corresponding optima.


%
Throughout the whole Section \ref{DynRMsectionexp} we take $%
u_{j}(x)=-e^{-\alpha _{j}x},j=1,\dots ,N$ for real numbers $\alpha
_{1},\dots ,\alpha _{N}>0$ and $\Lambda =0$. We set:%
\begin{equation}
\overline{X}:=\sum_{j=1}^{N}X^{j}\text{; \ }\beta :=\sum_{j=1}^{N}\frac{1}{%
\alpha _{j}};\,\,\,A_{j}:=\frac{1}{\alpha _{j}}\log \left( \frac{1}{\alpha
_{j}}\right) ;\,\,\,A:=\sum_{j=1}^{N}A_{j}\,.  \label{DynRMconstantsexp}
\end{equation}%
We consider only the case $\mathscr{B}_{\mathcal{G}}=\mathscr{D}_{\mathcal{G}%
}$ (recall \eqref{DynRMdefDG} and \eqref{DynRMdefCG}), which corresponds to
the case of full sharing among all agents in the system (i.e., the extreme
case of one single group, as described in Item (i) on page \pageref%
{DynRMC0ginfty}).

\begin{theorem}
\label{DynRMthmformulasgeneral}Consider a general sub $\sigma $-algebra $%
\mathcal{G}\subseteq \mathcal{F}$, $X\in (L^{\infty }(\mathcal{F}))^{N}$, $%
B\in L^{\infty }(\mathcal{G}),$ with $\esssup{(B)}<0$ and $\rho _{\mathcal{G}%
}$ defined in (\ref{DynRMdefrcond}). Then 
\begin{equation*}
\rho _{\mathcal{G}}\left( X\right) =\beta \log \left( -\frac{\beta }{B}%
\mathbb{E}_{\mathbb{P}}\left[ \exp \left( -\frac{\overline{X}}{\beta }%
\right) \middle|\mathcal{G}\right] \right) -A;
\end{equation*}%
$\widehat{Y}=[\widehat{Y}^{1},...,\widehat{Y}^{N}]\in (L^{\infty }(\mathcal{F%
}))^{N}$ is an optimal allocation for $\rho _{\mathcal{G}}\left( X\right) $
and $\widehat{{\mathbb{Q}}}=[\widehat{{\mathbb{Q}}}^{1},...,\widehat{{%
\mathbb{Q}}}^{N}]$ is an optimum for the dual representation of $\rho _{%
\mathcal{G}}\left( X\right) $, where for $j=1,\dots ,N$ 
\begin{align}
\widehat{Y}^{j}& :=-X^{j}+\frac{1}{\beta \alpha _{j}}\left( \overline{X}%
+\rho _{\mathcal{G}}\left( X\right) +A\right) -A_{j},  \label{YY} \\
\frac{\mathrm{d}\widehat{{\mathbb{Q}}}^{j}}{\mathrm{d}{\mathbb{P}}}=\frac{%
\mathrm{d}\widehat{{\mathbb{Q}}}}{\mathrm{d}{\mathbb{P}}}& :=\frac{\exp
\left( -\frac{\overline{X}}{\beta }\right) }{\mathbb{E}_{\mathbb{P}}\left[
\exp \left( -\frac{\overline{X}}{\beta }\right) \middle|\mathcal{G}\right] }%
\,.  \label{QQ}
\end{align}
\end{theorem}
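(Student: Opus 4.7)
The strategy would be direct verification, exploiting the separable-exponential form $U(x)=-\sum_{j}e^{-\alpha _{j}x^{j}}$ to reduce \eqref{DynRMdefrcond} to a pointwise Lagrangian problem whose closed-form solution delivers both \eqref{YY} and the explicit formula for $\rho _{\mathcal{G}}\left( X\right)$. Since here $\mathscr{B}_{\mathcal{G}}=\mathscr{D}_{\mathcal{G}}$ and the partition $\mathbf{I}$ collapses to a single class, $\mathscr{C}_{\mathcal{G}}=\{Y\in (L^{1}(\mathcal{F}))^{N}\mid \sum_{j}Y^{j}\in L^{\infty }(\mathcal{G})\}$, so I would split the essinf as an outer essinf over $s=\sum_{j}Y^{j}\in L^{\infty }(\mathcal{G})$ of the inner maximization of $\mathbb{E}_{\mathbb{P}}[U(X+Y)\mid \mathcal{G}]$ over $Y\in \mathscr{C}_{\mathcal{G}}$ with $\sum_{j}Y^{j}=s$.

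For the inner problem with $s$ fixed, I would exploit the pointwise strict convexity and coercivity of $(y^{1},\dots ,y^{N})\mapsto \sum_{j}e^{-\alpha _{j}(X^{j}(\omega )+y^{j})}$ on the affine set $\{\sum _{j}y^{j}=s(\omega )\}$: the Lagrange conditions $\alpha _{j}e^{-\alpha _{j}(X^{j}+Y^{j})}=\mu $ uniquely determine the minimizer, and the constraint fixes $\mu =\exp(-(\overline{X}+A+s)/\beta )$, an explicit $\mathcal{F}$-measurable expression which removes any measurable-selection issue. Solving for $Y^{j}$ then recovers \eqref{YY} with $s$ in place of $\rho _{\mathcal{G}}\left( X\right)$, and substituting back yields $\sum _{j}e^{-\alpha _{j}(X^{j}+Y^{j})}=\beta \mu $ and $\mathbb{E}_{\mathbb{P}}[U(X+Y)\mid \mathcal{G}]=-\beta e^{-(s+A)/\beta }\mathbb{E}_{\mathbb{P}}[e^{-\overline{X}/\beta }\mid \mathcal{G}]$. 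The feasibility constraint $\mathbb{E}_{\mathbb{P}}[U(X+Y)\mid \mathcal{G}]\geq B$ (with $B<0$) rearranges to $s\geq \beta \log(-\tfrac{\beta }{B}\mathbb{E}_{\mathbb{P}}[e^{-\overline{X}/\beta }\mid \mathcal{G}])-A$, and the smallest such $s$ is precisely the announced value of $\rho _{\mathcal{G}}\left( X\right)$. Since $\overline{X}\in L^{\infty }$, the conditional expectation is bounded above and away from zero, so $\rho _{\mathcal{G}}\left( X\right) \in L^{\infty }(\mathcal{G})$ and the candidate $\widehat{Y}$ lies in $(L^{\infty }(\mathcal{F}))^{N}\cap \mathscr{C}_{\mathcal{G}}$ with $\sum _{j}\widehat{Y}^{j}=\rho _{\mathcal{G}}\left( X\right)$.

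For the dual optimizer, I would first verify $\widehat{\mathbb{Q}}\in \mathscr{Q}_{\mathcal{G}}^{1}$: positivity and boundedness of $\mathrm{d}\widehat{\mathbb{Q}}/\mathrm{d}\mathbb{P}$ are immediate, the normalization $\mathbb{E}_{\mathbb{P}}[\mathrm{d}\widehat{\mathbb{Q}}^{j}/\mathrm{d}\mathbb{P}\mid \mathcal{G}]=1$ is built into \eqref{QQ}, and the fairness condition in \eqref{DynRMDefq1cg} holds with equality because the $N$ components of $\widehat{\mathbb{Q}}$ coincide and $\widehat{\mathbb{Q}}=\mathbb{P}$ on $\mathcal{G}$, so $\sum _{j}\mathbb{E}_{\widehat{\mathbb{Q}}^{j}}[Y^{j}\mid \mathcal{G}]=\mathbb{E}_{\widehat{\mathbb{Q}}}[\sum _{j}Y^{j}\mid \mathcal{G}]=\sum _{j}Y^{j}$ for every $Y\in \mathscr{C}_{\mathcal{G}}\cap (L^{\infty }(\mathcal{F}))^{N}$. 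To conclude optimality I would run a complementary slackness argument: Theorem \ref{DynRMmainthmrhoinfty}(3) provides the weak-duality bound $\rho _{\mathcal{G}}\left( X\right) \geq \sum _{j}\mathbb{E}_{\widehat{\mathbb{Q}}^{j}}[-X^{j}\mid \mathcal{G}]-\alpha ^{1}(\widehat{\mathbb{Q}})$, whereas evaluating \eqref{DynRMdefalpha1} at the feasible point $Z=X+\widehat{Y}$ (which saturates the budget constraint by construction) combined with equality in the fairness relation gives the matching bound $\alpha ^{1}(\widehat{\mathbb{Q}})\geq -\mathbb{E}_{\widehat{\mathbb{Q}}}[\overline{X}\mid \mathcal{G}]-\rho _{\mathcal{G}}\left( X\right)$; the two force equality throughout and identify $\widehat{\mathbb{Q}}$ as a dual maximizer. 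The main subtle point will be arguing that the formal pointwise Lagrangian computation genuinely yields the conditional essential infimum and not merely a stationary point, but strict convexity plus the explicitly measurable form of $\mu $ makes this routine.
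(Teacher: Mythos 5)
Your primal argument takes a genuinely different route from the paper's, and in fact a more constructive one. The paper simply \emph{verifies}: it writes down $\widehat{Y}$, $\widehat{{\mathbb{Q}}}$, and $H_{\mathcal{G}}(X)$, checks feasibility of $\widehat{Y}$ to get an upper bound $\rho_{\mathcal{G}}(X)\leq H_{\mathcal{G}}(X)$, and uses a Fenchel inequality with the explicit conjugate $V$ of $U$ (and the multiplier $\lambda=-\beta/B$) to obtain $\alpha^1(\widehat{{\mathbb{Q}}})\leq \sum_{j}\mathbb{E}_{\widehat{{\mathbb{Q}}}^{j}}[-X^j|\mathcal{G}]-H_{\mathcal{G}}(X)$, so that the chain of inequalities in \eqref{11}--\eqref{33} closes. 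You instead \emph{derive} the formulas via the inner/outer decomposition over $s=\sum_j Y^j$ and pointwise Lagrange multipliers. This is a nicer explanation of where \eqref{YY}--\eqref{QQ} come from, and the computation of $\mu$ and the budget feasibility threshold for $s$ is correct, provided you address (as you flag) the passage from a pointwise stationary point to the actual conditional essential infimum; the explicit $\mathcal{F}$-measurability of $\mu$ does dispose of selection issues, but you still need to argue attainment of the inner maximization for fixed $s$ and integrability of the pointwise maximizer.

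There is, however, a genuine directional gap in the dual-optimality step. Evaluating the definition \eqref{DynRMdefalpha1} at the single feasible point $Z=X+\widehat{Y}$ gives a \emph{lower} bound on $\alpha^1(\widehat{{\mathbb{Q}}})$, namely $\alpha^1(\widehat{{\mathbb{Q}}})\geq \sum_j\mathbb{E}_{\widehat{{\mathbb{Q}}}^{j}}[-X^j|\mathcal{G}]-\rho_{\mathcal{G}}(X)$. Rearranged, this reads
\begin{equation*}
\sum_{j=1}^{N}\mathbb{E}_{\widehat{{\mathbb{Q}}}^{j}}\left[-X^j\middle|\mathcal{G}\right]-\alpha^1(\widehat{{\mathbb{Q}}})\leq \rho_{\mathcal{G}}(X),
\end{equation*}
which is exactly weak duality again, not a ``matching bound.'' To conclude that $\widehat{{\mathbb{Q}}}$ attains the supremum in \eqref{DynRMeqdualreprshorfall} you need the \emph{reverse} inequality, i.e.\ an \emph{upper} bound $\alpha^1(\widehat{{\mathbb{Q}}})\leq \sum_j\mathbb{E}_{\widehat{{\mathbb{Q}}}^{j}}[-X^j|\mathcal{G}]-H_{\mathcal{G}}(X)$ uniformly over the essential supremum in \eqref{DynRMdefalpha1}. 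That uniform bound is precisely what the paper extracts from the Fenchel inequality $U(W)-V\!\left(\tfrac{1}{\lambda}\tfrac{\mathrm{d}\widehat{{\mathbb{Q}}}}{\mathrm{d}{\mathbb{P}}}\right)\leq \sum_j W^j\tfrac{1}{\lambda}\tfrac{\mathrm{d}\widehat{{\mathbb{Q}}}^{j}}{\mathrm{d}{\mathbb{P}}}$ together with the budget constraint $\mathbb{E}_{\mathbb{P}}[U(W)|\mathcal{G}]\geq B$ and the choice $\lambda=-\beta/B$. In your setup this is not extra machinery: the first-order condition $\partial_j U(X+\widehat{Y})=\mu$ already encodes the Fenchel \emph{equality} $U(X+\widehat{Y})+V(\mu\mathbf{1})=\mu\sum_j(X^j+\widehat{Y}^j)$, so the natural completion of your argument is to upgrade the pointwise stationarity into the uniform Fenchel estimate. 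Relatedly, you have not verified the membership requirement $\alpha^1(\widehat{{\mathbb{Q}}})\in L^1(\mathcal{G})$ from \eqref{DynRMDefq1cg}, which is needed \emph{before} you may invoke Theorem \ref{DynRMmainthmrhoinfty}(3) for $\widehat{{\mathbb{Q}}}$; this integrability also comes out of the missing upper bound, so the two gaps are one and the same.
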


\begin{proof}
See Section \ref{DynRMsecproofsexp}.
\end{proof}

\begin{remark}
In order to avoid more complex notations and lengthy proofs, we provided the
explicit formulas only for the case $\mathscr{B}_{\mathcal{G}}=\mathscr{D}_{%
\mathcal{G}}$. The reader may obtain similar formulas for the cluster cases
in Example \ref{DynRMexChcond}, using the corresponding deterministic
formulas in Biagini et al, "On fairness of Systemic Risk Measures",
arXiv:1803.09898v3, 2018, Section 6.
\end{remark}

\subsection{Time consistency}

A rather natural issue we now wish to tackle is whether a consistency or
concatenation property can be associated to the Conditional Shortfall
Systemic Risk Measures, at least in the exponential case where explicit
computations are feasible. More precisely, we consider now two sub $\sigma $%
-algebras $\mathcal{H}\subseteq \mathcal{G}\subseteq \mathcal{F}$ and study
the relations between primal and dual allocations corresponding to the
various possible risk measurements (from $\mathcal{F}$ to $\mathcal{H}$,
from $\mathcal{G}$ to $\mathcal{H}$, and combinations). In this subsection
we will need to exploit explicitly the dependence of optimal allocations and
minimax measures given by Theorem \ref{DynRMthmformulasgeneral} on initial
datum and sub $\sigma $-algebras. To fix the notation, given $X\in
(L^{\infty }(\mathcal{F}))^{N}$ and $\mathcal{G}\subseteq \mathcal{F}$ we
define for each $k=1,\dots ,N$:

\begin{align}
\widehat{Y}^{k}\left( \mathcal{G},\,X\right)& :=-X^{k}+\frac{1}{\beta \alpha
_{k}}\left( \overline{X}+\rho _{\mathcal{G}}\left( X\right) +A\right)
-A_{k}\,,  \label{DynRMdefopty} \\
\frac{\mathrm{d}\widehat{{\mathbb{Q}}}^{k}\left( \mathcal{G},\,X\right) }{%
\mathrm{d}{\mathbb{P}}}&:=\frac{\exp \left( -\frac{\overline{X}}{\beta }%
\right) }{\mathbb{E}_{\mathbb{P}}\left[ \exp \left( -\frac{\overline{X}}{%
\beta }\right) \middle|\mathcal{G}\right] }\, ,  \label{DynRMdefoptq} \\
\widehat{a}^{k}\left( \mathcal{G},\,X\right) &:=\mathbb{E}_{\widehat{{%
\mathbb{Q}}}^{k}\left( \mathcal{G},\,X\right) }[\widehat{Y}^{k}\left( 
\mathcal{G},\,X\right) |\mathcal{G}]\,,  \label{DynRMdefopta}
\end{align}

\begin{equation}  \label{DynRMdefrcondexp}
\rho_{\mathcal{G}}\left(X\right)=\beta \log\left(-\frac{\beta}{B}\mathbb{E}_%
\mathbb{P} \left[\exp\left(-\frac{\overline{X}}{\beta}\right)\middle|%
\mathcal{G}\right]\right)-A=\sum_{j=1}^N\widehat{Y}^{j}\left(\mathcal{G}%
,\,X\right)=\sum_{j=1}^N\widehat{a}^{j}\left(\mathcal{G},\,X\right)\,.
\end{equation}

\begin{theorem}
\label{DynRMthmconsistency}Let $X\in (L^{\infty }(\mathcal{F}))^{N}$. The
following time consistency property holds whenever $B\in L^{\infty }(%
\mathcal{H})$ is given: for every $k=1,\dots ,N$ 
\begin{align}
\widehat{Y}^{k}\left( \mathcal{H},\,-\widehat{Y}^{{}}\left( \mathcal{G}%
,\,X\right) \right) & =\widehat{Y}^{k}\left( \mathcal{H},\,X\right) +%
\widehat{Y}^{k}\left( \mathcal{H},\,0\right) \,,  \label{DynRMconsisty} \\
\frac{\mathrm{d}\widehat{{\mathbb{Q}}}^{k}}{\mathrm{d}{\mathbb{P}}}\left( 
\mathcal{G},\,X\right) \frac{\mathrm{d}\widehat{{\mathbb{Q}}}^{k}}{\mathrm{d}%
{\mathbb{P}}}\left( \mathcal{H},\,-\widehat{Y}^{{}}\left( \mathcal{G}%
,\,X\right) \right) & =\frac{\mathrm{d}\widehat{{\mathbb{Q}}}^{k}}{\mathrm{d}%
{\mathbb{P}}}\left( \mathcal{G},\,X\right) \frac{\mathrm{d}\widehat{{\mathbb{%
Q}}}^{k}}{\mathrm{d}{\mathbb{P}}}\left( \mathcal{H},\,-\widehat{a}%
^{{}}\left( \mathcal{G},\,X\right) \right) =\frac{\mathrm{d}\widehat{{%
\mathbb{Q}}}^{k}}{\mathrm{d}{\mathbb{P}}}\left( \mathcal{H},\,X\right) \,,
\label{DynRMconsistq} \\
\widehat{a}^{k}\left( \mathcal{H},\,-\widehat{a}^{{}}\left( \mathcal{G}%
,\,X\right) \right) & =\widehat{a}^{k}\left( \mathcal{H},\,X\right) +%
\widehat{a}^{k}\left( \mathcal{H},\,0\right) \,.  \label{DynRMconsista}
\end{align}
\end{theorem}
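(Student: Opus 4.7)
The plan is to verify the three identities by direct computation from the explicit formulas (\ref{DynRMdefopty})--(\ref{DynRMdefrcondexp}), exploiting the fact that both $\widehat{Y}(\mathcal{G},X)$ and $\widehat{a}(\mathcal{G},X)$ share the same aggregate
\[
\sum_{j=1}^{N}\widehat{Y}^{j}(\mathcal{G},X)=\sum_{j=1}^{N}\widehat{a}^{j}(\mathcal{G},X)=\rho_{\mathcal{G}}(X)\in L^{\infty}(\mathcal{G}).
\]
Writing $W$ for either $-\widehat{Y}(\mathcal{G},X)$ or $-\widehat{a}(\mathcal{G},X)$, the quantities appearing in the formulas for $\rho_{\mathcal{H}}(W)$ and $\mathrm{d}\widehat{\mathbb{Q}}^{k}(\mathcal{H},W)/\mathrm{d}\mathbb{P}$ depend on $W$ only through $\overline{W}=-\rho_{\mathcal{G}}(X)$, which is the unifying ingredient.

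The first key step is the auxiliary identity $\rho_{\mathcal{H}}(W)=\rho_{\mathcal{H}}(X)+\rho_{\mathcal{H}}(0)$. Using (\ref{DynRMdefrcondexp}) one obtains
\[
\exp\!\left(-\tfrac{\overline{W}}{\beta}\right) \;=\; -\tfrac{\beta}{B}\,e^{-A/\beta}\,\mathbb{E}_{\mathbb{P}}\!\left[\exp\!\left(-\tfrac{\overline{X}}{\beta}\right)\bigg|\mathcal{G}\right],
\]
and since $B\in L^{\infty}(\mathcal{H})$, the tower property yields $\mathbb{E}_{\mathbb{P}}[\exp(-\overline{W}/\beta)|\mathcal{H}]=-\tfrac{\beta}{B}\,e^{-A/\beta}\,\mathbb{E}_{\mathbb{P}}[\exp(-\overline{X}/\beta)|\mathcal{H}]$, from which the claim follows by the closed form of $\rho_{\mathcal{H}}$. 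Dividing these two displayed expressions produces
\[
\frac{\mathrm{d}\widehat{\mathbb{Q}}^{k}(\mathcal{H},W)}{\mathrm{d}\mathbb{P}} \;=\; \frac{\mathbb{E}_{\mathbb{P}}[\exp(-\overline{X}/\beta)|\mathcal{G}]}{\mathbb{E}_{\mathbb{P}}[\exp(-\overline{X}/\beta)|\mathcal{H}]},
\]
so that multiplying by $\mathrm{d}\widehat{\mathbb{Q}}^{k}(\mathcal{G},X)/\mathrm{d}\mathbb{P}=\exp(-\overline{X}/\beta)/\mathbb{E}_{\mathbb{P}}[\exp(-\overline{X}/\beta)|\mathcal{G}]$ collapses to $\mathrm{d}\widehat{\mathbb{Q}}^{k}(\mathcal{H},X)/\mathrm{d}\mathbb{P}$; this proves (\ref{DynRMconsistq}) simultaneously for both choices of $W$. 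To get (\ref{DynRMconsisty}) I substitute $W=-\widehat{Y}(\mathcal{G},X)$ and the auxiliary identity above into (\ref{DynRMdefopty}); the contributions $\mp\rho_{\mathcal{G}}(X)/(\beta\alpha_{k})$ cancel and the residue assembles into $\widehat{Y}^{k}(\mathcal{H},X)+\widehat{Y}^{k}(\mathcal{H},0)$.

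The main obstacle is (\ref{DynRMconsista}), because $\widehat{a}^{k}(\mathcal{G},X)$ is not a pointwise algebraic expression and so the cancellation is less transparent. I would expand $\widehat{a}^{k}(\mathcal{H},V)=\mathbb{E}_{\widehat{\mathbb{Q}}^{k}(\mathcal{H},V)}[\widehat{Y}^{k}(\mathcal{H},V)|\mathcal{H}]$ with $V=-\widehat{a}(\mathcal{G},X)$: thanks to the $\mathcal{G}$-measurability of $V$, the equality $\overline{V}=-\rho_{\mathcal{G}}(X)$, and the auxiliary identity $\rho_{\mathcal{H}}(V)=\rho_{\mathcal{H}}(X)+\rho_{\mathcal{H}}(0)$, the only quantity inside the conditional expectation that is not $\mathcal{H}$-measurable is $\rho_{\mathcal{G}}(X)$. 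The crucial auxiliary identity here is
\[
\mathbb{E}_{\widehat{\mathbb{Q}}^{k}(\mathcal{H},V)}\!\left[\rho_{\mathcal{G}}(X)\,\bigg|\,\mathcal{H}\right] \;=\; \mathbb{E}_{\widehat{\mathbb{Q}}^{k}(\mathcal{H},X)}\!\left[\rho_{\mathcal{G}}(X)\,\bigg|\,\mathcal{H}\right],
\]
which follows from the Radon--Nikodym chain in (\ref{DynRMconsistq}), the $\mathcal{G}$-measurability of $\mathrm{d}\widehat{\mathbb{Q}}^{k}(\mathcal{H},V)/\mathrm{d}\mathbb{P}$, and $\mathbb{E}_{\mathbb{P}}[\mathrm{d}\widehat{\mathbb{Q}}^{k}(\mathcal{G},X)/\mathrm{d}\mathbb{P}|\mathcal{G}]=1$, via the tower property. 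The same argument also gives $\mathbb{E}_{\widehat{\mathbb{Q}}^{k}(\mathcal{H},V)}[\widehat{a}^{k}(\mathcal{G},X)|\mathcal{H}]=\mathbb{E}_{\widehat{\mathbb{Q}}^{k}(\mathcal{H},X)}[\widehat{Y}^{k}(\mathcal{G},X)|\mathcal{H}]$. Substituting both identities and expanding $\widehat{Y}^{k}(\mathcal{G},X)$ via (\ref{DynRMdefopty}) causes the $\rho_{\mathcal{G}}(X)$ contributions to cancel exactly, and the remaining expression assembles into $\widehat{a}^{k}(\mathcal{H},X)+\widehat{a}^{k}(\mathcal{H},0)$, using that $\widehat{a}^{k}(\mathcal{H},0)=\widehat{Y}^{k}(\mathcal{H},0)$ since the latter is already $\mathcal{H}$-measurable.
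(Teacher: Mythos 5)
Your argument is correct and follows essentially the same route as the paper's own proof: you derive the key identity $\rho_{\mathcal{H}}\left(-\widehat{Y}\left(\mathcal{G},X\right)\right)=\rho_{\mathcal{H}}\left(X\right)+\rho_{\mathcal{H}}\left(0\right)$ from the explicit exponential formulas (the paper's \eqref{DynRMrcondtwice}), obtain the density ratio $\frac{\mathrm{d}\widehat{{\mathbb{Q}}}^{k}}{\mathrm{d}{\mathbb{P}}}\left(\mathcal{H},W\right)=\mathbb{E}_{\mathbb{P}}\left[\exp\left(-\overline{X}/\beta\right)\middle|\mathcal{G}\right]/\mathbb{E}_{\mathbb{P}}\left[\exp\left(-\overline{X}/\beta\right)\middle|\mathcal{H}\right]$ (the paper's \eqref{DynRMeqrnweird}), and your two auxiliary conditional-expectation identities for \eqref{DynRMconsista}, based on the $\mathcal{G}$-measurability of this density and the tower property, reproduce exactly the paper's computation of the terms $E$ and $F$ in its $E+F+G+H$ decomposition. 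The only cosmetic imprecision is the sentence claiming $\rho_{\mathcal{G}}\left(X\right)$ is the sole non-$\mathcal{H}$-measurable quantity inside the conditional expectation (the term $-V^{k}=\widehat{a}^{k}\left(\mathcal{G},X\right)$ is another), but you handle that term explicitly right afterwards, so no gap remains.
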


\begin{proof}
See Section \ref{DynRMsecproofsexp}.
\end{proof}

\begin{remark}
\label{DynRMremarksetvalued2} Various concepts of time consistency have
already been explored in the literature for the dynamic and set-valued case.
In particular, those of \textquotedblleft time consistency\textquotedblright
in \cite{TaharLepinette14} and \textquotedblleft multiportfolio time
consistency\textquotedblright in \cite{ChenHu18} and \cite%
{FeinsteinRudloff15a}, were shown to be equivalent under mild assumptions in 
\cite{FeinsteinRudloff15}. Let us point out that in these approaches
consistency is required for the whole set of eligible portfolio that covers
the risk of $X$. Instead, we only request consistency of particular
allocations, enjoying some peculiar optimality property, as well as for the
dual optimizers. Furthermore, as mentioned already after Theorem \ref%
{DynRMmainthmrhoinfty} and adopting the same notation introduced there for $%
\mathcal{F}_{t}$, we use terminal-time, random allocations for securing the
system. This enlightens a further difference from the aforementioned works
where the whole set of allocations is required to be $\mathcal{F}_{t}$%
-measurable, making the properties hardly comparable. Possible links might
become clearer with a more detailed inspections of the properties of our
allocations $\widehat{a}$, and we wish to leave this topic for further
research.
\end{remark}

The proof of Theorem \ref{DynRMthmconsistency} is entirely based on the
availability of explicit formulas, as well as on the nice combinations of
logarithms and exponentials one sees also in the univariate case.

\section{Conditional Shortfall Systemic Risk Measures and equilibrium:
dynamic mSORTE}

\label{DynRMsecmSORTE} In the papers \cite{BDFFM} and \cite{DF19} the
equilibrium concepts of Systemic Optimal Risk Transfer Equilibrium (SORTE)
and of its multivariate extension Multivariate Systemic Optimal Risk
Transfer Equilibrium (mSORTE) were introduced and analyzed in a static
setup. We refer the reader to these papers for the economic motivation,
applications and unexplained notation. Here we show that a generalization to
the conditional setting is possible and prove the existence of a time
consistent family of dynamic mSORTE in the exponential setup. Consider a
multivariate utility function $U$. For each $j=1,...,N$ consider a vector
subspace $L_{\mathcal{F}}^{j}$ with $L^{\infty }(\mathcal{F})\subseteq L_{%
\mathcal{F}}^{j}\subseteq L^{0}(\mathcal{F}\mathbf{)}$ and set 
\begin{equation*}
L_{\mathcal{F}}\mathbf{:=}L_{\mathcal{F}}^{1}\times ...\times L_{\mathcal{F}%
}^{N}\mathbf{\subseteq }(L^{0}(\Omega ,\mathcal{F},{\mathbb{P}}))^{N}.
\end{equation*}%
With 
\begin{equation*}
\mathcal{M}\subseteq \mathscr{Q}_{\mathcal{G}}
\end{equation*}%
we will denote a subset of probability vectors.

\begin{remark}
We impose the condition $\mathcal{M}\subseteq \mathscr{Q}_{\mathcal{G}}$,
which implies that for every ${\mathbb{Q}}\in \mathscr{Q}_{\mathcal{G}}$ and
for every $j=1,\dots ,N$, ${\mathbb{Q}}^{j}$ coincides with ${\mathbb{P}}$
on $\mathcal{G}$ and 
\begin{equation*}
(L^{1}(\mathcal{G},{\mathbb{P}}))^{N}=L^{1}(\mathcal{G},{\mathbb{Q}}%
):=L^{1}(\Omega ,\mathcal{F},{\mathbb{Q}}^{1})\times \dots \times
L^{1}(\Omega ,\mathcal{F},{\mathbb{Q}}^{N})\,.
\end{equation*}
\end{remark}

For $(Y,{\mathbb{Q}},\,\alpha ,\,A)\in (L_{\mathcal{F}}\cap L^{1}(\mathcal{F}%
,{\mathbb{Q}}))\times \mathcal{M}\mathbf{\times }(L^{1}(\mathcal{G},{\mathbb{%
P}}))^{N}\mathbf{\times }L^{\infty }(\mathcal{G})$ define for $j=1,\dots ,N$ 
\begin{align}
Y^{[-j]}& :=[Y^{1},\dots ,Y^{j-1},Y^{j+1},\dots ,Y^{N}]\in L^{0}(\mathcal{F},%
{\mathbb{P}})^{N-1}\,,  \notag \\
\lbrack Y^{[-j]},Z]& :=[Y^{1},\dots ,Y^{j-1},Z,Y^{j+1},\dots
,Y^{N}],\,\,\,Z\in L^{0}(\mathcal{F},{\mathbb{P}})\,,  \notag \\
U_{j}^{Y^{[-j]}}(Z)& :=\mathbb{E}\left[ u_{j}(X^{j}+Z)\middle|\mathcal{G}%
\right] +\mathbb{E}\left[ \Lambda (X+[Y^{[-j]},Z])\middle|\mathcal{G}\right]
,\,\,\,Z\in L^{0}(\mathcal{F},{\mathbb{P}})\,,  \label{DynRMdefUj} \\
\mathbb{U}_{j}^{\,{\mathbb{Q}}^{j},Y^{[-j]}}(\alpha ^{j})& :=\esssup\left\{
U_{j}^{Y^{[-j]}}(Z)\mid Z\in L_{\mathcal{F}}^{j}\cap L^{1}(\Omega ,\mathcal{F%
},{\mathbb{Q}}^{j})\text{, }E_{{\mathbb{Q}}^{j}}[Z|\mathcal{G}]\leq \alpha
^{j}\right\} \,,  \label{DynRMUQ}
\end{align}%
and 
\begin{align}
T^{\mathbb{Q}}(\alpha )& :=\esssup\left\{ \mathbb{E}_{\mathbb{P}}\left[
U(X+Y)\middle|\mathcal{G}\right] \mid Y\in L_{\mathcal{F}}\cap L^{1}(%
\mathcal{F},{\mathbb{Q}}),\,E_{{\mathbb{Q}}^{j}}\left[ Y^{j}|\mathcal{G}%
\right] \leq \alpha ^{j},\,\forall j\right\} \,,  \label{DynRMTQ} \\
S^{\mathbb{Q}}(A)& :=\esssup\left\{ T^{\mathbb{Q}}(\alpha )\mid \alpha \in
(L^{1}(\mathcal{G},{\mathbb{P}}))^{N},\,\,\sum_{j=1}^{N}\alpha ^{j}\leq
A\right\} \,.  \label{DynRMSQ}
\end{align}%
Obviously, all such quantities depend also on $X\in (L^{\infty }(\mathcal{F}%
))^{N}$, but as $X$ will be kept fixed throughout the analysis, we may avoid
to explicitly specify this dependence in the notations. As $u_{1},\dots
,u_{N},\Lambda ,U$ are increasing we can replace, in the definitions %
\eqref{DynRMUQ}, \eqref{DynRMTQ}, \eqref{DynRMSQ}, the inequalities in the
budget constraints with equalities.

\begin{definition}
\label{DynRMstrongmsorte} The triple $(Y_{X},{\mathbb{Q}}_{X},\alpha
_{X})\in L_{\mathcal{F}}\mathbf{\times }\mathcal{M}\mathbf{\times }(L^{1}(%
\mathcal{G},{\mathbb{P}}))^{N}$ with $Y\in L^{1}({\mathcal{F},\mathbb{Q}}%
_{X})$ is a \textbf{Dynamic Multivariate Systemic Optimal Risk Transfer
Equilibrium} (Dynamic mSORTE) with budget $A\in L^{\infty }(\mathcal{G})$ if

\begin{enumerate}
\item $(Y_{X},\alpha _{X})$ is an optimum for 
\begin{equation}
\esssup_{\substack{ \alpha \in (L^{1}({\mathbb{P}},\mathcal{G}))^{N}  \\ %
\sum_{j=1}^{N}\alpha _{j}=A}}\left\{ \esssup\left\{ \mathbb{E}_{\mathbb{P}}%
\left[ U(X+Y)|\mathcal{G}\right] \mid Y\in L_{\mathcal{F}}\cap L^{1}(%
\mathcal{F},{\mathbb{Q}}_{X}),\,\mathbb{E}_{\mathbb{Q}_{X}^{j}}\left[ Y^{j}|%
\mathcal{G}\right] \leq \alpha ^{j},\,\,\forall j\right\} \right\} \,;
\label{DynRMeqmsorteparticular}
\end{equation}

\item $Y_{X}\in \mathscr{C}_\mathcal{G}$ and $\sum_{j=1}^{N}Y_{X}^{j}=A\,\, {%
\mathbb{P}}$-a.s..
\end{enumerate}
\end{definition}

\begin{theorem}
\label{DynRMthmallocissorte} Suppose Assumption \ref{DynRMreqint} and
Assumption \ref{DynRMA2} hold and let $X\in (L^{\infty }(\mathcal{F}))^{N}$.
Let $\widehat{Y}$ be the optimum of \ $\rho _{\mathcal{G}}$ in (\ref%
{DynRMdefrcond}) and let $\widehat{{\mathbb{Q}}}$ be an optimum of %
\eqref{DynRMeqdualreprshorfall}. Define $\widehat{\alpha }^{j}:=\mathbb{E}_{%
\widehat{{\mathbb{Q}}}^{j}}[\widehat{Y}^{j}|\mathcal{G}]$. Then $(\widehat{Y}%
,\widehat{{\mathbb{Q}}},\widehat{\alpha })$ is a Dynamic mSORTE for $%
\mathcal{M}:=\mathscr{Q}_{\mathcal{G}}^{1}$, $L_{\mathcal{F}}:=(L^{1}(%
\mathcal{F},{\mathbb{P}}))^{N}\cap \bigcap_{{\mathbb{Q}}\in \mathscr{Q}_{%
\mathcal{G}}^{1}}L^{1}(\mathcal{F},{\mathbb{Q}})$, $A:=\rho _{\mathcal{G}%
}\left( X\right) $.
\end{theorem}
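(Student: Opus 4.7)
The plan is to directly verify the two conditions of Definition \ref{DynRMstrongmsorte} for the candidate triple $(\widehat{Y}, \widehat{\mathbb{Q}}, \widehat{\alpha})$ with budget $A := \rho_{\mathcal{G}}(X)$. Condition 2 is essentially immediate: by Theorem \ref{DynRMmainthmrhoinfty} we have $\widehat{Y} \in \mathscr{C}_{\mathcal{G}}$ and $\sum_{j=1}^N \widehat{Y}^j = \rho_{\mathcal{G}}(X) = A$; Proposition \ref{DynRMpropoptisint} and Theorem \ref{DynRMoptforrhoq} ensure $\widehat{Y} \in L_{\mathcal{F}}$; and $\sum_j \widehat{\alpha}^j = \sum_j \mathbb{E}_{\widehat{\mathbb{Q}}^j}[\widehat{Y}^j \middle| \mathcal{G}] = \rho_{\mathcal{G}}(X) = A$ follows from the dual equality of Theorem \ref{DynRMmainthmrhoinfty}, while $\widehat{\mathbb{Q}} \in \mathscr{Q}_{\mathcal{G}}^1 = \mathcal{M}$ holds by construction.

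For Condition 1, because one may freely redistribute any slack among the components $\alpha^j$ while preserving $\sum_j \alpha^j = A$, the double essential supremum in \eqref{DynRMeqmsorteparticular} equals
\begin{equation*}
v(A) := \esssup\left\{ \mathbb{E}_{\mathbb{P}}\left[U(X+Y) \middle| \mathcal{G}\right] \mid Y \in L_{\mathcal{F}},\; \sum_{j=1}^N \mathbb{E}_{\widehat{\mathbb{Q}}^j}\left[Y^j \middle| \mathcal{G}\right] \leq A \right\}.
\end{equation*}
Since $\widehat{Y}$ is feasible in this problem (with the budget constraint holding as an equality) and $\mathbb{E}_{\mathbb{P}}[U(X+\widehat{Y}) \middle| \mathcal{G}] \geq B$, it suffices to show the reverse bound $v(A) \leq \mathbb{E}_{\mathbb{P}}[U(X+\widehat{Y}) \middle| \mathcal{G}]$, for then $(\widehat{Y}, \widehat{\alpha})$ attains $v(A)$ and Condition 1 is satisfied.

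The reverse bound is proved by contradiction: assume there exists $Y^\ast \in L_{\mathcal{F}}$ with $\sum_j \mathbb{E}_{\widehat{\mathbb{Q}}^j}[Y^{\ast j} \middle| \mathcal{G}] \leq A$ such that $\mathbb{E}_{\mathbb{P}}[U(X+Y^\ast) \middle| \mathcal{G}] > \mathbb{E}_{\mathbb{P}}[U(X+\widehat{Y}) \middle| \mathcal{G}] \geq B$ on some $\Gamma \in \mathcal{G}$ with $\mathbb{P}(\Gamma) > 0$. Assumption \ref{DynRMreqint}, monotone convergence and Egorov's theorem (applied exactly as in the uniqueness argument of Section 5.2.1) yield $\Xi \subseteq \Gamma$ with $\mathbb{P}(\Xi) > 0$ and a real $\delta > 0$ such that $\mathbb{E}_{\mathbb{P}}[U(X+Y^\ast - \delta \mathbf{1}) \middle| \mathcal{G}] \geq B$ on $\Xi$. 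Set $Y^\dagger := (Y^\ast - \delta \mathbf{1}) 1_\Xi + \widehat{Y}\,1_{\Xi^c}$. Since $\Xi \in \mathcal{G}$ and both $Y^\ast, \widehat{Y} \in L_{\mathcal{F}}$, so does $Y^\dagger$; and by $\mathcal{G}$-measurability of $1_\Xi$,
\begin{equation*}
\mathbb{E}_{\mathbb{P}}[U(X+Y^\dagger) \middle| \mathcal{G}] = 1_\Xi\mathbb{E}_{\mathbb{P}}[U(X+Y^\ast-\delta\mathbf{1}) \middle| \mathcal{G}] + 1_{\Xi^c}\mathbb{E}_{\mathbb{P}}[U(X+\widehat{Y}) \middle| \mathcal{G}] \geq B,
\end{equation*}
so $Y^\dagger$ is admissible in the infimum defining $\rho_{\mathcal{G}}^{\widehat{\mathbb{Q}}}(X)$ via \eqref{new} of Theorem \ref{DynRMoptforrhoq}. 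However,
\begin{equation*}
\sum_{j=1}^N \mathbb{E}_{\widehat{\mathbb{Q}}^j}[Y^{\dagger j} \middle| \mathcal{G}] = 1_\Xi\Bigl(\sum_{j=1}^N \mathbb{E}_{\widehat{\mathbb{Q}}^j}[Y^{\ast j} \middle| \mathcal{G}] - N\delta\Bigr) + 1_{\Xi^c} A \leq A - N\delta\,1_\Xi,
\end{equation*}
which is strictly less than $A$ on $\Xi$, contradicting $A = \rho_{\mathcal{G}}^{\widehat{\mathbb{Q}}}(X)$.

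The main obstacle is precisely this localisation step: one must exploit Assumption \ref{DynRMreqint} and Egorov's theorem to extract $\Xi$ and $\delta$ so that the downward shift $Y^\ast - \delta \mathbf{1}$ still satisfies the shortfall constraint on $\Xi$, and then concatenate with $\widehat{Y}$ on $\Xi^c$ so that the combined vector $Y^\dagger$ is both in $L_{\mathcal{F}}$ and globally admissible in the problem \eqref{new}. Everything else is bookkeeping based on the representation and optimality results of Theorem \ref{DynRMmainthmrhoinfty}, Theorem \ref{DynRMoptforrhoq}, and Proposition \ref{DynRMpropoptisint}.
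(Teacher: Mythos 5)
Your proof is correct and follows the same essential route as the paper's: in both cases the heart of the argument is that a violating allocation can be localised (via Assumption \ref{DynRMreqint}, dominated convergence and Egorov) to a set $\Xi\in\mathcal{G}$ of positive measure, then shifted downward there by $\delta\mathbf{1}$, producing a feasible element of the problem \eqref{new} whose $\widehat{\mathbb{Q}}$-price is strictly below $A$ on $\Xi$ — contradicting $\rho_{\mathcal{G}}^{\widehat{\mathbb{Q}}}(X)=\rho_{\mathcal{G}}(X)=A$ from Theorem \ref{DynRMoptforrhoq}. The one genuine difference is structural, and to your credit it is a modest simplification: the paper's STEP 2 first proves the two intermediate identities $\mathbb{E}_{\mathbb{P}}[U(X+\widehat{Y})\mid\mathcal{G}]=B$ and $\pi_A^{\mathcal{G},\widehat{\mathbb{Q}}}(X)=B$ (each requiring its own Egorov argument), and only then concludes in STEPS 3--4 that $\widehat{Y}$ is optimal; you instead compare the essential supremum $v(A)$ directly against $\mathbb{E}_{\mathbb{P}}[U(X+\widehat{Y})\mid\mathcal{G}]$, running a single contradiction with an individual violator $Y^\ast$ rather than a maximizing sequence. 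This avoids one Egorov step and bypasses the common value $B$ entirely — you never need to know the inner supremum equals $B$, only that $\widehat{Y}$ attains it. Two minor bookkeeping remarks: the identity $\sum_j\widehat{\alpha}^j=A$ comes from Theorem \ref{DynRMoptforrhoq} (not from the ``dual equality of Theorem \ref{DynRMmainthmrhoinfty}'' as you say in passing), and in extracting $Y^\ast$ from the failure of $v(A)\leq\mathbb{E}_{\mathbb{P}}[U(X+\widehat{Y})\mid\mathcal{G}]$ you are implicitly using the standard fact that the essential supremum is the minimal a.s.\ upper bound, so if it strictly dominates $W$ on a set of positive measure then some member of the family does as well — worth flagging explicitly, though it is routine.
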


\begin{proof}
$\,$

\textbf{STEP 1}: Item 2 of Definition \ref{DynRMstrongmsorte} is satisfied.
We start observing that by Theorem \ref{DynRMmainthmrhoinfty}, $\widehat{Y}%
\in \mathscr{C}_{\mathcal{G}}$ and trivially being an optimum it satisfies $%
\sum_{j=1}^{N}\widehat{Y}^{j}=\rho _{\mathcal{G}}\left( X\right) =:A$.

\textbf{STEP 2}: we prove that for any optimum $\widehat{{\mathbb{Q}}}\in%
\mathscr{Q}_\mathcal{G}^1$ of \eqref{DynRMeqdualreprshorfall}, the
optimization problem

\begin{equation}
\pi _{A}^{\mathcal{G},\widehat{{\mathbb{Q}}}}(X):=\esssup\left\{ \mathbb{E}_{%
\mathbb{P}}\left[ U\left( X+Y\right) \middle|\mathcal{G}\right] \,\,\middle%
|\,\,\begin{aligned} &Y\in (L^1(\mathcal{F},{\mathbb
P}))^N\cap\bigcap_{\probq\in\mathcal{Q}_\cG^1}L^1(\mathcal{F},{{\mathbb
Q}})\,\,\text{ and}\\
&\sum_{j=1}^N\mathbb{E}_{\widehat{\probq}^j}[Y^j|\mathcal{G}]\leq
A\end{aligned}\right\} \,  \label{DynRMpiextendedq}
\end{equation}

satisfies $\pi_A^{\mathcal{G},\widehat{{\mathbb{Q}}}}(X)=B$.

We start showing that the optimal allocation $\widehat{Y}$ for $\rho _{%
\mathcal{G}}\left( X\right) $ provided by Theorem \ref{DynRMmainthmrhoinfty}
satisfies $\sum_{j=1}^{N}\mathbb{E}_{\widehat{{\mathbb{Q}}}^{j}}[Y^{j}|%
\mathcal{G}]=A$ (directly from Theorem \ref{DynRMoptforrhoq}) and $\mathbb{E}%
_{\mathbb{P}}\left[ U\left( X+\widehat{Y}\right) \middle|\mathcal{G}\right]
=B$. To see the latter equality, observe that we already know that $\mathbb{E%
}_{\mathbb{P}}\left[ U\left( X+\widehat{Y}\right) \middle|\mathcal{G}\right]
\geq B$. If on a set $\Xi $ of positive measure we had that the inequality
is strict, we would have for some $N>0$ that $\Xi _{N}:=\{\mathbb{E}_{%
\mathbb{P}}\left[ U\left( X+\widehat{Y}\right) \middle|\mathcal{G}\right] >B+%
\frac{1}{N}\}\in \mathcal{G}$ has positive probability. By Assumption \ref%
{DynRMreqint} and (cDOM) we have 
\begin{equation*}
\mathbb{E}_{\mathbb{P}}\left[ U\left( X+\widehat{Y}-\frac{1}{H}\mathbf{1}%
\right) \middle|\mathcal{G}\right] \uparrow _{H}\mathbb{E}_{\mathbb{P}}\left[
U\left( X+\widehat{Y}\right) \middle|\mathcal{G}\right] \text{ on }\Xi
_{N}\,.
\end{equation*}
By Egorov Theorem \ref{DynRMegorovthm} there exists a $\Theta _{N}\in 
\mathcal{G}$, with both $\Theta _{N}\subseteq\Xi _{N}$ and ${\mathbb{P}}%
(\Theta _{N})>0$, on which the convergence above is uniform (in $H$).

Hence, definitely in $H$, $\mathbb{E}_{\mathbb{P}}\left[ U\left( X+\widehat{Y%
}-\frac{1}{H}1_{\Theta _{N}}\mathbf{1}\right) \middle|\mathcal{G}\right]
\geq B$. Putting things together, we have then 
\begin{equation*}
\widehat{Y}-\frac{1}{H}1_{\Theta _{N}}\mathbf{1}\in \mathscr{C}_{\mathcal{G}%
}\,,\,\,\,\,\,\,\,\mathbb{E}_{\mathbb{P}}\left[ U\left( X+\widehat{Y}%
-\frac1H 1_{\Theta _{N}}\mathbf{1}\right) \middle|\mathcal{G}\right] \geq
B\,.
\end{equation*}

Clearly then $\rho _{\mathcal{G}}\left( X\right) \leq \sum_{j=1}^{N}\widehat{%
Y}^{j}-\frac{N}{H}1_{\Theta _{N}}$. This in turns gives a contradiction,
since $\widehat{Y}$ is supposed to be an optimum for $\rho _{\mathcal{G}%
}\left( X\right) $.

Now we prove that $\pi _{A}^{\mathcal{G},\widehat{{\mathbb{Q}}}}(X)=B$. Take 
$\widehat{Y}$ as before. We stress that it satisfies $\widehat{Y}\in (L^{1}(%
\mathcal{F},{\mathbb{P}}))^{N}$ by Theorem \ref{DynRMmainthmrhoinfty} and $%
\widehat{Y}\in L^{1}(\mathcal{F},{\mathbb{Q}})$ for every ${\mathbb{Q}}\in 
\mathcal{Q}_{\mathcal{G}}^{1}$ by Proposition \ref{DynRMpropoptisint}. We
showed above that $\sum_{j=1}^{N}\mathbb{E}_{\widehat{{\mathbb{Q}}}^{j}}[%
\widehat{Y}^{j}|\mathcal{G}]=A$ and $\mathbb{E}_{\mathbb{P}}\left[ U\left( X+%
\widehat{Y}\right) \middle|\mathcal{G}\right] =B$. Hence by %
\eqref{DynRMpiextendedq} we have $\pi _{A}^{\mathcal{G},\widehat{{\mathbb{Q}}%
}}(X)\geq B$. 
Since the set over which we take the essential supremum to define $\pi _{A}^{%
\mathcal{G},\widehat{{\mathbb{Q}}}}(X)$ is upward directed, we can take a
maximizing sequence for $\pi _{A}^{\mathcal{G},\widehat{{\mathbb{Q}}}}(X)$,
call it $(Y_{n})_{n}$. W.l.o.g. we may suppose that $\sum_{j=1}^{N}\mathbb{E}%
_{\widehat{{\mathbb{Q}}}^{j}}[Y_{n}^{j}|\mathcal{G}]=A$. Suppose for $\Delta
:=\{\pi _{A}^{\mathcal{G},\widehat{{\mathbb{Q}}}}(X)>B\}$ we had ${\mathbb{P}%
}(\Delta )>0$. Then setting $\Delta _{N}:=\{\pi _{A}^{\mathcal{G},\widehat{{%
\mathbb{Q}}}}(X)>B+\frac{1}{N}\}\in \mathcal{G}$ we have ${\mathbb{P}}%
(\Delta _{N})>0$ for some $N$ big enough. By Egorov Theorem \ref%
{DynRMegorovthm}, we have that on a subset $\widetilde{\Delta }_{N}$ of $%
\Delta _{N}$, having positive probability, the pointwise convergence of $%
\mathbb{E}_{\mathbb{P}}\left[ U\left( X+Y_{n}\right) \middle|\mathcal{G}%
\right] $ to the essential supremum is uniform. Hence given $\varepsilon >0$
small enough, for $n$ big enough and for $\widetilde{Y}_{n}=Y_{n}-%
\varepsilon 1_{\widetilde{\Delta }_{N}}\mathbf{1}\in \mathscr{C}_{\mathcal{G}%
}\cap (L^{\infty }({\mathbb{P}},\mathcal{F}))^{N}$ we have $\mathbb{E}_{%
\mathbb{P}}\left[ U\left( X+\widetilde{Y}_{n}\right) \middle|\mathcal{G}%
\right] \geq B$. Clearly $\sum_{j=1}^{N}\mathbb{E}_{\widehat{{\mathbb{Q}}}%
^{j}}[\widetilde{Y}_{n}^{j}|\mathcal{G}]<A$ with positive probability, by
definition of $\widetilde{Y}_{n}$. By the definition of $\rho _{\mathcal{G}%
}^{\widehat{{\mathbb{Q}}}}$ in (\ref{DynRMdefrhoq}), we obtain that with
positive probability (i.e. on $\widetilde{\Delta }_{N}$) 
\begin{equation*}
\rho _{\mathcal{G}}^{\widehat{{\mathbb{Q}}}}(X)\leq \sum_{j=1}^{N}\mathbb{E}%
_{\widehat{{\mathbb{Q}}}^{j}}[\widetilde{Y}_{n}^{j}|\mathcal{G}]<A\,.
\end{equation*}%
We then get a contradiction to $A:=\rho _{\mathcal{G}}\left( X\right) $,
since by Theorem \ref{DynRMoptforrhoq} we have $\rho _{\mathcal{G}}\left(
X\right) =\rho _{\mathcal{G}}^{\widehat{{\mathbb{Q}}}}(X)$.

\textbf{STEP 3}: the optimal allocation $\widehat{Y}$ of $\rho _{\mathcal{G}%
}\left( X\right) $ given in Theorem \ref{DynRMmainthmrhoinfty} (which is an
optimum by Theorem \ref{DynRMoptforrhoq}) is an optimum for the RHS of %
\eqref{DynRMpiextendedq}. This follows trivially from the arguments in the
previous steps, as $\widehat{Y}\in (L^{1}(\mathcal{F},{\mathbb{P}}))^{N}$ by
Theorem \ref{DynRMmainthmrhoinfty}, $\widehat{Y}\in L^{1}(\mathcal{F},{%
\mathbb{Q}})$ for every ${\mathbb{Q}}\in \mathcal{Q}_{\mathcal{G}}^{1}$ by
Proposition \ref{DynRMpropoptisint}, and $\sum_{j=1}^{N}\mathbb{E}_{\widehat{%
{\mathbb{Q}}}^{j}}[Y^{j}|\mathcal{G}]=A$. Thus $\widehat{Y}$ satisfies the
constraints of RHS of \eqref{DynRMpiextendedq}. Moreover we proved in STEP 1
that $\mathbb{E}_{\mathbb{P}}\left[ U\left( X+\widehat{Y}\right) \middle|%
\mathcal{G}\right] =B=\pi _{A}^{\mathcal{G},\widehat{{\mathbb{Q}}}}(X)$.

\textbf{STEP 4}: conclusion. We easily see that $\widehat{Y}$ is an optimum
for 
\begin{equation*}
\esssup\left\{ \mathbb{E}_{\mathbb{P}}\left[ U\left( X+Y\right) \middle|%
\mathcal{G}\right] \mid Y\in L_{\mathcal{F}},\sum_{j=1}^{N}\mathbb{E}_{%
\widehat{{\mathbb{Q}}}^{j}}[Y^{j}|\mathcal{G}]\leq A\right\} =
\end{equation*}%
\begin{equation*}
\esssup_{\substack{ \alpha \in (L^{1}(\mathcal{G},{\mathbb{P}}))^{N}  \\ %
\sum_{j=1}^{N}\alpha _{j}=A}}\left\{ \esssup\left\{ \mathbb{E}_{\mathbb{P}}%
\left[ U\left( X+Y\right) \middle|\mathcal{G}\right] \mid Y\in L_{\mathcal{F}%
},\mathbb{E}_{\widehat{{\mathbb{Q}}}^{j}}[Y^{j}|\mathcal{G}]\leq \alpha
^{j}\,\forall \,j=1,\dots ,N\right\} \right\} \,.
\end{equation*}

Hence $(\widehat{Y},\widehat{\alpha})$ are optimum for %
\eqref{DynRMeqmsorteparticular}, and also Item 1 of Definition \ref%
{DynRMstrongmsorte} is satisfied. This completes the proof.
\end{proof}

\begin{corollary}
For the exponential utilities there exists a time consistent family of
Dynamic mSORTEs.
\end{corollary}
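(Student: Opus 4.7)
The plan is to combine Theorem \ref{DynRMthmallocissorte} (which identifies optima of the Conditional Shortfall Systemic Risk Measure with a Dynamic mSORTE) with Theorem \ref{DynRMthmconsistency} (which delivers time consistency of primal optima, dual optima, and risk allocations in the exponential case). First I would verify that the exponential setup of Section \ref{DynRMsectionexp} satisfies the hypotheses of Theorem \ref{DynRMthmallocissorte}. With $u_j(x)=-e^{-\alpha_j x}$ and $\Lambda=0$, Assumption \ref{DynRMreqint} is immediate since each $u_j(X-\varepsilon)$ differs from $u_j(X)$ only by the bounded multiplicative factor $e^{\alpha_j\varepsilon}$, so integrability is preserved for small $\varepsilon>0$. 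Assumption \ref{DynRMA2} is trivial when $\Lambda=0$ because the multivariate Orlicz function $\Phi$ of \eqref{DynRMassocorlicz} then splits as the sum of the $\phi_j$, giving $L^\Phi=L^{\phi_1}\times\cdots\times L^{\phi_N}$.

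Fix a filtration $(\mathcal{G}_t)_{t\in\mathbb{T}}$ of sub-$\sigma$-algebras of $\mathcal{F}$ (so $\mathcal{G}_s\subseteq\mathcal{G}_t$ for $s\leq t$) and $X\in(L^\infty(\mathcal{F}))^N$, together with a threshold $B\in L^\infty(\mathcal{G}_0)$ satisfying $\esssup(B)<0$ (so that $B\in L^\infty(\mathcal{G}_t)$ for every $t$). For each $t$ let $\widehat{Y}(\mathcal{G}_t,X)$, $\widehat{\mathbb{Q}}(\mathcal{G}_t,X)$ and $\widehat{a}(\mathcal{G}_t,X)$ be the explicit primal optimum, dual minimizer and fair allocation provided by Theorem \ref{DynRMthmformulasgeneral} through formulas \eqref{DynRMdefopty}--\eqref{DynRMdefopta}, and set
\begin{equation*}
A_t:=\rho_{\mathcal{G}_t}(X),\qquad \widehat{\alpha}_t^j:=\mathbb{E}_{\widehat{\mathbb{Q}}^j(\mathcal{G}_t,X)}\bigl[\widehat{Y}^j(\mathcal{G}_t,X)\bigm|\mathcal{G}_t\bigr].
\end{equation*}
By Theorem \ref{DynRMthmallocissorte}, the triple $(\widehat{Y}(\mathcal{G}_t,X),\widehat{\mathbb{Q}}(\mathcal{G}_t,X),\widehat{\alpha}_t)$ is a Dynamic mSORTE with budget $A_t$ for the choices $\mathcal{M}=\mathscr{Q}_{\mathcal{G}_t}^1$ and $L_{\mathcal{F}}=(L^1(\mathcal{F},\mathbb{P}))^N\cap\bigcap_{\mathbb{Q}\in\mathscr{Q}_{\mathcal{G}_t}^1}L^1(\mathcal{F},\mathbb{Q})$. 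This produces the candidate family indexed by $t$.

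The remaining point is time consistency, and this is precisely what Theorem \ref{DynRMthmconsistency} supplies in the exponential case. For $\mathcal{H}=\mathcal{G}_s\subseteq\mathcal{G}_t=\mathcal{G}$, equations \eqref{DynRMconsisty}--\eqref{DynRMconsista} yield the compatibility relations
\begin{equation*}
\widehat{Y}(\mathcal{G}_s,-\widehat{Y}(\mathcal{G}_t,X))=\widehat{Y}(\mathcal{G}_s,X)+\widehat{Y}(\mathcal{G}_s,0),\qquad \widehat{a}(\mathcal{G}_s,-\widehat{a}(\mathcal{G}_t,X))=\widehat{a}(\mathcal{G}_s,X)+\widehat{a}(\mathcal{G}_s,0),
\end{equation*}
together with the multiplicative cocycle property of the dual densities in \eqref{DynRMconsistq}. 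These are exactly the vectorial counterparts of the standard recursiveness of univariate dynamic risk measures, expressing that the $\mathcal{G}_s$-equilibrium built from the raw position $X$ agrees (up to the explicit additive correction $\widehat{Y}(\mathcal{G}_s,0)$, respectively $\widehat{a}(\mathcal{G}_s,0)$, coming from the non-cash-additivity of the systemic valuation) with the $\mathcal{G}_s$-equilibrium built from the primal or fair-allocation output at time $t$. Putting everything together, the family $\{(\widehat{Y}(\mathcal{G}_t,X),\widehat{\mathbb{Q}}(\mathcal{G}_t,X),\widehat{\alpha}_t)\}_{t\in\mathbb{T}}$ is a Dynamic mSORTE for every $t$ by Theorem \ref{DynRMthmallocissorte}, and is time consistent by Theorem \ref{DynRMthmconsistency}. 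The only potential subtlety — and the one to be careful about — is the bookkeeping of the additive offsets $\widehat{Y}(\mathcal{G}_s,0)$ and $\widehat{a}(\mathcal{G}_s,0)$, which are genuine features of the systemic setting and should be kept in the statement of consistency rather than swept away; once the family is described in those terms, the corollary follows directly.
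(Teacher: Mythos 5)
Your proposal is correct and follows exactly the paper's route: it cites Theorem \ref{DynRMthmallocissorte} to recognize $(\widehat{Y},\widehat{{\mathbb{Q}}},\widehat{\alpha})$ as a Dynamic mSORTE and Theorem \ref{DynRMthmconsistency} for the time consistency relations, which is precisely the paper's one-line proof. The extra work you do --- verifying Assumptions \ref{DynRMreqint} and \ref{DynRMA2} in the exponential/$\Lambda=0$ case and spelling out the filtration bookkeeping including the requirement $B\in L^\infty(\mathcal{G}_0)$ --- is accurate and welcome, but it is content the paper relegates to earlier remarks rather than a different argument.
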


\begin{proof}
Follows from Theorem \ref{DynRMthmconsistency} and Theorem \ref%
{DynRMthmallocissorte}.
\end{proof}

\appendix

\section{Appendix}

\subsection{Miscellaneous results}

We recall the original Koml\'{o}s Theorem:

\begin{theorem}[Koml\'{o}s]
\label{DynRMkomlosoriginal} Let $(f_{n})_{n}\subseteq L^{1}(\Omega ,\mathcal{%
F},{\mathbb{P}})$ be a sequence with bounded $L^{1}$ norms. Then there
exists a subsequence $(f_{n_{k}})_{k}$ and a random variable $g$ in $%
L^{1}(\Omega ,\mathcal{F},{\mathbb{P}})$ such that for any further
subsequence the Cesaro means satisfy: 
\begin{equation*}
\frac{1}{N}\sum_{i\leq N}f_{n_{k_{i}}}\rightarrow _{N}g\,\,\,{\mathbb{P}}%
\text{- a.s.}.
\end{equation*}
\end{theorem}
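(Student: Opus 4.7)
The plan is to prove Komlós's classical theorem via a reduction to the $L^2$-bounded case followed by a careful diagonal extraction that delivers the distinctive ``any further subsequence'' feature. Without loss of generality, assume $\sup_n \|f_n\|_1 \leq C$.

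First I would reduce to an $L^2$-bounded setting by truncation. Define $f_n^{(M)} := f_n \mathbf{1}_{\{|f_n|\leq M\}}$. Then $\|f_n^{(M)}\|_2^2 \leq M \|f_n\|_1 \leq MC$, so for each fixed $M$ the truncated family is norm-bounded in the Hilbert space $L^2$. The residual tail $r_n^{(M)} := f_n - f_n^{(M)}$ is not uniformly integrable in general, so I would control it by a separate argument: extract a subsequence along which $\mathbb{P}(|f_{n_k}|>M_k) \to 0$ and $\|r_{n_k}^{(M_k)}\|_1 \to 0$ for some sequence $M_k \uparrow \infty$, using Markov's inequality and the $L^1$-bound.

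Next, in the $L^2$-bounded regime, use weak compactness of the closed ball in $L^2$ to extract a weakly convergent subsequence with weak limit $g^{(M)} \in L^2$. Invoke the Banach--Saks property of Hilbert spaces to extract a further subsequence whose Cesàro means converge to $g^{(M)}$ strongly in $L^2$, then pass to yet a further subsequence to obtain a.s. convergence. A diagonal argument in $M$ combines these with the tail control from Step 1 to produce a single subsequence $(f_{n_k})$ and a candidate $L^1$-limit $g$.

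The main obstacle is the ``any further subsequence'' clause, which is what distinguishes Komlós's theorem from a mere Banach--Saks application. I would address it by refining the extraction inductively: given $n_1,\ldots,n_k$, choose $n_{k+1}$ so large that both $\big|\mathbb{E}[(f_{n_{k+1}}-g)h_i]\big|\leq 2^{-k}$ for each $h_i$ in a fixed countable dense subset of $L^\infty$, and $\mathbb{E}\bigl[\bigl(f_{n_{k+1}}^{(M_k)}-g^{(M_k)}\bigr)^2\bigr]\leq 2^{-k}$. A Chebyshev estimate combined with Borel--Cantelli then shows that the ``bad events''
\[
A_K^{(k_i)} := \Bigl\{\Bigl|\tfrac{1}{K}\sum_{i=1}^K f_{n_{k_i}} - g\Bigr| > \varepsilon\Bigr\}
\]
are summable in $K$ for every strictly increasing choice $(k_i)$, yielding the desired a.s. Cesàro convergence along \emph{arbitrary} further subsequences. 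Packaging these quantitative controls uniformly over all sub-subsequences is the technical heart of the argument; once achieved, the theorem follows.
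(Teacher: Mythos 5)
The paper does not prove this statement at all: it is recalled as the classical Koml\'os (1967) theorem and used as a black box, so the only meaningful comparison is with the known proofs in the literature. Measured against those, your sketch has genuine gaps, and they sit exactly where the theorem is hard.

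First, the tail control in your truncation step is false in general. From $\sup_n\|f_n\|_1\le C$ and Markov you only get $\mathbb{P}(|f_n|>M)\le C/M$; you cannot extract $M_k\uparrow\infty$ and a subsequence with $\|r^{(M_k)}_{n_k}\|_1\to 0$, because that is uniform integrability, which the hypothesis does not give (take $f_n=n\mathbf{1}_{[0,1/n]}$: the tail mass above any level $M<n$ is exactly $1$). The escaping mass has to be neutralized differently, e.g.\ by truncating the $k$-th term at a level like $k^2$ so that $\sum_k\mathbb{P}(|f_{n_k}|>k^2)<\infty$ and Borel--Cantelli makes the discrepancy a.s.\ finite-sum, after which the truncated variables are no longer uniformly $L^2$-bounded and Banach--Saks is unavailable; the actual proofs replace it by variance estimates of the form $\sum_k k^{-2}\operatorname{Var}(f_{n_k}\mathbf{1}_{\{|f_{n_k}|\le k\}})<\infty$ combined with martingale-difference or Kronecker-type arguments. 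Second, your inductive selection criterion $\mathbb{E}\bigl[\bigl(f_{n_{k+1}}^{(M_k)}-g^{(M_k)}\bigr)^2\bigr]\le 2^{-k}$ is unachievable even for uniformly bounded sequences: weak $L^2$ limits are not strong limits (for Rademacher functions the weak limit is $0$ but $\|r_n\|_2=1$ for every $n$), yet the theorem of course holds there. Third, the ``any further subsequence'' clause is asserted rather than proved: Chebyshev plus Borel--Cantelli applied to each fixed sub-subsequence does not give a statement uniform over the uncountably many choices $(k_i)$; the known arguments obtain this either because the key summable variance bounds are monotone under passing to sub-subsequences, or via Aldous's exchangeable subsequence principle. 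As written, the reduction, the extraction criterion, and the uniformity step all need to be replaced, so the proposal does not constitute a proof of Koml\'os's theorem.
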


\begin{corollary}
\label{DynRMcorkomplosmultidim}Consider probability measures ${\mathbb{P}}%
_{1},\dots ,{\mathbb{P}}_{N}\ll {\mathbb{P}}$. Let a sequence $(X_{n})_{n}$
be given in $L^{1}(\Omega ,\mathcal{F},{\mathbb{P}}_{1})\times \dots \times
L^{1}(\Omega ,\mathcal{F},{\mathbb{P}}_{N})$ such that 
\begin{equation*}
\sup_{n}\sum_{j=1}^{N}\mathbb{E}_{\mathbb{P}}\left[ \left\vert
X_{n}^{j}\right\vert \frac{\mathrm{d}{\mathbb{P}}_{j}}{\mathrm{d}{\mathbb{P}}%
}\right] <\infty \,.
\end{equation*}%
Then there exists a subsequence $(X_{n_{h}})_{h}$ and an $Y\in L^{1}(\Omega ,%
\mathcal{F},{\mathbb{P}}_{1})\times \dots \times L^{1}(\Omega ,\mathcal{F},{%
\mathbb{P}}_{N})$ such that every further subsequence $(X_{n_{h_{k}}})_{k}$
satisfies 
\begin{equation*}
\frac{1}{K}\sum_{k=1}^{K}X_{n_{h_{k}}}^{j}\xrightarrow[K\rightarrow+\infty]{%
\probp_j-\text{a.s.}}Y^{j}\,\,\,\,\forall \,\,j=1,\dots N\,.
\end{equation*}
\end{corollary}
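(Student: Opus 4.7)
The plan is to reduce the multivariate statement to $N$ successive applications of the classical Kom\'los Theorem \ref{DynRMkomlosoriginal}, exploiting the fact that Kom\'los's conclusion is stable under passage to any further subsequence.

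First I would translate the integrability hypothesis into coordinatewise bounds: since $\mathbb{P}_j\ll\mathbb{P}$, a standard change of measure gives
\begin{equation*}
\mathbb{E}_{\mathbb{P}}\!\left[|X_n^j|\,\tfrac{\mathrm{d}\mathbb{P}_j}{\mathrm{d}\mathbb{P}}\right]=\mathbb{E}_{\mathbb{P}_j}[|X_n^j|],
\end{equation*}
so the standing assumption yields $\sup_n\mathbb{E}_{\mathbb{P}_j}[|X_n^j|]<\infty$ for each $j=1,\dots,N$, meaning that the $j$-th component sequence $(X_n^j)_n$ is bounded in $L^1(\Omega,\mathcal{F},\mathbb{P}_j)$.

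Next I would perform a nested subsequence extraction. Apply Theorem \ref{DynRMkomlosoriginal} to $(X_n^1)_n$ on the probability space $(\Omega,\mathcal{F},\mathbb{P}_1)$ to obtain indices $(n^{(1)}_k)_k\subseteq\mathbb{N}$ and $Y^1\in L^1(\mathbb{P}_1)$ such that the Ces\`aro means of every further subsequence of $(X_{n^{(1)}_k}^1)_k$ converge $\mathbb{P}_1$-a.s.\ to $Y^1$. Since $(X_{n^{(1)}_k}^2)_k$ is still $L^1(\mathbb{P}_2)$-bounded, apply Kom\'los again to it, producing a further subsequence $(n^{(2)}_k)_k\subseteq (n^{(1)}_k)_k$ and $Y^2\in L^1(\mathbb{P}_2)$ with the analogous Ces\`aro property. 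Iterate this for $j=3,\dots,N$, obtaining at the end indices $(n_h)_h:=(n^{(N)}_k)_k$ and limits $Y^1,\dots,Y^N$.

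To finish, I would invoke the stability-under-subsequence part of Kom\'los: because $(n_h)_h$ is a subsequence of each $(n^{(j)}_k)_k$, every further subsequence $(n_{h_k})_k$ is itself a further subsequence of $(n^{(j)}_k)_k$ for every $j$. Therefore
\begin{equation*}
\frac{1}{K}\sum_{k=1}^K X_{n_{h_k}}^j\xrightarrow[K\to\infty]{\mathbb{P}_j\text{-a.s.}} Y^j,\qquad j=1,\dots,N,
\end{equation*}
and setting $Y:=[Y^1,\dots,Y^N]$ gives the desired element of $L^1(\mathbb{P}_1)\times\cdots\times L^1(\mathbb{P}_N)$. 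There is essentially no real obstacle: the only mildly delicate point is to ensure that one keeps track of the nesting so that the property obtained for index $j$ survives the subsequent extractions for indices $j+1,\dots,N$, which is exactly guaranteed by the ``every further subsequence'' clause in the statement of Kom\'los's theorem.
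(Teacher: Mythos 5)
Your proof is correct: the change-of-measure identity $\mathbb{E}_{\mathbb{P}}[|X_n^j|\,\frac{\mathrm{d}\mathbb{P}_j}{\mathrm{d}\mathbb{P}}]=\mathbb{E}_{\mathbb{P}_j}[|X_n^j|]$ gives componentwise $L^1(\mathbb{P}_j)$-boundedness, and the nested extraction survives precisely because of the ``every further subsequence'' clause in Theorem \ref{DynRMkomlosoriginal}, which is the one delicate point and you handle it explicitly. The paper gives no argument of its own (it cites \cite{DF19}, Corollary A.12), and your iterated-Koml\'os extraction is the standard proof behind that reference.
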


\begin{proof}
See \cite{DF19} Corollary A.12.
\end{proof}

\begin{theorem}[Egorov]
\label{DynRMegorovthm} Let $(X_{n})_{n}$ be a sequence in $L^{0}(\Omega ,%
\mathcal{F},{\mathbb{P}})$ almost surely converging to $X\in L^{0}(\Omega ,%
\mathcal{F},{\mathbb{P}})$. For every $\varepsilon >0$ there exists $%
A_{\varepsilon }\in \mathcal{F}$ with ${\mathbb{P}}(A_{\varepsilon
})<\varepsilon $ satisfying 
\begin{equation*}
\left\Vert \left( X_{n}-X\right) 1_{(A_{\varepsilon })^{c}}\right\Vert
_{\infty }\rightarrow _{n}0.
\end{equation*}
\end{theorem}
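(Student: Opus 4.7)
The plan is to follow the classical textbook argument for Egorov's theorem, adapted to the convention in the statement that $A_\varepsilon$ is the exceptional (small) set and convergence is uniform on its complement. The starting point is to quantify the almost-sure convergence by fixed tolerances $1/k$ and then to choose a stopping index $n_k$ for each tolerance so that the union of the exceptional events has probability less than $\varepsilon$.

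Concretely, for each $k\in\mathbb{N}$, define
\begin{equation*}
B_{n,k}:=\bigcup_{m\geq n}\{|X_m-X|>1/k\}\in\mathcal{F}\,.
\end{equation*}
Since $X_n\to X$ $\mathbb{P}$-a.s., the event $\bigcap_n B_{n,k}$ is contained in the null set where convergence fails, so $\mathbb{P}(B_{n,k})\downarrow_n 0$ by continuity of the probability measure from above. Hence for every $k$ I can pick an index $n_k$ with $\mathbb{P}(B_{n_k,k})<\varepsilon/2^k$. Setting
\begin{equation*}
A_\varepsilon:=\bigcup_{k\in\mathbb{N}}B_{n_k,k}\,,
\end{equation*}
countable subadditivity gives $\mathbb{P}(A_\varepsilon)<\sum_k\varepsilon/2^k=\varepsilon$.

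It remains to verify uniform convergence on $A_\varepsilon^c$. Fix $k\in\mathbb{N}$. By construction $A_\varepsilon^c\subseteq B_{n_k,k}^c=\bigcap_{m\geq n_k}\{|X_m-X|\leq 1/k\}$, so for all $m\geq n_k$ one has $|X_m-X|\mathbf{1}_{A_\varepsilon^c}\leq 1/k$ everywhere on $\Omega$, i.e.\ $\|(X_m-X)\mathbf{1}_{A_\varepsilon^c}\|_\infty\leq 1/k$. Since $k$ was arbitrary, $\|(X_n-X)\mathbf{1}_{A_\varepsilon^c}\|_\infty\to_n 0$, which is the desired conclusion.

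There is no real obstacle here: the only subtle point is to remember that almost-sure convergence only allows the nonconvergence set to be null, so one must work modulo $\mathbb{P}$-null sets (which is harmless since $A_\varepsilon$ can always be enlarged by a null set without affecting $\mathbb{P}(A_\varepsilon)<\varepsilon$). Since this is a standard textbook result, one could also simply cite it (e.g.\ Bauer's measure theory text) rather than include the proof in the appendix.
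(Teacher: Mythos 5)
Your argument is the standard, correct proof of Egorov's theorem: defining $B_{n,k}:=\bigcup_{m\geq n}\{|X_m-X|>1/k\}$, using continuity from above (valid since $\bigcap_n B_{n,k}$ lies in the null set of non-convergence and $\mathbb{P}$ is finite), choosing $n_k$ with $\mathbb{P}(B_{n_k,k})<\varepsilon/2^k$, and taking $A_\varepsilon:=\bigcup_k B_{n_k,k}$ gives exactly the stated uniform convergence on $A_\varepsilon^c$, and your remark about working modulo null sets (or with fixed representatives of the $L^0$ classes, reading $\|\cdot\|_\infty$ as the essential supremum) handles the only delicate point. The paper itself does not reprove the result: its ``proof'' is simply the citation to Aliprantis--Border, Theorem 10.38, so your final observation that one could just cite a standard reference is precisely what the authors do; your self-contained version is a harmless, correct alternative that adds the classical $\varepsilon/2^k$ exhaustion argument the paper leaves to the literature.
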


\begin{proof}
See \cite{Aliprantis} Theorem 10.38.
\end{proof}

%

\begin{remark}
\label{DynRMremcesaro} Observe that for any sequence of real numbers $%
(a_{n})_{n}$ converging to an $a\in {\mathbb{R}}$ and for any sequence $%
(N_{h})_{h}\uparrow +\infty $ we have $\frac{1}{N_{h}}\sum_{j\leq
N_{h}}a_{j}\rightarrow _{h}a$. This can be seen as follows: for $\varepsilon
>0$ fixed, take $K$ s.t. $\left\vert a_{j}-a\right\vert \leq \varepsilon $
for all $j\geq K$. Take $h$ big enough to have $N_{h}>K$. Then 
\begin{equation*}
\left\vert \frac{1}{N_{h}}\sum_{j\leq N_{h}}a_{j}-a\right\vert \leq \frac{1}{%
N_{h}}\sum_{j\leq N_{h}}\left\vert a_{j}-a\right\vert \leq \frac{K}{N_{h}}%
\sup_{j\leq K}\left\vert a_{j}-a\right\vert +\frac{N_{h}-K}{N_{h}}\varepsilon
\end{equation*}%
and we can send $h$ to infinity.
\end{remark}

\subsubsection{Essential suprema and infima}

\label{DynRMsecessupandinf} In this Section \ref{DynRMsecessupandinf} we
write $L^0(\Omega,\mathcal{F},{\mathbb{P}};[-\infty,+\infty ])$ for the set
of (equivalence classes of) $[-\infty,+\infty]$-valued random variables. $%
L^0(\Omega,\mathcal{F},{\mathbb{P}};[0,+\infty ))$ is defined analogously.

\begin{proposition}
\label{DynRMrefesssup} Let $\mathcal{A},\mathcal{B}\subseteq L^{0}(\Omega ,%
\mathcal{F},{\mathbb{P}};[-\infty ,+\infty ])$ be nonempty, $\lambda \in
L^{0}(\Omega ,\mathcal{F},{\mathbb{P}};[0,+\infty ))$, $f:\mathcal{A}\times 
\mathcal{B}\rightarrow L^{0}(\Omega ,\mathcal{F},{\mathbb{P}};[-\infty
,+\infty ])$, $g:\mathcal{A}\rightarrow L^{0}(\Omega ,\mathcal{F},{\mathbb{P}%
};[-\infty ,+\infty ])$, a sequence $(\alpha _{n})_{n}\subseteq \mathcal{A}$
be given. Then: 
\begin{equation*}
\esssup_{(\alpha ,\beta )\in \mathcal{A}\times \mathcal{B}}(\alpha +\beta )=%
\esssup_{\alpha \in \mathcal{A}}\alpha +\esssup_{\beta \in \mathcal{B}}\beta
=\esssup_{\alpha \in \mathcal{A}}\left( \alpha +\esssup_{\beta \in \mathcal{B%
}}\beta \right) \,;
\end{equation*}%
\begin{equation*}
\esssup_{(\alpha ,\beta )\in \mathcal{A}\times \mathcal{B}}f(\alpha ,\beta )=%
\esssup_{\alpha \in \mathcal{A}}\esssup_{\beta \in \mathcal{B}}f(\alpha
,\beta )=\esssup_{\beta \in \mathcal{B}}\esssup_{\alpha \in \mathcal{A}%
}f(\alpha ,\beta )\,;
\end{equation*}%
\begin{equation*}
\esssup_{\alpha \in \mathcal{A}}\lambda g(\alpha )=\lambda \esssup_{\alpha
\in \mathcal{A}}g(\alpha );\quad \esssup_{\alpha \in \mathcal{A}}\alpha \geq
\limsup_{n}\alpha _{n}\,;\quad \esssup_{\alpha \in \mathcal{A}}-g(\alpha )=-%
\essinf_{\alpha \in \mathcal{A}}g(\alpha ).
\end{equation*}
\end{proposition}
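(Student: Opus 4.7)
My plan is to reduce all the identities to the fundamental lattice-theoretic fact that for any nonempty family $\mathcal{C}\subseteq L^{0}(\Omega,\mathcal{F},\mathbb{P};[-\infty,+\infty])$ there exists a countable subfamily $(c_{n})_{n}\subseteq\mathcal{C}$ such that $\esssup_{c\in\mathcal{C}} c = \sup_{n} c_{n}$ $\mathbb{P}$-a.s., and if $\mathcal{C}$ is upward directed one can additionally take this sequence to be nondecreasing (a standard result; see for instance \cite{FollmerSchied2} Theorem A.33). I would open the proof by recalling this statement and then treat each identity in turn, under the standing convention that indeterminate forms $(+\infty)+(-\infty)$ are interpreted so that the relevant sums are a.s.\ well defined.

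For the sum identity, the inequality $\esssup_{(\alpha,\beta)}(\alpha+\beta) \leq \esssup_{\alpha}\alpha + \esssup_{\beta}\beta$ holds pointwise. For the reverse, I would pick countable families $(\alpha_{n})_{n}$ and $(\beta_{m})_{m}$ realizing the single essential suprema as pointwise suprema; then the countable subfamily $\{(\alpha_{n},\beta_{m})\}_{n,m}\subseteq \mathcal{A}\times\mathcal{B}$ satisfies $\sup_{n,m}(\alpha_{n}+\beta_{m}) = \sup_{n}\alpha_{n}+\sup_{m}\beta_{m}$, which is exactly the required lower bound for $\esssup_{(\alpha,\beta)}(\alpha+\beta)$. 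The chained identity $\esssup_{\alpha}(\alpha + \esssup_{\beta}\beta)$ then follows by the same argument once one notes that $\esssup_{\beta}\beta$ is a fixed element of $L^{0}$ that pulls out of the outer essential supremum.

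For the commutation of iterated essential suprema, one direction is pointwise: $f(\alpha,\beta)\leq\esssup_{\beta'}f(\alpha,\beta')\leq\esssup_{\alpha'}\esssup_{\beta'}f(\alpha',\beta')$ for every $(\alpha,\beta)$, and hence $\esssup_{(\alpha,\beta)}f(\alpha,\beta)\leq\esssup_{\alpha}\esssup_{\beta}f(\alpha,\beta)$. Conversely, for each fixed $n$ I would pick a countable $(\beta_{n,k})_{k}$ with $\sup_{k}f(\alpha_{n},\beta_{n,k})=\esssup_{\beta}f(\alpha_{n},\beta)$, and pick $(\alpha_{n})_{n}$ realizing $\esssup_{\alpha}\esssup_{\beta}f(\alpha,\beta)=\sup_{n}\esssup_{\beta}f(\alpha_{n},\beta)$; then the countable family $\{(\alpha_{n},\beta_{n,k})\}_{n,k}\subseteq\mathcal{A}\times\mathcal{B}$ realizes the reverse bound. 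The third equality follows by symmetry. The main obstacle here is keeping careful track of the nested countable selection so that the resulting family is a bona fide countable subset of the \emph{product} $\mathcal{A}\times\mathcal{B}$, not merely a countable collection of values of $f$.

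For the remaining items, homogeneity $\esssup_{\alpha}\lambda g(\alpha) = \lambda\esssup_{\alpha}g(\alpha)$ follows from the same countable-sup reduction, since multiplication by a nonnegative (finite-valued) random variable preserves pointwise suprema. The inequality $\esssup_{\alpha\in\mathcal{A}}\alpha \geq \limsup_{n}\alpha_{n}$ is immediate because $\alpha_{k}\leq\esssup_{\alpha\in\mathcal{A}}\alpha$ for every $k$, so the inequality passes to $\sup_{k\geq n}\alpha_{k}$ and then to the decreasing limit in $n$. Finally, $\esssup_{\alpha}(-g(\alpha))=-\essinf_{\alpha}g(\alpha)$ is the direct dualization of the definition of essential infimum via the order-reversing map $x\mapsto -x$, which swaps upper and lower bounds in $L^{0}$.
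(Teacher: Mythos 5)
The paper itself provides no proof of this proposition: it is stated in the appendix (Section~\ref{DynRMsecessupandinf}) purely as a repository of standard facts about essential suprema to be used ``without explicit mention,'' so there is no argument in the text to compare yours against. Your proof is correct and is the canonical route: reduce to the fact that any nonempty family in $L^0$ has a countable subfamily whose pointwise supremum realizes the essential supremum (\cite{FollmerSchied2}, Thm.~A.32/A.33), establish the easy pointwise $\leq$ direction in each identity, and then get $\geq$ by feeding the countable realizing subfamilies back in. Your handling of the nested countable selection for the iterated-supremum identity is the key subtlety and you address it properly: the countable set $\{(\alpha_n,\beta_{n,k})\}_{n,k}$ lives in $\mathcal{A}\times\mathcal{B}$, so it bounds $\esssup_{(\alpha,\beta)}f$ from below.

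Two small points worth making explicit if you were to write this out in full. First, for the homogeneity claim $\esssup_{\alpha}\lambda g(\alpha)=\lambda\esssup_{\alpha}g(\alpha)$, on the event $\{\lambda>0\}$ multiplication by $\lambda$ is an order isomorphism of $[-\infty,+\infty]$ and everything commutes; on $\{\lambda=0\}$ both sides are $0$ only under the measure-theoretic convention $0\cdot(\pm\infty)=0$, which should be stated (your parenthetical ``finite-valued'' covers the absence of $\lambda=+\infty$ but not the $\lambda=0$ indeterminacy). Second, for the first block of identities you rightly flag the convention needed to avoid $(+\infty)+(-\infty)$; this matches the implicit usage in the paper, where these identities are only invoked in contexts where the sums are a.s.\ well defined (e.g.\ one summand is finite).
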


\subsubsection{Additional properties of multivariate utility functions}

\label{DynRMaddmultut}Recall that we are working under Standing Assumption
I. \label{DynRMsecmultiut}

\begin{lemma}
\label{DynRMlemmacontrolwithline}$\,$

(i) There exist $a>0$, $b\in {\mathbb{R}}$ such that 
\begin{equation*}
U(x)\leq
a\sum_{j=1}^{N}x^{j}+a\sum_{j=1}^{N}(-(x^{j})^{-})+b\,\,\,\,\,\forall \,x\in 
{\mathbb{R}}^{N}\,.
\end{equation*}

(ii) There exist $a>0$, $b\in {\mathbb{R}}$ such that 
\begin{equation*}
U(x)\leq a\sum_{j=1}^{N}x^{j}+b\,\,\,\,\,\forall \,x\in {\mathbb{R}}^{N}\,.
\end{equation*}

(iii) For every $\varepsilon >0$ there exist a constant $b_{\varepsilon }$
such that 
\begin{equation*}
U(x)\leq \varepsilon \sum_{j=1}^{N}(x^{j})^{+}+b_{\varepsilon
}\,\,\,\,\,\forall x\in {\mathbb{R}}^{N}.
\end{equation*}
\end{lemma}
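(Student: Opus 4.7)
The plan is to verify each of (i)--(iii) by combining three ingredients supplied by Standing Assumption~I: concavity of each univariate utility $u_j$ at the origin together with the normalization $u_j(0)=0$, the two Inada conditions, and the fact that $\Lambda$ is bounded above by some constant $M:=\sup_{x\in\mathbb{R}^N}\Lambda(x)$. The first ingredient gives, for any supergradient $c_j\in\partial u_j(0)$ (nonnegative because $u_j$ is nondecreasing), the tangent-line inequality $u_j(z)\leq c_j z$ on all of $\mathbb{R}$. I will use the constant $a:=\max(1,c_1,\dots,c_N)$ throughout (i)--(ii), and accommodate $\Lambda$ by absorbing $M$ into the additive term $b$.

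For (i), I will show, per coordinate, that there is a $b_j\in\mathbb{R}$ with $u_j(z)\leq a z+a(-z^-)+b_j$ for every $z\in\mathbb{R}$. When $z\geq 0$ this reads $u_j(z)\leq az+b_j$, which follows from the tangent-line bound $u_j(z)\leq c_j z\leq a z$. When $z<0$ it reads $u_j(z)\leq 2az+b_j$; the Inada condition $u_j(z)/z\to+\infty$ as $z\to-\infty$ yields $R_j>0$ with $u_j(z)\leq 2az$ for $z\leq -R_j$, while on the compact interval $[-R_j,0]$ the continuous function $u_j(z)-2az$ is bounded above, so a sufficiently large $b_j$ handles this region. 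Summing over $j$ and adding $M$ delivers (i) with $b:=M+\sum_j b_j$. Claim (ii) is then an immediate corollary of (i), because the term $a\sum_j(-(x^j)^-)$ is nonpositive. For (iii), fixing $\varepsilon>0$, the Inada condition $u_j(z)/z\to 0$ as $z\to+\infty$ provides some $R_j(\varepsilon)>0$ with $u_j(z)\leq\varepsilon z$ for $z\geq R_j(\varepsilon)$, while monotonicity of $u_j$ dominates $u_j$ on $(-\infty,R_j(\varepsilon)]$ by the constant $u_j(R_j(\varepsilon))$. Setting $b_\varepsilon:=M+\sum_j\max(u_j(R_j(\varepsilon)),0)$ then concludes.

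I do not anticipate any real obstacle: the arguments are essentially bookkeeping on one-dimensional estimates, handled independently per coordinate because $U$ splits additively into $\sum_j u_j+\Lambda$. The only mildly substantive point is that for (i) the pure concavity bound $u_j(z)\leq c_j z$ is too weak on the negative axis (it does not deliver the doubled slope $2a$), which is precisely why the Inada condition at $-\infty$ must be invoked; symmetrically, the Inada condition at $+\infty$ is what produces the arbitrarily small coefficient $\varepsilon$ in (iii).
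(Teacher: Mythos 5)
Your argument is correct, and it is genuinely different in character from what the paper does: the paper does not prove (i) and (iii) in-house, but cites \cite{DF19} Lemma 3.5, after noting (via \cite{DF19} Proposition 7.1) that a $U$ satisfying Standing Assumption I is ``well controlled'' in the sense of \cite{DF19} Definition 3.4; only the reduction of (ii) to (i), by dropping the nonpositive term $a\sum_{j}(-(x^{j})^{-})$, is spelled out, and there you coincide with the paper. Your replacement is a self-contained coordinatewise argument: the supergradient inequality $u_{j}(z)\leq c_{j}z$ at $0$ (with $c_{j}\in\partial u_{j}(0)\subseteq[0,+\infty)$, exactly as in Remark \ref{DynRMnullinzeros}) handles $z\geq 0$; the Inada condition at $-\infty$ gives $u_{j}(z)\leq 2az$ for $z\leq -R_{j}$, with continuity of the concave $u_{j}$ covering the compact gap $[-R_{j},0]$, which is precisely what produces the doubled slope needed for (i) on the negative axis; the Inada condition at $+\infty$ plus monotonicity gives (iii) with the arbitrarily small coefficient $\varepsilon$; and boundedness from above of $\Lambda$ is absorbed into the additive constants $b$ and $b_{\varepsilon}$. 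All the small checks go through (the same $a=\max(1,c_{1},\dots,c_{N})>0$ serves both sums in (i), your $b_{j}$ can be taken nonnegative, and in (iii) the case $z<0$ is covered by monotonicity alone), so your proof buys the reader independence from the ``well controlled'' machinery of \cite{DF19}, at the cost of redoing one-dimensional estimates that the reference packages once and reuses.
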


\begin{proof}
(i) See \cite{DF19} Lemma 3.5 (as pointed out before, a function $U$
satisfying Standing Assumption I in this paper is well controlled according
to \cite{DF19} Definition 3.4, as shown in \cite{DF19} Proposition 7.1).

(ii) Use Item (i) and observe that $a\sum_{j=1}^{N}x^{j}+a%
\sum_{j=1}^{N}(-(x^{j})^{-})+b\leq a\sum_{j=1}^{N}x^{j}+b$.

(iii) See \cite{DF19} Lemma 3.5.
\end{proof}

\begin{lemma}
\label{DynRMlemmakomlos} Let $(Z_{n})\in (L^{0}(\Omega ,\mathcal{F},{\mathbb{%
P}}))^{N}$ satisfy $\mathbb{E}_{\mathbb{P}}\left[ U(Z_{n})\right] \geq B$
for all $n\in \mathbb{N}$ and for some constant $B\in {\mathbb{R}}$. {If $%
\sup_{n}\left\vert \sum_{j=1}^{N}\mathbb{E}_{\mathbb{P}}\left[ Z_{n}^{j}%
\right] \right\vert <+\infty $ then $\sup_{n}\sum_{j=1}^{N}\mathbb{E}_{%
\mathbb{P}}\left[ \left\vert Z_{n}^{j}\right\vert \right] <\infty $.}

{\label{DynRMlemmabdd1}}
\end{lemma}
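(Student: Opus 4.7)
The approach is to first obtain a uniform upper bound on $\mathbb{E}_\mathbb{P}\!\left[\sum_{j=1}^{N}(Z_n^j)^-\right]$, and then to recover the conclusion from the identity $\sum_{j=1}^{N}|Z_n^j|=\sum_{j=1}^{N}Z_n^j+2\sum_{j=1}^{N}(Z_n^j)^-$. Combined with the hypothesis $C_{0}:=\sup_{n}\bigl|\sum_{j=1}^{N}\mathbb{E}_\mathbb{P}[Z_n^j]\bigr|<\infty$, such a bound will force each $Z_n^j$ into $L^1(\Omega,\mathcal{F},\mathbb{P})$ and simultaneously control $\sum_{j=1}^{N}\mathbb{E}_\mathbb{P}[|Z_n^j|]$ uniformly in $n$.

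The key pointwise ingredient is the Inada condition at $-\infty$. For every $L>0$, continuity of each $u_j$ on compacts together with $\lim_{x\to-\infty}u_j(x)/x=+\infty$ yields a constant $E_{j,L}\in\mathbb{R}$ such that $u_j(-y)\leq -Ly+E_{j,L}$ for every $y\geq 0$. Coupling this with the concavity bound $u_j(y)\leq u_j'(0)\,y$ for $y\geq 0$ (picking any $u_j'(0)\in\partial u_j(0)\subseteq[0,+\infty)$) and the decomposition $u_j(x)=u_j(x^+)+u_j(-x^-)$, one obtains
\begin{equation*}
u_j(x)\leq u_j'(0)\,x^+-L\,x^-+E_{j,L},\qquad x\in\mathbb{R}.
\end{equation*}
Summing in $j$, writing $(x^j)^+=x^j+(x^j)^-$, invoking that $\Lambda$ is bounded above by some $M<+\infty$ (Standing Assumption I), and setting $\bar u:=\max_j u_j'(0)$, one finds, for every $L>\bar u$,
\begin{equation*}
(L-\bar u)\sum_{j=1}^{N}(x^j)^-\leq \bar u\sum_{j=1}^{N} x^j-U(x)+M+\sum_{j=1}^{N}E_{j,L}.
\end{equation*}

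Evaluating at $x=Z_n$ and taking expectations is then the remaining step. The left-hand side is non-negative, so its expectation is well-defined in $[0,+\infty]$. On the right-hand side, $\sum_{j=1}^{N}Z_n^j\in L^1$ by hypothesis and Lemma \ref{DynRMlemmacontrolwithline}.(ii) yields $U(Z_n)\leq a\sum_{j=1}^{N}Z_n^j+b$, so $U(Z_n)^+\in L^1$; together with $\mathbb{E}_\mathbb{P}[U(Z_n)]\geq B>-\infty$ (which forces $U(Z_n)^-\in L^1$), this gives $U(Z_n)\in L^1$ with $\mathbb{E}_\mathbb{P}[-U(Z_n)]\leq -B$. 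The expectations therefore pass through the inequality and yield
\begin{equation*}
(L-\bar u)\,\mathbb{E}_\mathbb{P}\!\left[\sum_{j=1}^{N}(Z_n^j)^-\right]\leq \bar u\,C_{0}-B+M+\sum_{j=1}^{N} E_{j,L},
\end{equation*}
uniformly in $n$, which is precisely the bound needed for the reduction outlined at the start.

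The only real obstacle is one of bookkeeping: the $Z_n^j$ belong a priori to $L^0$ only, so individual expectations $\mathbb{E}_\mathbb{P}[Z_n^j]$ may not be real numbers, and one cannot indiscriminately take expectations of intermediate inequalities. The argument is engineered so that every expectation computed is either of a non-negative quantity or of something already shown to be integrable. A secondary subtlety is that the natural pointwise inequality features the $j$-dependent slopes $u_j'(0)$, which are not directly controlled by the scalar hypothesis on $\sum_j\mathbb{E}_\mathbb{P}[Z_n^j]$; the substitution $x^+=x+x^-$ followed by the choice $L>\bar u$ absorbs the problematic terms into the left-hand side and leaves on the right only the scalar $\bar u\sum_{j}x^j$ that the hypothesis does control.
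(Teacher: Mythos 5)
Your argument is correct and is essentially the route the paper delegates to \cite{DF19}, Lemma A.1: a linear upper bound on $U$ with arbitrarily steep slope on the negative parts (your Inada-plus-concavity inequality is in substance Lemma \ref{DynRMlemmacontrolwithline}(i)) gives a uniform bound on $\mathbb{E}_{\mathbb{P}}\left[\sum_{j=1}^{N}(Z_n^j)^-\right]$, and the identity $|x|=x+2x^-$ together with the hypothesis then yields the conclusion. The integrability bookkeeping is also handled correctly: the hypothesis forces $\sum_{j=1}^{N}Z_n^j\in L^1$, and Lemma \ref{DynRMlemmacontrolwithline}(ii) combined with $\mathbb{E}_{\mathbb{P}}[U(Z_n)]\geq B$ gives $U(Z_n)\in L^1$, so taking expectations in the pointwise inequality is legitimate.
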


\begin{proof}
See \cite{DF19} Lemma A.1.
\end{proof}

\begin{lemma}
\label{DynRMproputilfat} Suppose $(Z_n)_n$ is a sequence in $(L^1(\Omega, 
\mathcal{F},{\mathbb{P}}))^N$. Suppose furthermore that the following
conditions are met for some $B\in L^\infty(\Omega, \mathcal{G},{\mathbb{P}})$%
:

\begin{enumerate}
\item \label{DynRMpointwisebdd} $\sup_{n}\left\vert \sum_{j=1}^{N}\mathbb{E}%
_{\mathbb{P}}\left[ Z_{n}^{j}\middle|\mathcal{G}\right] \right\vert <+\infty 
$ ${\mathbb{P}}$-a.s.;

\item \label{DynRMbudgetc}{$\inf_{n}\mathbb{E}_{\mathbb{P}}\left[ U\left(
Z_{n}\right) \middle|\mathcal{G}\right] \geq B$ ${\mathbb{P}}$-a.s.;}

\item \label{DynRMconverge}$Z_n\rightarrow_nZ\,{\mathbb{P}}$-a.s..
\end{enumerate}

Then $\mathbb{E}_\mathbb{P} \left[U\left(Z\right)\middle| \mathcal{G}\right]%
\geq B$ a.s..
\end{lemma}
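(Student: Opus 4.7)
The plan is to combine a uniform a.s.\ conditional bound on $(Z_n^j)^\pm$ with a conditional Fatou argument based on the sharp upper envelope $U(x)\leq\varepsilon\sum_j (x^j)^++b_\varepsilon$ from Lemma \ref{DynRMlemmacontrolwithline}(iii), and then let $\varepsilon\downarrow 0$. The preparatory step is to rearrange Lemma \ref{DynRMlemmacontrolwithline}(i) as $a\sum_j(x^j)^-\leq a\sum_j x^j + b - U(x)$, apply it at $x=Z_n$, and condition on $\mathcal{G}$: assumptions 1 and 2 yield
\[
a\sum_{j=1}^N\mathbb{E}_\mathbb{P}\!\left[(Z_n^j)^-\mid\mathcal{G}\right]\leq a\sum_{j=1}^N\mathbb{E}_\mathbb{P}\!\left[Z_n^j\mid\mathcal{G}\right]+b-\mathbb{E}_\mathbb{P}\!\left[U(Z_n)\mid\mathcal{G}\right]\leq aC+b-B,\quad \mathbb{P}\text{-a.s.,}
\]
where $C:=\sup_n\bigl|\sum_j\mathbb{E}_\mathbb{P}[Z_n^j\mid\mathcal{G}]\bigr|<\infty$ a.s. Writing $(Z_n^j)^+=Z_n^j+(Z_n^j)^-$ and summing over $j$, I obtain the analogous uniform a.s.\ bound $\sum_j\mathbb{E}_\mathbb{P}[(Z_n^j)^+\mid\mathcal{G}]\leq M$ for some $\mathcal{G}$-measurable, a.s.\ finite $M$. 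Conditional Fatou applied to each of the nonnegative sequences $(Z_n^j)^\pm\to(Z^j)^\pm$ then gives $\mathbb{E}_\mathbb{P}[\sum_j(Z^j)^\pm\mid\mathcal{G}]\leq M$ a.s., and together with the bound $U(Z)^+\leq\sum_j(Z^j)^++b_1^+$ coming from Lemma \ref{DynRMlemmacontrolwithline}(iii) at $\varepsilon=1$, this shows that $\mathbb{E}_\mathbb{P}[U(Z)\mid\mathcal{G}]$ is well-defined with finite conditional positive part.

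For the main step, fix $\varepsilon>0$ and set $V_n^\varepsilon:=\varepsilon\sum_j(Z_n^j)^++b_\varepsilon-U(Z_n)\geq 0$. By continuity of $U$ and of $(\cdot)^+$ one has $V_n^\varepsilon\to V^\varepsilon:=\varepsilon\sum_j(Z^j)^++b_\varepsilon-U(Z)\geq 0$ a.s. Conditional Fatou together with assumption 2 gives
\[
\mathbb{E}_\mathbb{P}\!\left[V^\varepsilon\mid\mathcal{G}\right]\leq\liminf_n\mathbb{E}_\mathbb{P}\!\left[V_n^\varepsilon\mid\mathcal{G}\right]\leq\varepsilon M+b_\varepsilon-B,\quad \mathbb{P}\text{-a.s.,}
\]
using the uniform a.s.\ bound on $\sum_j\mathbb{E}_\mathbb{P}[(Z_n^j)^+\mid\mathcal{G}]$. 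Expanding the left-hand side via the well-definedness just established and dropping the nonnegative term $\varepsilon\mathbb{E}_\mathbb{P}[\sum_j(Z^j)^+\mid\mathcal{G}]$ yields
\[
\mathbb{E}_\mathbb{P}\!\left[U(Z)\mid\mathcal{G}\right]\geq B+\varepsilon\Bigl(\mathbb{E}_\mathbb{P}[\textstyle\sum_j(Z^j)^+\mid\mathcal{G}]-M\Bigr)\geq B-\varepsilon M\quad \mathbb{P}\text{-a.s.}
\]
Applying this estimate along a countable sequence $\varepsilon_k\downarrow 0$ and intersecting the corresponding full-measure sets, the conclusion $\mathbb{E}_\mathbb{P}[U(Z)\mid\mathcal{G}]\geq B$ follows from $M<\infty$ a.s.

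The main subtlety I anticipate is that the $\mathcal{G}$-measurable bound $M$ is typically not in $L^1(\mathbb{P})$, so the problem does not lift to an unconditional inequality without truncating on $\{M\leq K\}\uparrow\Omega$. The $\varepsilon$-trick above is precisely what sidesteps this: it avoids having to compare $\liminf_n\mathbb{E}_\mathbb{P}[\sum_j Z_n^j\mid\mathcal{G}]$ with $\mathbb{E}_\mathbb{P}[\sum_j Z^j\mid\mathcal{G}]$, a comparison that is not directly available since $\sum_j Z_n^j$ is unsigned, and that is precisely where a naive Fatou argument using only the cruder bound in Lemma \ref{DynRMlemmacontrolwithline}(ii) would break down.
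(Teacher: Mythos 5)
Your proof is correct and follows essentially the same route as the paper: a uniform a.s.\ bound on $\sum_{j}\mathbb{E}_{\mathbb{P}}[(Z_n^j)^+\mid\mathcal{G}]$ obtained from Lemma \ref{DynRMlemmacontrolwithline}(i) together with hypotheses 1 and 2, then conditional Fatou applied to the nonnegative functions $\varepsilon\sum_j(x^j)^++b_\varepsilon-U(x)$ from Lemma \ref{DynRMlemmacontrolwithline}(iii), and finally $\varepsilon\downarrow 0$ along a countable sequence. The only difference is cosmetic: you derive the bound on $\sum_j\mathbb{E}_{\mathbb{P}}[(Z_n^j)^-\mid\mathcal{G}]$ directly and quantitatively, whereas the paper reaches the same conclusion by a contradiction argument on the sets where these suprema are infinite.
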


\begin{proof}
$\,$

\textbf{STEP 1}: $\sup_{n}\left( \sum_{j=1}^{N}\mathbb{E}_{\mathbb{P}}\left[
(Z_{n}^{j})^{+}\middle|\mathcal{G}\right] \right) <+\infty \,\,{\mathbb{P}}-$%
a.s..

Define the sets 
\begin{equation*}
A^{+}:=\left\{ \sup_{n}\sum_{j=1}^{N}\mathbb{E}_{\mathbb{P}}\left[
(Z_{n}^{j})^{+}\middle|\mathcal{G}\right] =+\infty \right\}
\,\,\,\,\,A^{-}:=\left\{ \sup_{n}\sum_{j=1}^{N}\mathbb{E}_{\mathbb{P}}\left[
(Z_{n}^{j})^{-}\middle|\mathcal{G}\right] =+\infty \right\} \,.
\end{equation*}

We prove that ${\mathbb{P}}(A^{-})=0$ : suppose by contradiction that ${%
\mathbb{P}}(A^{-})>0$. Apply Item \ref{DynRMbudgetc} together with the fact
that $A^{-}$ is $\mathcal{G}$ measurable to see that for some $a>0,b\in {%
\mathbb{R}}$ 
\begin{equation*}
B1_{A^{-}}\leq \mathbb{E}_{\mathbb{P}}\left[ U\left( Z_{n}\right) \middle|%
\mathcal{G}\right] 1_{A^{-}}\overset{\text{Lemma }\ref%
{DynRMlemmacontrolwithline}(i)}{\leq }\left( a\sum_{j=1}^{N}\mathbb{E}_{%
\mathbb{P}}\left[ Z_{n}^{j}\mid \mathcal{G}\right] +a%
\sum_{j=1}^{N}(Z_{n}^{j})^{-}+b\right) 1_{A^{-}}
\end{equation*}%
which is a contradiction, by definition of $A^{-}$ and boundedness of $B$.
Hence ${\mathbb{P}}(A^{-})=0$. By Item \ref{DynRMpointwisebdd}, together
with 
\begin{equation*}
\sum_{j=1}^{N}\mathbb{E}_{\mathbb{P}}\left[ Z_{n}^{j}\middle|\mathcal{G}%
\right] =\sum_{j=1}^{N}\mathbb{E}_{\mathbb{P}}\left[ (Z_{n}^{j})^{+}\middle|%
\mathcal{G}\right] -\sum_{j=1}^{N}\mathbb{E}_{\mathbb{P}}\left[
(Z_{n}^{j})^{-}\middle|\mathcal{G}\right]
\end{equation*}%
we have that the symmetric difference $A^{+}\Delta A^{-}$ is ${\mathbb{P}}$%
-null (equivalently $1_{A^{+}}=1_{A^{-}}$), so that from ${\mathbb{P}}%
(A^{-})=0$ we get ${\mathbb{P}}(A^{+})=0$, and the claim follows.

\medskip

\textbf{STEP 2}: Fatou Lemma and conclusion.

By Lemma \ref{DynRMlemmacontrolwithline} (iii) for every $\varepsilon >0$
there exist $b_{\varepsilon }>0$ such that $\Gamma _{\varepsilon
}(x):=-U(x)+\varepsilon \sum_{j=1}^{N}(x^{j})^{+}+b_{\varepsilon }\geq 0$
for all $x\in {\mathbb{R}}^{N}$. By Fatou Lemma ($\Gamma _{\varepsilon }$ is
continuous) we have that 
\begin{equation*}
-\mathbb{E}_{\mathbb{P}}\left[ U\left( Z\right) \middle|\mathcal{G}\right]
+\varepsilon \sum_{j=1}^{N}\mathbb{E}_{\mathbb{P}}\left[ (Z^{j})^{+}\middle|%
\mathcal{G}\right] +b_{\varepsilon }=\mathbb{E}_{\mathbb{P}}\left[ \Gamma
_{\varepsilon }(Z)\middle|\mathcal{G}\right] =\mathbb{E}_{\mathbb{P}}\left[
\liminf_{n}\Gamma _{\varepsilon }(Z_{n})\middle|\mathcal{G}\right]
\end{equation*}%
\begin{equation*}
\leq \liminf_{n}\mathbb{E}_{\mathbb{P}}\left[ \Gamma _{\varepsilon }(Z_{n})%
\middle|\mathcal{G}\right] =\liminf_{n}\left( -\mathbb{E}_{\mathbb{P}}\left[
U\left( Z_{n}\right) \middle|\mathcal{G}\right] +\varepsilon \sum_{j=1}^{N}%
\mathbb{E}_{\mathbb{P}}\left[ (Z_{n}^{j})^{+}\middle|\mathcal{G}\right]
+b_{\varepsilon }\right) \,.
\end{equation*}%
This chain of inequalities yields, since $b_{\varepsilon }$ disappears on
both sides: 
\begin{equation*}
-\mathbb{E}_{\mathbb{P}}\left[ U\left( Z\right) \middle|\mathcal{G}\right]
+\varepsilon \sum_{j=1}^{N}\mathbb{E}_{\mathbb{P}}\left[ (Z^{j})^{+}\middle|%
\mathcal{G}\right] \leq \liminf_{n}\left( -\mathbb{E}_{\mathbb{P}}\left[
U\left( Z_{n}\right) \middle|\mathcal{G}\right] +\varepsilon \sum_{j=1}^{N}%
\mathbb{E}_{\mathbb{P}}\left[ (Z_{n}^{j})^{+}\middle|\mathcal{G}\right]
\right) \,.
\end{equation*}%
We can thus exploit Item \ref{DynRMbudgetc} in RHS to get 
\begin{equation*}
-\mathbb{E}_{\mathbb{P}}\left[ U\left( Z\right) \middle|\mathcal{G}\right]
+\varepsilon \sum_{j=1}^{N}\mathbb{E}_{\mathbb{P}}\left[ (Z^{j})^{+}\middle|%
\mathcal{G}\right] \leq \left( -B+\varepsilon \sup_{n}\left( \sum_{j=1}^{N}%
\mathbb{E}_{\mathbb{P}}\left[ (Z_{n}^{j})^{+}\middle|\mathcal{G}\right]
\right) \right) \,.
\end{equation*}%
From the latter inequality we deduce 
\begin{equation*}
-\mathbb{E}_{\mathbb{P}}\left[ U\left( Z\right) \middle|\mathcal{G}\right]
\leq -B+\varepsilon \sup_{n}\left( \sum_{j=1}^{N}\mathbb{E}_{\mathbb{P}}%
\left[ (Z_{n}^{j})^{+}\middle|\mathcal{G}\right] \right)
\end{equation*}%
which holds ${\mathbb{P}}-$a.s. for all $\varepsilon >0$. By STEP 1, $%
\sup_{n}\left( \sum_{j=1}^{N}\mathbb{E}_{\mathbb{P}}\left[ (Z_{n}^{j})^{+}%
\middle|\mathcal{G}\right] \right) <\infty $ and therefore $-\mathbb{E}_{%
\mathbb{P}}\left[ U\left( Z^{j}\right) \middle|\mathcal{G}\right] \leq -B$.
\end{proof}

%
%
%
%
%
The following result is of technical nature and is used in the proof of the
existence of an optimal allocation $\widehat{Y}$ in Claim \ref{DynRMthmalloc}%
, STEP 2.

\begin{proposition}
\label{DynRMsuminl0gimpliesitslinftyg} Suppose the vectors $X\in (L^\infty(%
\mathcal{F}))^N$ and $Y\in (L^1(\mathcal{F}))^N$ satisfy $\sum_{j=1}^NY^j\in
L^\infty(\mathcal{G})$ and 
\begin{equation*}
\mathbb{E}_\mathbb{P} \left[U(X+Y)\mid\mathcal{G}\right]\geq B\,.
\end{equation*}
Suppose $\sum_{i\in I}Y^i \in L^0(\mathcal{G})$ for some family of indexes $%
I\subseteq\{1,\dots,N\}$. Then $\sum_{i\in I}Y^i \in L^\infty(\mathcal{G})$.
\end{proposition}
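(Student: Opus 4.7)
\textbf{Proof proposal for Proposition \ref{DynRMsuminl0gimpliesitslinftyg}.}

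The plan is to bypass the $I$-versus-$I^c$ split and instead show the stronger statement that $\mathbb{E}_{\mathbb{P}}[\sum_{j=1}^N |X^j+Y^j|\mid\mathcal{G}]\in L^\infty(\mathcal{G})$; once this is established, the desired conclusion will follow from the triangle inequality together with the $\mathcal{G}$-measurability of $S_I:=\sum_{i\in I}Y^i$.

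The first step will be an Inada-driven bound of the form
\begin{equation*}
u_j(x)\leq \varepsilon x^{+} - K x^{-} + C^{(j)}_{K,\varepsilon}\qquad\forall x\in\mathbb{R},
\end{equation*}
valid for any $\varepsilon, K>0$, with $C^{(j)}_{K,\varepsilon}$ a suitable constant. For $x\geq 0$ this reduces to $u_j(x)\leq\varepsilon x+C$, which holds because $u_j(x)/x\to 0$ at $+\infty$; for $x\leq 0$ it reduces to $u_j(x)\leq Kx+C$, which holds for $x$ sufficiently negative by $u_j(x)/x\to+\infty$, and on $[x_K,0]$ follows from $u_j\leq 0$ together with an adjustment of the additive constant. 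Summing over $j$ and invoking the global upper bound $\Lambda\leq C_\Lambda$, we obtain
\begin{equation*}
U(X+Y)\leq \varepsilon P - K N + \mathrm{const},\qquad P:=\textstyle\sum_j(X^j+Y^j)^+,\quad N:=\sum_j(X^j+Y^j)^-,
\end{equation*}
both $P,N\in L^1(\mathcal{F})$ since $X\in(L^\infty)^N$ and $Y\in(L^1)^N$.

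Next I would take $\mathcal{G}$-conditional expectations. The hypothesis $\mathbb{E}_{\mathbb{P}}[U(X+Y)\mid\mathcal{G}]\geq B$ then yields
\begin{equation*}
\varepsilon\,\mathbb{E}_{\mathbb{P}}[P\mid\mathcal{G}]-K\,\mathbb{E}_{\mathbb{P}}[N\mid\mathcal{G}]\geq B-\mathrm{const}.
\end{equation*}
Observing that $\mathbb{E}_{\mathbb{P}}[P\mid\mathcal{G}]-\mathbb{E}_{\mathbb{P}}[N\mid\mathcal{G}]=\sum_j\mathbb{E}_{\mathbb{P}}[X^j\mid\mathcal{G}]+\sum_j Y^j$ belongs to $L^\infty(\mathcal{G})$ with norm bounded by $M:=N\|X\|_\infty+\|\sum_j Y^j\|_\infty$, I can substitute $\mathbb{E}_{\mathbb{P}}[P\mid\mathcal{G}]\leq\mathbb{E}_{\mathbb{P}}[N\mid\mathcal{G}]+M$ above and choose $\varepsilon<K$ (say $\varepsilon=1$, $K=2$) to conclude
\begin{equation*}
(\varepsilon-K)\,\mathbb{E}_{\mathbb{P}}[N\mid\mathcal{G}]\geq B-\mathrm{const}-\varepsilon M,
\end{equation*}
whence $\mathbb{E}_{\mathbb{P}}[N\mid\mathcal{G}]\in L^\infty(\mathcal{G})$ and consequently $\mathbb{E}_{\mathbb{P}}[P\mid\mathcal{G}]\in L^\infty(\mathcal{G})$, giving $\mathbb{E}_{\mathbb{P}}[\sum_j|X^j+Y^j|\mid\mathcal{G}]\in L^\infty(\mathcal{G})$.

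Finally, using $|Y^i|\leq|X^i|+|X^i+Y^i|$ and the $\mathcal{G}$-measurability of $S_I$,
\begin{equation*}
|S_I|=\mathbb{E}_{\mathbb{P}}\bigl[|S_I|\mid\mathcal{G}\bigr]\leq \mathbb{E}_{\mathbb{P}}\Bigl[\textstyle\sum_{i\in I}|Y^i|\mid\mathcal{G}\Bigr]\leq N\|X\|_\infty+\mathbb{E}_{\mathbb{P}}\Bigl[\sum_j|X^j+Y^j|\mid\mathcal{G}\Bigr]\in L^\infty(\mathcal{G}),
\end{equation*}
which completes the proof. The main obstacle is isolating a single linear upper bound on $u_j$ whose $x^+$ and $x^-$ coefficients can be tuned independently; once $\varepsilon<K$ is available, the algebra forces $\mathbb{E}_{\mathbb{P}}[N\mid\mathcal{G}]$ into $L^\infty(\mathcal{G})$, and the rest is routine.
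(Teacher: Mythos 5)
Your proof is correct, but it takes a genuinely different route from the paper's. The paper argues by contradiction: it sets $A_H=\{\sum_{i\in I}Y^i<-H\}\in\mathcal{G}$, uses the one-parameter linear bound of Lemma \ref{DynRMlemmacontrolwithline} (i) together with the boundedness of $\sum_{j}(X^j+Y^j)$, extracts on a positive-probability subset of $A_H$ a single component $Y^k$, $k\in I$, with $Y^k<-H/(N+1)$, and lets $H\to\infty$ to contradict the boundedness of $B$; the upper tail $\{\sum_{i\in I}Y^i>H\}$ is then handled by running the same argument on $I^c$, whose partial sum is also $\mathcal{G}$-measurable because $\sum_{j}Y^j\in L^\infty(\mathcal{G})$. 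You instead prove the stronger quantitative statement $\mathbb{E}_{\mathbb{P}}\left[\sum_{j}|X^j+Y^j|\,\middle|\,\mathcal{G}\right]\in L^\infty(\mathcal{G})$, using a two-parameter bound $u_j(x)\le\varepsilon x^+-Kx^-+C$ with independently tunable slopes (this is where the Inada conditions enter directly, refining the single-slope estimate of Lemma \ref{DynRMlemmacontrolwithline}), taking conditional expectations, and exploiting that the difference of the conditional positive and negative parts equals $\sum_j\mathbb{E}_{\mathbb{P}}[X^j\mid\mathcal{G}]+\sum_jY^j\in L^\infty(\mathcal{G})$ to force both conditional parts into $L^\infty$; the conclusion for any $\mathcal{G}$-measurable partial sum then drops out from $|S_I|=\mathbb{E}_{\mathbb{P}}[|S_I|\mid\mathcal{G}]$. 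Your route is direct (no contradiction, no component-selection step, no separate treatment of the two tails) and yields more, namely a uniform conditional $L^1$ bound on all components given $\mathcal{G}$, whereas the paper's proof gets by with the already-established Lemma \ref{DynRMlemmacontrolwithline} (i) and no new estimate on $U$. Two small polish points: rename the negative-part sum (your $N$ clashes with the number of agents), and remark that your upper bound makes $(U(X+Y))^+$ integrable, so the conditional expectation you manipulate is well defined independently of the convention $U(X+Y)\in L^1$ built into the hypothesis.
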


\begin{proof}
Set $A_{H}:=\{\sum_{i\in I}Y^{i}<-H\}\in \mathcal{G}$ for $H>0$ and suppose $%
{\mathbb{P}}(A_{H})>0$ for all $H>0$. Then we have by Lemma \ref%
{DynRMlemmacontrolwithline} (i) and $\mathbb{E}_{\mathbb{P}}\left[ U\left(
X+Y\right) \middle|\mathcal{G}\right] \geq B$ that 
\begin{equation}
B1_{A_{H}}\leq a\left( \sum_{j=1}^{N}X^{j}+\sum_{j=1}^{N}Y^{j}\right)
1_{A_{H}}-a\left( \sum_{j=1}^{N}(X^{j}+Y^{j})^{-}\right)
1_{A_{H}}+b1_{A_{H}}.  \label{DynRMcontrolline1}
\end{equation}%
At the same time from $\sum_{j=1}^{N}Y^{j}\in L^{\infty }(\mathcal{G})$ we
must have for some index $k\in I$ (depending on $H$) that $%
A_{H}^{k}:=\{Y^{k}<-\frac{1}{N+1}H\}\cap A_{H}\subseteq A_{H}$ satisfies ${%
\mathbb{P}}(A_{H}^{k})>0$ (otherwise we would get that $\sum_{i\in
I}Y^{i}\geq -\frac{N}{N+1}H$ on $A_{H}$, which is a contradiction). From %
\eqref{DynRMcontrolline1} and $H$ big enough we also have 
\begin{equation*}
\begin{split}
B1_{A_{H}^{k}}& \leq a\left( \sum_{j=1}^{N}X^{j}+\sum_{j=1}^{N}Y^{j}\right)
1_{A_{H}^{k}}+a(-(X^{k}+Y^{k})^{-})1_{A_{H}^{k}}+b1_{A_{H}^{k}} \\
& \leq a\left( \sum_{j=1}^{N}X^{j}+\sum_{j=1}^{N}Y^{j}\right)
1_{A_{H}^{k}}+a(-(\left\Vert X^{k}\right\Vert _{\infty
}+Y^{k})^{-})1_{A_{H}^{k}}+b1_{A_{H}^{k}} \\
& \leq a\left( \left\Vert \sum_{j=1}^{N}X^{j}+\sum_{j=1}^{N}Y^{j}\right\Vert
_{\infty }\right) 1_{A_{H}^{k}}+a\left( \left\Vert X^{k}\right\Vert _{\infty
}-\frac{H}{N+1}\right) 1_{A_{H}^{k}}+b1_{A_{H}^{k}}\,.
\end{split}%
\end{equation*}%
As $B$ is bounded, for an even bigger $H$ in this inequality, we get a
contradiction.

Now set $B_{H}:=\{\sum_{i\in I}Y^{i}>H\}$. Assume that ${\mathbb{P}}%
(B_{H})>0 $ for all $H>0$. Then from $\sum_{j=1}^{N}Y^{j}\in L^{\infty }(%
\mathcal{G})$ we get that 
\begin{equation*}
{\mathbb{P}}\left( \left\{ \sum_{i\notin
I}Y^{i}<\sum_{j=1}^{N}Y^{j}-H\right\} \right) >0\,.
\end{equation*}%
The argument in the first part of the proof can then be replicated, since $%
\sum_{i\notin I}Y^{i}\in L^{0}(\mathcal{G})$, yielding a contradiction.
\end{proof}

\subsection{Proofs for Section \protect\ref{Sec2}}

\label{DynRMsecproofstatic}%

\begin{proof}[Proof of Theorem \protect\ref{DynRMthmdual}]
In Item i) we consider $L_{\mathcal{F}}=L^{\infty }(\Omega ,\mathcal{F},{%
\mathbb{P}})$, $L^{\ast } = (L^{1}(\Omega ,\mathcal{F},{\mathbb{P}}))^{N}.$
We denote by $\text{ba}_{1}$ the set of $N$-dimensional vectors of finitely
additive functionals on $\mathcal{F}$ taking values in $[0,1]$ and taking
value $1$ on $\Omega $. Applying the Namioka-Klee Theorem in \cite{bfnam}
together with a standard argument regarding the monetary property, we see
that%
\begin{equation}
\rho _{0}(X)=\max_{\mu \in \text{ba}_{1}}\left( \sum_{j=1}^{N}\mu
^{j}(-X^{j})-\rho _{0}^{\ast }(-\mu )\right) .
\label{DynRMdualrepstaticnamioka}
\end{equation}%
We now follow the lines of \cite{FollmerSchied2}, Theorem 4.22 and Lemma
4.23.

Take any optimum $\widehat{\mu }=[\widehat{\mu }^{1},\dots ,\widehat{\mu }%
^{N}]$ in the dual representation \eqref{DynRMdualrepstaticnamioka}, so that 
$\rho _{0}^{\ast }(-\widehat{\mu })<+\infty ,$ and select a real number $%
c>-\rho _{0}(0)$ big enough so that $c\geq \rho _{0}^{\ast }(-\widehat{\mu }%
) $. Take any sequence of sets $(A_{n})_{n}$ in $\mathcal{F}$ increasing to $%
\Omega $. We claim that $\widehat{\mu }^{k}(A_{n})\rightarrow _{n}1$ for all 
$k\in \{1,\dots ,N\}$, which allows us to conclude that each $\widehat{\mu }%
^{k}$ is $\sigma $-additive. Hence any optimum of %
\eqref{DynRMdualrepstaticnamioka} belongs to $\mathscr{Q}$, which as a
consequence can replace $\text{ba}_{1}$ in \eqref{DynRMdualrepstaticnamioka}%
, obtaining thus the thesis. To prove the claim fix any $k\in \{1,\dots ,N\}$%
. By definition of $\rho _{0}^{\ast }$ 
\begin{equation*}
c\geq \rho _{0}^{\ast }(-\widehat{\mu })\geq \widehat{\mu }(-\lambda
1_{A_{n}}e^{k})-\rho _{0}(\lambda 1_{A_{n}}e^{k}),
\end{equation*}%
which implies 
\begin{equation*}
\widehat{\mu }(1_{A_{n}}e^{k})=\widehat{\mu }^{k}(A_{n})\geq \frac{1}{%
\lambda }\left( -c-\rho _{0}(\lambda 1_{A_{n}}e^{k})\right) .
\end{equation*}%
Now using continuity from below of $\rho _{0}$ we deduce that for each $%
\lambda >0$:%
\begin{equation*}
\liminf_{n}\left( \widehat{\mu }^{k}(A_{n})\right) \geq \lim_{n}\frac{1}{%
\lambda }\left( -c-\rho _{0}(\lambda 1_{A_{n}}e^{k}\right) =\frac{1}{\lambda 
}\left( -c-\rho _{0}(\lambda e^{k})\right) \overset{\text{Monetary prop.}}{=}%
1-\frac{c+\rho _{0}(0)}{\lambda }\,.
\end{equation*}%
Letting $\lambda \rightarrow +\infty $ we see that 
\begin{equation*}
\widehat{\mu }^{k}(A_{n})\rightarrow _{n}1\,.
\end{equation*}%
The $\sigma ((L^{\infty }(\mathcal{F}))^{N},(L^{1}(\mathcal{F}))^{N})$%
\textbf{-}lower semicontinuity of $\rho _{0}$ follows directly from (\ref%
{DynRMdualrep}) and continuity from above is a consequence of the $\sigma
((L^{\infty }(\mathcal{F}))^{N},(L^{1}(\mathcal{F}))^{N})$\textbf{-}lower
semicontinuity (see \cite{bfnam}).

The proof of Item ii) is a simple consequence of the Namioka-Klee Theorem in 
\cite{bfnam} applied to the dual system $(L_{\mathcal{F}},L^{\ast }).$ In
particular, the thesis follows from Lemma 7 \cite{bfnam} and the application
of the monetary property.
\end{proof}

%


\subsection{Proofs for Section \protect\ref{Sec3}}

\label{DynRMsecauxildualreprcond} %

\begin{proof}[Proof of Theorem \protect\ref{DynRMcorcashadd}]
Set $\rho _{0}(X):=\mathbb{E}_{\mathbb{P}}\left[ \rho _{\mathcal{G}}\left(
X\right) \right] $ and let%
\begin{equation*}
\gamma (\mathbb{Q}):=\esssup_{X\in L_{\mathcal{F}}}\left\{ \sum_{j=1}^{N}%
\mathbb{E}_{\mathbb{Q}^{j}}\left[ -X^{j}\middle|\mathcal{G}\right] -\rho _{%
\mathcal{G}}\left( X\right) \right\} :=\rho _{\mathcal{G}}^{\ast }\left( -%
\frac{\mathrm{d}{\mathbb{Q}}}{\mathrm{d}{\mathbb{P}}}\right) \text{,\quad }{%
\mathbb{Q}}\in \mathscr{Q}_{\mathcal{G}},
\end{equation*}%
where $\rho _{\mathcal{G}}^{\ast }$ was introduced in (\ref%
{DynRmdefdualgeneral}). From the definition of $\gamma (\mathbb{Q})$ we
immediately deduce 
\begin{equation}
\rho _{\mathcal{G}}\left( X\right) \geq \esssup_{{\mathbb{Q}}\in \mathscr{Q}%
_{\mathcal{G}}}\left\{ \sum_{j=1}^{N}\mathbb{E}_{\mathbb{Q}^{j}}\left[ -X^{j}%
\middle|\mathcal{G}\right] -\gamma (\mathbb{Q})\right\} \geq \left\{
\sum_{j=1}^{N}\mathbb{E}_{\mathbb{\widehat{Q}}^{j}}\left[ -X^{j}\middle|%
\mathcal{G}\right] -\gamma (\mathbb{\widehat{Q}})\right\}  \label{newgamma}
\end{equation}%
for any $\mathbb{\widehat{Q}}\in \mathscr{Q}_{\mathcal{G}}$.\newline
\textbf{STEP 1}: $\rho _{\mathcal{G}}$ has the local property, i.e. for any $%
A\in \mathcal{G}$ and $X\in L_{\mathcal{F}}$ $\rho _{\mathcal{G}}\left(
X\right) 1_{A}=\rho _{\mathcal{G}}\left( 1_{A}X\right) 1_{A}$.

Observe that 
\begin{align*}
\rho _{\mathcal{G}}\left( 1_{A}X\right) &\overset{\eqref{DynRMccnvx}}{\leq }%
\rho _{\mathcal{G}}\left( X\right) 1_{A}+\rho _{\mathcal{G}}\left( 0\right)
1_{A^{c}}\overset{\eqref{DynRMccnvx}}{\leq }1_{A}\left( \rho _{\mathcal{G}%
}\left( 1_{A}X\right) 1_{A}+\rho _{\mathcal{G}}\left( X1_{A^{c}}\right)
1_{A^{c}}\right) +\rho _{\mathcal{G}}\left( 0\right) 1_{A^{c}} \\
&=\rho _{\mathcal{G}}\left( 1_{A}X\right) 1_{A}+\rho _{\mathcal{G}}\left(
0\right) 1_{A^{c}}\,,
\end{align*}%
then multiply by $1_{A}$. The local property of $\rho _{\mathcal{G}}$ is
equivalent to: 
\begin{equation}
\rho _{\mathcal{G}}\left( X1_{A}+Z1_{A^{c}}\right) =\rho _{\mathcal{G}%
}\left( X\right) 1_{A}+\rho _{\mathcal{G}}\left( Z\right) 1_{A^{c}}
\label{PLOC}
\end{equation}%
if $A\in \mathcal{G}$ and $X,Z\in L_{\mathcal{F}}$.\newline
\medskip \textbf{STEP 2}: for every ${\mathbb{Q}}\in \mathscr{Q}_{\mathcal{G}%
}$ the set $\{\sum_{j=1}^{N}\mathbb{E}_{\mathbb{Q}^{j}}\left[ -X^{j}\middle|%
\mathcal{G}\right] -\rho _{\mathcal{G}}\left( X\right) ,X\in L_{\mathcal{F}%
}\}$ is upward directed.

For $X,Z\in L_{\mathcal{F}}$ we set 
\begin{equation*}
\xi _{X}:=\sum_{j=1}^{N}\mathbb{E}_{\mathbb{Q}^{j}}\left[ -X^{j}\middle|%
\mathcal{G}\right] -\rho _{\mathcal{G}}\left( X\right) ,\,\,\,\,\,\,\,\,\xi
_{Z}:=\sum_{j=1}^{N}\mathbb{E}_{\mathbb{Q}^{j}}\left[ -Z^{j}\middle|\mathcal{%
G}\right] -\rho _{\mathcal{G}}\left( Z\right) \,,
\end{equation*}%
\begin{equation*}
A:=\left\{ \xi _{X}\geq \xi _{Z}\right\} \in \mathcal{G},\,\,\,\,\,\,\,%
\,W:=X1_{A}+Z1_{A^{c}}\,.
\end{equation*}%
As explained in Remark \ref{DynRMremdecom}, by the decomposability of $L_{%
\mathcal{F}}$ we get $W\in L_{\mathcal{F}}$. 
Then one can check, using \eqref{PLOC}, that 
\begin{equation*}
\sum_{j=1}^{N}\mathbb{E}_{\mathbb{Q}^{j}}\left[ -W^{j}\middle|\mathcal{G}%
\right] -\rho _{\mathcal{G}}\left( W\right) =\xi _{X}1_{A}+\xi
_{Z}1_{A^{c}}=\max (\xi _{X},\xi _{Y})
\end{equation*}%
proving the claim.

\textbf{STEP 3: }For ${\mathbb{Q}}\in \mathscr{Q}_{\mathcal{G}}$, $\gamma
_{0}(\mathbb{Q})=\mathbb{E}_{\mathbb{P}}\left[ \gamma (\mathbb{Q})\right] $
where we set 
\begin{equation*}
\gamma _{0}(\mathbb{Q}):=\sup_{X\in L_{\mathcal{F}}}\left( \sum_{j=1}^{N}%
\mathbb{E}_{\mathbb{Q}^{j}}\left[ -X^{j}\right] -\rho _{0}(X)\right) =\rho
_{0}^{\ast }\left( -\frac{\mathrm{d}{\mathbb{Q}}}{\mathrm{d}{\mathbb{P}}}%
\right) ,\quad {\mathbb{Q}}\in \mathscr{Q}.
\end{equation*}%
Recall that $\mathbb{Q}^{j}=\mathbb{P}$ on $\mathcal{G}$ for all ${\mathbb{Q}%
}\in \mathscr{Q}_{\mathcal{G}}$. By Step 2 we deduce:%
\begin{align*}
\mathbb{E}_{\mathbb{P}}\left[ \gamma (\mathbb{Q})\right] & =\mathbb{E}_{%
\mathbb{P}}\left[ \esssup_{X\in L_{\mathcal{F}}}\left\{ \sum_{j=1}^{N}%
\mathbb{E}_{\mathbb{Q}^{j}}\left[ -X^{j}\middle|\mathcal{G}\right] -\rho _{%
\mathcal{G}}\left( X\right) \right\} \right] \\
& =\sup_{X\in L_{\mathcal{F}}}\mathbb{E}_{\mathbb{P}}\left[ \sum_{j=1}^{N}%
\mathbb{E}_{\mathbb{Q}^{j}}\left[ -X^{j}\middle|\mathcal{G}\right] -\rho _{%
\mathcal{G}}\left( X\right) \right] =\sup_{X\in L_{\mathcal{F}}}\left\{
\sum_{j=1}^{N}\mathbb{E}_{\mathbb{Q}^{j}}\left[ \mathbb{E}_{\mathbb{Q}^{j}}%
\left[ -X^{j}\middle|\mathcal{G}\right] \right] -\mathbb{E}_{\mathbb{P}}%
\left[ \rho _{\mathcal{G}}\left( X\right) \right] \right\} \\
& =\sup_{X\in L_{\mathcal{F}}}\left\{ \sum_{j=1}^{N}\mathbb{E}_{\mathbb{Q}%
^{j}}\left[ -X^{j}\right] -\rho _{0}\left( X\right) \right\} =\gamma _{0}(%
\mathbb{Q})\,.
\end{align*}%
Notice that we have also shown: $\rho _{0}^{\ast }\left( -\frac{\mathrm{d}{%
\mathbb{Q}}}{\mathrm{d}{\mathbb{P}}}\right) =\mathbb{E}_{\mathbb{P}}\left[
\rho _{\mathcal{G}}^{\ast }\left( -\frac{\mathrm{d}{\mathbb{Q}}}{\mathrm{d}{%
\mathbb{P}}}\right) \right] .$

\textbf{STEP 4} Dual Representation and attainment of the supremum.

As in the univariate case, applying the monetary property one may show that $%
\gamma _{0}(\mathbb{Q})=+\infty $ if ${\mathbb{Q}}\in \mathscr{Q}\backslash %
\mathscr{Q}_{\mathcal{G}}.$ Then by nice representability of $\rho _{0}$ 
\begin{align*}
\mathbb{E}_{\mathbb{P}}\left[ \rho _{\mathcal{G}}(X)\right] &=\rho
_{0}(X)=\max_{{\mathbb{Q}}\in \mathscr{Q}}\left( \sum_{j=1}^{N}\mathbb{E}_{{%
\mathbb{Q}}_{j}}\left[ -X^{j}\right] -\gamma _{0}(\mathbb{Q})\right) =\max_{{%
\mathbb{Q}}\in \mathscr{Q}_{\mathcal{G}}}\left( \sum_{j=1}^{N}\mathbb{E}_{{%
\mathbb{Q}}_{j}}\left[ -X^{j}\right] -\gamma _{0}(\mathbb{Q})\right) \\
&=\sup_{{\mathbb{Q}}\in \mathscr{Q}_{\mathcal{G}}}\left( \sum_{j=1}^{N}%
\mathbb{E}_{\mathbb{Q}^{j}}\left[ \mathbb{E}_{\mathbb{Q}^{j}}\left[ -X^{j}%
\middle|\mathcal{G}\right] \right] -\mathbb{E}_{\mathbb{P}}\left[ \gamma (%
\mathbb{Q})\right] \right) \\
&=\sup_{{\mathbb{Q}}\in \mathscr{Q}_{\mathcal{G}}}\mathbb{E}_{\mathbb{P}}%
\left[ \sum_{j=1}^{N}\mathbb{E}_{\mathbb{Q}^{j}}\left[ -X^{j}\middle|%
\mathcal{G}\right] -\gamma (\mathbb{Q})\right] \leq \mathbb{E}_{\mathbb{P}}%
\left[ \esssup_{{\mathbb{Q}}\in \mathscr{Q}_{\mathcal{G}}}\left(
\sum_{j=1}^{N}\mathbb{E}_{\mathbb{Q}^{j}}\left[ -X^{j}\middle|\mathcal{G}%
\right] -\gamma (\mathbb{Q})\right) \right] \,.
\end{align*}%
From this inequality and (\ref{newgamma}) we then deduce the dual
representation:%
\begin{equation}  \label{DynRmneweqrhog}
\rho _{\mathcal{G}}(X)=\esssup_{{\mathbb{Q}}\in \mathscr{Q}_{\mathcal{G}%
}}\left( \sum_{j=1}^{N}\mathbb{E}_{\mathbb{Q}^{j}}\left[ -X^{j}\middle|%
\mathcal{G}\right] -\gamma (\mathbb{Q})\right).
\end{equation}%
Similarly, for the vector $\mathbb{\widehat{Q}}\in \mathscr{Q}_{\mathcal{G}}$
attaining the maximum in $\rho _{0}(X)$ we have:

\begin{equation*}
\mathbb{E}_{\mathbb{P}}\left[ \rho _{\mathcal{G}}(X)\right] =\mathbb{E}_{%
\mathbb{P}}\left[ \sum_{j=1}^{N}\mathbb{E}_{\mathbb{\widehat{Q}}^{j}}\left[
-X^{j}\middle|\mathcal{G}\right] -\gamma (\mathbb{\widehat{Q}})\right]
\end{equation*}%
and then again from (\ref{newgamma}) we deduce 
\begin{equation*}
\rho _{\mathcal{G}}(X)=\sum_{j=1}^{N}\mathbb{E}_{\mathbb{\widehat{Q}}^{j}}%
\left[ -X^{j}\middle|\mathcal{G}\right] -\gamma (\mathbb{\widehat{Q}}).
\end{equation*}%
\textbf{STEP 5 } We prove that $\alpha ({\mathbb{Q}})=\gamma (\mathbb{Q})$
for all ${\mathbb{Q}}\in \mathscr{Q}_{\mathcal{G}}$, where $\alpha $ is
defined in (\ref{DynRMequatintro2}). Then \eqref{DynRMdualreprgeneralcor}
follows from \eqref{DynRmneweqrhog}. Applying the monetary property of $\rho
_{\mathcal{G}},$ for each ${\mathbb{Q}}\in \mathscr{Q}_{\mathcal{G}}$ 
\begin{align*}
\gamma (\mathbb{Q})& =\esssup_{X\in L_{\mathcal{F}}}\left\{ \sum_{j=1}^{N}%
\mathbb{E}_{\mathbb{Q}^{j}}\left[ -X^{j}\middle|\mathcal{G}\right] -\rho _{%
\mathcal{G}}\left( X\right) \right\} =\esssup_{X\in L_{\mathcal{F}}}\left(
\sum_{j=1}^{N}\mathbb{E}_{{\mathbb{Q}}^{j}}\left[ \left( -X^{j}-\frac{1}{N}%
\rho _{\mathcal{G}}\left( X\right) \right) \middle|\mathcal{G}\right] \right)
\\
& \leq \esssup_{Z\in L_{\mathcal{F}},\rho _{\mathcal{G}}\left( Z\right) \leq
0}\left( \sum_{j=1}^{N}\mathbb{E}_{{\mathbb{Q}}^{j}}\left[ -Z^{j}\middle|%
\mathcal{G}\right] \right) =\alpha ({\mathbb{Q}}) \\
& \leq \esssup_{Z\in L_{\mathcal{F}},\rho _{\mathcal{G}}\left( Z\right) \leq
0}\sum_{j=1}^{N}\left( \mathbb{E}_{{\mathbb{Q}}^{j}}\left[ -Z^{j}\middle|%
\mathcal{G}\right] -\frac{1}{N}\rho _{\mathcal{G}}\left( Z\right) \right)
\leq \esssup_{Z\in L_{\mathcal{F}}}\left( \sum_{j=1}^{N}\mathbb{E}_{{\mathbb{%
Q}}^{j}}\left[ -Z^{j}\middle|\mathcal{G}\right] -\rho _{\mathcal{G}}\left(
Z\right) \right) =\gamma (\mathbb{Q})\,.
\end{align*}

\textbf{STEP 6} Continuity from above of $\rho _{\mathcal{G}}$

Continuity from above is an easy consequence of the dual representation (\ref%
{DynRMdualreprgeneralcor}) just proved. Take $X_{n}\downarrow
X\Leftrightarrow -X_{n}\uparrow -X$ and observe that:%
\begin{align*}
\rho _{\mathcal{G}}\left( X\right) & =\esssup_{{\mathbb{Q}}\in \mathscr{Q}_{%
\mathcal{G}}}\left( \sum_{j=1}^{N}\mathbb{E}_{\mathbb{Q}^{j}}\left[ -X^{j}%
\middle|\mathcal{G}\right] -\alpha ({\mathbb{Q}})\right) \\
& \overset{(\text{cMON})}{=}\esssup_{{\mathbb{Q}}\in \mathscr{Q}_{\mathcal{G}%
}}\left( \esssup_{n}\sum_{j=1}^{N}\mathbb{E}_{\mathbb{Q}^{j}}\left[
-X_{n}^{j}\middle|\mathcal{G}\right] -\alpha ({\mathbb{Q}})\right)
\end{align*}
\begin{align*}
& \overset{\text{Prop.}\ref{DynRMrefesssup}}{=}\esssup_{n}\esssup_{{\mathbb{Q%
}}\in \mathscr{Q}_{\mathcal{G}}}\left( \sum_{j=1}^{N}\mathbb{E}_{\mathbb{Q}%
^{j}}\left[ -X_{n}^{j}\middle|\mathcal{G}\right] -\alpha ({\mathbb{Q}}%
)\right) \\
& =\sup_{n}\rho _{\mathcal{G}}\left( X_{n}\right) =\lim_{n}\rho _{\mathcal{G}%
}\left( X_{n}\right) \leq \rho _{\mathcal{G}}\left( X\right) .
\end{align*}

%
%
%
\end{proof}

\subsection{Proofs for Section \protect\ref{Sec5}}

The proof of Proposition \ref{DynRMpropoptisint} needs some preparation.
Recall the definition of $\rho _{\mathcal{G}}^{\ast }$ in (\ref%
{DynRmdefdualgeneral}).

\begin{proposition}
\label{DynRMpropextension} There exists an extension, $\rho_0^\Phi$, of $%
\rho_0(\cdot):=\mathbb{E}_\mathbb{P} \left[\rho_{\mathcal{G}%
}\left(\cdot\right)\right]$ to $M^\Phi$ which is convex, nondecreasing and $%
\left\| \cdot\right\|_\Phi$-continuous.
\end{proposition}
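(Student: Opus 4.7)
The plan is to define $\rho_0^\Phi$ through the dual representation already available for $\rho_0 := \mathbb{E}_\mathbb{P}[\rho_\mathcal{G}(\cdot)]$ on $(L^\infty(\mathcal{F}))^N$, and then to extract the $\|\cdot\|_\Phi$-continuity from the Extended Namioka--Klee Theorem of \cite{bfnam}. A preliminary step is to verify that $\rho_0$ is nicely representable on $(L^\infty(\mathcal{F}))^N$ in the sense of Theorem \ref{DynRMthmdual}(i). This follows quickly from Claim \ref{DynRMpropconbelow}: since $\rho_\mathcal{G}$ is continuous from below and $\rho_\mathcal{G}(X_n) \in L^\infty(\mathcal{G})$ is dominated by $\rho_\mathcal{G}(X_1)$ along $X_n \uparrow X$, the monotone convergence theorem upgrades pointwise convergence to continuity from below of the integrated functional $\rho_0$. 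In particular $\rho_0$ admits the dual representation \eqref{DynRMdualrep}--\eqref{rho**}.

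Next, for every $X \in M^\Phi$ I would set
\begin{equation*}
\rho_0^\Phi(X) \,:=\, \sup_{\mathbb{Q} \in \mathrm{dom}(\rho_0^*)} \left( \sum_{j=1}^N \mathbb{E}_{\mathbb{Q}^j}[-X^j] - \rho_0^*\!\left(-\tfrac{\mathrm{d}\mathbb{Q}}{\mathrm{d}\mathbb{P}}\right) \right),
\end{equation*}
with $\rho_0^*$ the convex conjugate from \eqref{rho**}, and check the requested properties one at a time. Convexity is immediate as a supremum of affine functionals, and the monotonicity type of $\rho_0$ is inherited because each density $\mathrm{d}\mathbb{Q}^j/\mathrm{d}\mathbb{P}$ is nonnegative. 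Agreement with $\rho_0$ on $(L^\infty(\mathcal{F}))^N$ is a direct consequence of nice representability. The pairings $\mathbb{E}_\mathbb{P}[X^j \, \mathrm{d}\mathbb{Q}^j/\mathrm{d}\mathbb{P}]$ are well-defined since every $\mathbb{Q} \in \mathrm{dom}(\rho_0^*)$ has density in the Orlicz dual $L^{\Phi^\ast}$, so the multivariate Young inequality $|X^j Y^j| \leq \Phi_j(|X^j|) + \Phi_j^\ast(|Y^j|)$, together with Assumption \ref{DynRMA2} (which factorizes $L^\Phi$ and therefore $L^{\Phi^\ast}$ componentwise), delivers integrability. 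For the continuity, $M^\Phi$ is a Banach lattice with order-continuous norm, and $\rho_0^\Phi$ is a proper, convex, monotone, finite-valued functional, so the Extended Namioka--Klee Theorem from \cite{bfnam} yields $\|\cdot\|_\Phi$-continuity automatically.

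The main technical obstacle is the inclusion $\mathrm{dom}(\rho_0^\ast) \subseteq \{\mathbb{Q} : \mathrm{d}\mathbb{Q}/\mathrm{d}\mathbb{P} \in L^{\Phi^\ast}\}$, which underpins both the finiteness of $\rho_0^\Phi$ on the whole of $M^\Phi$ and the applicability of Namioka--Klee. To establish it I would exploit the structural link $\Phi(y) = U(0) - U(-y)$ between the aggregator $U$ and the Orlicz function $\Phi$, combined with Standing Assumption I (boundedness from above and asymptotic satiability of $\Lambda$) and the growth controls on $U$ collected in Lemma \ref{DynRMlemmacontrolwithline}. Testing $\rho_0^\ast(-\mathrm{d}\mathbb{Q}/\mathrm{d}\mathbb{P})$ against vectors of the form $-\lambda Z$ chosen via the Fenchel--Young equality for $(\Phi,\Phi^\ast)$ should yield an a priori bound on $\mathbb{E}_\mathbb{P}[\Phi^\ast(\mathrm{d}\mathbb{Q}/\mathrm{d}\mathbb{P})]$ in terms of $\rho_0^\ast(-\mathrm{d}\mathbb{Q}/\mathrm{d}\mathbb{P})$, closing the argument; once this growth estimate is in hand, everything else in the plan is essentially mechanical.
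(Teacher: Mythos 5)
Your plan is structurally quite different from the paper's, and it has a genuine gap. The paper does \emph{not} extend $\rho_0$ via the dual formula: it simply defines $\rho_0^\Phi$ on $M^\Phi$ by the same primal infimum already used for $\rho_0$, namely
$\rho_0^\Phi(X):=\inf\{\sum_{j=1}^N\mathbb{E}_{\mathbb{P}}[Y^j]\mid Y\in\mathscr{C}_\mathcal{G},\ \mathbb{E}_{\mathbb{P}}[U(X+Y)|\mathcal{G}]\geq B\}$,
and reads off finiteness, convexity, and monotonicity directly: $<+\infty$ because the constraint set is nonempty, $>-\infty$ via Lemma \ref{DynRMlemmacontrolwithline}(ii), and then the Extended Namioka--Klee Theorem gives $\|\cdot\|_\Phi$-continuity because $M^\Phi$ has order-continuous norm. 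No duality is needed at this stage.

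The difficulty with your dual-formula definition is twofold. First, you must already know that every $\mathbb{Q}$ in the effective dual domain has density in $K_\Phi$ (or $L^{\Phi^*}$) merely for the pairings $\sum_j\mathbb{E}_{\mathbb{Q}^j}[-X^j]$ to make sense for unbounded $X\in M^\Phi$. But in the paper this inclusion is precisely Lemma \ref{DynRMpropint}, and its proof \emph{uses} Proposition \ref{DynRMpropextension} (via norm-continuity of $\rho_0^\Phi$ and local boundedness on a $\|\cdot\|_\Phi$-ball). You are thus proposing to reverse the paper's logical order, and your sketch for establishing the inclusion independently --- ``testing $\rho_0^*$ against vectors selected by Fenchel--Young equality for $(\Phi,\Phi^*)$'' --- is not carried out; in particular those Fenchel--Young selections need not be bounded, and an a priori control of $\mathbb{E}_\mathbb{P}[\Phi^*(\mathrm{d}\mathbb{Q}/\mathrm{d}\mathbb{P})]$ in terms of $\rho_0^*(-\mathrm{d}\mathbb{Q}/\mathrm{d}\mathbb{P})$ is exactly the hard content of Lemma \ref{DynRMpropint}. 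Second, even granting the inclusion, you never argue that the supremum defining $\rho_0^\Phi(X)$ is $<+\infty$ for $X\in M^\Phi\setminus(L^\infty(\mathcal{F}))^N$: a supremum of affine functionals with densities in $L^{\Phi^*}$ can perfectly well be $+\infty$ off the $L^\infty$ cube, and the Extended Namioka--Klee Theorem requires finiteness as a hypothesis, so it cannot be invoked to repair this. The paper's primal definition sidesteps both issues at once, which is why it is the natural route here.
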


\begin{proof}
Observe that because of the downward directness proved in STEP 1 of Claim %
\ref{DynRMthmalloc}, together with (MON), we have 
\begin{equation*}
\rho _{0}(X)=\inf \left\{ \sum_{j=1}^{N}\mathbb{E}_{\mathbb{P}}\left[ Y^{j}%
\right] \mid Y\in \mathscr{C}_{\mathcal{G}},\,\mathbb{E}_{\mathbb{P}}\left[
U(X+Y)\middle|\mathcal{G}\right] \geq B\right\} \,\,\,\,\,X\in (L^{\infty }(%
\mathcal{F}))^{N}\,.
\end{equation*}%
Define now

\begin{equation}
\rho _{0}^{\Phi }(X):=\inf \left\{ \sum_{j=1}^{N}\mathbb{E}_{\mathbb{P}}%
\left[ Y^{j}\right] \mid Y\in \mathscr{C}_{\mathcal{G}},\,\mathbb{E}_{%
\mathbb{P}}\left[ U(X+Y)\middle|\mathcal{G}\right] \geq B\right\}
\,\,\,\,\,X\in M^{\Phi }\,.  \label{DynRMdefr0phi}
\end{equation}%
We easily see that $\rho _{0}^{\Phi }(X)<+\infty $ for every $X\in M^{\Phi }$
(since the set over which we take infima in \eqref{DynRMdefr0phi} is
nonempty). Moreover $\rho _{0}^{\Phi }(X)>-\infty $ since if this were the
case, for a minimizing sequence $(Y_{n})_{n}$ we would have $%
\inf_{n}\sum_{j=1}^{N}\mathbb{E}_{\mathbb{P}}\left[ Y_{n}^{j}\right]
=-\infty $. Now, by Lemma \ref{DynRMlemmacontrolwithline}.(ii)%
\begin{align*}
\mathbb{E}_{\mathbb{P}}\left[ B\right] & \leq \mathbb{E}_{\mathbb{P}}\left[
U(X+Y_{n})\right] \\
& \leq a\sum_{j=1}^{N}\mathbb{E}_{\mathbb{P}}\left[ X^{j}\right]
+a\sum_{j=1}^{N}\mathbb{E}_{\mathbb{P}}\left[ Y_{n}^{j}\right] +b\leq
const+a\sum_{j=1}^{N}\mathbb{E}_{\mathbb{P}}\left[ Y_{n}^{j}\right] ,
\end{align*}%
which gives a contradiction. Clearly, mimicking what we did in the proof of
Claim \ref{DynRMthmalloc} Step 3, we can check that $\rho _{0}^{\Phi }(\cdot
)$ is also convex and nondecreasing. Now by the Extended Namioka-Klee
Theorem in \cite{bfnam}) it is also norm continuous.
\end{proof}

\begin{lemma}
\label{DynRMpropint} For any $Z\in (L^1(\mathcal{F}))^N$ we have that $\rho_%
\mathcal{G}^*(Z)\in L^1(\Omega,\mathcal{G},{\mathbb{P}})$ if and only if $%
\rho_0^*(Z)<+\infty$ and, if any of the two conditions is met, we have $Z\in
K_\Phi$.
\end{lemma}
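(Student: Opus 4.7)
The plan is to split the proof into two parts: first establishing the equivalence $\rho_\mathcal{G}^*(Z) \in L^1(\mathcal{G}) \iff \rho_0^*(Z) < +\infty$ via a scalarization identity between the two conjugates, and then using the Orlicz-space extension from Proposition \ref{DynRMpropextension} together with a truncation argument to deduce $Z \in K_\Phi$.

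For the equivalence, I would first observe that the family $\{\sum_{j=1}^{N}\mathbb{E}_{\mathbb{P}}[X^j Z^j \mid \mathcal{G}] - \rho_\mathcal{G}(X) : X \in L_\mathcal{F}\}$ with $L_\mathcal{F}=(L^\infty(\mathcal{F}))^N$ is upward directed. This follows essentially verbatim from the gluing argument in STEP 2 of the proof of Theorem \ref{DynRMcorcashadd}: given $X, \tilde{X} \in L_\mathcal{F}$, $\mathcal{G}$-decomposability of $(L^\infty(\mathcal{F}))^N$ allows one to form $W := X 1_A + \tilde{X} 1_{A^c}$ on $A := \{\xi_X \geq \xi_{\tilde{X}}\} \in \mathcal{G}$, and the local property of $\rho_\mathcal{G}$ (STEP 1 of that same proof) combined with $\mathcal{G}$-measurability of $1_A$ yields $\xi_W = \max(\xi_X, \xi_{\tilde{X}})$. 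Standard properties of essential suprema applied to upward directed families then give
\begin{equation*}
\mathbb{E}_{\mathbb{P}}[\rho_\mathcal{G}^*(Z)] \;=\; \sup_{X \in L_\mathcal{F}} \left(\sum_{j=1}^N \mathbb{E}_{\mathbb{P}}[X^j Z^j] - \rho_0(X)\right) \;=\; \rho_0^*(Z).
\end{equation*}
Since taking $X = 0$ in the definition of $\rho_\mathcal{G}^*$ gives $\rho_\mathcal{G}^*(Z) \geq -\rho_\mathcal{G}(0) \in L^\infty(\mathcal{G})$ by Claim \ref{DynRMthmalloc}, the negative part of $\rho_\mathcal{G}^*(Z)$ is bounded, and therefore $\rho_\mathcal{G}^*(Z) \in L^1(\mathcal{G})$ if and only if its expectation is finite, i.e., $\rho_0^*(Z) < +\infty$.

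For the assertion $Z \in K_\Phi$, I would assume $\rho_0^*(Z) =: c < +\infty$, fix an arbitrary $X \in M^\Phi$, and define the truncations $W_n^j := (|X^j| \wedge n)\, \operatorname{sgn}(Z^j) \in (L^\infty(\mathcal{F}))^N$, so that $W_n^j Z^j = (|X^j| \wedge n)|Z^j| \geq 0$ and $|W_n^j| \leq |X^j|$. The Fenchel-Young inequality applied to $W_n \in L_\mathcal{F}$, together with the fact that $\rho_0$ and $\rho_0^\Phi$ coincide on $(L^\infty(\mathcal{F}))^N$, yields
\begin{equation*}
\sum_{j=1}^N \mathbb{E}_{\mathbb{P}}[(|X^j| \wedge n)|Z^j|] \;\leq\; c + \rho_0(W_n) \;=\; c + \rho_0^\Phi(W_n).
\end{equation*}
Since $M^\Phi$ has order-continuous norm and $|W_n - |X|\operatorname{sgn}(Z)|$ decreases pointwise to $0$ dominated by $|X| \in M^\Phi$, one has $W_n \to |X|\operatorname{sgn}(Z)$ in $\|\cdot\|_\Phi$, and the norm-continuity of $\rho_0^\Phi$ supplied by Proposition \ref{DynRMpropextension} gives $\rho_0^\Phi(W_n) \to \rho_0^\Phi(|X|\operatorname{sgn}(Z)) < +\infty$. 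Monotone convergence then produces $\sum_j \mathbb{E}_{\mathbb{P}}[|X^j||Z^j|] < +\infty$ for every $X \in M^\Phi$.

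The main obstacle is promoting this integrability from $X \in M^\Phi$ to every $X \in L^\Phi$, as required by the definition of $K_\Phi$. I would bridge this gap by appealing to the multivariate Orlicz-space machinery developed in \cite{DF19}: the characterization of the Köthe dual of $M^\Phi$ as $L^{\Phi^*}$ shows that the integrability against all $X \in M^\Phi$ already forces $Z \in L^{\Phi^*}$, and the multivariate Hölder inequality then delivers $\sum_j |X^j Z^j| \in L^1(\mathbb{P})$ for every $X \in L^\Phi$, concluding $Z \in K_\Phi$.
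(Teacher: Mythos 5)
Your proof is correct and the equivalence half is essentially the paper's argument (replicate STEP 2 and STEP 3 of Theorem \ref{DynRMcorcashadd} with $Z$ in place of $-\frac{\mathrm{d}{\mathbb{Q}}}{\mathrm{d}{\mathbb{P}}}$, then use $\rho_{\mathcal{G}}^*(Z)\geq-\rho_{\mathcal{G}}(0)\in L^\infty(\mathcal{G})$ to reduce $L^1$-membership to finiteness of the expectation). The second half, however, follows a genuinely different route. The paper invokes Aliprantis Theorem 5.43 to conclude that $\rho_0^\Phi$ is bounded on a norm-ball $B_\varepsilon$ of $M^\Phi$, applies Fenchel on $B_\varepsilon\cap(L^\infty(\mathcal{F}))^N$, the sign trick, and a simple-function approximation to obtain $\sup_{X\in B_\varepsilon}\sum_j\mathbb{E}_{\mathbb{P}}[\,|X^jZ^j|\,]<+\infty$, and then reads $Z\in K_\Phi$ directly off \cite{DF19} Proposition 2.5 Item 1. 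You instead fix a single $X\in M^\Phi$, apply Fenchel to the bounded truncations $W_n$, and pass to the limit using order-continuity of $\|\cdot\|_\Phi$ on $M^\Phi$ plus the norm-continuity of $\rho_0^\Phi$ from Proposition \ref{DynRMpropextension}; this yields the pointwise integrability $\sum_j\mathbb{E}_{\mathbb{P}}[\,|X^j||Z^j|\,]<\infty$ for every $X\in M^\Phi$, after which you need the identification of the K\"othe dual of $M^\Phi$ with $L^{\Phi^*}$ (equivalently with $K_\Phi$) to close the gap to $L^\Phi$. What your route buys is avoiding the explicit uniform-on-a-ball estimate and the subsequent $L^\infty$-to-$B_\varepsilon$ approximation; what it costs is that the passage from integrability against all of $M^\Phi$ to membership in $K_\Phi$ hides a closed-graph/uniform-boundedness step inside the appeal to the K\"othe-dual characterization. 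Both variants ultimately lean on the multivariate Orlicz machinery of \cite{DF19}, so the structural dependence is the same.
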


\begin{proof}
Notice first that, for any $Z\in (L^{1}(\mathcal{F}))^{N},$ $\rho _{\mathcal{%
G}}^{\ast }(Z)\geq -\rho _{\mathcal{G}}\left( 0\right) \in L^{\infty
}(\Omega ,\mathcal{G},{\mathbb{P}})$ and $\rho _{0}^{\ast }(Z)\geq -\rho
_{0}\left( 0\right) >-\infty $. As in STEP 3 of Theorem \ref{DynRMcorcashadd}%
, one can show that $\rho _{0}^{\ast }(Z)=\mathbb{E}_{\mathbb{P}}\left[ \rho
_{\mathcal{G}}^{\ast }(Z)\right] $ for any $Z\in (L^{1}(\mathcal{F}))^{N}$.
The first claim then follows. Suppose now $\rho _{\mathcal{G}}^{\ast }(Z)\in
L^{1}(\mathcal{G})$. By \cite{Aliprantis} Theorem 5.43 Item 3 $\rho
_{0}^{\Phi }$ is bounded on a ball $B_{\varepsilon }$ (defined using the
norm $\left\Vert \cdot \right\Vert _{\Phi }$) of $M^{\Phi }$ centered at $0$%
. We have as a consequence 
\begin{equation*}
+\infty >\sup_{X\in B_{\varepsilon }}\left( \rho _{0}^{\Phi }(X)+\rho
_{0}^{\ast }(Z)\right) \,.
\end{equation*}%
Now we use the fact that $\rho _{0}^{\Phi }$, when restricted to $(L^{\infty
}(\mathcal{F}))^{N}$, coincides with $\rho _{0}$, and continuity of $\rho
_{0}^{\Phi }$ (by Proposition \ref{DynRMpropextension}) to see that

\begin{align*}
+\infty&>\sup_{X\in
B_\varepsilon}\left(\rho_0^\Phi(X)+\rho_0^*(Z)\right)\geq\sup_{X\in
B_\varepsilon\cap (L^\infty(\mathcal{F}))^N}\left(\rho_0^\Phi(X)+\rho_0^*(Z)%
\right) \\
&=\sup_{X\in B_\varepsilon\cap (L^\infty(\mathcal{F}))^N}\left(\rho_0(X)+%
\rho_0^*(Z)\right)\geq \sup_{X\in B_\varepsilon\cap (L^\infty(\mathcal{F}%
))^N}\left(\sum_{j=1}^N\mathbb{E}_\mathbb{P} \left[X^jZ^j\right]\right)
\end{align*}
where we used Fenchel inequality to obtain the last inequality. Furthermore,
using the fact that given $Z$, for any $X\in B_\varepsilon\cap (L^\infty(%
\mathcal{F}))^N$ the vector $\widehat{X}$ defined by $\widehat{X}%
^j=sgn(Z^j)\left|X^j\right|$ still belongs to $B_\varepsilon\cap (L^\infty(%
\mathcal{F}))^N$, we have 
\begin{equation*}
\sup_{X\in B_\varepsilon\cap (L^\infty(\mathcal{F}))^N}\left(\sum_{j=1}^N%
\mathbb{E}_\mathbb{P} \left[X^jZ^j\right]\right)=\sup_{X\in
B_\varepsilon\cap (L^\infty(\mathcal{F}))^N}\left(\sum_{j=1}^N\mathbb{E}_%
\mathbb{P} \left[\left|X^jZ^j\right|\right]\right)\,.
\end{equation*}
To conclude, we observe that an approximation with simple functions yields: 
\begin{equation*}
\sup_{X\in B_\varepsilon\cap (L^\infty(\mathcal{F}))^N}\left(\sum_{j=1}^N%
\mathbb{E}_\mathbb{P} \left[\left|X^jZ^j\right|\right]\right)=\sup_{X\in
B_\varepsilon}\left(\sum_{j=1}^N\mathbb{E}_\mathbb{P} \left[%
\left|X^jZ^j\right|\right]\right) \,.
\end{equation*}
This completes the proof using \cite{DF19} Proposition 2.5 Item 1.
\end{proof}

\begin{lemma}
\label{DynRMlemmaintnegaprts} Suppose Assumption \ref{DynRMA2} holds. Let $%
Z\in (L^1(\mathcal{F}))^N$ be given and suppose that $\mathbb{E}_\mathbb{P} %
\left[U(Z)\middle|\mathcal{G}\right]\geq B$. Then for any $W\in K_\Phi$ we
have $\sum_{j=1}^N(Z^j)^-W^j\in L^1(\mathcal{F})$.
\end{lemma}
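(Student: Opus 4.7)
The strategy is to leverage the hypothesis $\mathbb{E}_{\mathbb{P}}[U(Z)|\mathcal{G}]\geq B$ to show that the negative parts $(Z^{j})^{-}$ lie componentwise in the Orlicz spaces $L^{\phi_{j}}$, so that the vector $Z^{-}:=[(Z^{1})^{-},\dots,(Z^{N})^{-}]$ belongs to $L^{\Phi}$ via Assumption~\ref{DynRMA2}. Once this integrability fact is in hand, the conclusion follows from the very definition of the K\"othe dual $K_{\Phi}$ together with solidity of $L^{\Phi}$.

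First I would upgrade the conditional inequality $\mathbb{E}_{\mathbb{P}}[U(Z)|\mathcal{G}]\geq B$, with $B\in L^{\infty}(\mathcal{G})$, to the unconditional bound $\mathbb{E}_{\mathbb{P}}[U(Z)]\geq \mathbb{E}_{\mathbb{P}}[B]\geq -\|B\|_{\infty}$. This requires checking first that $\mathbb{E}_{\mathbb{P}}[U(Z)^{+}]<+\infty$: by monotonicity of $U$ one has $U(Z)\leq U(Z^{+})\leq \sup\Lambda+\sum_{j}u_{j}((Z^{j})^{+})$, and by the concavity estimate in Remark~\ref{DynRMnullinzeros} (since $u_{j}(0)=0$) this is bounded above by an affine functional of $Z^{+}$, which is integrable because $Z\in(L^{1}(\mathcal{F}))^{N}$. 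Combining these two facts yields $U(Z)^{-}\in L^{1}(\mathcal{F})$.

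Next I would show that for each $j=1,\dots,N$, the random variable $\phi_{j}((Z^{j})^{-})$ is integrable. The key algebraic identity, using $u_{j}(0)=0$, is
\begin{equation*}
(u_{j}(Z^{j}))^{-}=\phi_{j}((Z^{j})^{-}).
\end{equation*}
Writing $U(Z)=\sum_{j}u_{j}(Z^{j})+\Lambda(Z)$ and using $\Lambda\leq\sup\Lambda<\infty$, I get
\begin{equation*}
\sum_{j=1}^{N}(u_{j}(Z^{j}))^{-}\leq U(Z)^{-}+\sup\Lambda+\sum_{j=1}^{N}(u_{j}(Z^{j}))^{+},
\end{equation*}
and the right-hand side is in $L^{1}(\mathcal{F})$ by the previous step together with the concavity bound $(u_{j}(Z^{j}))^{+}\leq u_{j}((Z^{j})^{+})\leq (\partial u_{j}(0))\cdot(Z^{j})^{+}$. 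Consequently $\phi_{j}((Z^{j})^{-})\in L^{1}(\mathcal{F})$, i.e.\ with $\lambda=1$ the defining condition of $L^{\phi_{j}}$ is satisfied, so $(Z^{j})^{-}\in L^{\phi_{j}}$ for every $j$.

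Finally I invoke Assumption~\ref{DynRMA2} to conclude that $Z^{-}\in L^{\phi_{1}}\times\cdots\times L^{\phi_{N}}=L^{\Phi}$. For any $W\in K_{\Phi}$, I define the vector $X$ by $X^{j}:=\mathrm{sgn}(W^{j})(Z^{j})^{-}$; by solidity of $L^{\Phi}$ (each space $L^{\phi_{j}}$ depends only on absolute values), $X\in L^{\Phi}$, and by the very definition \eqref{DynRMKphi} of the K\"othe dual $K_{\Phi}$,
\begin{equation*}
\sum_{j=1}^{N}(Z^{j})^{-}|W^{j}|=\sum_{j=1}^{N}X^{j}W^{j}\in L^{1}(\mathcal{F}),
\end{equation*}
whence $\sum_{j=1}^{N}(Z^{j})^{-}W^{j}\in L^{1}(\mathcal{F})$. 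The only delicate point in this plan is the identification $(u_{j}(Z^{j}))^{-}=\phi_{j}((Z^{j})^{-})$ and the bookkeeping that transfers the integrability from the aggregator $U$ to the individual Orlicz norms through Assumption~\ref{DynRMA2}; everything else is routine once those algebraic manipulations are in place.
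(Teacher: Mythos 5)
Your proof is correct, and its first half coincides with the paper's argument: both derive $U(Z)^{-}\in L^{1}(\mathcal{F})$ from the hypothesis (the paper by the convention that $\mathbb{E}_{\mathbb{P}}[U(Z)|\mathcal{G}]\geq B$ includes $U(Z)\in L^{1}$, you by a slightly more careful quasi-integrability argument, which is a harmless refinement), and both then control $-u_{j}(-(Z^{j})^{-})=\phi_{j}((Z^{j})^{-})$ through the splitting $u_{j}(Z^{j})=u_{j}((Z^{j})^{+})+u_{j}(-(Z^{j})^{-})$ together with the gradient bound \eqref{DynRMgradest} and $\sup\Lambda<\infty$, concluding $(Z^{j})^{-}\in L^{\phi_{j}}$ for all $j$. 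Where you diverge is the pairing with $W$: the paper invokes the structural result that $W\in K_{\Phi}$ lies in $L^{\phi_{1}^{\ast}}\times\dots\times L^{\phi_{N}^{\ast}}$ (citing the companion paper, where Assumption \ref{DynRMA2} enters) and then applies the Orlicz H\"older inequality componentwise, obtaining $(Z^{j})^{-}W^{j}\in L^{1}$ for each $j$; you instead use Assumption \ref{DynRMA2} directly to place $Z^{-}=[(Z^{1})^{-},\dots,(Z^{N})^{-}]$ in $L^{\Phi}$, and then exploit the bare definition \eqref{DynRMKphi} of $K_{\Phi}$ with the sign-adjusted test vector $X^{j}=\mathrm{sgn}(W^{j})(Z^{j})^{-}$ (legitimate, since $|X^{j}|\leq(Z^{j})^{-}$ and the spaces $L^{\phi_{j}}$ are solid). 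Your route is more self-contained, avoiding both the characterization of $K_{\Phi}$ and the Orlicz H\"older estimate, and it yields $\sum_{j}(Z^{j})^{-}|W^{j}|\in L^{1}$, which gives the stated claim; note that componentwise integrability, which the paper gets directly, also follows from your bound since each summand is nonnegative, so nothing is lost.
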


\begin{proof}
Observe that $\mathbb{E}_{\mathbb{P}}\left[ U(Z)\right] =\mathbb{E}_{\mathbb{%
P}}\left[ \mathbb{E}_{\mathbb{P}}\left[ U(Z)\middle|\mathcal{G}\right] %
\right] \geq \mathbb{E}_{\mathbb{P}}\left[ B\right] $. Furthermore 
\begin{equation*}
U(Z)=\sum_{j=1}^{N}u_{j}(Z^{j})+\Lambda
(Z)=\sum_{j=1}^{N}u_{j}((Z^{j})^{+})+\sum_{j=1}^{N}u_{j}(-(Z^{j})^{-})+%
\Lambda (Z)\,.
\end{equation*}%
As $u_{j}(0)=0$ and $u_{j}$ is increasing, for each $j$, this implies 
\begin{equation*}
0\leq -\sum_{j=1}^{N}u_{j}(-(Z^{j})^{-})\leq \max_{j=1,\dots ,N}\left( \frac{%
\mathrm{d}u_{j}}{\mathrm{d}x^{j}}(0)\right)
\sum_{j=1}^{N}(Z^{j})^{+}+\sup_{z\in {\mathbb{R}}^{N}}\Lambda (z)-U(Z)
\end{equation*}%
where in the last inequality we used \eqref{DynRMgradest}. It then follows
that $\sum_{j=1}^{N}\left( -u_{j}(-(Z^{j})^{-})\right) \in L^{1}(\mathcal{F}%
) $, which in turns yields $(Z)^{-}\in L^{\phi _{1}}\times \dots \times
L^{\phi _{N}}$, for $\phi _{j}$ defined in (\ref{DynRMassocorliczj}). Since
by \cite{DF19} Proposition 2.5 Item 3 we have $W\in L^{\phi _{1}^{\ast
}}\times \dots \times L^{\phi _{N}^{\ast }}$, we get by \cite{Edgar}
Proposition 2.2.7 that $(Z^{j})^{-}W^{j}\in L^{1}(\mathcal{F})$ for every $%
j=1,\dots ,N$, and the last claim is proved.
\end{proof}

\begin{proof}[Proof of Proposition \protect\ref{DynRMpropoptisint}]
Observe that ${\mathbb{Q}}\in \mathscr{Q}_{\mathcal{G}}^{1}$ implies $%
\alpha^1({\mathbb{Q}})\in L^1(\mathcal{G})$ by definition of $\mathscr{Q}_{%
\mathcal{G}}^{1}$, which in turns implies that $\rho^*_\mathcal{G}\left(-%
\frac{\mathrm{d}{\mathbb{Q}}}{\mathrm{d}{\mathbb{P}}}\right)\in L^1(\mathcal{%
G})$ by Claim \ref{DynRMthmdualreprshorfall}. By Lemma \ref{DynRMpropint},
then, for any ${\mathbb{Q}}\in \mathscr{Q}_{\mathcal{G}}^{1}$ we have $\frac{%
\mathrm{d}{\mathbb{Q}}}{\mathrm{d}{\mathbb{P}}}\in K_{\Phi }$, so that, by
Lemma \ref{DynRMlemmaintnegaprts}, for any ${\mathbb{Q}}\in \mathscr{Q}_{%
\mathcal{G}}^{1}$ we get $[(\widehat{Y}^{1})^{-},\dots ,(\widehat{Y}%
^{N})^{-}]\in L^{1}({\mathbb{Q}}^{1})\times \dots \times L^{1}({\mathbb{Q}}%
^{N})$. Given $\widehat{Y}$, we use the notation $\widehat{Y}_{(k)}$ and $Z_Y
$ from \eqref{DynRMdeftruncated} for large values $k\geq k_{\widehat{Y}}$.
By Fatou Lemma we have for any ${\mathbb{Q}}\in \mathscr{Q}_{\mathcal{G}}^{1}
$ 
\begin{align*}
\sum_{j=1}^{N}\mathbb{E}_{\mathbb{Q}^{j}}\left[ (\widehat{Y}^{j})^{+}\middle|%
\mathcal{G}\right] & \leq \liminf_{k}\sum_{j=1}^{N}\mathbb{E}_{\mathbb{Q}%
^{j}}\left[ (\widehat{Y}_{(k)}^{j})^{+}\middle|\mathcal{G}\right] \\
& =\liminf_{k}\left( \sum_{j=1}^{N}\mathbb{E}_{\mathbb{Q}^{j}}\left[ 
\widehat{Y}_{(k)}^{j}\middle|\mathcal{G}\right] +\sum_{j=1}^{N}\mathbb{E}_{%
\mathbb{Q}^{j}}\left[ (\widehat{Y}_{(k)}^{j})^{-}\middle|\mathcal{G}\right]
\right) \\
& \overset{{\mathbb{Q}}\in \mathscr{Q}_{\mathcal{G}}}{\leq }%
\liminf_{k}\left( \sum_{j=1}^{N}\widehat{Y}_{(k)}^{j}+\sum_{j=1}^{N}\mathbb{E%
}_{\mathbb{Q}^{j}}\left[ (\widehat{Y}_{(k)}^{j})^{-}\middle|\mathcal{G}%
\right] \right) \,.
\end{align*}%
We conclude that 
\begin{equation*}
\sum_{j=1}^{N}\mathbb{E}_{\mathbb{Q}^{j}}\left[ (\widehat{Y}^{j})^{+}\middle|%
\mathcal{G}\right] \leq \sum_{j=1}^{N}\widehat{Y}^{j}+\lim_{k}\sum_{j=1}^{N}%
\mathbb{E}_{\mathbb{Q}^{j}}\left[ (\widehat{Y}_{(k)}^{j})^{-}\middle|%
\mathcal{G}\right] \overset{\text{(cDOM)}}{=}\sum_{j=1}^{N}\widehat{Y}%
^{j}+\sum_{j=1}^{N}\mathbb{E}_{\mathbb{Q}^{j}}\left[ (\widehat{Y}^{j})^{-}%
\middle|\mathcal{G}\right]
\end{equation*}%
where we used in the last step that $Y_{(k)}\rightarrow \widehat{Y}\,{%
\mathbb{P}}-$ a.s. and that $(\widehat{Y})^{-}\in L^{1}({\mathbb{Q}}%
^{1})\times \dots \times L^{1}({\mathbb{Q}}^{N})$ to apply (cDOM): $(%
\widehat{Y}_{(k)}^{j})^{-}\leq \max \left( (\widehat{Y}%
^{j})^{-},(Z_{Y}^{j})^{-}\right) \in L^{1}({\mathbb{Q}}^{j}),\,j=1,\dots ,N$%
. This yields both integrability and the fact that, rearranging terms, 
\begin{equation*}
\sum_{j=1}^{N}\mathbb{E}_{\mathbb{Q}^{j}}\left[ \widehat{Y}^{j}\middle|%
\mathcal{G}\right] \leq \sum_{j=1}^{N}\widehat{Y}^{j}\,.
\end{equation*}
\end{proof}

\label{DynRmsecproofs11} To prove Theorem \ref{DynRMoptforrhoq} we first
state two preliminary Propositions.

\begin{proposition}
\label{Prop517}Let $X\in (L^{\infty }(\mathcal{F},{\mathbb{P}}))^{N}$. For
any ${\mathbb{Q}}\in \mathscr{Q}_{\mathcal{G}}^{1}$ we have:%
\begin{equation}
\rho _{\mathcal{G}}^{\mathbb{Q}}(X)=\sum_{j=1}^{N}\mathbb{E}_{\mathbb{Q}^{j}}%
\left[ -X^{j}\middle|\mathcal{G}\right] -\alpha ^{1}({\mathbb{Q}})
\label{DynRMeqrhoqalpha}
\end{equation}%
and 
\begin{equation}
\rho _{\mathcal{G}}\left( X\right) =\max_{{\mathbb{Q}}\in \mathscr{Q}_{%
\mathcal{G}}^{1}}\rho _{\mathcal{G}}^{\mathbb{Q}}(X)=\rho _{\mathcal{G}}^{%
\widehat{{\mathbb{Q}}}}(X)\text{ for any optimum }\widehat{{\mathbb{Q}}}%
\text{ of \eqref{DynRMeqdualreprshorfall}.}  \label{DynRMlinkpipiq}
\end{equation}
\end{proposition}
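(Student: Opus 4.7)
The plan is to prove \eqref{DynRMeqrhoqalpha} by a direct change of variable in the essential infimum, and then to deduce \eqref{DynRMlinkpipiq} by combining \eqref{DynRMeqrhoqalpha} with the dual representation \eqref{DynRMeqdualreprshorfall} already established in Theorem \ref{DynRMmainthmrhoinfty}. Since this is essentially a translation argument plus an invocation of existing results, I do not expect any deep obstacle; the only point requiring care is the admissibility of the substitution.

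For the first identity, fix ${\mathbb{Q}}\in\mathscr{Q}_{\mathcal{G}}^{1}$ and $X\in (L^{\infty}(\mathcal{F},{\mathbb{P}}))^{N}$. In the definition \eqref{DynRMdefrhoq} of $\rho_{\mathcal{G}}^{\mathbb{Q}}(X)$, I would substitute $Z:=X+Y$, so that $Y=Z-X$. Since $X\in (L^{\infty}(\mathcal{F}))^{N}$, the map $Y\mapsto X+Y$ is a bijection from $(L^{\infty}(\mathcal{F}))^{N}$ onto itself, and the constraint $\mathbb{E}_{\mathbb{P}}[U(X+Y)|\mathcal{G}]\geq B$ becomes exactly $\mathbb{E}_{\mathbb{P}}[U(Z)|\mathcal{G}]\geq B$. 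Moreover, because $X^{j}\in L^{\infty}(\mathcal{F})$ and ${\mathbb{Q}}^{j}\ll {\mathbb{P}}$, the conditional expectation $\mathbb{E}_{\mathbb{Q}^{j}}[X^{j}|\mathcal{G}]$ is a well-defined ($\mathcal{G}$-measurable, ${\mathbb{P}}$-a.s.\ finite) random variable which does not depend on $Z$. Therefore, by the linearity of conditional expectation and the standard property of essential infima that constants (i.e.\ $\mathcal{G}$-measurable terms not depending on the index variable) can be pulled out, together with the relation $\essinf(-\,\cdot)=-\esssup$, I obtain
\begin{align*}
\rho _{\mathcal{G}}^{\mathbb{Q}}(X) &= \essinf\left\{\sum_{j=1}^{N}\mathbb{E}_{\mathbb{Q}^{j}}[Z^{j}-X^{j}|\mathcal{G}]\,\Bigg|\, Z\in(L^{\infty}(\mathcal{F}))^{N},\ \mathbb{E}_{\mathbb{P}}[U(Z)|\mathcal{G}]\geq B\right\} \\
&= \sum_{j=1}^{N}\mathbb{E}_{\mathbb{Q}^{j}}[-X^{j}|\mathcal{G}] - \esssup\left\{\sum_{j=1}^{N}\mathbb{E}_{\mathbb{Q}^{j}}[-Z^{j}|\mathcal{G}]\,\Bigg|\, Z\in(L^{\infty}(\mathcal{F}))^{N},\ \mathbb{E}_{\mathbb{P}}[U(Z)|\mathcal{G}]\geq B\right\},
\end{align*}
and the second term on the right-hand side is precisely $\alpha^{1}({\mathbb{Q}})$ by definition \eqref{DynRMdefalpha1}, which yields \eqref{DynRMeqrhoqalpha}.

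For \eqref{DynRMlinkpipiq}, I would simply plug \eqref{DynRMeqrhoqalpha} into the dual representation \eqref{DynRMeqdualreprshorfall} of $\rho _{\mathcal{G}}=\rho _{\mathcal{G}}^{\infty}$ on $(L^{\infty}(\mathcal{F}))^{N}$ given by Theorem \ref{DynRMmainthmrhoinfty}.(3), obtaining
\begin{equation*}
\rho_{\mathcal{G}}(X) \;=\; \esssup_{{\mathbb{Q}}\in\mathscr{Q}_{\mathcal{G}}^{1}}\left(\sum_{j=1}^{N}\mathbb{E}_{\mathbb{Q}^{j}}[-X^{j}|\mathcal{G}]-\alpha^{1}({\mathbb{Q}})\right) \;=\; \esssup_{{\mathbb{Q}}\in\mathscr{Q}_{\mathcal{G}}^{1}}\rho_{\mathcal{G}}^{\mathbb{Q}}(X).
\end{equation*}
The attainment part of Theorem \ref{DynRMmainthmrhoinfty}.(3) guarantees the existence of $\widehat{\mathbb{Q}}\in\mathscr{Q}_{\mathcal{G}}^{1}$ achieving the essential supremum in the dual representation, and by \eqref{DynRMeqrhoqalpha} this same $\widehat{\mathbb{Q}}$ attains $\rho_{\mathcal{G}}^{\widehat{\mathbb{Q}}}(X)=\rho_{\mathcal{G}}(X)$. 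Hence the essential supremum is in fact a maximum, concluding the proof.

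The only mild subtlety — and it is not really an obstacle — is justifying that the essinf/esssup operations commute with addition of the $\mathcal{G}$-measurable term $\sum_{j}\mathbb{E}_{\mathbb{Q}^{j}}[-X^{j}|\mathcal{G}]$. This follows from the standard properties recalled in Proposition \ref{DynRMrefesssup} once one notes that this term is independent of the variable $Z$ (equivalently $Y$) over which the infimum is taken.
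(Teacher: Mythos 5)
Your proposal is correct and follows essentially the same route as the paper: the change of variables $Z=X+Y$ (the paper performs the identical substitution starting from the definition of $\alpha^{1}({\mathbb{Q}})$ rather than from $\rho_{\mathcal{G}}^{\mathbb{Q}}(X)$, which is the same computation read in reverse), followed by direct substitution into the dual representation and attainment result of Theorem \ref{DynRMmainthmrhoinfty}, Item 3. No gaps.
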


\begin{proof}
\bigskip We first prove \eqref{DynRMeqrhoqalpha}: observe that by %
\eqref{DynRMdefalpha1} we have for any ${\mathbb{Q}}\in \mathscr{Q}_{%
\mathcal{G}}^{1}$%
\begin{align*}
\alpha ^{1}({\mathbb{Q}})& =\esssup\left\{ \sum_{j=1}^{N}\mathbb{E}_{\mathbb{%
Q}^{j}}\left[ -W^{j}\middle|\mathcal{G}\right] \mid W\in (L^{\infty }(%
\mathcal{F},{\mathbb{P}}))^{N},\text{ }\mathbb{E}_{\mathbb{P}}\left[ U\left(
W\right) \middle|\mathcal{G}\right] \geq B\right\} \\
& =-\essinf_{\substack{ W-X\in (L^{\infty }(\mathcal{F},{\mathbb{P}}))^{N} 
\\ \mathbb{E}_{\mathbb{P}}\left[ U\left( X+(W-X)\right) \middle|\mathcal{G}%
\right] \geq B}}\left\{ \sum_{j=1}^{N}\mathbb{E}_{\mathbb{Q}^{j}}\left[
(W^{j}-X^{j})\middle|\mathcal{G}\right] +\sum_{j=1}^{N}\mathbb{E}_{\mathbb{Q}%
^{j}}\left[ X^{j}\middle|\mathcal{G}\right] \right\} \\
& =\sum_{j=1}^{N}\mathbb{E}_{\mathbb{Q}^{j}}\left[ -X^{j}\middle|\mathcal{G}%
\right] -\essinf\left\{ \sum_{j=1}^{N}\mathbb{E}_{\mathbb{Q}^{j}}\left[ Z^{j}%
\middle|\mathcal{G}\right] \mid Z\in (L^{\infty }(\mathcal{F},{\mathbb{P}}%
))^{N},\,\mathbb{E}_{\mathbb{P}}\left[ U\left( X+Z\right) \middle|\mathcal{G}%
\right] \geq B\right\} \\
& =\sum_{j=1}^{N}\mathbb{E}_{\mathbb{Q}^{j}}\left[ -X^{j}\middle|\mathcal{G}%
\right] -\rho _{\mathcal{G}}^{{\mathbb{Q}}}(X)\,.
\end{align*}%
Observe now that by Theorem \ref{DynRMmainthmrhoinfty} Item 3 $\rho _{%
\mathcal{G}}\left( X\right) \geq \sum_{j=1}^{N}\mathbb{E}_{\mathbb{Q}^{j}}%
\left[ -X^{j}\middle|\mathcal{G}\right] -\alpha ^{1}({\mathbb{Q}})$ for
every ${\mathbb{Q}}\in \mathscr{Q}_{\mathcal{G}}^{1}$, and equality holds
for any optimum $\widehat{{\mathbb{Q}}}$ of (\ref{DynRMeqdualreprshorfall}).
Direct substitution yields then \eqref{DynRMlinkpipiq}.
\end{proof}

\begin{proposition}
\label{Prop518}Suppose Assumption \ref{DynRMA2} is fulfilled and let $X\in
(L^{\infty }(\mathcal{F},{\mathbb{P}}))^{N}$. Then for any ${\mathbb{Q}}\in %
\mathscr{Q}_{\mathcal{G}}^{1}$ we have:%
\begin{equation}
\rho _{\mathcal{G}}^{{\mathbb{Q}}}(X)=\essinf\left\{ \sum_{j=1}^{N}\mathbb{E}%
_{\mathbb{Q}^{j}}\left[ Y^{j}\middle|\mathcal{G}\right] \mid Y\in L^{1}(%
\mathcal{F},{\mathbb{Q}})\cap (L^{1}(\mathcal{F},{\mathbb{P}}))^{N}\text{, }%
\mathbb{E}_{\mathbb{P}}\left[ U\left( X+Y\right) \middle|\mathcal{G}\right]
\geq B\right\} \,.  \label{DynRMrhoonl1pq}
\end{equation}
\end{proposition}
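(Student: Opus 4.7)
The plan is to prove equation \eqref{DynRMrhoonl1pq} by establishing two inequalities; denote the right-hand side by $\widetilde{\rho}^{{\mathbb{Q}}}(X)$. The inequality $\widetilde{\rho}^{{\mathbb{Q}}}(X) \leq \rho_{\mathcal{G}}^{{\mathbb{Q}}}(X)$ is immediate since $(L^{\infty}(\mathcal{F}))^{N}$ is contained in $L^{1}(\mathcal{F},{\mathbb{Q}}) \cap (L^{1}(\mathcal{F},{\mathbb{P}}))^{N}$: bounded random vectors are trivially ${\mathbb{P}}$-integrable, and since ${\mathbb{Q}}^{j}$ has $L^{1}({\mathbb{P}})$-density (from ${\mathbb{Q}} \in \mathscr{Q}_{\mathcal{G}}^{1} \subseteq \mathscr{Q}_{\mathcal{G}}$) they are ${\mathbb{Q}}^{j}$-integrable as well, so the essinf on the right is over a larger admissible set.

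For the reverse inequality, fix $Y \in L^{1}(\mathcal{F},{\mathbb{Q}}) \cap (L^{1}(\mathcal{F},{\mathbb{P}}))^{N}$ with $\mathbb{E}_{\mathbb{P}}[U(X+Y)|\mathcal{G}] \geq B$ and $\varepsilon > 0$, and approximate $Y$ by bounded vectors $Y_{(k)} := Y \mathbf{1}_{A_{k}}$, where $A_{k} := \bigcap_{j=1}^{N}\{|Y^{j}| < k\}$ increases to $\Omega$. The key convergence to establish is
\[
\mathbb{E}_{\mathbb{P}}[U(X+Y_{(k)}+\varepsilon\mathbf{1})|\mathcal{G}] \xrightarrow[k \to \infty]{} \mathbb{E}_{\mathbb{P}}[U(X+Y+\varepsilon\mathbf{1})|\mathcal{G}] > B \quad {\mathbb{P}}\text{-a.s.}
\]
To see this, decompose $U(X+Y_{(k)}+\varepsilon\mathbf{1}) = U(X+Y+\varepsilon\mathbf{1})\mathbf{1}_{A_{k}} + U(X+\varepsilon\mathbf{1})\mathbf{1}_{A_{k}^{c}}$ and apply (cMON) separately to the positive and negative parts of $U(X+Y+\varepsilon\mathbf{1})$: the positive part is ${\mathbb{P}}$-integrable by Lemma \ref{DynRMlemmacontrolwithline}(ii) combined with $Y \in (L^{1}(\mathcal{F},{\mathbb{P}}))^{N}$, while the negative part is dominated by $(U(X+Y))^{-}$, which has finite conditional expectation thanks to $\mathbb{E}_{\mathbb{P}}[U(X+Y)|\mathcal{G}] \geq B$. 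For the second summand, (cDOM) applies since $U(X+\varepsilon\mathbf{1})$ is bounded. The strict inequality at the limit follows from strict monotonicity of each $u_{j}$: the random variable $V := U(X+Y+\varepsilon\mathbf{1}) - U(X+Y)$ is a.s. strictly positive, hence $\mathbb{E}_{\mathbb{P}}[V|\mathcal{G}] > 0$ a.s., and both conditional expectations of the two summands being finite allow the subtraction.

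From this point, the argument proceeds exactly as in the proof of Claim \ref{DynRMpropinftycoincide}. Define $\Gamma_{K} := \bigcap_{k \geq K}\{\mathbb{E}_{\mathbb{P}}[U(X+Y_{(k)}+\varepsilon\mathbf{1})|\mathcal{G}] \geq B\} \in \mathcal{G}$, which satisfies ${\mathbb{P}}(\Gamma_{K}) \uparrow 1$ by the pointwise convergence to a value strictly above $B$. Choose $\alpha_{K} \in {\mathbb{R}}^{N}$ with $U(-\|X\|_{\infty}-\|Y_{(K)}\|_{\infty}+\varepsilon\mathbf{1}+\alpha_{K}) \geq \esssup{(B)}$, which exists since $\sup_{z \in {\mathbb{R}}^{N}} U(z) > \esssup(B)$, and set $Z_{K}^{j} := Y_{(K)}^{j} + \varepsilon + \alpha_{K}^{j}\mathbf{1}_{\Gamma_{K}^{c}} \in L^{\infty}(\mathcal{F})$. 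Then $\mathbb{E}_{\mathbb{P}}[U(X+Z_{K})|\mathcal{G}] \geq B$ by construction, so using ${\mathbb{Q}}^{j} = {\mathbb{P}}$ on $\mathcal{G}$ one obtains
\[
\rho_{\mathcal{G}}^{{\mathbb{Q}}}(X) \leq \sum_{j=1}^{N}\mathbb{E}_{{\mathbb{Q}}^{j}}[Z_{K}^{j}|\mathcal{G}] = \sum_{j=1}^{N}\mathbb{E}_{{\mathbb{Q}}^{j}}[Y_{(K)}^{j}|\mathcal{G}] + N\varepsilon + \mathbf{1}_{\Gamma_{K}^{c}}\sum_{j=1}^{N}\alpha_{K}^{j}.
\]
Now (cDOM) with dominator $|Y^{j}| \in L^{1}(\mathcal{F},{\mathbb{Q}}^{j})$ gives $\mathbb{E}_{{\mathbb{Q}}^{j}}[Y_{(K)}^{j}|\mathcal{G}] \to \mathbb{E}_{{\mathbb{Q}}^{j}}[Y^{j}|\mathcal{G}]$ a.s., while $\mathbf{1}_{\Gamma_{K}^{c}} \to 0$ a.s. pointwise (so the correction term vanishes eventually in $K$ at almost every $\omega$, regardless of the growth of $\alpha_{K}$); taking $\liminf_{K}$ and then $\varepsilon \downarrow 0$ yields $\rho_{\mathcal{G}}^{{\mathbb{Q}}}(X) \leq \sum_{j=1}^{N}\mathbb{E}_{{\mathbb{Q}}^{j}}[Y^{j}|\mathcal{G}]$, and an essinf over all admissible $Y$ concludes.

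The principal obstacle is the convergence of conditional expectations of $U(X+Y_{(k)}+\varepsilon\mathbf{1})$ when $Y$ is only $L^{1}$-integrable (not bounded), and $U(X+Y)$ itself need not be globally ${\mathbb{P}}$-integrable; this is overcome by splitting into positive and negative parts and exploiting that the budget constraint forces finiteness of the conditional expectation of $(U(X+Y))^{-}$, so that (cMON) may be invoked on each piece without assuming global integrability of $U(X+Y)$.
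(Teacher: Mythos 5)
Your proof is correct and follows essentially the same route as the paper's: the nontrivial inequality is obtained by truncating $Y$, correcting on the small $\mathcal{G}$-measurable set $\Gamma_{K}^{c}$ by a deterministic shift $\alpha_{K}$ (available since $\sup_{z}U(z)>\esssup(B)$), and passing to the limit via conditional dominated convergence before letting $\varepsilon \downarrow 0$ and taking the essential infimum over admissible $Y$. The only harmless deviations are that you truncate with $Y\mathbf{1}_{A_{k}}$ rather than the $Y_{(k)}$ of \eqref{DynRMdeftruncated}, and that you spell out the (cMON)/(cDOM) justification of $\mathbb{E}_{\mathbb{P}}\left[U(X+Y_{(k)}+\varepsilon\mathbf{1})\middle|\mathcal{G}\right]\rightarrow_{k}\mathbb{E}_{\mathbb{P}}\left[U(X+Y+\varepsilon\mathbf{1})\middle|\mathcal{G}\right]$, which the paper delegates to the argument of Claim \ref{DynRMpropinftycoincide}.
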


\begin{proof}
Clearly the inequality $(\geq )$ is trivial, since we are enlarging the set
over which we take the essential infimum. As to the converse $(\leq )$,
observe that whenever $Y\in L^{1}(\mathcal{F},{\mathbb{Q}})\cap (L^{1}(%
\mathcal{F},{\mathbb{P}}))^{N}$ is given with $\mathbb{E}_{\mathbb{P}}\left[
U\left( X+Y\right) \middle|\mathcal{G}\right] \geq B$, then, for any $%
\varepsilon >0$, $\mathbb{E}_{\mathbb{P}}\left[ U\left( X+Y+\varepsilon 
\mathbf{1}\right) \middle|\mathcal{G}\right] >B\,\,{\mathbb{P}}-$ a.s. by
strict monotonicity of $U$. Hence, given $Y_{(k)}$ as in %
\eqref{DynRMdeftruncated}, $k\geq k_{Y}$, defining 
\begin{equation*}
\Gamma _{K}:=\bigcap_{k\geq K}\left\{ \mathbb{E}_{\mathbb{P}}\left[ U\left(
X+Y_{(k)}+\varepsilon \mathbf{1}\right) \middle|\mathcal{G}\right] \geq
B\right\} \in \mathcal{G}
\end{equation*}%
we have that $\Gamma _{K}\subseteq \Gamma _{K+1}$ and ${\mathbb{P}}(\cup
_{K}\Gamma _{K})=1$. The argument is now similar to the one in the proof of
Claim \ref{DynRMpropinftycoincide}. As a consequence, we have 
\begin{equation}
1_{\Gamma _{K}^{c}}=0\text{ definitely in }K\,\,\,{\mathbb{P}}-\text{a.s.}.
\label{DynRMcondef}
\end{equation}%
For each $K$, take a vector $\alpha _{K}\in {\mathbb{R}}^{N}$ such that 
\begin{equation*}
U(-\left\Vert X\right\Vert _{\infty }-\left\Vert Y_{(K)}\right\Vert _{\infty
}+\varepsilon \mathbf{1}+\alpha _{K})\geq \esssup{(B)}
\end{equation*}%
where the notation for the vectors $\left\Vert X\right\Vert _{\infty
},\left\Vert Y_{(K)}\right\Vert _{\infty }$ is the same used in the proof of
Claim \ref{DynRMpropinftycoincide} and define 
\begin{equation*}
Z_{K}:=Y_{(K)}+\varepsilon \mathbf{1}+\alpha _{K}1_{\Gamma _{K}^{c}}\in
(L^{\infty }(\mathcal{F},{\mathbb{P}}))^{N}\,.
\end{equation*}%
Then clearly 
\begin{equation*}
\mathbb{E}_{\mathbb{P}}\left[ U\left( X+Y_{(K)}+\varepsilon \mathbf{1}%
\right) \middle|\mathcal{G}\right] 1_{\Gamma _{K}}+\mathbb{E}_{\mathbb{P}}%
\left[ U\left( Y_{(K)}+\varepsilon \mathbf{1}+\alpha _{K}1_{\Gamma
_{K}^{c}}\right) \middle|\mathcal{G}\right] 1_{\Gamma _{K}^{c}}\geq B
\end{equation*}%
which implies $\mathbb{E}_{\mathbb{P}}\left[ U\left( X+Z_{K}\right) \middle|%
\mathcal{G}\right] \geq B$. Hence 
\begin{align*}
\rho _{\mathcal{G}}^{{\mathbb{Q}}}(X)& \leq \liminf_{K}\sum_{j=1}^{N}\mathbb{%
E}_{\mathbb{Q}^{j}}\left[ Z_{K}^{j}\middle|\mathcal{G}\right] \\
& =\liminf_{K}\left( \left( \sum_{j=1}^{N}\mathbb{E}_{\mathbb{Q}^{j}}\left[
Y_{(K)}^{j}\middle|\mathcal{G}\right] \right) 1_{\Gamma _{K}}+\left(
\sum_{j=1}^{N}\alpha _{K}^{j}\right) 1_{\Gamma _{K}^{c}}\right) +N\varepsilon
\\
& =\lim_{K}\sum_{j=1}^{N}\mathbb{E}_{\mathbb{Q}^{j}}\left[ Y_{(K)}^{j}\middle%
|\mathcal{G}\right] +\lim_{K}\left( \sum_{j=1}^{N}\alpha _{K}^{j}1_{\Gamma
_{K}^{c}}\right) +N\varepsilon \\
& \overset{\substack{\eqref{DynRMcondef}\\ \text{(cDOM)}}}{=}\sum_{j=1}^{N}%
\mathbb{E}_{\mathbb{Q}^{j}}\left[ Y^{j}\middle|\mathcal{G}\right]
+N\varepsilon \,.
\end{align*}
\end{proof}

\begin{proof}[Proof of Theorem \protect\ref{DynRMoptforrhoq}]
By Proposition \ref{DynRMpropoptisint} we have that $\widehat{Y}\in(L^{1}(%
\mathcal{F},{\mathbb{P}}))^{N} \cap \bigcap_{{\mathbb{Q}}\in\mathscr{Q}_%
\mathcal{G}^1} L^{1}(\mathcal{F},{{\mathbb{Q}}})$. We also know that $%
\mathbb{E}_{\mathbb{P}}\left[ U\left( X+\widehat{Y}\right) \middle|\mathcal{G%
}\right] \geq B$. Hence we have 
\begin{equation*}
\rho _{\mathcal{G}}^{\widehat{{\mathbb{Q}}}}(X)\overset{ %
\eqref{DynRMrhoonl1pq}}{\leq }\sum_{j=1}^{N}\mathbb{E}_{\widehat{{\mathbb{Q}}%
}^{j}}[\widehat{Y}^{j}|\mathcal{G}]\overset{\text{Prop.}\ref%
{DynRMpropoptisint}}{\leq }\sum_{j=1}^{N}\widehat{Y}^{j}\,.
\end{equation*}
Moreover, by optimality of $\widehat{Y}$ for $\rho _{\mathcal{G}}\left(
X\right) $ and \eqref{DynRMlinkpipiq}, we have $\sum_{j=1}^{N}\widehat{Y}%
^{j}=\rho _{\mathcal{G}}\left( X\right) =\rho _{\mathcal{G}}^{\widehat{{%
\mathbb{Q}}}}(X)$. We then conclude jointly optimality in the extended sense
of $\widehat{Y}$ for $\rho _{\mathcal{G}}^{\widehat{{\mathbb{Q}}}}(X)$ and (%
\ref{new}). %
\end{proof}

\subsection{Proofs for Section \protect\ref{DynRMsectionexp}}

\label{DynRMsecproofsexp}

%
%
%

\subsubsection{Proof of Theorem \protect\ref{DynRMthmformulasgeneral}}

Let us rename 
\begin{equation}
H_{\mathcal{G}}(X):=\beta \log \left( -\frac{\beta }{B}\mathbb{E}_{\mathbb{P}%
}\left[ \exp \left( -\frac{\overline{X}}{\beta }\right) \middle|\mathcal{G}%
\right] \right) -A\,\in L^{\infty }(\mathcal{G}).  \label{HH}
\end{equation}%
We consider $\widehat{Y}$ and $\widehat{{\mathbb{Q}}}$ assigned in (\ref{YY}%
) and (\ref{QQ}). One immediately checks that $\widehat{{\mathbb{Q}}}\in %
\mathscr{Q}_{\mathcal{G}}$. As all the components $\widehat{{\mathbb{Q}}}^{j}
$ of $\widehat{{\mathbb{Q}}}$ \ are all equal, to prove that $\widehat{{%
\mathbb{Q}}}\in \mathscr{Q}_{\mathcal{G}}^{1}$ it is sufficient to show that 
$\alpha ^{1}(\widehat{{\mathbb{Q}}})\in L^{1}(\mathcal{G}).$ Let 
\begin{equation*}
V(y):=\sup_{x\in {\mathbb{R}}^{N}}\left\{
U(x)-\sum_{j=1}^{N}x^{j}y^{j}\right\} =\sum_{j=1}^{N}\left( \frac{y^{j}}{%
\alpha _{j}}\log \left( \frac{y^{j}}{\alpha _{j}}\right) -\frac{y^{j}}{%
\alpha _{j}}\right) ,\quad y\in {(0,+\infty )}^{N},
\end{equation*}%
be the the convex conjugate of $U$ and let $\lambda \in L^{\infty }(\mathcal{%
G})$ with $\lambda \geq c$ $\mathbb{P}$-a.s. for some constant $c\in
(0,+\infty )$. Then the Fenchel inequality $U(W)-V(\frac{1}{\lambda }\frac{%
\mathrm{d}\widehat{{\mathbb{Q}}}}{\mathrm{d}{\mathbb{P}}})\leq
\sum_{j=1}^{N}W^{j}\frac{1}{\lambda }\frac{\mathrm{d}\widehat{{\mathbb{Q}}}%
^{j}}{\mathrm{d}{\mathbb{P}}}$ holds $\mathbb{P}$-a.s. for any $W\in
(L^{\infty }(\mathcal{F}))^{N}$. Take any $j\in \left\{ 1,...,N\right\} $
and set $I_{\mathcal{G}}(\widehat{{\mathbb{Q}}},{\mathbb{P}}):=\mathbb{E}_{%
\mathbb{P}}\left[ \frac{\mathrm{d}\widehat{{\mathbb{Q}}}^{j}}{\mathrm{d}{%
\mathbb{P}}}\log \left( \frac{\mathrm{d}\widehat{{\mathbb{Q}}}^{j}}{\mathrm{d%
}{\mathbb{P}}}\right) \middle|\mathcal{G}\right] $. Then 
\begin{equation*}
\alpha ^{1}(\widehat{{\mathbb{Q}}})=\esssup_{\substack{ W\in (L^{\infty }(%
\mathcal{F}))^{N} \\ \mathbb{E}_{\mathbb{P}}\left[ U\left( W\right) \middle|%
\mathcal{G}\right] \geq B}}\sum_{j=1}^{N}\mathbb{E}_{\widehat{\mathbb{Q}}%
^{j}}\left[ -W^{j}\mid \mathcal{G}\right] \overset{\widehat{{\mathbb{Q}}}\in 
\mathcal{Q}_{\mathcal{G}}}{=}\esssup_{\substack{ W\in (L^{\infty }(\mathcal{F%
}))^{N} \\ \mathbb{E}_{\mathbb{P}}\left[ U\left( W\right) \middle|\mathcal{G}%
\right] \geq B}}\sum_{j=1}^{N}\lambda \mathbb{E}_{\mathbb{P}}\left[
-W^{j}\left( \frac{1}{\lambda }\frac{\mathrm{d}\widehat{{\mathbb{Q}}}}{%
\mathrm{d}{\mathbb{P}}}\right) \middle|\mathcal{G}\right] 
\end{equation*}%
\begin{align*}
& \leq \esssup_{\substack{ W\in (L^{\infty }(\mathcal{F}))^{N} \\ \mathbb{E}%
_{\mathbb{P}}\left[ U\left( W\right) \middle|\mathcal{G}\right] \geq B}}%
\left( \lambda \mathbb{E}_{\mathbb{P}}\left[ V\left( \frac{1}{\lambda }\frac{%
\mathrm{d}\widehat{{\mathbb{Q}}}}{\mathrm{d}{\mathbb{P}}}\right) \middle|%
\mathcal{G}\right] -\lambda \mathbb{E}_{\mathbb{P}}\left[ U(W)|\mathcal{G}%
\right] \right) \leq \lambda \mathbb{E}_{\mathbb{P}}\left[ V\left( \frac{1}{%
\lambda }\frac{\mathrm{d}\widehat{{\mathbb{Q}}}}{\mathrm{d}{\mathbb{P}}}%
\right) \middle|\mathcal{G}\right] -\lambda B \\
& =\sum_{j=1}^{N}\frac{1}{\alpha }_{j}\log \left( \frac{1}{\alpha _{j}}%
\right) -\sum_{j=1}^{N}\frac{1}{\alpha _{j}}+\sum_{j=1}^{N}\frac{1}{\alpha
_{j}}I_{\mathcal{G}}(\widehat{{\mathbb{Q}}},{\mathbb{P}})+\sum_{j=1}^{N}%
\frac{1}{\alpha _{j}}\log \left( \frac{1}{\lambda }\right) -\lambda B \\
& =A-\beta +\beta I_{\mathcal{G}}(\widehat{{\mathbb{Q}}},{\mathbb{P}})+\beta
\log \left( \frac{1}{\lambda }\right) -\lambda B \\
& =A+\beta I_{\mathcal{G}}(\widehat{{\mathbb{Q}}},{\mathbb{P}})+\beta \log
\left( -\frac{B}{\beta }\right) \quad \text{if }\lambda :=-\frac{\beta }{B}
\\
& =\sum_{j=1}^{N}\mathbb{E}_{\widehat{{\mathbb{Q}}}^{j}}[-X^{j}|\mathcal{G}%
]-H_{\mathcal{G}}\left( X\right) ,\,
\end{align*}%
where the last equality is obtained by direct computation using (\ref{QQ})
and (\ref{HH}). Hence $\alpha ^{1}(\widehat{{\mathbb{Q}}})\in L^{1}(\mathcal{%
G}),$  $\widehat{{\mathbb{Q}}}\in \mathscr{Q}_{\mathcal{G}}^{1}$ and 
\begin{equation}
H_{\mathcal{G}}(X)\leq \sum_{j=1}^{N}\mathbb{E}_{\widehat{{\mathbb{Q}}}%
^{j}}[-X^{j}|\mathcal{G}]-\alpha ^{1}(\widehat{{\mathbb{Q}}}).  \label{HQ}
\end{equation}
Notice that when $\mathscr{B}_{\mathcal{G}}=\mathscr{D}_{\mathcal{G}}$ then $%
\mathscr{C}_{\mathcal{G}}\cap (L^{\infty }(\mathcal{F}))^{N}=\mathscr{D}_{%
\mathcal{G}}\cap (L^{\infty }(\mathcal{F}))^{N}.$ The Theorem is proved
once we show the following chain of inequalities:%
\begin{align}
\rho _{\mathcal{G}}(X)& =\rho _{\mathcal{G}}^{\infty }(X):=\essinf\left\{
\sum_{j=1}^{N}Y^{j}\mid Y\in \mathscr{D}_{\mathcal{G}}\cap (L^{\infty }(%
\mathcal{F}))^{N},\text{ }\mathbb{E}_{\mathbb{P}}\left[ U\left( X+Y\right) %
\middle|\mathcal{G}\right] \geq B\right\}   \label{11} \\
& \leq \sum_{j=1}^{N}\widehat{Y}^{j}=H_{\mathcal{G}}(X)  \label{22} \\
& \leq \sum_{j=1}^{N}\mathbb{E}_{\widehat{{\mathbb{Q}}}^{j}}[-X^{j}|\mathcal{%
G}]-\alpha ^{1}(\widehat{{\mathbb{Q}}})\leq \esssup_{{\mathbb{Q}}\in %
\mathscr{Q}_{\mathcal{G}}^{1}}\left( \sum_{j=1}^{N}\mathbb{E}_{\mathbb{Q}%
^{j}}\left[ -X^{j}\middle|\mathcal{G}\right] -\alpha ^{1}({\mathbb{Q}}%
)\right) =\rho _{\mathcal{G}}(X).  \label{33}
\end{align}%
The equalities in (\ref{11}) and the last equality in (\ref{33}) follow from
Theorem \ref{DynRMmainthmrhoinfty}. By direct computation, $\widehat{Y}$
satisfies: $\widehat{Y}\in (L^{\infty }(\mathcal{F}))^{N},\sum_{j=1}^{N}%
\mathbb{E}_{\mathbb{P}}\left[ -\exp \left( -\alpha _{j}\left( X^{j}+\widehat{Y%
}^{j}\right) \right) \middle|\mathcal{G}\right] =B,$ $\sum_{j=1}^{N}\widehat{%
Y}^{j}=H_{\mathcal{G}}(X)\in L^{\infty }(\mathcal{G})$. Hence $\widehat{Y}$
satisfies the constraints of $\rho _{\mathcal{G}}^{\infty }(X),$ which
proves the inequality (and the equality) in (\ref{22}). The first inequality
in (\ref{33}) is shown in (\ref{HQ}), while the second one is a direct
consequence of $\widehat{{\mathbb{Q}}}\in \mathscr{Q}_{\mathcal{G}}^{1}\,$.

\subsubsection{Proof of Theorem \protect\ref{DynRMthmconsistency}\label{SecA52}}

\textbf{Equation \eqref{DynRMconsisty}}: we start observing that a
straightforward computation yields 
\begin{equation}
\widehat{Y}^{k}\left( \mathcal{G},\,X\right) =\widehat{Y}^{k}\left( \mathcal{%
H},\,X\right) +\frac{1}{\beta \alpha _{k}}\left( \rho _{\mathcal{G}}\left(
X\right) -\rho _{\mathcal{H}}\left( X\right) \right) \,\,\,\forall
\,k=1,\dots ,N\,.  \label{DynRMyfromgtoh}
\end{equation}

We also have, recalling $\sum_{j=1}^{N}\widehat{Y}^{j}\left( \mathcal{G}%
,\,X\right) =\rho _{\mathcal{G}}\left( X\right) $ and fixing $k$, that 
\begin{align*}
\widehat{Y}^{k}\left( \mathcal{H},\,-\widehat{Y}^{{}}\left( \mathcal{G}%
,\,X\right) \right)&=\widehat{Y}^{k}\left( \mathcal{G},\,X\right) +\frac{1}{%
\beta \alpha _{k}}\left( -\rho _{\mathcal{G}}\left( X\right) +\rho _{%
\mathcal{H}}\left( -\widehat{Y}^{{}}\left( \mathcal{G},\,X\right) \right)
\right) +\frac{1}{\beta \alpha _{k}}A-A_{k} \\
& \overset{\text{Eq.}\eqref{DynRMyfromgtoh}}{=}\widehat{Y}^{k}\left( 
\mathcal{H},\,X\right) +\frac{1}{\beta \alpha _{k}}\left( -\rho _{\mathcal{H}%
}\left( X\right) \right) + \\
& \frac{1}{\beta \alpha _{k}}\left( \rho _{\mathcal{H}}\left( -\widehat{Y}%
^{{}}\left( \mathcal{G},\,X\right) \right) -\rho _{\mathcal{H}}\left(
0\right) \right) +\frac{1}{\beta \alpha _{k}}(\rho _{\mathcal{H}}\left(
0\right) +A)-A_{k}\,.
\end{align*}%
It is then enough to show that $\rho _{\mathcal{H}}\left( -\widehat{Y}%
^{{}}\left( \mathcal{G},\,X\right) \right) =\rho _{\mathcal{H}}\left(
X\right) +\rho _{\mathcal{H}}\left( 0\right) $, since $\widehat{Y}^{k}\left( 
\mathcal{H},\,0\right) =\frac{1}{\beta \alpha _{k}}(\rho _{\mathcal{H}%
}\left( 0\right) +A)-A_{k}$. A direct computation yields 
\begin{align*}
\rho _{\mathcal{H}}\left( -\widehat{Y}^{{}}\left( \mathcal{G},\,X\right)
\right) &=\beta \log \left( -\frac{\beta }{B}\right) -A+\beta \log \left( 
\mathbb{E}_{\mathbb{P}}\left[ \exp \left( -\frac{1}{\beta }\left(
-\sum_{j=1}^{N}\widehat{Y}^{j}\left( \mathcal{G},\,X\right) \right) \right) %
\middle|\mathcal{H}\right] \right) \\
& \overset{\text{Eq.}\eqref{DynRMdefrcondexp}}{=}\beta \log \left( -\frac{%
\beta }{B}\right) -A+\beta \log \left( -\frac{A}{\beta }\right) + \\
& +\beta \log \left( \mathbb{E}_{\mathbb{P}}\left[ \exp \left( \frac{\beta }{%
\beta }\log \left( -\frac{\beta }{B}\mathbb{E}_{\mathbb{P}}\left[ \exp
\left( -\frac{1}{\beta }\overline{X}\right) \middle|\mathcal{G}\right]
\right) \right) \middle|\mathcal{H}\right] \right) \\
& =\rho _{\mathcal{H}}\left( 0\right) -A+\beta \log \left( -\frac{\beta }{B}%
\mathbb{E}_{\mathbb{P}}\left[ \mathbb{E}_{\mathbb{P}}\left[ \exp \left( -%
\frac{1}{\beta }\overline{X}\right) \middle|\mathcal{G}\right] \middle|%
\mathcal{H}\right] \right) \,.
\end{align*}%
Hence we have 
\begin{equation}
\rho _{\mathcal{H}}\left( -\widehat{Y}^{{}}\left( \mathcal{G},\,X\right)
\right) =\rho _{\mathcal{H}}\left( 0\right) +\rho _{\mathcal{H}}\left(
X\right) \,.  \label{DynRMrcondtwice}
\end{equation}

\textbf{Equation \eqref{DynRMconsistq}}: we have by \eqref{DynRMdefoptq} and
using \eqref{DynRMdefrcondexp} that 
\begin{equation}
\frac{\mathrm{d}\widehat{{\mathbb{Q}}}^{k}}{\mathrm{d}{\mathbb{P}}}\left( 
\mathcal{G},\,X\right) \frac{\mathrm{d}\widehat{{\mathbb{Q}}}^{k}}{\mathrm{d}%
{\mathbb{P}}}\left( \mathcal{H},\,-\widehat{Y}^{{}}\left( \mathcal{G}%
,\,X\right) \right) =\frac{\exp \left( -\frac{\overline{X}}{\beta }\right) }{%
\mathbb{E}_{\mathbb{P}}\left[ \exp \left( -\frac{\overline{X}}{\beta }%
\right) \middle|\mathcal{G}\right] }\frac{\exp \left( \frac{\rho _{\mathcal{G%
}}\left( X\right) }{\beta }\right) }{\mathbb{E}_{\mathbb{P}}\left[ \exp
\left( \frac{\rho _{\mathcal{G}}\left( X\right) }{\beta }\right) \middle|%
\mathcal{H}\right] }\,.  \label{DynRMintermconsist1}
\end{equation}%
We now see, just using \eqref{DynRMdefrcondexp}, that 
\begin{equation*}
\exp \left( \frac{\rho _{\mathcal{G}}\left( X\right) }{\beta }\right)
=\left( -\frac{\beta }{B}\right) \exp \left( -\frac{A}{\beta }\right) 
\mathbb{E}_{\mathbb{P}}\left[ \exp \left( -\frac{\overline{X}}{\beta }%
\right) \middle|\mathcal{G}\right] \,,
\end{equation*}%
\begin{equation*}
\mathbb{E}_{\mathbb{P}}\left[ \exp \left( \frac{\rho _{\mathcal{G}}\left(
X\right) }{\beta }\right) \middle|\mathcal{H}\right] =\left( -\frac{\beta }{B%
}\right) \exp \left( -\frac{A}{\beta }\right) \mathbb{E}_{\mathbb{P}}\left[
\exp \left( -\frac{\overline{X}}{\beta }\right) \middle|\mathcal{H}\right]
\,.
\end{equation*}%
Direct substitution in \eqref{DynRMintermconsist1} yields 
\begin{equation*}
\frac{\mathrm{d}\widehat{{\mathbb{Q}}}^{k}}{\mathrm{d}{\mathbb{P}}}\left( 
\mathcal{G},\,X\right) \frac{\mathrm{d}\widehat{{\mathbb{Q}}}^{k}}{\mathrm{d}%
{\mathbb{P}}}\left( \mathcal{H},\,-\widehat{Y}^{{}}\left( \mathcal{G}%
,\,X\right) \right) =\frac{\exp \left( -\frac{\overline{X}}{\beta }\right) }{%
\mathbb{E}_{\mathbb{P}}\left[ \exp \left( -\frac{\overline{X}}{\beta }%
\right) \middle|\mathcal{H}\right] }\overset{\text{Eq.}\eqref{DynRMdefoptq}}{%
=}\frac{\mathrm{d}\widehat{{\mathbb{Q}}}^{k}}{\mathrm{d}{\mathbb{P}}}\left( 
\mathcal{H},\,X\right) \,.
\end{equation*}

\textbf{Equation \eqref{DynRMconsista}}: by definition \eqref{DynRMdefopta}
and using the fact that 
\begin{equation*}
\mathbb{E}_{\mathbb{P}}\left[ \frac{\mathrm{d}\widehat{{\mathbb{Q}}}^{k}}{%
\mathrm{d}{\mathbb{P}}}\left( \mathcal{H},\,-\widehat{Y}^{{}}\left( \mathcal{%
G},\,X\right) \right) \middle|\mathcal{H}\right] =1\,\,\,\forall \,k=1,\dots
,N
\end{equation*}%
we have 
\begin{equation*}
\widehat{a}^{k}\left( \mathcal{H},\,-\widehat{a}^{{}}\left( \mathcal{G}%
,\,X\right) \right) =\mathbb{E}_{\mathbb{P}}\left[ \widehat{Y}^{k}\left( 
\mathcal{H},\,-\widehat{a}^{{}}\left( \mathcal{G},\,X\right) \right) \frac{%
\mathrm{d}\widehat{{\mathbb{Q}}}^{k}}{\mathrm{d}{\mathbb{P}}}\left( \mathcal{%
H},\,-\widehat{a}^{{}}\left( \mathcal{G},\,X\right) \right) \middle|\mathcal{%
H}\right] =E+F+G+H
\end{equation*}%
where 
\begin{align*}
E&:=\mathbb{E}_{\mathbb{P}}\left[ -\left( -\widehat{a}^{{k}}\left( \mathcal{G%
},\,X\right) \right) \frac{\mathrm{d}\widehat{{\mathbb{Q}}}^{k}}{\mathrm{d}{%
\mathbb{P}}}\left( \mathcal{H},\,-\widehat{a}^{{}}\left( \mathcal{G}%
,\,X\right) \right) \middle|\mathcal{H}\right] \,, \\
F&:=\mathbb{E}_{\mathbb{P}}\left[ \frac{1}{\beta \alpha _{k}}%
\sum_{j=1}^{N}\left( -\widehat{a}^{j}\left( \mathcal{G},\,X\right) \right) 
\frac{\mathrm{d}\widehat{{\mathbb{Q}}}^{k}}{\mathrm{d}{\mathbb{P}}}\left( 
\mathcal{H},\,-\widehat{a}^{{}}\left( \mathcal{G},\,X\right) \right) \middle|%
\mathcal{H}\right] \,,
\end{align*}%
\begin{align*}
G&:=\mathbb{E}_{\mathbb{P}}\left[ \frac{1}{\beta \alpha _{k}}\rho _{\mathcal{%
H}}\left( -\widehat{a}^{{}}\left( \mathcal{G},\,X\right) \right) \frac{%
\mathrm{d}\widehat{{\mathbb{Q}}}^{k}}{\mathrm{d}{\mathbb{P}}}\left( \mathcal{%
H},\,-\widehat{a}^{{}}\left( \mathcal{G},\,X\right) \right) \middle|\mathcal{%
H}\right] \,, \\
H&:=\mathbb{E}_{\mathbb{P}}\left[ \left( \frac{1}{\beta \alpha _{k}}%
A-A_{k}\right) \frac{\mathrm{d}\widehat{{\mathbb{Q}}}^{k}}{\mathrm{d}{%
\mathbb{P}}}\left( \mathcal{H},\,-\widehat{a}^{{}}\left( \mathcal{G}%
,\,X\right) \right) \middle|\mathcal{H}\right] =\frac{1}{\beta \alpha _{k}}%
A-A_{k}\,.
\end{align*}%
We now work separately on each of the above random variables:

\begin{itemize}
\item considering $E$, by \eqref{DynRMdefopta}, observing that $\frac{%
\mathrm{d}\widehat{{\mathbb{Q}}}^{k}}{\mathrm{d}{\mathbb{P}}}\left(\mathcal{H%
},\,-\widehat{a}^{}\left(\mathcal{G},\,X\right)\right)\in L^\infty(\mathcal{G%
})$ and %
using the fact that $\rho_{\mathcal{H}}\left(X\right)\in L^\infty(\mathcal{H}%
)$ we get 
\begin{equation}  \label{DynRMeqE}
E=\widehat{a}^{k}\left(\mathcal{H},\,X\right)+\frac{1}{\beta\alpha_k}\mathbb{%
E}_\mathbb{P} \left[\rho_{\mathcal{G}}\left(X\right)\frac{\mathrm{d}\widehat{%
{\mathbb{Q}}}^{k}}{\mathrm{d}{\mathbb{P}}}\left(\mathcal{H},\,X\right)\middle%
|\mathcal{H}\right]-\frac{1}{\beta\alpha_k}\rho_{\mathcal{H}%
}\left(X\right)\,.
\end{equation}

\item We now move to $F$. First, computing $\frac{\mathrm{d}\widehat{{%
\mathbb{Q}}}^{k}}{\mathrm{d}{\mathbb{P}}}\left( \mathcal{H},\,-\widehat{a}%
^{{}}\left( \mathcal{G},\,X\right) \right) $ we get %
\begin{equation}
\frac{\mathrm{d}\widehat{{\mathbb{Q}}}^{k}}{\mathrm{d}{\mathbb{P}}}\left( 
\mathcal{H},\,-\widehat{a}^{{}}\left( \mathcal{G},\,X\right) \right) =\frac{%
\mathbb{E}_{\mathbb{P}}\left[ \exp \left( -\frac{\overline{X}}{\beta }%
\right) \middle|\mathcal{G}\right] }{\mathbb{E}_{\mathbb{P}}\left[ \exp
\left( -\frac{\overline{X}}{\beta }\right) \middle|\mathcal{H}\right] }\,.
\label{DynRMeqrnweird}
\end{equation}%
After some additional tedious computation we obtain 
\begin{equation}
F=-\frac{1}{\beta \alpha _{k}}\mathbb{E}_{\mathbb{P}}\left[ \rho _{\mathcal{G%
}}\left( X\right) \frac{\mathrm{d}\widehat{{\mathbb{Q}}}^{k}}{\mathrm{d}{%
\mathbb{P}}}\left( \mathcal{H},\,X\right) \middle|\mathcal{H}\right] \,.
\label{DynRMeqF}
\end{equation}

\item To compute $G$, we first see that $\rho _{\mathcal{H}}\left( -\widehat{%
a}^{{}}\left( \mathcal{G},\,X\right) \right) =\rho _{\mathcal{H}}\left( -%
\widehat{Y}^{{}}\left( \mathcal{G},\,X\right) \right) $. We can thus exploit %
\eqref{DynRMrcondtwice} to see that $\rho _{\mathcal{H}}\left( -\widehat{a}%
^{{}}\left( \mathcal{G},\,X\right) \right) =\rho _{\mathcal{G}}\left(
0\right) +\rho _{\mathcal{H}}\left( X\right) $. Using also %
\eqref{DynRMeqrnweird} we get 
\begin{equation}
\begin{split}
G& =\frac{1}{\beta \alpha _{k}}\left( \rho _{\mathcal{H}}\left( 0\right) +%
\mathbb{E}_{\mathbb{P}}\left[ \rho _{\mathcal{H}}\left( X\right) \frac{%
\mathbb{E}_{\mathbb{P}}\left[ \exp \left( -\frac{\overline{X}}{\beta }%
\right) \middle|\mathcal{G}\right] }{\mathbb{E}_{\mathbb{P}}\left[ \exp
\left( -\frac{\overline{X}}{\beta }\right) \middle|\mathcal{H}\right] }%
\middle|\mathcal{H}\right] \right) \\
& \overset{\rho _{\mathcal{H}}\left( X\right) \in L^{\infty }(\mathcal{H})}{=%
}\frac{1}{\beta \alpha _{k}}\left( \rho _{\mathcal{H}}\left( 0\right) +\rho
_{\mathcal{H}}\left( X\right) \right) \,.
\end{split}
\label{DynRMeqG}
\end{equation}

\item Recalling \eqref{DynRMdefopta} we have $\widehat{a}^{k}\left( \mathcal{%
H},\,0\right) =H+\frac{1}{\beta \alpha _{k}}\rho _{\mathcal{H}}\left(
0\right) $ hence 
\begin{equation}
H=\widehat{a}^{k}\left( \mathcal{H},\,0\right) -\frac{1}{\beta \alpha _{k}}%
\rho _{\mathcal{H}}\left( 0\right) \,.  \label{DynRMeqH}
\end{equation}
\end{itemize}

More detailed computations can be found in \cite{DoldiThesis21}, proof of
Theorem 3.5.8. Summing \eqref{DynRMeqE}, \eqref{DynRMeqF}, \eqref{DynRMeqG}, %
\eqref{DynRMeqH} most terms simplify and we get 
\begin{equation*}
\widehat{a}^{k}\left( \mathcal{H},\,-\widehat{a}^{{}}\left( \mathcal{G}%
,\,X\right) \right) =E+F+G+H=\widehat{a}^{k}\left( \mathcal{H},\,X\right) +%
\widehat{a}^{k}\left( \mathcal{H},\,0\right) \,\,\,k=1,\dots ,N\,.
\end{equation*}

{\ 
\bibliographystyle{abbrv}
\bibliography{BibAll}
} \newpage

\end{document}